\documentclass[12pt]{article}

\usepackage[top=1in, bottom=1in, left=2cm, right=2cm]{geometry} 

\usepackage{amsthm,amsmath,amsfonts,amssymb}
\usepackage{natbib,comment}
\usepackage{multibib}
\usepackage{multibib}
\newcites{SM}{Supplement References}
\RequirePackage{graphicx}
\usepackage{wrapfig}
\usepackage{bigints}
\usepackage[doublespacing]{setspace}
\usepackage{relsize}
\usepackage{catchfilebetweentags}
\usepackage{layout} 
\usepackage{bm} 
\usepackage[shortlabels]{enumitem} 
\usepackage{caption} 
\usepackage{float}
\usepackage{multirow} 
\usepackage{multicol} 
\usepackage{xcolor} 
\usepackage{rotating}
\usepackage{lscape} 
\usepackage{pdflscape}
\usepackage{subfigure}
\usepackage{multicol,lipsum}
\allowdisplaybreaks 
\usepackage{mathtools}

\numberwithin{equation}{section}
\newtheorem{theorem}{Theorem}[section]
\newtheorem{lemma}{Lemma}[section]
\newtheorem{proposition}{Proposition}[section]
\newtheorem{corollary}{Corollary}[section]

\newtheorem{assumption}{Assumption}[section]


\newenvironment{assumptionp}[1]{
  
  \assumptionalt
}{\endassumptionalt}

\newtheorem{definition}{Definition}[section]
\newtheorem{remark}{Remark}[section]
\newtheorem{example}{Example}[section]

  \newtheoremstyle{example_contd}
{\topsep} {\topsep}%
{\upshape}
{}
{\bfseries\scshape}
{}
{3mm}
{\thmname{#1}\thmnumber{ #2}\thmnote{ #3.}\enspace(continued)}

\theoremstyle{example_contd}
\newtheorem*{example_contd}{Example}

\newcommand{\m}[1]{\mathcal{#1}}

\newcommand{\boup}[1]{\boldsymbol{\mathrm{#1}}}

\newcommand{\ha}[1]{\widehat{#1}}
\newcommand{\ti}[1]{\widetilde{#1}}
\newcommand{\lbar}[1]{\underline{#1}}
\newcommand{\ubar}[1]{\overline{#1}}
\newcommand{\mmc}{\mathbb{C}}
\newcommand{\mme}{\mathbb{E}}
\newcommand{\mmi}{\mathbb{I}}

\newcommand{\mmp}{\mathbb{P}}
\newcommand{\mmr}{\mathbb{R}}
\newcommand{\mmv}{\mathbb{V}}

\newcommand{\mmx}{\mathbb{X}}

\newcommand{\mmq}{\mathbb{Q}}

\newcommand{\be}{\boup{e}}

\newcommand{\bphi}{\boup{\varphi}}
\newcommand{\bgam}{\boup{\gamma}}

\newcommand{\bG}{\boup{G}}
\newcommand{\bHt}{\ti{\boup{H}}}
\newcommand{\bH}{\boup{H}}

\newcommand{\bQ}{\boup{Q}}

\newcommand{\bR}{\boup{R}}
\newcommand{\bV}{\boup{V}}

\newcommand{\bZ}{\boup{Z}}

\newcommand{\bGam}{\boup{\Gamma}}
\newcommand{\bDel}{\boup{\Delta}}

\newcommand{\eps}{\varepsilon}
\newcommand{\pto}{\overset{p}{\to}}
\newcommand{\dto}{\overset{d}{\to}}
\newcommand{\asto}{\overset{as}{\to}}
\newcommand{\dd}{\overset{d}{=}}
\newcommand{\overbar}[1]{\mkern 1.5mu\overline{\mkern-1.5mu#1\mkern-1.5mu}\mkern 1.5mu}

\def\Exp{\mme}
\def\Var{\mmv}

\def\Pr{\mmp}
\def \argmin{\mathop{\hbox{\rm argmin}}}

\newif\ifblind

\blindfalse 


\makeatletter
\def\old@comma{,}
\catcode`\,=13
\def,{%
  \ifmmode%
    \old@comma\discretionary{}{}{}%
  \else%
    \old@comma%
  \fi%
}
\makeatother

\begin{document}

\title{Permutation Tests at Nonparametric Rates}

\ifblind
\else
{\author{Marinho Bertanha\footnote{
  Department of Economics, University of Notre Dame.
  Address: 3060 Jenkins Nanovic Halls, Notre Dame, IN 46556.
  Email: mbertanha@nd.edu. 
  Website: www.nd.edu/$\sim$mbertanh.
  }
  \and
  EunYi Chung\footnote{
  Department of Economics, University of Illinois at Urbana-Champaign.
  Address: 1407 W Gregory Dr, Urbana, IL 61802.
  Email: eunyi@illinois.edu.
  Website: economics.illinois.edu/profile/eunyi.
  }
}}
\fi

\date{
April 20th, 2022
}

\maketitle

\vspace{-1cm}

\begin{abstract}
Classical two-sample permutation tests for equality of distributions have exact size in finite samples,
but they fail to control size for testing equality of parameters that summarize each distribution.
This paper proposes permutation tests for equality of parameters that are estimated at root-$n$ or slower rates.
Our general framework applies to both parametric and nonparametric models, with two samples or one sample split into two subsamples.
Our tests have correct size asymptotically while preserving exact size in finite samples when distributions are equal.
They have no loss in local asymptotic power compared to tests that use asymptotic critical values.
We propose confidence sets with correct coverage in large samples that also have exact coverage in finite samples if distributions are equal up to a transformation.
We apply our theory to four commonly-used hypothesis tests of nonparametric functions evaluated at a point.
Lastly, simulations show good finite sample properties, and two empirical examples illustrate our tests in practice.

\end{abstract}

\noindent \textbf{Keywords:}  Randomization Inference, Hypothesis Tests, Confidence Sets. 

\vspace{-6mm}

\section{Introduction}
\indent

\vspace{-2mm}

Applications of permutation tests have gained widespread popularity in empirical analyses in the social and natural sciences.
Classical two-sample permutation tests appeal to applied researchers because they are easy to implement and have exact size in finite samples under the so-called ``sharp null hypothesis''.
The sharp null hypothesis states that the two population distributions are equal.
However, researchers are often interested in testing equality of parameters that summarize the distributions.
For example, one may want to test equality of average outcomes between treatment and control groups while nonparametrically controlling for age and income.
Classical permutation tests fail to control size under such nulls, in both finite and large samples.\footnote{The lack of size control of the classical permutation test
outside of the sharp null has been studied for a long time (e.g., \cite{romano1990}).
Theorem \ref{theo1} below confirms the lack of size control in our setting.
}

This paper proposes robust two-sample permutation tests for equality of parameters that are estimated at root-$n$ or slower rates.
The tests are robust in the sense that they control size asymptotically while preserving finite-sample exactness under the sharp null.
Our general framework covers both parametric and nonparametric models, in cases with two samples from two populations or one sample from a union of two populations.
In addition, the paper makes three further contributions.
First, we derive the asymptotic permutation distribution in our general framework under both null and alternative hypotheses, which requires novel technical arguments. 
Second, we provide four examples of tests in widely-used nonparametric models, and we prove that they satisfy the conditions  of our framework.
Third, we construct robust confidence sets for differences between the two populations.
The confidence sets are robust meaning that they have correct coverage asymptotically and exact coverage in finite samples if the sharp null holds under a class of transformations.

Our framework considers a summary parameter that can be consistently estimated using an asymptotically linear statistic.
The influence function depends on the data, the population distribution, and the sample size.
There may be two \textit{iid} samples from two populations, or one \textit{iid} sample from a union of two populations.
In the case of one sample, there is a variable in the data that identifies the population of each observation, and the sample is split into two.
The researcher applies the estimator to each of the two samples, computes the difference, and tests  whether the two parameters are equal.
The classical permutation test compares the estimated difference to critical values from the permutation distribution,
that is, the distribution of estimates over all permutations of observations across the two samples.

We derive the asymptotic permutation distribution of the estimated difference, both with and without the null hypothesis, and find it to be generally different from its asymptotic sampling distribution.
This leads the classical permutation test to have incorrect size, as shown in a variety of other settings (see related literature below). 
The derivation has two key technical features. 
First, we borrow the coupling approximation from \cite{chung2013} (henceforth CR) and apply it to our setting.
The approximation consists of two steps:
(a)  couple the original sample with a new sample from a particular mixture of the two populations;
and
(b) prove that replacing the original sample with the new coupled sample renders no change in the asymptotic permutation distribution of the test statistic.
We utilize the coupling approximation because it is more tractable to study the limiting behavior of permutation tests when both samples are drawn \textit{iid} from the same mixture distribution, rather than different distributions. 
Our coupling approximation differs from that of CR in that our proof does not assume the null hypothesis.
This allows us to study both size and power of the permutation test.
Our proof requires a new argument to bound the variance of the approximation error 
(Section B.2.3 in the supplement).
The second key technical feature is that our theory allows for random sample sizes.
Sample sizes are random when the researcher splits one sample into two as a function of the data.
Thus, the derivation of the limiting distribution must be valid conditional on any sequence of sample splits that occurs with probability one and requires novel arguments:
(a) a conditional central limit theorem of weighted sums of triangular arrays,
where the weights are random 
(Lemmas F.3 and F.4 in the supplement);
and
(b) an asymptotic linear representation for estimators that must hold uniformly over convex combinations of the two populations (Assumption \ref{aspt:asy_linear}).


Our proposed permutation test uses a studentized test statistic, which is the estimated difference of parameters divided by a consistent estimator of its standard deviation.
We then show that both the asymptotic permutation and sampling distributions are standard normal.
It follows that our permutation test has correct size in large samples, and its asymptotic power against local alternatives is identical to that of the test 
that relies on critical values from a standard normal.
Finally, we construct a confidence set by inverting our test, which requires testing null hypotheses that are more general than simple equality of parameters.
We propose ways to transform the data in order to test more general hypotheses and preserve finite sample exactness when populations are equal up to the data transformations.

Examples of applications of permutation tests abound.
Table 3 
in the supplement lists top-publications from the last decade in a variety of disciplines that use permutation tests.
This broad applicability motivates our extension of the theory.
We illustrate our framework using four nonparametric examples of hypothesis tests that are often used in empirical studies.
The first and second examples test equality at a point of  nonparametric conditional mean and quantile functions, respectively. 
The third and fourth examples test continuity at a point of nonparametric conditional mean or probability density functions (PDFs), respectively.
We explain how to implement the permutation test in each case and give sufficient conditions to derive the limiting permutation distribution. 
Implementation requires a sign change and sample splitting in the third and fourth examples.
We find that asymptotic size control requires studentization, except in the fourth example.

\vspace{.3cm}

\noindent {\it Related Literature}

Randomization inference has recently received keen attention  \citep{canay2017sym,shaikh2021}.
For the case of permutation inference, 
the insight of robustness through studentization has been proposed before in specific settings that differ from our general framework: 
\cite{neuhaus1993}, \cite{janssen1997}, \cite{janssen2005}, \cite{neubert2007}, \citet[CR]{chung2013}, \cite{pauly2015}, \cite{chung2016}, and \cite{diciccio2017}.
In particular, the framework of \cite{janssen1997} applies to the problem of testing difference of means, 
while CR study the more general problem of difference of parameters that are estimable at root-$n$. 
It is important to emphasize that this paper is not a straightforward generalization of their work, and none of our applications fits CR's framework.
For example, CR handles the difference of parametric quantiles for two independent samples; in contrast, our framework covers the cases of parametric or nonparametric quantiles, for two independent samples or one sample randomly split into two. 
Moreover, our verification of the coupling approximation differs from that of CR because we do not assume the null hypothesis. 
All these features make many of our proofs substantially different from theirs.

Studentization is not the only way to achieve robustness. 
Robustness is also obtained through prepivoting (\cite{chung2016multivariate}; \cite{fogarty2021}) and the Khmaladze transformation (\cite{chung2021}).
Prepivoting obtains robustness in examples where studentization does not help.
For instance, consider a vector of differences in means where one is interested in the maximum absolute difference.
The asymptotic distribution cannot be made pivotal by simply studentizing the maximum statistic. 
Likewise, the Khmaladze transformation is applied to empirical processes to make them asymptotically pivotal.
 
Previous works have also considered randomization tests for continuity of nonparametric models at a point.
\cite{cattaneo2015} propose local randomization inference procedures for a sharp null hypothesis,
while \cite{canay2016rd} provide permutation tests for continuity of the whole distribution of an outcome variable conditional on a control variable at a point.
In contrast, our permutation test applies to testing continuity of summary statistics of the conditional distribution such as mean, quantile, variance, etc. 
Our fourth example is related to \cite{bugni2018}, who propose a sign-change test for continuity of PDFs at a point, where critical values come from maximizing a function of a binomial distribution. 
We show how the same null hypothesis fits into our framework and is simply testable using permutations.
The last two papers use the insight that non-\textit{iid} order statistics converge in distribution to \textit{iid} variables, which is 
technically distinct from our coupling approximation.
Finally, permutation-based confidence sets have previously been proposed only in specific settings. 
For example, the confidence sets of \cite{imbens2005} assume that treatment effects divided by treatment doses are constant across individuals, and that the distribution of treatment eligibility is known.
To the best of our knowledge, ours is the first paper to provide valid two-sample permutation tests and confidence sets for scalar parameters estimated at nonparametric rates.

The rest of this paper is outlined as follows. 
Section \ref{sec:theory} presents
the general framework, assumptions, and asymptotic distributions of the classical and robust permutation tests.
Section \ref{sec:applications} studies how our theory applies to four nonparametric examples.
Section \ref{sec:setup:cset} explains how to invert permutation tests to build robust confidence sets.
Section \ref{sec:mc} displays a simulation study that confirms our theory and illustrates good finite sample properties of the robust permutation test.
Section \ref{sec:empirical} illustrates the practical relevance of our procedures to test for continuity of conditional mean functions using real-world data.
The supplement contains all proofs.

\vspace{-6mm}

\section{Theory}
\label{sec:theory}

\vspace{-2mm}

\indent 

Consider two populations $P_1$ and $P_2$, and a real-valued parameter $\theta(P_k)$ summarizing distribution $P_k$, $k=1,2$.
The null hypothesis is stated as
$H_0 : \theta(P_1) = \theta(P_2)$.
Note that our null hypothesis is a bigger set of distributions compared to the set of distributions in the so-called ``sharp null hypothesis,'' respectively, 
$\{ (P_1,P_2) : \theta(P_1) = \theta(P_2)\}$ vs. $\{ (P_1,P_2) : P_1=P_2 \}$.
For each population $k$, there are $n_k$ \textit{iid} observations $\bZ_{k,i} \in \mmr^q$ from distribution $P_k$, $i=1,\ldots, n_k$. 
Observations  are independent across $k$, and the total number of observations is $n=n_1+n_2$.
We define $\m{P}$ to be the convex hull of $\{P_1, P_2 \}$,
that is, $\m{P} = \{P : P = \eta P_1 + (1-\eta) P_2, ~ \eta \in [0,1] \}$.
Throughout this paper, random variables with subscript ``$k$" indicate they have distribution $P_k$, 
e.g., $\bZ_k$. 
For any other distribution in $\m{P}$, the random variable is denoted $\bV\in \mmr^q$. 
Operators such as $\mmp$ (probability), $\mme$ (expectation),  or $\mmv$ (variance) applied to $\bZ_k$ do not carry the subscript $P_k$, but operators applied to $\bV$ carry the subscript $P$,
e.g., $\mme[\bZ_k]$ vs. $\mme_P[\bV]$.
The parameter $\theta(P_k)$ is consistently 
estimated by 
$\ha{\theta}_{k} = \theta_{n_k,n}(\bZ_{k,1}, \ldots, \bZ_{k,n_k})$,
where  the functions
$\theta_{n_1,n}$
and
$\theta_{n_2,n}$ 
satisfy the following assumption.

\begin{assumption}\label{aspt:asy_linear} Let $\bV_1, \ldots, \bV_m$ be an \textit{iid} sample from a distribution $P \in \m{P}$.
Let $m$ grow with $n$ such that $ m/n \to \gamma$, for some $\gamma \in  (0,1)$. 
Use these observations to construct the estimator $\ha\theta = \theta_{m,n}(\bV_1, \ldots, \bV_m)$.
Assume the following objects exist: a sequence of functions $\psi_n: \mmr^q \times \m{P} \to \mmr$,
a function $\xi: \m{P} \to \mmr$,
and 
a non-increasing sequence $ h_n$ such that $n h_n \to \infty$. Further assume  
\begin{align} 
& \forall \eps>0 : \sup\limits_{P \in \m{P} } \mmp_P \left\{ \left| \sqrt{m h_n } \left( \ha{\theta} - \theta(P)  \right) - \left( \frac{1}{\sqrt{m}} \sum_{i=1}^{m} \psi_n(\bV_{i}, {P} )\right) \right| >\eps \right\} \to 0,
\label{eq:aspt:asy_linear}
\\
 & \mme_P[  \psi_n(\bV_{i}, {P} ) ]  =  0 ~~ \forall P \in \m{P},
 \label{eq:aspt:asy_linear_zeromean} 
 \\
 & \sup\limits_{P \in \m{P} } \left| \mmv_P[  \psi_n(\bV_{i}, {P} ) ] - \xi^2(P) \right| \to 0,
 \label{eq:aspt:asy_linear_var} 
 \\
 & \sup\limits_{P \in \m{P} }  \mme \left[ \psi_n^2(\bZ_k, {P} ) \right]   < \infty , ~~\mbox{for} ~ k\in\{1,2\},
 \label{eq:aspt:asy_linear_mom} 
\\
& \exists \zeta>0 :   n^{-\zeta/2} \sup\limits_{P \in \m{P} }   \mme_P \left| \frac{\psi_n(\bV_{i}, {P} )}{ \mmv_P[  \psi_n(\bV_{i}, {P} ) ]^{1/2} } \right|^{2+\zeta} \to 0 \text{, and }
 \label{eq:aspt:asy_linear_lind} 
\\
& \xi^2\left( \frac{m}{n} P_1 + \frac{n-m}{n}  P_2 \right) \to \xi^2 \left( \gamma P_1 + (1-\gamma ) P_2  \right) .
 \label{eq:aspt:asy_linear_xicont} 
\end{align}
\end{assumption}

Assumption \ref{aspt:asy_linear} requires the estimator $\ha \theta_k$ to have an asymptotically linear expansion at rate $\sqrt{n_k h}$ (Eq. \ref{eq:aspt:asy_linear}) with zero-mean influence function $\psi_n$ (Eq. \ref{eq:aspt:asy_linear_zeromean}) and asymptotic variance $\xi^2$ (Eq. \ref{eq:aspt:asy_linear_var}).
The expansion must hold for data drawn from any combination of $P_1$ and $P_2$ in $\m{P}$ because the limiting permutation distribution of $\ha{\theta}_1 - \ha{\theta}_2$ behaves as if the data were drawn from a sequence of mixtures of $P_1$ and $P_2$ (Theorems \ref{theo1} and \ref{theo2} below).
Assumption \ref{aspt:asy_linear} also imposes bounds on moments of $\psi_n$ (Eqs. \ref{eq:aspt:asy_linear_mom}--\ref{eq:aspt:asy_linear_lind}) to enable us to apply a central limit theorem and derive the limiting permutation distribution. 
This requires $\xi^2(P)$ to be smooth with respect to $P$ (Eq. \ref{eq:aspt:asy_linear_xicont}) in order for 
the limit of the variance $\xi^2\left( \frac{m}{n} P_1 + \frac{n-m}{n}  P_2 \right)$ to be well-defined.

Situations arise where the number of observations $n_k$ is random rather than deterministic.
For example, suppose the researcher desires to compare the female and male subpopulations of a country but only has one \textit{iid} sample with $n$ individuals from that country.
The researcher splits the sample into two subsamples based on the gender of each observation and sample sizes are random.
In order to accommodate both deterministic and random sample sizes,
we consider a sampling scheme which is dictated by a vector of indicator variables 
$\lbar{W}_n = (W_{1}, \ldots, W_{n})$, $W_{i} \in \{1,2\}$ for $i=1,\ldots,n$, where $\lbar{W}_n$ has distribution $\bQ_n$.
Conditional on $\lbar{W}_n$, the sample $\lbar{\bZ}_n = (\bZ_1, \ldots, \bZ_n)$ has $\bZ_i$ drawn from distribution $P_1$ if $W_{i}=1$ or from distribution $P_2$ if $W_{i}=2$, with observations independent across $i$. 
This accommodates the standard two-population sampling by making $\lbar{W}_n$ non-random with $n_1$ entries equal to $1$, $n_2$ entries equal to $2$, and $n_1+n_2=n$.
It also accommodates the example above of male and female subpopulations by making $W_{i}$ \textit{iid} and $\Pr(W_{i}=1)$ equal  to the probability of being female.
Conditional on $\lbar{W}_n$, there are $n_k$ \textit{iid} observations $\bZ_{k,i} \in \mmr^q$ from distribution $P_k$, $k=1,2$, 
and observations  are independent across $k$.
As before, $\ha{\theta}_{k} = \theta_{n_k,n}(\bZ_{k,1}, \ldots, \bZ_{k,n_k})$.

\begin{assumption}\label{aspt:sampling} There exists  $\lambda \in (0,1)$
 such that  the sequence of distributions  $\bQ_n$ satisfies ${n_1}/{n}  \pto   \lambda $  as $n\to \infty$.
 Moreover, Assumption \ref{aspt:asy_linear} holds for all sequences of sample sizes $m$ such that  $ m/n  \to   \lambda $
 or
 $ m/n  \to   1-\lambda $.

\end{assumption}

The test statistic $T_n$  is a function of the data $(\lbar{W}_n, \lbar{\bZ}_n)$ 
as follows:

\vspace{-1cm}
\begin{gather}
T_n(\lbar{W}_n, \lbar{\bZ}_n)  \doteq \sqrt{n h}   \left( \ha\theta_1 -  \ha\theta_2  \right), 
\label{eq:setup:teststat}
\end{gather}
\vspace{-1cm}

\noindent where we omit the subscript $n$ from the sequence $h_n$ of  Assumption \ref{aspt:asy_linear} to simplify notation.
 
The permutation test is constructed by permuting the order of observations in $\lbar{\bZ}_n$, while keeping the indicator variables $\lbar{W}_n$ unchanged,
and recomputing the test statistic.
A permutation is a one-to-one function $\pi: \{1,\ldots,n\} \to \{1, \ldots, n\}$, where $\pi(i) = j $ 
says that the $j$-th observation becomes the $i$-th observation once permutation $\pi$ is applied.
Given permutation $\pi$, the permuted sample becomes 
$(\lbar{W}_n,\lbar{\bZ}_n^{\pi}) = (W_{1}, \ldots, W_{n},\bZ_{\pi(1)}, \ldots, \bZ_{\pi(n)})$,
and the re-computed value of the test statistic is $T_n^{\pi} =  T_n(\lbar{W}_n, \lbar{\bZ}_n^\pi)$.
In other words, permutations swap individuals across the two samples to which they originally belonged according to $\lbar{W}_n$, which remains fixed. 
The set $\boup{G}_n$ is the set of all possible permutations $\pi$.
The number of elements in $\boup{G}_n$ is $n!$.

The two-sided permutation  test with nominal level $\alpha \in (0,1)$ is constructed as follows.
First, re-compute the test statistic $T_n(\lbar{W}_n, \lbar{\bZ}_n)$ for every $\pi \in \bG_n$. 
Rank the values of $T^{\pi}_n$ across $\pi$:
$T^{(1)}_{n} \leq T^{(2)}_{n} \leq \ldots \leq T^{(n!)}_{n}$.
Second, fix a nominal level $\alpha \in (0,1)$ and 
let $k^- = \lfloor n! \alpha/2  \rfloor$, that is, the largest integer less than or equal to $n!  \alpha/2 $,
and 
$k^+ = n! - k^- $. 
Third, compute the following quantities:
(i) $M^+$, the number of values $T^{(j)}_{n}$, $j = 1, \ldots, n!$, that are strictly greater than $T^{(k^+)}_{n}$;
(ii) $M^-$, the number of values $T^{(j)}_{n}$ that are strictly smaller than $T^{(k^-)}_{n}$;
(iii) $M^0$, the number of values $T^{(j)}_{n}$ that are equal to either $T^{(k^+)}_{n}$ or $T^{(k^-)}_{n}$;
and (iv) $a = \left( \alpha n! - M^+ - M^- \right)/M^0 $.
Finally, the outcome of the test is based on the test function $\phi$:

\vspace{-1cm}
\begin{align}
\phi (\lbar{W}_n, \lbar{\bZ}_n) 
= \left\{ 
	\begin{array}{rcl}
		1& \mbox{if} & T_{n} > T^{(k^+)}_{n} \text{ or } T_{n} < T^{(k^-)}_{n},
		\\
		a & \mbox{if} & T_{n} = T^{(k^+)}_{n} \text{ or } T_{n} = T^{(k^-)}_{n}, \\
		0 & \mbox{if} & T^{(k^-)}_{n} < T_{n} < T^{(k^+)}_{n}.
	\end{array}\right.
\label{eq:setup:testfun}
\end{align}
\vspace{-1cm}

For a given sample, if $\phi=1$, we reject the  null hypothesis;
if $\phi=a$, we randomly reject the null hypothesis with probability $a$;
otherwise, if $\phi=0$, we fail to reject the null.
A classic property of permutation tests  is exact size in finite samples under the ``sharp null,'' that is, 
the null hypothesis stating $P_1 = P_2$.

\begin{lemma}
\label{lemma:exact}
For any $n$, $\bQ_n$, $P_1$, and $P_2$, if $P_1=P_2$, then $\mme[\phi (\lbar{W}_n, \lbar{\bZ}_n)] = \alpha$.
\end{lemma}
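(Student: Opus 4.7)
The plan is to reduce the statement to the classical randomization hypothesis argument of Lehmann and Romano by conditioning on $\lbar{W}_n$. First I would fix a realization of $\lbar{W}_n$. Under the sharp null $P_1 = P_2 = P$, conditional on $\lbar{W}_n$ the entries $\bZ_1, \ldots, \bZ_n$ are iid draws from the common distribution $P$, since each $\bZ_i$ is drawn from $P_{W_i} = P$ regardless of the value of $W_i$. Consequently, the random vector $\lbar{\bZ}_n$ is exchangeable given $\lbar{W}_n$, i.e.\ for every permutation $\pi \in \bG_n$,
\begin{equation*}
(\bZ_{\pi(1)}, \ldots, \bZ_{\pi(n)}) \mid \lbar{W}_n \;\stackrel{d}{=}\; (\bZ_1, \ldots, \bZ_n) \mid \lbar{W}_n.
\end{equation*}

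Next I would invoke this exchangeability at the level of the test statistic. Because $T_n^\pi = T_n(\lbar{W}_n, \lbar{\bZ}_n^\pi)$ is obtained by applying the same measurable map as $T_n$ but to the $\pi$-permuted data, the above distributional equality implies that, conditional on $\lbar{W}_n$, the joint distribution of $(T_n^{\pi})_{\pi \in \bG_n}$ is invariant under any group action $\pi' \mapsto \pi \circ \pi'$. Equivalently, the collection $\{T_n^\pi : \pi \in \bG_n\}$ is an exchangeable family conditional on $\lbar{W}_n$, and in particular the original statistic $T_n$ (which equals $T_n^{\mathrm{id}}$) has the same conditional distribution as a uniformly drawn $T_n^\pi$ from this family.

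From here I would apply the standard randomization test lemma (e.g.\ Lemma 15.2.1 in Lehmann and Romano, 2005). That lemma states that whenever a test is defined via the empirical quantiles $T^{(k^-)}_n, T^{(k^+)}_n$ of the randomization distribution together with the explicit tie-breaking randomization through $a = (\alpha n! - M^+ - M^-)/M^0$ as in \eqref{eq:setup:testfun}, one has $\mme[\phi(\lbar{W}_n,\lbar{\bZ}_n) \mid \lbar{W}_n] = \alpha$ exactly. The thresholds $k^\pm$ and the constant $a$ are engineered precisely so that the average of $\phi$ over the $n!$ values $T_n^\pi$ equals $\alpha$ regardless of ties, which combined with the conditional exchangeability yields the conditional size equality.

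Finally I would remove the conditioning by taking expectations over $\lbar{W}_n \sim \bQ_n$: since $\mme[\phi \mid \lbar{W}_n] = \alpha$ almost surely, we obtain $\mme[\phi(\lbar{W}_n, \lbar{\bZ}_n)] = \alpha$ unconditionally, which is the claim. The only delicate point is the careful tie-handling: one must verify that the specific definition of $a$ makes $\sum_{\pi \in \bG_n} \phi(\lbar{W}_n, \lbar{\bZ}_n^\pi) = \alpha n!$ identically, rather than only in expectation. This is not hard—it follows directly from counting $M^+$, $M^-$, $M^0$—but it is the step that deserves the most care and is where the proof could go wrong if one were to use a naive indicator-based test.
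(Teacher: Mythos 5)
Your proposal is correct and follows essentially the same path as the paper's proof: condition on $\lbar{W}_n$, use the fact that under the sharp null the permuted and unpermuted data have the same conditional law given $\lbar{W}_n$, and combine this with the algebraic identity $\sum_{\pi\in\bG_n}\phi(\lbar W_n,\lbar{\bZ}_n^\pi)=\alpha\, n!$ guaranteed by the tie-handling construction. The paper writes this as a direct two-line computation rather than invoking the Lehmann--Romano randomization lemma by name, but the mechanics and key observations are identical.
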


\begin{remark}
Re-computing the test statistic for all $n!$ permutations is costly, even for  small $n$. 
Lemma \ref{lemma:exact} and remaining results in this section are unchanged if, instead of $\bG_n$, we use a random sample of $\bG_n$ with or without replacement (\cite{lehmann2005}, page 636). 
\end{remark}
\begin{remark}
The randomized outcome in the case of ties is important for exact size in finite samples if $P_1=P_2$.
However, it may be desirable to have a deterministic answer to a hypothesis test after observing a sample of data. 
An easy way to fix that is to set $\phi=0$ in the case of ties.
This makes the test conservative, that is, the size becomes less than or equal to $\alpha$.
\end{remark}

The set of distributions that satisfy the null hypothesis $\theta(P_1)=\theta(P_2)$ is in general larger than the set of distributions that satisfy the sharp null $P_1=P_2$.
Thus, there is no finite sample size control in general. 
To investigate the asymptotic properties of the test in \eqref{eq:setup:testfun}, we derive the probability limit of the permutation distribution,

\vspace{-1cm}
\begin{gather}\label{eq:cdfRT0}
\ha{R}_{T_n}(t) = \frac{1}{n!} \sum_{\pi \in \boup{G}_n} \mmi\{ T_{n}( \lbar{W}_n, \lbar{\bZ}_n^{\pi} ) \leq t \}.
\end{gather}
\vspace{-1cm}

The hypothesis test \eqref{eq:setup:testfun} utilizes critical values from  $\ha{R}_{T_n}$.
The  test has asymptotic size control if, under the null hypothesis, the probability limit of $\ha{R}_{T_n}$ equals the cumulative distribution function (CDF) of the limiting distribution of  $T_n$.
In order to study both size and power, we derive these limiting distributions without imposing the null hypothesis in the following theorems.

\begin{theorem}
\label{theo1}
Under Assumptions \ref{aspt:asy_linear}--\ref{aspt:sampling},
the permutation distribution $\ha{R}_{T_n}$ converges uniformly in probability to the CDF of a $N \left( 0, \tau^2 \right)$, i.e., 
$\sup_t \left| \ha{R}_{T_n}(t) - \Phi\left( t / \tau \right) \right| \pto 0,$
 where $\tau^2 \doteq {\xi^2(\ubar{P})} / {\lambda(1- \lambda)}$ and $\ubar{P} \doteq \lambda P_1 + (1-\lambda) P_2$.
 Moreover, 
$T_n - \sqrt{nh}\left( \theta(P_1) - \theta(P_2)  \right)  \dto N \left( 0, \sigma^2 \right),$
 where $\sigma^2 \doteq { \xi^2(P_1)}/{\lambda} + { \xi^2(P_2) }/{(1-\lambda)}$.
\end{theorem}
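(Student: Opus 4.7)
The plan is to prove the two conclusions in sequence, handling the sampling distribution of $T_n$ first and then the limit of the permutation CDF $\ha{R}_{T_n}$ via the coupling strategy that the authors emphasize.

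For the sampling distribution, I would first condition on the sample-size vector $\lbar{W}_n$ so that $n_1,n_2$ are treated as fixed along a realization, then invoke the asymptotic linearization in \eqref{eq:aspt:asy_linear} applied separately with $(P,m)=(P_1,n_1)$ and $(P,m)=(P_2,n_2)$. This gives
\begin{align*}
\sqrt{nh}\bigl(\ha\theta_k-\theta(P_k)\bigr)=\sqrt{\tfrac{n}{n_k}}\cdot\tfrac{1}{\sqrt{n_k}}\sum_{i=1}^{n_k}\psi_n(\bZ_{k,i},P_k)+o_p(1).
\end{align*}
Using Assumption \ref{aspt:sampling} to replace the random ratios $n_k/n$ by $\lambda,1-\lambda$, and Assumption \ref{aspt:asy_linear} (the variance convergence \eqref{eq:aspt:asy_linear_var} together with the Lindeberg-type bound \eqref{eq:aspt:asy_linear_lind}) to apply the triangular-array CLT announced as Lemma \ref{lemma:clt}, each centered sum is asymptotically $N(0,\xi^2(P_k))$; independence across $k$ gives variance $\xi^2(P_1)/\lambda+\xi^2(P_2)/(1-\lambda)=\sigma^2$. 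Since the conditional limit is the same $N(0,\sigma^2)$ along almost every sequence of sample splits, an unconditioning step (dominated convergence applied to characteristic functions) yields the unconditional convergence.

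For the permutation distribution, the main work is to show that, with probability one over the data, $\ha R_{T_n}(t)\to \Phi(t/\tau)$ pointwise in $t$; uniformity then follows from Pólya's theorem because the limit is continuous. The strategy is the coupling approximation. I would replace the original sample $\lbar{\bZ}_n$ by an iid sample $\lbar{\bV}_n$ from the mixture $\ubar P_n\doteq (n_1/n)P_1+(n_2/n)P_2$, coupled to the original so that the resulting test statistic differs negligibly after permutation. The point is that for iid data from a single distribution, the permutation-based assignments to ``group 1'' and ``group 2'' are exchangeable simple random samples of sizes $n_1,n_2$, for which the linearization \eqref{eq:aspt:asy_linear} applies with $P=\ubar P_n\in\m{P}$, yielding
\begin{align*}
T_n^{\pi}=\sqrt{nh}\,\Bigl(\tfrac{1}{n_1}\sum_{i\in\pi^{-1}(1)}\psi_n(\bV_i,\ubar P_n)-\tfrac{1}{n_2}\sum_{i\in\pi^{-1}(2)}\psi_n(\bV_i,\ubar P_n)\Bigr)+o_p(1),
\end{align*}
where the $o_p(1)$ term is uniform over $\pi$ by the supremum over $\m{P}$ in Assumption \ref{aspt:asy_linear}. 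Conditional on $\lbar{\bV}_n$, a permutation (finite-population) CLT for the linear statistic on the right, with weights that are mean-zero empirical versions of $\psi_n$, gives a normal limit with conditional variance $\xi^2(\ubar P_n)\bigl(\tfrac{1}{n_1/n}+\tfrac{1}{n_2/n}\bigr)/n\cdot nh$ after the appropriate scaling; the continuity assumption \eqref{eq:aspt:asy_linear_xicont} together with $n_1/n\pto\lambda$ collapses this to $\tau^2=\xi^2(\ubar P)/[\lambda(1-\lambda)]$.

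The main obstacle is the coupling step itself, since unlike \cite{chung2013} the construction must succeed without the null $\theta(P_1)=\theta(P_2)$; this is where a new bound on the $L^2$ size of $T_n^{\pi}(\lbar{\bZ}_n)-T_n^{\pi}(\lbar{\bV}_n)$ is needed, using the uniform moment control in \eqref{eq:aspt:asy_linear_mom} across $P\in\m P$ so that error bounds do not blow up under the mixture sampling. A secondary obstacle is keeping all of the above valid along random sample-size sequences: I would carry out all limits conditionally on a sequence of realizations $\{n_1/n\}$ with $n_1/n\to\lambda$, which has probability one by Assumption \ref{aspt:sampling}, and apply the conditional triangular-array CLT (Lemmas \ref{lemma:cond:clt}--\ref{lemma:clt}) whose random-weight version is tailored precisely for this. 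Finally, combining pointwise convergence of $\ha R_{T_n}$ with continuity of the Gaussian CDF via Pólya upgrades to uniform convergence in probability, completing the first conclusion.
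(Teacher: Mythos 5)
Your plan for the sampling distribution of $T_n$ matches the paper's proof: condition on $\lbar{W}_n$, expand $T_n$ via Assumption~\ref{aspt:asy_linear}, verify the Lindeberg condition from \eqref{eq:aspt:asy_linear_lind}, and uncondition via Lemma~\ref{lemma:cond:clt}. That part is fine.

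For the permutation CDF, however, there is a genuine gap at the step where you write the permuted test statistic as a linear statistic plus an ``$o_p(1)$ term uniform over $\pi$'' and attribute this uniformity to the supremum over $\m{P}$ in Assumption~\ref{aspt:asy_linear}. That supremum is over the generating distribution $P$, not over permutations; for a \emph{fixed} $\pi$ the permuted sub-sample $(\bZ_{\pi(1)},\ldots,\bZ_{\pi(n_1)})$ is not an iid sample from $\ubar P_n$, and Assumption~\ref{aspt:asy_linear} gives you nothing uniformly in $\pi$. What is available (Lemma 5.3 and Remark~A.2 of \cite{chung2013}) is that for a single \emph{random} permutation $\pi$, uniformly distributed on $\bG_n$ and independent of the data, the remainder $R_{k,n}(\bZ_{\pi(1)},\ldots,\bZ_{\pi(n_k)}) \pto 0$. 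That is exactly what the paper uses, and it is strictly weaker than uniformity over $\pi$. Your proposed route — condition on the data and apply a finite-population CLT to the permutation distribution of the linear statistic — cannot absorb the remainder without some argument controlling it across all $\pi$ simultaneously, which you do not provide. The paper avoids this entirely by establishing joint convergence $(T_n^{\pi}, T_n^{\pi'}) \dto (T,T')$ for two independent random permutations $\pi,\pi'$ (using the coupling and the Cram\`er–Wold device with Lemma~\ref{lemma:clt}) and then invoking Hoeffding's theorem (Theorem~15.2.3 of \cite{lehmann2005}), which converts that joint convergence into $\ha R_{T_n}(t)\pto\Phi(t/\tau)$. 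That is the essential ingredient missing from your plan.

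Two smaller points. First, you propose to show pointwise convergence of $\ha R_{T_n}(t)$ ``with probability one over the data''; both the coupling and CR's Lemma~5.3 only yield convergence in probability, which is also all that the theorem claims — you should state and aim for the in-probability version, then upgrade to uniformity via P\'olya (the paper's Lemma~\ref{lemma:unifcdf}). Second, your displayed linearization has a scaling slip: since $\psi_n$ already carries the $h^{-1/2}$ factor, the linear term should be $\sqrt{n}\bigl(n_1^{-1}\sum - n_2^{-1}\sum\bigr)$ rather than $\sqrt{nh}\bigl(\cdots\bigr)$; your final expression for $\tau^2$ is nonetheless correct.
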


The permutation distribution fails to control size asymptotically because the asymptotic variance of the permutation distribution 
$\tau^2$ generally differs from $\sigma^2$.
To appreciate the implications of Theorem \ref{theo1}, we present the parametric example below and four nonparametric examples in Section \ref{sec:applications} detailing cases where $\tau^2 = \sigma^2$ and $\tau^2 \neq \sigma^2$.

\begin{example} \label{example:param}\rm (Parametric Model)
 For $k=1,2$, $\bZ_k=(X_k,Y_k) \sim P_k$, $X_k \sim U[0,1]$, $\Exp{[Y_k|X_k]}=\theta(P_k) + \beta X_k$, and $\Var{[Y_k|X_k]} = v_k$.
There are two independent samples with $n_1/n \to \lambda$. 
The ordinary least squares estimator is $\ha{\theta}_k = \frac{\overbar{X_k^2}\overbar{Y_k} - \overbar{X_k}\overbar{X_kY_k}}{\overbar{X_k^2} - (\overbar{X_k})^2},$
where $\overbar{X_k^s} = \frac{1}{n_k}\sum_{i=1}^{n_k} X_{k,i}^s$ for $s=1, 2$, and $\overbar{X_kY_k} = \frac{1}{n_k}\sum_{i=1}^{n_k} X_{k,i}Y_{k,i}$.
For $\bV = (R, S) \sim P$, the setting satisfies Assumption $\ref{aspt:asy_linear}$
with $h=1$; 
$\psi_n(\bV, P)=\psi(\bV, P)=(S-\Exp_P{(S|R)})(4-6R)$; 
and $\xi^2(P) = \Exp_P\left[ (S-\Exp_P(S|R))^2 (4-6R)^2\right].$
Under the null hypothesis, the asymptotic variance of 
$T_n = \sqrt{n}(\ha{\theta}_1 - \ha{\theta}_2)$ and of the permutation distribution are, respectively,
    $\sigma^2 = 4 \left[ \frac{ v_1 }{ \lambda } + \frac{ v_2 }{1-\lambda } \right]$ and 
    $\tau^2  = 4 \left[ \frac{ v_1 }{ 1- \lambda } + \frac{ v_2 }{\lambda } \right].$
Unless $\lambda=1/2$ or $v_1=v_2$, $\sigma^2$ and $\tau^2$ do not agree in general, distorting the rejection probability under the null for the permutation test. 
Section C 
in the supplement displays a graph that illustrates this distortion using simulated data. 
\end{example}

Since $\tau^2$ and $\sigma^2$ are generally different, the test statistic $T_{n}$ must be transformed to become asymptotically pivotal.
Thus, we divide $T_n$ by the square root of a consistent estimator for its asymptotic variance.
For each population $k \in \{ 1, 2\}$, let $\hat \xi_{k}^2 = \xi_{n_k,n}^2(\bZ_{k,1}, \ldots, \bZ_{k,n_k})$ be a consistent
estimator for  $\xi^2(P_k)$ and assume the functions $\xi_{n_1,n}^2$ and $\xi_{n_2,n}^2$ satisfy the following assumption.
\begin{assumption}
\label{aspt:var:est}
Let $\bV_1, \ldots, \bV_m$ be an \textit{iid} sample from a distribution $P \in \m{P}$.
Use these observations to construct the estimator $\ha\xi^2 = \xi^2_{m,n}(\bV_1, \ldots, \bV_m)$.
Assume that, for any  sequence of sample sizes $m$ such that  $ m/n  \to   \lambda $
or
$ m/n  \to   1-\lambda $,
$\ha\xi^2 - \xi^2(P)$ converges in probability to $0$ uniformly over $P \in \m{P}$.
\end{assumption}

Then, the studentized test statistic $S_n$ is

\vspace{-1cm}
\begin{align}
S_n (\lbar W_n, \lbar \bZ_n)= \frac{ T_n (\lbar W_n, \lbar \bZ_n) } { \ha \sigma_n }
\label{eq:setup:teststat:stu}
\end{align}
\vspace{-1cm}

\noindent where $\ha \sigma_n$ is the square root of the consistent estimator for the asymptotic variance of $T_n$, that is, 
$\ha\sigma_n^2 =   \frac{n}{n_1}  \ha\xi_{1}^{2} +  \frac{n}{n_2} \ha\xi_{2}^2.$

\begin{theorem}
\label{theo2}
Let $\ha{R}_{S_n}$ be the permutation CDF defined in (\ref{eq:cdfRT0}) with $T_n$ replaced by $S_n$.
Under Assumptions \ref{aspt:asy_linear}--\ref{aspt:var:est},
$\ha{R}_{S_n}$ converges uniformly in probability to the CDF of a $N \left( 0, 1 \right)$, i.e.,
$\sup_t \left| \ha{R}_{S_n}(t) - \Phi\left( t \right) \right| \pto 0.$
Moreover, 
$S_n - \frac{\sqrt{nh}\left( \theta(P_1) - \theta(P_2)  \right)}{\ha \sigma_n}  \dto N \left( 0, 1 \right).$
\end{theorem}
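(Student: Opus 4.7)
The plan is to leverage Theorem~\ref{theo1} and combine it with the consistency of $\ha\sigma_n$ (and its permutation counterpart $\ha\sigma_n^\pi$) via Slutsky-type arguments applied separately to the sampling and permutation distributions. For the sampling claim, I would write $S_n - \sqrt{nh}(\theta(P_1)-\theta(P_2))/\ha\sigma_n = [T_n - \sqrt{nh}(\theta(P_1)-\theta(P_2))]/\ha\sigma_n$. Theorem~\ref{theo1} gives that the numerator tends in distribution to $N(0,\sigma^2)$. For the denominator, Assumption~\ref{aspt:sampling} yields $n/n_k \pto 1/\lambda_k$ with $\lambda_1 = \lambda$ and $\lambda_2 = 1-\lambda$, and Assumption~\ref{aspt:var:est} applied along those subsequences (the subsamples being iid from $P_k \in \m{P}$) yields $\ha\xi_k^2 \pto \xi^2(P_k)$. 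Hence $\ha\sigma_n^2 \pto \xi^2(P_1)/\lambda + \xi^2(P_2)/(1-\lambda) = \sigma^2$, and Slutsky's theorem concludes this part.

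For the permutation claim, I would prove a ``permutation Slutsky'' lemma: given (i) $\sup_t|\ha R_{T_n}(t) - \Phi(t/\tau)| \pto 0$ from Theorem~\ref{theo1}, and (ii) $\frac{1}{n!}\sum_\pi \mmi\{|\ha\sigma_n^\pi - \tau| > \eps\} \pto 0$ for every $\eps > 0$, the conclusion $\sup_t|\ha R_{S_n}(t) - \Phi(t)| \pto 0$ follows. For $t > 0$, the elementary sandwich
\[ \mmi\{T_n^\pi \leq t(\tau-\eps)\} - \mmi\{|\ha\sigma_n^\pi-\tau|>\eps\} \;\leq\; \mmi\{S_n^\pi \leq t\} \;\leq\; \mmi\{T_n^\pi \leq t(\tau+\eps)\} + \mmi\{|\ha\sigma_n^\pi-\tau|>\eps\} \]
(with the $t<0$ case symmetric) lets one average over $\pi$ and bound $\ha R_{S_n}(t)$ between $\ha R_{T_n}(t(\tau \pm \eps)) \pm o_p(1)$, which converge to $\Phi((\tau \pm \eps)t/\tau)$. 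Sending $\eps \downarrow 0$ and using continuity of $\Phi$ yields pointwise convergence, upgraded to uniform convergence by a P\'olya-type argument since $\Phi$ is continuous. To verify (ii), I would adapt the coupling approximation used for Theorem~\ref{theo1} to $\xi^2_{m,n}$ in place of $\theta_{m,n}$: conditional on $\lbar W_n$, each permuted subsample can be coupled with vanishing error to an iid sample from the mixture $\ubar P = \lambda P_1 + (1-\lambda) P_2$, and since $\ubar P \in \m{P}$, Assumption~\ref{aspt:var:est} applied to the coupled sample delivers $\ha\xi_k^{2,\pi} \pto \xi^2(\ubar P)$. Combining with $n/n_k \pto 1/\lambda_k$ gives $(\ha\sigma_n^\pi)^2 \pto \xi^2(\ubar P)[1/\lambda + 1/(1-\lambda)] = \xi^2(\ubar P)/[\lambda(1-\lambda)] = \tau^2$, which implies (ii).

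The main obstacle is step (ii): permutation-based consistency of $\ha\sigma_n^\pi$ at $\tau$ is not a direct corollary of Assumption~\ref{aspt:var:est}, because after permutation the subsamples are neither iid from a single $P \in \m{P}$ nor independent of the permutation itself. The uniform-in-$P$ consistency of $\xi^2_{m,n}$ from Assumption~\ref{aspt:var:est} must be paired with the coupling construction so that convergence can be read off the iid-$\ubar P$ sample, and the random sample sizes under Assumption~\ref{aspt:sampling} then force a conditioning argument on $\lbar W_n$ plus a subsequence extraction to handle arbitrary realizations with $n_k/n \to \lambda_k$. Beyond these technicalities, the remainder is a clean Slutsky-style reduction to Theorem~\ref{theo1}.
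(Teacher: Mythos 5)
Your overall architecture matches the paper's: establish the sampling claim by writing $S_n - \sqrt{nh}\left(\theta(P_1)-\theta(P_2)\right)/\ha\sigma_n = \left[T_n - \sqrt{nh}\left(\theta(P_1)-\theta(P_2)\right)\right]/\ha\sigma_n$, combine Theorem~\ref{theo1} with $\ha\sigma_n^2 \pto \sigma^2$ via Slutsky, and lift to the unconditional statement through the conditional-CLT device; then establish the permutation claim by showing the permuted variance estimator converges to $\tau^2$ and applying a permutation analogue of Slutsky. The first part is identical to the paper. Your sandwich/P\'olya argument for the permutation Slutsky step is a clean elementary substitute for the paper's citation of Theorem~5.2 of \cite{chung2013}; both routes are sound, and yours is arguably more self-contained.

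The one place you should tighten is step (ii), and you correctly flag it as the main obstacle, but the fix you sketch would not go through as written. The coupling argument in the proof of Theorem~\ref{theo1} bounds the variance of $T_n^{*\pi\pi_0}-T_n^\pi$ by exploiting the explicit linear form $T_n^\pi = \frac{\sqrt n}{n_1}\sum_i \Delta_{\pi(i)}\psi_n(Z_i,\ubar P_n)$: swapping $D=O_p(\sqrt n)$ observations changes each linear functional by at most $D$ bounded summands, and the variance bound falls out. Nothing in Assumption~\ref{aspt:var:est} gives $\xi^2_{m,n}$ such a linear (or Hamming-Lipschitz) structure, so the Theorem~\ref{theo1} coupling calculation does not ``adapt'' to $\xi^2_{m,n}$ mechanically. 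The paper handles this by citing Lemma~5.3 and Remark~A.2 of \cite{chung2013}, which is a general transfer result: if a statistic sequence applied to iid draws from $\ubar P_n$ converges in probability to a constant, the same statistic applied to a uniformly permuted pooled sample converges to the same constant. That lemma's proof is itself a coupling argument, but it is a different one from what appears in Section~\ref{proof:theo1:coupling} and does not require any Lipschitz-type regularity of the statistic. So you should either invoke that transfer lemma explicitly or prove a version of it; as stated, ``the coupling construction... delivers $\ha\xi_k^{2,\pi}\pto\xi^2(\ubar P)$'' conflates closeness of the data vectors in Hamming distance with closeness of the statistics, which is the very thing that needs proof. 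Also, a small precision point: the coupling is to the finite-$n$ mixture $\ubar P_n = \tfrac{n_1}{n}P_1 + \tfrac{n_2}{n}P_2$, so Assumption~\ref{aspt:var:est} first gives consistency at $\xi^2(\ubar P_n)$, and you then need Equation~\eqref{eq:aspt:asy_linear_xicont} to pass to $\xi^2(\ubar P)$.
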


\begin{example_contd}[\ref{example:param}]\rm If the test statistic $T_n$ is appropriately studentized by the usual ordinary least squares formula for standard errors, both permutation and sampling distributions of $S_n$ coincide asymptotically. 
Section C in the supplement simulates data for this example and presents these two distributions graphically. 
\end{example_contd}

Note that the standard deviation $\hat \sigma_n$ that divides $T_n$ must be consistent for $\sigma$, as opposed to $\tau$.
However, when $\hat \sigma_n$ is evaluated using permuted samples, it converges in probability to $\tau$. 
Under the null hypothesis, both the permutation distribution and the test statistic $S_n$ are asymptotically standard normal. 
Therefore, our robust permutation test in \eqref{eq:setup:testfun} with $T_n$ replaced by $S_n$ has asymptotic size equal to the nominal level $\alpha$, even if $P_1 \neq P_2$.
In case $P_1 = P_2$, this test has exact size in finite samples.
Since Theorems \ref{theo1} and \ref{theo2} are true regardless of whether the null hypothesis holds or not, we can now study the power properties of the permutation test.

\begin{corollary}\label{coro1} Let $\phi(\lbar{W}_n, \bZ_n)$ be the permutation test in \eqref{eq:setup:testfun} with $T_n$ replaced by $S_n$, and suppose Assumptions \ref{aspt:asy_linear}--\ref{aspt:var:est} hold.
If the null hypothesis holds, then $\mme[\phi(\lbar{W}_n, \bZ_n)] \to \alpha$; otherwise, $\mme[\phi(\lbar{W}_n, \bZ_n)] \to 1$.
Moreover, assume $S_n$ has a limiting distribution under a sequence of local alternatives contiguous to the null.
Then, the asymptotic power of the robust permutation test against local alternatives is the same as that of the test that uses critical
values from the limiting null distribution of $S_n$.\footnote{Alternative resampling methods such as the bootstrap and subsampling also share the same asymptotic local power as the permutation test because they produce critical values that are consistent for standard Gaussian critical values under the null hypothesis.}$^,$\footnote{Suppose one uses an estimator  $\ha{\sigma}^2_n$ that assumes the null hypothesis is true; that is, $\ha{\sigma}^2_n$ is consistent for $\sigma^2$ under the null hypothesis 
but has a different probability limit under the alternative.
Regardless of whether the null is true, such an estimator applied to a random permutation of the data is generally consistent for $\tau^2$, and  Corollary \ref{coro1} remains true.
Consistency for $\tau^2$ comes from the fact that an estimator applied to a random permutation behaves as if it were applied to data from a mixture distribution, where the null is always true 
(Section B.3.2 in the supplement).
}
\end{corollary}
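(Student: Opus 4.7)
The plan is to prove the three claims in sequence, using Theorem \ref{theo2} as the main building block together with standard arguments for translating uniform convergence of CDFs into convergence of critical values and rejection probabilities.

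First, for size under the null, I would apply Theorem \ref{theo2}: since $\theta(P_1)=\theta(P_2)$, the statistic $S_n$ converges in distribution to $N(0,1)$, and simultaneously $\ha R_{S_n}$ converges uniformly in probability to $\Phi$. Because $\Phi$ is continuous and strictly increasing, the $\alpha/2$ and $1-\alpha/2$ quantiles of $\ha R_{S_n}$ (which are the critical values $T_n^{(k^-)}$ and $T_n^{(k^+)}$ in \eqref{eq:setup:testfun}, applied now with $S_n$ replacing $T_n$) converge in probability to the corresponding standard normal quantiles $\pm z_{1-\alpha/2}$. Combining the continuous mapping theorem with Slutsky, $\mmp\{S_n > T^{(k^+)}_n\} + \mmp\{S_n < T^{(k^-)}_n\} \to \alpha$, and since ties have probability zero in the limit the randomized term $a$ contributes negligibly. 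This gives $\mme[\phi]\to\alpha$.

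Second, for consistency against a fixed alternative with $\theta(P_1)\neq\theta(P_2)$, I note that $\ha\sigma_n$ converges in probability to a finite positive constant by Assumption \ref{aspt:var:est}, while $\sqrt{nh}|\theta(P_1)-\theta(P_2)|\to\infty$ since $nh\to\infty$. Hence $|S_n|\pto\infty$. The permutation critical values remain $O_p(1)$ because Theorem \ref{theo2}'s conclusion on $\ha R_{S_n}$ does not require the null, so $T^{(k^\pm)}_n\pto\pm z_{1-\alpha/2}$. Therefore $\mmp\{|S_n|>T^{(k^+)}_n\mbox{ or }S_n<T^{(k^-)}_n\}\to 1$.

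Third, for local power, suppose $S_n\dto S^\ast$ under a sequence of local alternatives contiguous to the null. Under the null, Theorem \ref{theo2} yields $\ha R_{S_n}\pto\Phi$ uniformly, which is a statement of convergence in probability; by contiguity, this convergence in probability is preserved under the sequence of local alternatives, hence the critical values from $\ha R_{S_n}$ still converge in probability to $\pm z_{1-\alpha/2}$. A final application of Slutsky gives that the rejection probability under the local alternative converges to $\mmp\{|S^\ast|>z_{1-\alpha/2}\}$, which is precisely the local power of the test that uses fixed standard normal critical values.

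The main obstacle is the local-power step, since $\ha R_{S_n}$ is a random object and we need its limit to carry over from the null to the contiguous sequence; the cleanest way is to invoke Le Cam's first lemma (convergence in probability transfers under contiguity) rather than trying to re-derive Theorem \ref{theo2} along a drifting sequence of $(P_1,P_2)$ pairs. The remaining steps are then routine bookkeeping: converting uniform CDF convergence into quantile convergence, handling the randomized tie-breaking term $a$ via the continuity of the limits, and applying Slutsky.
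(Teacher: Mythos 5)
Your proposal is correct and covers all three claims. The size and local-power steps are essentially the paper's own argument: both convert the uniform convergence $\ha R_{S_n}\pto\Phi$ into convergence of the permutation quantiles via monotonicity and continuity of $\Phi$ (the paper cites Lemma 11.2.1 of Lehmann--Romano for this), then apply Slutsky; and both handle local power by noting that the consistency of $\hat r_n(a)$ for $\Phi^{-1}(a)$ under the null transfers to the contiguous sequence (Le Cam's first lemma, exactly as you state), so the permutation test and the asymptotic test share the same limiting rejection probability under local alternatives.

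Where you take a genuinely different route is in the consistency step against a fixed alternative. You argue directly: under $\theta(P_1)\neq\theta(P_2)$, the centered statistic $S_n - \sqrt{nh}(\theta(P_1)-\theta(P_2))/\ha\sigma_n$ is $O_p(1)$ by Theorem~\ref{theo2}, while $\ha\sigma_n$ converges to $\sigma>0$ and $\sqrt{nh}\,|\theta(P_1)-\theta(P_2)|\to\infty$, so $|S_n|\pto\infty$; meanwhile the permutation quantiles stay $O_p(1)$ because Theorem~\ref{theo2} holds without the null. That is clean and correct. The paper instead runs a two-index argument: it introduces an auxiliary finite $m$, shows that for fixed $m$ and $n\to\infty$ the rejection probability is bounded in terms of $\Phi\!\left(r(a)-\sqrt{mh_m}\,\eta/\sigma\right)$, and then sends $m\to\infty$. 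Both derivations reach the same conclusion; the paper's version makes the $\limsup$/$\liminf$ bookkeeping explicit, whereas your divergence argument shortcuts it. Your version is a legitimate simplification, and if anything exposes the key mechanism (a diverging noncentrality against bounded critical values) more transparently.

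One small cosmetic point: in the size step you should spell out that the tie events $\{S_n=T^{(k^\pm)}_n\}$ vanish because the permutation quantiles converge in probability to constants at which the limiting $N(0,1)$ distribution of $S_n$ places no mass; you gesture at this ("ties have probability zero in the limit") but it is worth a sentence since the test function in \eqref{eq:setup:testfun} explicitly randomizes on ties. With that addition, the argument is complete.
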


\vspace{-6mm}

\section{Applications}
\label{sec:applications}

\vspace{-2mm}

\indent 

In this section, we apply our theory  to four different nonparametric problems:
testing for equality of conditional mean and quantile functions evaluated at a point,
and 
testing for continuity of conditional mean and PDF 
at a point. 
For a simple and intuitive presentation,  we use the Nadaraya--Watson (NW) type of kernel estimators throughout, but proofs in this section generalize to other types of estimators, 
e.g., local-polynomial regression (LPR), sieves, etc.
In particular, we demonstrate the generalization of our third application in Section \ref{sec:applications:rdd} to the LPR estimator in 
Section D.5 
of the supplement. 
We obtain the usual rate restrictions on the bandwidth tuning parameter $h$, which allows researchers to choose among standard options available in the literature. 
We fit the four examples into the general framework of Section \ref{sec:theory}, 
demonstrate the validity of Assumptions  \ref{aspt:asy_linear} and \ref{aspt:sampling},
and show that studentization is generally required for asymptotic size control, except for the PDF continuity test.
The third and fourth examples (testing for continuity of conditional mean and PDF) illustrate the sample-splitting feature of our general framework.

Throughout this section, $K(u)$ denotes a symmetric, non-negative, bounded kernel density function  with $\int_{-\infty}^{\infty} |u|^3 K(u) ~ du <\infty$. 
We denote $\kappa_{s,t} = \int_{-\infty}^{\infty} u^s K^t(u) ~ du$
and 
$\kappa_{s,t}^+ = \int_{0}^{\infty} u^s K^t(u) ~ du$.

\subsection{Controlled Means}
\label{sec:applications:control_mean}

\indent 

Researchers are often interested in comparing conditional mean functions between two different populations.
For example, in randomized controlled trials, $P_1$ and $P_2$ are the populations of control and treatment individuals, respectively. 
Of interest is the average outcome $Y$ after controlling for an individual characteristic $X$.
For instance, outcomes of a professional training program may differ between rich and poor individuals, and we would like to condition on income $X$.

There are two independent samples of bivariate variables: $n_1$ observations with $\bZ_{1,i}=(X_{1,i}, Y_{1,i})$ \textit{iid} $P_1$ and $n_2$ observations with $\bZ_{2,i}=(X_{2,i}, Y_{2,i})$ \textit{iid} $P_2$.
The vector $\lbar{W}_n$ is non-random with $n_1$ entries equal to $1$ and $n_2$ entries equal to $2$, where $n=n_1+n_2$.
For a given interior point $x$,
$\theta(P_k) = \mme[Y_{k,i} | X_{k,i}=x]$, $k=1,2$, and the null hypothesis is $\theta(P_1) = \theta(P_2)$.

A common estimator for conditional mean functions is the NW kernel estimator, which computes a weighted average of $Y$ local to $X=x$ for each population $k=1,2$. 
For a bandwidth $h > 0$ and a kernel density function $K$,
the NW estimator for $\theta(P_k)$ is written as

\vspace{-1cm}
\begin{gather}
\ha{\theta}_{k}^b = \theta_{n_k,n}^b(\bZ_{k,1}, \ldots, \bZ_{k,n_k}) \doteq \frac{\sum\limits_{i=1}^{n_k} K \left(  \frac{ X_{k,i}-x }{ h }  \right)  Y_{k,i} }
{\sum\limits_{i=1}^{n_k} K \left(  \frac{ X_{k,i}-x }{ h }  \right) }.
\label{eq:nw_estimator}
\end{gather}
\vspace{-1cm}

The superscript $b$ indicates that there is bias in the asymptotic distribution of   
$\sqrt{n h} (\ha{\theta}_{k}^b - \theta(P_k))$ 
whenever the bandwidth choice converges to zero at the slowest possible rate, i.e., $h=O(n^{-1/5})$ (Proposition \ref{prop:control_mean}).
This is the case of mean squared error (MSE) optimal bandwidths, and inference requires bias correction in this case.
A conventional solution is to subtract  a first-order bias term $h^2 B(P_k)$ from $\ha{\theta}_{k}^b$, 
where $B(P_k)$ is nonparametrically estimated by $\ha{B}_k = B_{n_k,n}(\bZ_{k,1}, \ldots, \bZ_{k,n_k})$.
We give the analytical formulas for $B(P_k)$ and $\ha{B}_k$ in 
Section D.1 (Equation D.6)
of the supplemental appendix.
Our permutation tests utilize the bias-corrected NW estimator 
$\ha{\theta}_{k} = \theta_{n_k,n}(\bZ_{k,1}, \ldots, \bZ_{k,n_k}) \doteq 
 \theta_{n_k,n}^b(\bZ_{k,1}, \ldots, \bZ_{k,n_k}) - h^2 B_{n_k,n}(\bZ_{k,1}, \ldots, \bZ_{k,n_k})
=\ha{\theta}_{k}^b - h^2 \ha{B}_k$.
Note that no bias correction is needed if $h=o(n^{-1/5})$ because
$\sqrt{n h} (\ha{\theta}_{k}^b - \theta(P_k) - h^2 B(P_k)) 
= \sqrt{n h} (\ha{\theta}_{k}^b - \theta(P_k))  +o(1)$.

An alternative solution to the bias issue consists of replacing the NW estimator $\ha{\theta}_{k}^b$, which is the LPR estimator of order zero, with the LPR estimator of order two.
For LPR estimation at an interior point $x$, if $\ha{\theta}_{k}^b$ is LPR of order $\rho$,  
the asymptotic bias of $\sqrt{n h} (\ha{\theta}_{k}^b - \theta(P_k))$
is $O\left( \sqrt{n h} h^{\rho+2} \right)$ if $\rho$ is even, 
or $O\left( \sqrt{n h} h^{\rho+1} \right)$ if $\rho$ is odd (Theorem 3.1 by \cite{fan1996}).
Thus, if an LPR of order $\rho$ has asymptotic bias, that bias vanishes if we increase the order of the polynomial by two if $\rho$ is even, or by one if $\rho$ is odd.
We keep the NW estimator for a simple and intuitive presentation of our theory in this section, but our permutation tests also apply to LPR estimators.
We demonstrate this in the context of our third application (Section \ref{sec:applications:rdd}) in the supplemental appendix 
(Section D.5).
In the same spirit, our simulations in Section \ref{sec:mc}
and empirical examples in Section \ref{sec:empirical} utilize LPR of order one with MSE-optimal bandwidth and bias-correct it using LPR of order two.
Other options of bias correction include higher-order kernels (\cite{liracine2007}, Section 1.11) and the bootstrap (\cite{racine2001}).

When the distribution of  $(X_1,Y_1)$ equals that of $(X_2,Y_2)$, the permutation test in \eqref{eq:setup:testfun} has exact size in finite samples.
For other cases, we rely on asymptotic size control, which depends on Assumptions \ref{aspt:asy_linear} and \ref{aspt:sampling}.
Below, we describe regularity conditions such as continuous differentiability of conditional moments, and Proposition \ref{prop:control_mean} proves the asymptotic linear representation of the bias-corrected NW estimator.

\begin{assumption}\label{aspt:prop:rate}
As $n\to \infty$, 
$  n_1/n \to \lambda \in  (0,1) $, 
$h \to 0$,  
$n h \to \infty$, and 
$\sqrt{n h} h^2 \to c \in [0,\infty)$.
\end{assumption}

\vspace{-4mm}

\begin{assumption}\label{aspt:prop:pdf}
For $k=1,2$, the distribution of $X_k$ has PDF $f_{X_k}(x_k)$ that is bounded, bounded away from zero, and three times differentiable with bounded derivatives. 
\end{assumption}

\vspace{-4mm}

\begin{assumption}\label{aspt:prop:condmean}
For $k=1,2$, $m_{Y_k|X_k}(x_k) \doteq \Exp[Y_k|X_k=x_k]$  is bounded, three times differentiable, and its derivatives are bounded;
there exists $\zeta>0$ such that $\Exp[|Y_k |^{2+\zeta} | X_k]$ is almost surely bounded.
\end{assumption}

\vspace{-4mm}

\begin{assumption}\label{aspt:prop:condvar}
For $k=1,2$, $v_{Y_k|X_k}(x_k) \doteq \Var[Y_k|X_k=x_k]$ is bounded, differentiable, its derivative is bounded,
and $v_{Y_k|X_k}(x) >0$.
\end{assumption}

\vspace{-4mm}

\begin{assumption}\label{aspt:prop:bias}
Let $\bV_1, \ldots, \bV_m$ be an \textit{iid} sample from a distribution $P \in \m{P}$,
where $m$ grows with $n$.
Let $\ha{B} = B_{m,n}(\bV_{1}, \ldots, \bV_{m})$ be a consistent estimator for the first-order bias term $B(P)$. 
Assume that $\left( \ha{B} - B(P) \right) \pto 0$
uniformly over $P \in \mathcal{P}$ for any  sequence $m$ such that  $ m/n  \to   \lambda$ 
or
$ m/n  \to   1-\lambda $.
\end{assumption}

\vspace{-4mm}
 
\begin{proposition}
\label{prop:control_mean}
Suppose Assumptions \ref{aspt:prop:rate}--\ref{aspt:prop:bias} hold.
Let $\bV_1=(R_1,S_1), \ldots, \bV_m=(R_m,S_m)$ be an \textit{iid} sample from a distribution $P \in \m{P}$,
where $m$ grows with $n$, and consider the bias-corrected estimator 
$\ha{\theta} = \theta_{m,n}(\bV_{1}, \ldots, \bV_{m})$ as described in the text.
Then, Assumptions \ref{aspt:asy_linear} and \ref{aspt:sampling} hold for $\ha{\theta}$ with 
$\psi_{n}(\bV, P)  =  
		K\left( \frac{ R - x }{h_n} \right) 
		\left( \frac{S - m_{S|R}(R;P)}{\sqrt{h_n} f_{R}(x;P) } \right)$ and 
$\xi^2(P)  = 	\frac{ v_{S|R}(x;P)  \kappa_{0,2} }{ f_{R}(x;P) },$
where $m_{S|R}(x;P)$ is the conditional mean of $S$ given $R=x$,
$v_{S|R}(x;P)$ is the conditional variance of $S$ given $R=x$,
and 
$f_{R}(x;P)$ is the PDF of $R$ at $x$, 
all three assuming  $\bV=(R,S) \sim P \in \m{P}$.
\end{proposition}

The proof of Proposition \ref{prop:control_mean} adapts conventional arguments for nonparametric asymptotics (e.g., Theorem 2.2 by \cite{liracine2007})  and is found in 
Section D.1 of the supplement.
Under the null hypothesis, the asymptotic variance of $T_n$ and of the permutation distribution are, respectively, 
$\sigma^2  =  \kappa_{0,2} \left( \frac{ v_{Y_1|X_1}(x)   }{ \lambda f_{X_1}(x) } + \frac{ v_{Y_2|X_2}(x)   }{ (1-\lambda) f_{X_2}(x) }\right)$ and 
$\tau^2  =  \kappa_{0,2} \left( \frac{ f_{X_1}(x) v_{Y_1|X_1}(x)   }{ (1-\lambda) f_{R}^2(x;\ubar{P}) } + \frac{ f_{X_2}(x) v_{Y_2|X_2}(x)   }{ \lambda f_{R}^2(x;\ubar{P} ) } \right),$
where 
$f_{R}^2(x;\ubar{P})$ is the square of the PDF of $R$ evaluated at $x$ under $\ubar{P} = \lambda P_1 + (1-\lambda) P_2 \in \m{P}$.
These variances are generally different, except in special cases, e.g., when $f_{X_1}(x)=f_{X_2}(x)$ and $\lambda=1/2$ 
or when $f_{X_1}(x)=f_{X_2}(x)$ and $v_{Y_1|X_1}(x) = v_{Y_2|X_2}(x)$. 
Thus, in general, the researcher must use the studentized test statistic for the permutation test to have asymptotic size control.

\subsection{Controlled Quantiles}
\label{sec:applications:control_quan}

\indent 

In this subsection, we examine equality of conditional quantile functions for two populations.
For example, a researcher may wish to compare not only averages (Section \ref{sec:applications:control_mean}) but also other features of a conditional distribution between $P_1$ and $P_2$, such as the median, tails, interquartile range, etc.
The goal is to test the difference of the $\chi$-th quantile of the outcomes $Y$ between the two populations, after controlling for a given value of the variable $X$.
For instance, the immune response $Y$ of a certain treatment  conditional on age $X$ may differ for individuals at the bottom, median, or top of the immunity distribution.

As in Section \ref{sec:applications:control_mean}, there are two independent samples, $\bZ_{1,i}=(X_{1,i}, Y_{1,i})$, $i=1, \ldots, n_1$, and $\bZ_{2,i}=(X_{2,i}, Y_{2,i})$,  $i=1, \ldots, n_2$,
and the vector $\lbar{W}_n$ is non-random. 
For a given interior point $x$,
the parameter of interest is the $\chi$-th conditional quantile, that is, 
$\theta(P_k) = \mmq_{\chi}[Y_{k,i} | X_{k,i}=x] = \arg\min_a \mme[\rho_{\chi}(Y_{k,i}-a)|X_{k,i}=x]~,$
where $\rho_{\chi}(u) = \left(\chi - \mmi(u < 0)\right)u$.

For a bandwidth $h > 0$ and a  kernel density function $K$,
a consistent estimator of the NW style is
$\ha{\theta}_{k}^b 
=\theta_{n_k,n}^b(\bZ_{k,1}, \ldots, \bZ_{k,n_k})
\doteq 
\argmin_a \sum\limits_{i=1}^{n_k} \rho_{\chi}(Y_{k,i}- a)K \left(  \frac{ X_{k,i}-x }{ h }  \right) 
\text{, for }k=1,2.$

The superscript $b$ indicates that there is bias in the asymptotic distribution of   
$\sqrt{n h} (\ha{\theta}_{k}^b - \theta(P_k))$ 
whenever the bandwidth choice converges to zero at the slowest possible rate, i.e., $h=O(n^{-1/5})$ (Proposition \ref{prop:control_quan}).
This is the case of mean squared error (MSE) optimal bandwidths, and inference requires bias correction in this case.
A conventional solution is to subtract  a first-order bias term $h^2 B(P_k)$ from $\ha{\theta}_{k}^b$, 
where $B(P_k)$ is nonparametrically estimated by $\ha{B}_k = B_{n_k,n}(\bZ_{k,1}, \ldots, \bZ_{k,n_k})$.
We give the analytical formulas for $B(P_k)$ and $\ha{B}_k$ in 
Section D.2 (Equation D.14) 
of the supplemental appendix.
Our permutation tests utilize the bias-corrected estimator 
$\ha{\theta}_{k} = \theta_{n_k,n}(\bZ_{k,1}, \ldots, \bZ_{k,n_k}) \doteq 
 \theta_{n_k,n}^b(\bZ_{k,1}, \ldots, \bZ_{k,n_k}) - h^2 B_{n_k,n}(\bZ_{k,1}, \ldots, \bZ_{k,n_k})
=\ha{\theta}_{k}^b - h^2 \ha{B}_k$.
Note that no bias correction is needed if $h=o(n^{-1/5})$ because
$\sqrt{n h} (\ha{\theta}_{k}^b - \theta(P_k) - h^2 B(P_k)) 
= \sqrt{n h} (\ha{\theta}_{k}^b - \theta(P_k))  +o(1)$.
An alternative solution to the bias issue consists of replacing $\ha{\theta}_{k}^b$ with the LPR estimator, while keeping the MSE-optimal bandwidth choice for the NW estimator.
See 
Sections \ref{sec:applications:control_mean}
and
\ref{sec:applications:rdd} for details.

Proposition \ref{prop:control_quan} below demonstrates validity of Assumptions \ref{aspt:asy_linear}--\ref{aspt:sampling} for the NW quantile regression estimator. It relies on some of the same assumptions as Section \ref{sec:applications:control_mean} plus a couple of assumptions on the distribution of $Y_k$ conditional on $X_k$, $k=1,2$.

\begin{assumptionp}{\ref{aspt:prop:condmean}'}\label{aspt:prop:condpdf}
For $k=1,2$, the distribution of $Y_k$ conditional on $X_k$ has PDF $f_{Y_k | X_k}(y_k | x_k)$ that is a bounded and differentiable function of $(x_k,y_k)$,
has bounded partial derivatives with respect to $x_k$ and $y_k$, and is bounded away from zero over $y_k$ for $x_k=x$.
\end{assumptionp}

\vspace{-4mm}

\begin{assumptionp}{\ref{aspt:prop:condvar}'}\label{aspt:prop:condcdf}
For $k=1,2$, the distribution of $Y_k$ conditional on $X_k$ has CDF $F_{Y_k | X_k}(y_k | x_k)$ that is three times partially differentiable with respect to $x_k$ with bounded partial derivatives.
\end{assumptionp}

\vspace{-4mm}

\begin{proposition}
\label{prop:control_quan}
Suppose Assumptions \ref{aspt:prop:rate}, \ref{aspt:prop:pdf}, \ref{aspt:prop:condpdf}, \ref{aspt:prop:condcdf}, and \ref{aspt:prop:bias} hold.
Let $\bV_1=(R_1,S_1), \ldots, \bV_m=(R_m,S_m)$ be an \textit{iid} sample from a distribution $P \in \m{P}$,
where $m$ grows with $n$,
and consider the bias-corrected estimator 
$\ha{\theta} = \theta_{m,n}(\bV_{1}, \ldots, \bV_{m})$ as described in the text.
Then, $\ha{\theta}$ satisfies Assumptions \ref{aspt:asy_linear} and \ref{aspt:sampling} with 
$\psi_{n}(\bV, P)
 = -K\left(\frac{R - x }{ h_n }\right) 
 \frac{\left( \mmi\{S  < \theta(P) \} - F_{S|R}(\theta(P)| R ;P) \right)}
 {f_{S|R}(\theta(P) | x ;  P) f_R(x;P) \sqrt{h_n} }
 $
and 
$\xi^2(P) = 
   \frac{\kappa_{0,2}\chi (1-\chi) }{f_{S|R}^2(\theta(P) | x ;  P)  f_R(x;P)}, $
where 
$f_{S|R}(s | r ;  P)$ is the conditional PDF of $S$ given $R=r$,
$f_R(r;P)$ is the PDF of $R$, and
$F_{S|R}(s| r ;P) $ is the conditional CDF of $S$ given $R=r$,
all three assuming $\bV=(R,S) \sim P \in \m{P}$.
\end{proposition}

The proof of Proposition \ref{prop:control_quan} adapts arguments by \cite{pollard1991}, \cite{chaudhuri1991}, and \cite{fan1994} and is found in 
Section D.2
of the supplement.
The asymptotic variance of $T_n$ and of the permutation distribution are, respectively, 
$\sigma^2  =  
\frac{\kappa_{0,2} \chi (1- \chi)    }{ \lambda f_{Y_1 | X_1}^2(\theta(P_1)|x) f_{X_1}(x) } 
+ \frac{ \kappa_{0,2} \chi (1- \chi)    }{ (1-\lambda) f_{Y_2|X_2}^2(\theta(P_2)|x ) f_{X_2}(x ) }$
and 
$\tau^2  = \frac{ \kappa_{0,2} \chi (1- \chi)  }{ \lambda (1-\lambda) f_{S|R}^2(\theta(\bar P) | x; \bar P) f_{R}(x; \bar P) }.$
These variances are generally different except in special cases.
For example, the null hypothesis implies $\theta(P_1)=\theta(P_2) = \theta(\bar{P}) $.
If $f_{Y_1|X_1}(\theta(P_1)|x) = f_{Y_2|X_2}(\theta(P_2)|x)$ and $f_{X_1}(x)=f_{X_2}(x)$,
then $\sigma^2 = \tau^2$.
Thus, in general, the researcher must use the studentized test statistic for the permutation test to have asymptotic size control.


\subsection{Discontinuity of Conditional Mean}
\label{sec:applications:rdd}

\indent

Numerous empirical studies in the social sciences have relied on estimation  and inference on the size of a discontinuity in a conditional mean function at a certain point.
In the so-called regression discontinuity design (RDD), an individual $i$ receives treatment  if and only if a running variable $X_i$ is above a fixed threshold.
If individuals do not know the threshold or do not have perfect manipulation over $X$, untreated individuals who barely missed the cutoff serve
as a control group for treated individuals who barely made it across the cutoff.
Assume the threshold for treatment is $0$ without loss of generality.
The difference in side limits $\mme[Y | X=0^+] - \mme[Y | X=0^-]$ identifies the causal effect of treatment on an outcome variable $Y$.
Thus, the null hypothesis of zero causal effect is equivalent to continuity of the conditional mean function $\mme[Y | X=x]$ at $x=0$.

This idea first appeared in psychology \citep{thistlethwaite1960}, made its way to economics \citep{hahn2001id}, 
and has a growing number of applications in the social sciences.
Examples include
\cite{agarwal2016} in economics,
\cite{valentine2017} in education, 
\cite{abou2020} in political science,
and 
\cite{zoorob2020} in sociology.
We focus on the conditional mean function in this subsection, but the whole argument goes through if one desires to use the conditional quantile function.

The researcher has a sample of $n$ \textit{iid} observations $(X_i,Y_i)$ and the NW estimator for the discontinuity is

\vspace{-1.5cm}
\begin{align*}
	\frac{ \sum_{i=1}^{n} K\left( \frac{X_i}{h} \right) \mmi\{X_i \geq 0 \} Y_{i}  }{  \sum_{i=1}^{n} K\left( \frac{X_i}{h} \right) \mmi\{X_i \geq 0 \} }
	-
	\frac{ \sum_{i=1}^{n} K\left( \frac{X_i}{h} \right) \mmi\{X_i <0  \} Y_{i} }{ \sum_{i=1}^{n}  K\left( \frac{X_i}{h} \right) \mmi\{X_i < 0 \} }.
\end{align*}

\vspace{-3mm}

RDD fits in our two-population framework by splitting the sample based on $X$ being  above or below the cutoff.
Construct $W_{i}  = 2 - \mmi\{X_i \geq 0\}$ so that $n_1 / n \pto \lambda = \mmp[W_i=1]$.
Re-order the sample such that the $n_1$ observations with $W_i=1$ come first, and the $n_2$ observations with $W_i=2$ come second.
Define $\bZ_{1,i} = (X_{1,i}, Y_{1,i}) = (X_i, Y_i)$ for $i=1,\ldots, n_1$
and 
$\bZ_{2,i} = (X_{2,i}, Y_{2,i}) = (-X_{n_1+i}, Y_{n_1+i})$ for $i=1,\ldots, n_2$.
We have $\lbar{\bZ}_n = ( \bZ_{1,1}, \ldots, \bZ_{1,n_1}, \bZ_{2,1}, \ldots, \bZ_{2,n_2})$.
Conditional on $\lbar{W}_n$, the distribution of $Z_{1,i}$ is $P_1$, which equals the distribution of $(X,Y)$ conditional on $X \geq 0$.
Likewise for $Z_{2,i}$, $P_2$ is the distribution of $(-X,Y)$ conditional on $X < 0$.
Permutations re-order observations in $\lbar{\bZ}_n$ but keep $\lbar{W}_n$ unchanged.
The RDD parameter becomes $\theta(P_1)-\theta(P_2)$, where $\theta(P_k) = \mme[Y_k|X_k=0^+]$ for $k=1,2$.

The NW estimator for the RDD parameter is $\ha\theta_1^b - \ha\theta_2^b$, where $\ha \theta_k^b$ is defined in Equation \ref{eq:nw_estimator} for $k=1,2$ and evaluation point $x$ set to zero.
The superscript $b$ indicates that there is bias in the asymptotic distribution of   
$\sqrt{n h} (\ha{\theta}_{k}^b - \theta(P_k))$ 
whenever the bandwidth choice converges to zero at the slowest possible rate, i.e., $h=O(n^{-1/3})$
(Proposition \ref{prop:rdd}).
This is the case of MSE-optimal bandwidths, and inference requires bias correction in this case.
A conventional solution is to subtract  a first-order bias term $h B(P_k)$ from $\ha{\theta}_{k}^b$, 
where $B(P_k)$ is nonparametrically estimated by $\ha{B}_k = B_{n_k,n}(\bZ_{k,1}, \ldots, \bZ_{k,n_k})$.
We give the analytical formulas for $B(P_k)$ and $\ha{B}_k$ in 
Section D.3 (Equation D.21) 
of the supplemental appendix.
Our permutation tests utilize the bias-corrected NW estimator 
$\ha{\theta}_{k} = \theta_{n_k,n}(\bZ_{k,1}, \ldots, \bZ_{k,n_k}) \doteq 
 \theta_{n_k,n}^b(\bZ_{k,1}, \ldots, \bZ_{k,n_k}) - h  B_{n_k,n}(\bZ_{k,1}, \ldots, \bZ_{k,n_k})
=\ha{\theta}_{k}^b - h  \ha{B}_k$.
Note that no bias correction is needed if $h=o(n^{-1/3})$ because
$\sqrt{n h} (\ha{\theta}_{k}^b - \theta(P_k) - h  B(P_k)) 
= \sqrt{n h} (\ha{\theta}_{k}^b - \theta(P_k))  +o(1)$.

An alternative solution to the bias issue consists of replacing the NW estimator $\ha{\theta}_{k}^b$, which is the LPR estimator of order zero, with the  LPR estimator of order one, while keeping the MSE-optimal bandwidth choice for the NW estimator.
For LPR estimation at a boundary point, if $\ha{\theta}_{k}^b$ is LPR of order $\rho$,  
the asymptotic bias of $\sqrt{n h} (\ha{\theta}_{k}^b - \theta(P_k))$
is $O\left( \sqrt{n h} h^{\rho} \right)$ (Theorem 3.2 by \cite{fan1996}).
Thus, if an LPR of order $\rho$ has asymptotic bias, that bias vanishes if we increase the order of the polynomial by one. 
We keep the NW estimator for a simple and intuitive presentation of our theory in this section, but we demonstrate that our permutation tests also apply to LPR estimators of any order in 
Section D.5
of the supplement.
Section \ref{sec:empirical} illustrates our procedures with two empirical examples of RDD using LPR estimators and MSE-optimal bandwidths, which is standard practice in applied literature.
A third option for bias correction is the method proposed by \cite{armstrong2018}.

In case the distribution of  $(Y,X)$ equals that of $(Y,-X)$, then $P_1=P_2$ and the permutation test in \eqref{eq:setup:testfun} has exact size in finite samples.
Note that this X-symmetry restriction eliminates the impossibility problem in RDD tests (\cite{kamat2018} and \cite{bertanha2019}) because there is no bias in estimation. 
For other cases, we rely on asymptotic size control, which depends on Assumptions \ref{aspt:asy_linear} and \ref{aspt:sampling}.
Proposition \ref{prop:rdd} below builds on assumptions similar to those of Section \ref{sec:applications:control_mean}, which we rewrite below 
 in terms of the originally sampled variables $(X,Y)$, that is, before the variables are transformed into $(X_k,Y_k)$, $k=1,2$.

\begin{assumptionp}{\ref{aspt:prop:rate}$^*$}\label{aspt:prop:rate2}
As $n\to \infty$, $h \to 0$,  $n h \to \infty$, and $\sqrt{n h} h \to c \in [0,\infty)$.
\end{assumptionp}

\vspace{-4mm}

\begin{assumptionp}{\ref{aspt:prop:pdf}$^*$}\label{aspt:prop:pdf2}
The distribution of $X$ has PDF $f_{X}(x)$ that is bounded, bounded away from zero,
twice differentiable except at $x=0$, and has bounded derivatives.
\end{assumptionp}

\vspace{-4mm}

\begin{assumptionp}{\ref{aspt:prop:condmean}$^*$}\label{aspt:prop:condmean2}
$\Exp[Y | X = x]$ is bounded, twice differentiable except at $x=0$, and has bounded derivatives.
There exists $\zeta>0$ such that $\Exp[|Y|^{2+\zeta} | X]$ is almost surely bounded.
\end{assumptionp}

\vspace{-4mm}

\begin{assumptionp}{\ref{aspt:prop:condvar}$^*$}\label{aspt:prop:condvar2}
$\Var[Y | X = x]$ is bounded, differentiable except at $x=0$, has bounded derivative,
$\Var[Y|X=0^+]>0$, and $\Var[Y|X=0^-]>0$, where $0^+$ and $0^-$ denote side limits as $x\to 0$.
\end{assumptionp}

\vspace{-4mm}

\begin{proposition}
\label{prop:rdd}
Suppose Assumptions \ref{aspt:prop:rate2}, \ref{aspt:prop:pdf2}, \ref{aspt:prop:condmean2}, \ref{aspt:prop:condvar2}, and \ref{aspt:prop:bias} hold.
Let $\bV_1=(R_1,S_1), \ldots, \bV_m=(R_m,S_m)$ be an \textit{iid} sample from a distribution $P \in \m{P}$,
where $m$ grows with $n$, and consider the bias-corrected estimator 
$\ha{\theta} = \theta_{m,n}(\bV_{1}, \ldots, \bV_{m})$ as described in the text.
Then, $\ha{\theta}$ satisfies Assumptions \ref{aspt:asy_linear} and \ref{aspt:sampling}  with
$\psi_{n}(\bV, P)  =  
		K\left( \frac{ R  }{h_n} \right) 
		\left( \frac{S - m_{S|R}(R;P)}{\sqrt{h_n} f_{R}(0^+;P)/2 } \right)$
and 
$\xi^2(P)  = 	\frac{ v_{S|R}(0^+;P) \kappa_{0,2}^+ }{ f_{R}(0^+;P)/4 },$
where 
$m_{S|R}(r;P)$ is the conditional mean of $S$ given $R=r$,
$v_{S|R}(r;P)$ is the conditional variance of $S$ given $R=r$,
and 
$f_{R}(r;P)$ is the PDF of $R$ at $r$, 
all three assuming $\bV=(R,S) \sim P \in \m{P}$.
\end{proposition}

The proof is in 
Section D.3
of the supplement.
The conditions and proof of Proposition \ref{prop:rdd} follow along the lines of the conditional mean case (Proposition \ref{prop:control_mean}).
Unlike Section \ref{sec:applications:control_mean}, the evaluation point $x$ lies at the boundary of the support of $X_k$.
As a result, $h$ has to converge faster to zero to bound the
asymptotic bias of $T_n$ 
(i.e., $\sqrt{nh}h \to c$ in Assumption \ref{aspt:prop:rate2}  
vs. $\sqrt{nh}h^2 \to c$ in Assumption \ref{aspt:prop:rate}).
Proposition \ref{prop:rdd} is extended to LPR estimators of any order $\rho$ in 
Section D.5 
of the supplement.
Common practice in RDD uses local-linear regression ($\rho=1$) with a MSE-optimal bandwidth, and the resulting asymptotic bias is often corrected by using a local-quadratic regression ($\rho=2$).

If agents do not manipulate $X$ to change their treatment status, which is a key assumption in RDD, then the PDF of $X$ should be continuous at the cutoff. 
This implies that $f_X(0^+) = f_{X_1}(0^+)\lambda = f_{X_2}(0^+)(1-\lambda) = f_X(0^-)=f_X(0)$.
In this case, under the null hypothesis, the asymptotic variance of $T_n$ and of the permutation distribution are, respectively, 
$\sigma^2  =  \frac{4 \kappa_{0,2}^+}{f_X(0)} (  v_{Y_1|X_1}(0^+)    + v_{Y_2|X_2}(0^+)    )$ and 
$\tau^2  = \frac{ \kappa_{0,2}^+ }{ \lambda (1-\lambda) f_X(0)}  \left(  v_{Y_1|X_1}(0^+)    + v_{Y_2|X_2}(0^+)    \right).$
These are generally different, except when $\lambda = 1/2$.
Thus, in general, the researcher must use the studentized test statistic for the permutation test to have asymptotic size control.

\subsection{Discontinuity of Density}
\label{sec:applications:bunching}

\indent 

In many settings, the distribution of a random variable may exhibit a discontinuity at a given point if a certain phenomenon of interest occurs. 
For example, estimating agents' responses to incentives is a central objective in the social sciences.
A continuous distribution of agents who face a discontinuous schedule of incentives results in a distribution of responses with a discontinuity at a known point.
For example, \cite{saez2010} looks for evidence of a mass point in the distribution of reported income at tax brackets as evidence of agents' responses to tax rates.
\cite{caetano2015} proposes an exogeneity test in nonparametric regression models, where the distribution of the potentially endogenous regressor may have a mass point.
Identification of causal effects with RDD depends heavily on continuity assumptions, and these imply that the PDF of the control variable is continuous at the cutoff.

Consider a scalar random variable $X$ with PDF $f$ that is continuous, except for point $x=0$.
We want to test the null hypothesis of continuity of the 
PDF at $x=0$.
For a sample with $n$ \textit{iid} observations $X_i$, a kernel density estimator for the size of the discontinuity is

\vspace{-1cm}
\begin{align*}
	\frac{2}{nh}  \sum_{i=1}^{n} K\left( \frac{X_i}{h} \right) 
	\left[ \mmi\{X_i \geq 0 \} - \mmi\{X_i < 0\} \right].
\end{align*}
\vspace{-1cm}

The problem fits in our two-population framework by randomly splitting the sample as follows.
Make $n_1 \doteq  \lfloor n/2 \rfloor $ and $n_2 \doteq n-n_1$.
For observations $1 \leq i  \leq n_1$, set $W_i=1$ and let $\bZ_{1,i} = X_{1,i} = X_i$;
for observations $n_1 < i  \leq n$, set $W_i=2$ and let 
$\bZ_{2,i-n_1} = X_{2,i-n_1} = -X_{  i}$.
This implies that $n_1 / n \to 1/2.$
Permutations re-order observations in $\lbar{\bZ}_n = (X_{1,1},\ldots, X_{1,n_1}, X_{2,1},\ldots, X_{2,n_2})$ 
but keep $\lbar{W}_n$ unchanged.
Conditional on $\lbar{W}_n$, the distribution of $X_{1,i}$ is $P_1$, which equals the distribution of $X$.
Likewise for $X_{2,i}$, $P_2$ is the distribution of $-X$.\footnote{We cannot split the sample based on $X$ being above or below $0$ as we do in Section \ref{sec:applications:rdd}.
If we split the sample based on $X$ and the distribution of $X$ is asymmetric, it becomes impossible to identify the side limit of $f$ at 0 using only data from either sample, as required by Assumption \ref{aspt:asy_linear}.} 

Let $\bV$ be a scalar variable $ R \sim P \in \m{P}$.
The parameter of interest is
$\theta(P) = 1/2 (f_{R}(0^+;P) - f_{R}(0^-;P))$,
where $f_{R}(0^+;P)$ and $f_{R}(0^-;P)$
are the side-limits at zero of the PDF of $R$ under the $P$ distribution.
The discontinuity parameter is $\theta(P_1)-\theta(P_2)$, which equals $f_X(0^+)-f_X(0^-)$ in terms of the PDF of $X$.
The kernel estimator for the density discontinuity becomes $\ha\theta_1^b - \ha\theta_2^b$, where $\ha \theta_k^b$ is defined as 
$\ha{\theta}_{k}^b = \frac{1}{ n_k h} \sum\limits_{i=1}^{n_k} K \left(  \frac{ X_{k,i} }{ h }  \right)  \left(\mmi\{X_{k,i} \ge 0\} - \mmi\{X_{k,i} < 0\}  \right)
\text{, for }k=1,2.$

The $b$ superscript denotes asymptotic bias in the limiting distribution
of $\sqrt{n_k h}  ( \ha{\theta}_{k}^b - \theta(P_k) )$ whenever the bandwidth choice converges to zero at the slowest possible rate, i.e., $h=O(n^{-1/3})$
(Proposition \ref{prop:bunching}).
This is the case of MSE-optimal bandwidths, and inference requires bias correction in this case.
A conventional solution is to subtract  a first-order bias term $h B(P_k)$ from $\ha{\theta}_{k}^b$, 
where $B(P_k)$ is nonparametrically estimated by $\ha{B}_k = B_{n_k,n}(\bZ_{k,1}, \ldots, \bZ_{k,n_k})$.
We give the analytical formulas for $B(P_k)$ and $\ha{B}_k$ in 
Section D.4 (Equation D.24) 
of the supplemental appendix.
Our permutation tests utilize the bias-corrected NW estimator 
$\ha{\theta}_{k} = \theta_{n_k,n}(\bZ_{k,1}, \ldots, \bZ_{k,n_k}) \doteq 
 \theta_{n_k,n}^b(\bZ_{k,1}, \ldots, \bZ_{k,n_k}) - h  B_{n_k,n}(\bZ_{k,1}, \ldots, \bZ_{k,n_k})
=\ha{\theta}_{k}^b - h  \ha{B}_k$.
Note that no bias correction is needed if $h=o(n^{-1/3})$ because
$\sqrt{n h} (\ha{\theta}_{k}^b - \theta(P_k) - h  B(P_k)) 
= \sqrt{n h} (\ha{\theta}_{k}^b - \theta(P_k))  +o(1)$.
Alternative solutions to the bias issue include using transformed-kernel estimators 
(\cite{marron1994}) or LPR density estimators (\cite{cattaneo2020}).

It is important to emphasize that if the sample size $n$ is even and split in half, then the test statistic $T_n=\sqrt{nh}\left(\ha \theta_1 - \ha \theta_2 \right)$ is invariant to the way the original sample is split.
In case the distribution of  $X$ is symmetric at $0$, then $P_1=P_2$ and the permutation test in \eqref{eq:setup:testfun} has exact size in finite samples.
For other cases, we rely on asymptotic size control, and thus need to verify Assumptions \ref{aspt:asy_linear} and \ref{aspt:sampling}.

\begin{proposition}
\label{prop:bunching}
Suppose Assumptions \ref{aspt:prop:rate2}, \ref{aspt:prop:pdf2},  and \ref{aspt:prop:bias} hold.
Let $\bV_1=(R_1,S_1), \ldots, \bV_m=(R_m,S_m)$ be an \textit{iid} sample from a distribution $P \in \m{P}$,
where $m$ grows with $n$, and consider the bias-corrected estimator 
$\ha{\theta} = \theta_{m,n}(\bV_{1}, \ldots, \bV_{m})$ as described in the text.
Then, 
Assumptions \ref{aspt:asy_linear} and \ref{aspt:sampling} hold for $\ha{\theta}$
with 
$\psi_{n}(\bV, P)  =  h_n^{-1/2} K\left( \frac{ R }{h_n} \right) \left(\mmi\{ R \ge 0\} - \mmi\{R < 0\}  \right)
- h_n^{-1/2} \mme_P[ K\left( \frac{ R }{h} \right)(\mmi\{ R \ge 0\} - \mmi\{ R < 0\}  )]$
and
$\xi^2(P)  = \kappa_{0,2}^+\left( f_{R}(0^+;P)  + f_{R}(0^-;P) \right)  $.
\end{proposition}

The proof of Proposition \ref{prop:bunching} is in 
Section D.4 
in the supplement.
The asymptotic variance of $T_n$ and of the permutation distribution are, respectively,

\vspace{-1.5cm}
\begin{align*}
\sigma^2  & = 2 \kappa_{0,2}^+ \left[ f_{X_1}(0^+) + f_{X_1}(0^-) + f_{X_2}(0^+) +  f_{X_2}(0^-)   \right]
\\
& = 4 \kappa_{0,2}^+  \left[ f_{R}(0^+;\bar P) + f_{R}(0^-;\bar P) \right] = \tau^2.
\end{align*}
\vspace{-1.5cm}

These are the same regardless if the null hypothesis is true or not.
Thus, unlike the previous examples, we do not need to studentize the test 
statistic 
for asymptotic validity of the permutation test.

\vspace{-6mm}

\section{Confidence Sets}
\label{sec:setup:cset}

\vspace{-2mm}

\indent

This section constructs  robust confidence sets for a discrepancy measure $\Psi (P_1,P_2)$ between the two populations by ``inverting'' the permutation test for the null hypothesis $\Psi (P_1,P_2) = \delta$, $\delta \in \mmr$.
The discrepancy measure satisfies two requirements.
First, there exists a unique $\delta_0 \in \mmr$ such that $\Psi (P_1,P_2)=\delta_0$
is equivalent to $\theta(P_1) = \theta(P_2)$.
Second, for every $\delta \neq \delta_0$, there exists a known data transformation $\psi_{\delta}$ that applies to observations from the first sample such that 
the distribution $\ti P_1$ of $\psi_{\delta}(\bZ_{1,i})$ satisfies $\theta(\ti P_1)=\theta(P_2)$.

In terms of the examples of Sections \ref{sec:applications:control_mean}--\ref{sec:applications:rdd}, we set 
$\Psi(P_1,P_2) = \theta(P_1) - \theta(P_2)$, and it follows that $\delta_0=0$ and
$\psi_{\delta}(\bZ_{1,i}) = (X_{1,i}, Y_{1,i} - \delta)$ satisfy the two requirements for the discrepancy measure.
Note that the null hypothesis $\Psi(P_1,P_2) = \delta$
is equivalent to  $\mme[Y_1|X_1=x] - \mme[Y_2|X_2=x] = \delta$ in the conditional mean case,
to
$\mmq_{\chi}[Y_1|X_1=x] - \mmq_{\chi}[Y_2|X_2=x] = \delta$ in the conditional quantile case,
and to
$\mme[Y|X=0^+] - \mme[Y|X=0^-] = \delta$ in the discontinuity of conditional mean case.
For the discontinuity of PDF example of Section \ref{sec:applications:bunching}, we make
$\Psi(P_1,P_2) = f_{X_1}(0^+) /  f_{X_2}(0^+)$, and we have that $\delta_0=1$
and
$\psi_{\delta}(\bZ_{1,i}) = X_{1,i} \left( \delta \mmi\{X_{1,i} \geq 0 \}  + (1/\delta) \mmi\{X_{1,i} < 0 \} \right)$
satisfy the two requirements.
In this example, $\Psi(P_1,P_2) =\delta$ is equivalent to $f_{X}(0^+) / f_{X}(0^-) = \delta$.

Define $\phi_{\delta_0}(\lbar{W}_n, \lbar{\bZ}_n)$ to be the test described in Equation \ref{eq:setup:testfun} with the studentized test statistic $S_n$ of Equation \ref{eq:setup:teststat:stu}
replacing $T_n$.
This test applies to the null hypothesis $\Psi(P_1,P_2)=\delta_0$. 
Next, for $\delta \neq \delta_0$, we first transform the data  
$\lbar{\bZ}_n = (\bZ_{1,1}, \ldots, \bZ_{1,n_1}, \bZ_{2,1}, \ldots, \bZ_{2,n_2})$ 
to $\ti{ \lbar{\bZ}}_n = (\ti{\bZ}_{1,1}, \ldots, \ti{\bZ}_{1,n_1}, \bZ_{2,1}, \ldots, \bZ_{2,n_2})$, where $\ti \bZ_{1,i} = \psi_{\delta}(\bZ_{1,i})$ for $i=1, \ldots, n_1$.
The robust permutation test for
the null hypothesis 
$\Psi(P_1,P_2)=\delta$
is defined as 
$\phi_{\delta}(\lbar{W}_n,  \lbar{\bZ}_n) = \phi_{\delta_0}(\lbar{W}_n,  \ti{ \lbar{\bZ}}_n)$.

Let $U$ be a uniform random variable in $[0,1]$ and independent of the data. 
The confidence set  with $1-\alpha$ nominal coverage is
$C(\lbar{W}_n, \lbar{\bZ}_n)   \doteq \{\delta  : U >  \phi_\delta(\lbar{W}_n, \lbar{\bZ}_n) \}.$
\label{eq:setup:cset}
The set almost-surely includes all values of $\delta$ for which the test fails to reject, and it  excludes the ones the test rejects.
For those values of $\delta$ for which the test outcome is randomized with rejection probability $a$, the inclusion in the confidence set occurs with probability $1-a$.
The purpose of a randomized confidence set is to guarantee exact coverage whenever the test $\phi_\delta$ has exact size.
A non-randomized confidence set is  $\ti C(\lbar{W}_n, \lbar{\bZ}_n)   \doteq \{\delta \in \mmr: \phi_\delta(\lbar{W}_n, \lbar{\bZ}_n)  < 1 \}$, but its coverage is conservative, especially in small samples.

Lemma \ref{lemma:exact} implies that
$\phi_\delta(\lbar{W}_n,  \lbar{\bZ}_n)$ has exact size $\alpha$ in finite samples
for any $P_1$ and $P_2$ such that $\Psi(P_1,P_2) = \delta$ and $\ti{P}_1 = \psi_\delta P_1 = P_2$.
This implies that the confidence set $C(\lbar{W}_n, \lbar{\bZ}_n)$ has exact coverage in finite samples if distributions are equal up to a transformation $\psi_{\delta}$.
In the examples of Sections \ref{sec:applications:control_mean}--\ref{sec:applications:control_quan},
exactness occurs when the distributions of $P_1$ and $P_2$
are such that there exists $\delta \in \mmr$ for which the distribution of $(Y_{1,i}-\delta,X_{1,i})$ equals that of $ (Y_{2,i} ,X_{2,i})$;
in Section \ref{sec:applications:rdd}, when there exists $\delta \in \mmr$ for which the distribution of $(Y-\delta,X)|X \geq 0$ equals that of $(Y , - X) | X<0$;
and
in Section \ref{sec:applications:bunching}, when there exists $\delta \in (0,\infty)$ such that
$\mmi\{x \geq 0\} f_X(x/\delta)/ \delta + \mmi\{x < 0\} f_X(x \delta) \delta$ is symmetric around $x=0$.
If these restrictions do not apply, then the confidence set has correct coverage asymptotically.

\begin{corollary}\label{coro2}
Consider the discrepancy measure $\Psi$ and class of data transformations $\psi_\delta$ discussed above.
For any $n$, $\bQ_n$, $P_1$, and $P_2$,
if $\ti{P}_1 = \psi_\delta P_1 = P_2$ for $\delta = \Psi(  P_1 , P_2 )$,
then \\
$\mmp \left[ \Psi(  P_1 , P_2 ) \in C(\lbar{W}_n, \lbar{\bZ}_n) \right]  = 1 -\alpha.$
Assume instead that $\ti{P}_1 = \psi_\delta P_1 \neq  P_2$ and Assumptions \ref{aspt:asy_linear}--\ref{aspt:var:est} hold. 
Then, as $n \to \infty$,
$\mmp \left[ \Psi( P_1 ,   P_2 ) \in C(\lbar{W}_n, \lbar{\bZ}_n) \right]  \to 1-\alpha.$\footnote{The data transformation $\psi_\delta$ is used to obtain finite sample exactness when distributions are equal up to a transformation $\psi_\delta$, but it is not necessary for correct asymptotic coverage.
In fact, for the null hypothesis $\theta(P_1)-\theta(P_2) =\delta$, one may construct a permutation test 
$ \phi^*_\delta$ that compares the value of $S_n - \sqrt{nh}\delta/\ha{\sigma}_n$
with the critical values from $\ha{R}_{S_n}$. 
The test $ \phi^*_\delta$ has correct size asymptotically,
and the confidence set constructed by inverting $ \phi^*_\delta$ has correct coverage in large samples.
}
\end{corollary}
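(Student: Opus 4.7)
The plan is to reduce both statements to the analysis of a single auxiliary permutation test. Let $\delta^{\ast} \doteq \Psi(P_1,P_2)$. Because $U$ is uniform on $[0,1]$ and independent of the data and $\phi_{\delta^{\ast}} \in [0,1]$,
\[
\mmp\!\left[\delta^{\ast} \in C(\lbar{W}_n, \lbar{\bZ}_n)\right]
= \mmp\!\left[U > \phi_{\delta^{\ast}}(\lbar{W}_n, \lbar{\bZ}_n)\right]
= 1 - \mme\!\left[\phi_{\delta^{\ast}}(\lbar{W}_n, \lbar{\bZ}_n)\right],
\]
so everything comes down to computing the rejection probability of $\phi_{\delta^{\ast}}$ at the true parameter. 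By the construction preceding the corollary, $\phi_{\delta^{\ast}}(\lbar{W}_n, \lbar{\bZ}_n) = \phi_{\delta_0}(\lbar{W}_n, \ti{\lbar{\bZ}}_n)$, where $\ti{\lbar{\bZ}}_n$ replaces each $\bZ_{1,i}$ by $\psi_{\delta^{\ast}}(\bZ_{1,i})$. The defining property of $\psi_{\delta^{\ast}}$ at the true $\delta^{\ast}$ yields a transformed sample-$1$ distribution $\ti{P}_1$ with $\theta(\ti{P}_1) = \theta(P_2)$; that is, $\phi_{\delta_0}$ is being applied to data from a pair of populations at which its null hypothesis holds.

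For the finite-sample claim, suppose $\ti{P}_1 = P_2$. Conditional on $\lbar{W}_n$, the transformed observations $\ti{\lbar{\bZ}}_n$ are iid from the common distribution $\ti{P}_1 = P_2$, so the sharp null holds for $\phi_{\delta_0}$ on the transformed data. Lemma \ref{lemma:exact} therefore gives $\mme[\phi_{\delta_0}(\lbar{W}_n, \ti{\lbar{\bZ}}_n)] = \alpha$ exactly, and the coverage probability equals $1-\alpha$.

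For the asymptotic claim, suppose $\ti{P}_1 \neq P_2$ but $\theta(\ti{P}_1) = \theta(P_2)$, and view $\phi_{\delta_0}$ simply as the studentized robust permutation test of Theorem \ref{theo2} applied to samples from the population pair $(\ti{P}_1, P_2)$. Under Assumptions \ref{aspt:asy_linear}--\ref{aspt:var:est} (imposed, as stated, for this transformed configuration), Corollary \ref{coro1} then gives $\mme[\phi_{\delta_0}(\lbar{W}_n, \ti{\lbar{\bZ}}_n)] \to \alpha$, so the coverage probability converges to $1-\alpha$.

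The main piece of bookkeeping is checking that nothing is disturbed by passing from the original populations $(P_1,P_2)$ to the transformed pair $(\ti{P}_1,P_2)$: the permutation group $\bG_n$ and the sampling-indicator vector $\lbar{W}_n$ are untouched by $\psi_{\delta^{\ast}}$, and the regularity of $\ti{P}_1$ that Assumptions \ref{aspt:asy_linear}--\ref{aspt:var:est} require for applying Corollary \ref{coro1} transfers through the measurable transformation $\psi_{\delta^{\ast}}$. In the concrete examples of Sections \ref{sec:applications:control_mean}--\ref{sec:applications:bunching}, this is immediate because $\psi_{\delta^{\ast}}$ is either a translation of the outcome $Y$ or a piecewise rescaling of $X$, neither of which alters the smoothness, moment, or boundary conditions that those propositions invoke.
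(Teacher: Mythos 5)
Your proof is correct and follows essentially the same route as the paper: write coverage as $1-\mme[\phi_{\delta^\ast}]$ using the uniform $U$, reduce $\phi_{\delta^\ast}$ to $\phi_{\delta_0}$ on the transformed data $\ti{\lbar{\bZ}}_n$, then invoke Lemma~\ref{lemma:exact} for the exact case and Corollary~\ref{coro1} for the asymptotic case. Your explicit remark that Corollary~\ref{coro1} is applied to the transformed population pair $(\ti{P}_1, P_2)$, for which $\theta(\ti{P}_1)=\theta(P_2)$ is the null, is a useful clarification that the paper leaves implicit.
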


\vspace{-6mm}

\section{Monte Carlo Simulations}
\label{sec:mc}

\vspace{-2mm}

\indent 

We conduct Monte Carlo simulations to compare the finite sample performance of our permutation test to the conventional $t$-test, that is, the test that rejects the null if $|S_n| > 1 - \Phi(1-\alpha/2)$. 
The goal is not to show that the permutation test dominates the $t$-test in all cases;
instead,
the goal is to verify the theoretical predictions of Section \ref{sec:theory} and explore DGP variations that illustrate pros and cons of our methods.
The exercise confirms the theoretical findings of size control in large samples and in finite samples under the sharp null; it also shows similar power curves between permutation and $t$-tests in large samples.
Moreover, we find several cases where the permutation test performs significantly better than the $t$-test, both in power and in size control outside of the sharp null. 
We also compare our permutation test to two other popular resampling procedures, namely, the bootstrap and subsample, and the permutation test compares favorably to them. 

The basic setup of our DGP is as follows.
For $k=1,2$, $X_k \sim U[0,1]$,
$\eps_k \sim N(0,\sigma_k^2)$, where $X_k$ is independent of $\eps_k$, 
$Y_k = m_{y,k}(X_k) + \eps_k$,
and the variances are $(\sigma_1^2,\sigma_2^2) \in \{ (1,1),  (1,5) \}$.
The experiments simulate  \textit{iid} samples from the following designs:

\textbf{Design 1: } $m_{y,k}(x) = g_k(x)$, where
$g_{1}(x) \doteq 5(x-0.2)(x-0.8) \mmi\{|x-0.5|>0.3\}$
and
$g_{2}(x) \doteq  -15 (x-0.2)(x-0.8)  \mmi\{ |x-0.5|>0.3 \} $;
 the sample sizes are $(n_1,n_2) \in \{(100,1900), (250,4750), (500,9500), (40,1960), (100,4900), (200, 9800)\}$;
 and
the null hypothesis is $H_0: \theta(P_1) = m_{y,1}(0.5) = m_{y,2}(0.5) = \theta(P_2)$;

 \textbf{Design 2: } 
$m_{y,k}(x)=1+ g_k(x)$ for $g_k(x)$ of Design 1;
$D_k = m_{d,k}(X_k) + \eps_{d,k}$ for mutually independent $(X_k, \eps_k, \eps_{d,k})$ and $\eps_{d,k} \sim N(0,1)$;
$m_{d,k}(x) =  \mu \in \{1,10\}$;
 the sample sizes are $(n_1,n_2) \in \{ (75,75), (150,150), (1000, 1000)\}$;
and the null hypothesis is $H_0: \theta(P_1) = m_{y,1}(0.5)/m_{d,1}(0.5) = m_{y,2}(0.5)/m_{d,2}(0.5) = \theta(P_2)$.

Design 1 is an example of the controlled means case studied in Section \ref{sec:applications:control_mean}, and 
Design 2 is a variation of that case.
Design 2 has controlled means of two different variables with the ratio of means being of interest.
The asymptotic behavior of the ratio estimator can be obtained via the Delta method, and Assumptions \ref{aspt:asy_linear} and \ref{aspt:sampling} can be verified using arguments similar to those in Section \ref{sec:applications:control_mean}.

These designs represent practical situations in which the $t$-test is known to perform poorly in small samples.
Design 1 corresponds to cases of sample imbalance, that is, cases where the sample sizes are very different.  
For example, a researcher has a much larger sample of women than men and is interested in comparing the average outcome from a professional training between men and women, after conditioning on a test score.
Design 2 encompasses scenarios where the ratio of conditional mean functions is of interest, and the denominator may be small.
An example is estimating the efficacy rate of a vaccine conditional on blood pressure, and comparing this rate between men and women.
The efficacy rate in vaccine trials is the difference in proportions of infected individuals between treatment and control groups, divided by the proportion in the control group.
Both designs have  conditional mean functions for $Y$ given $X$ that are equal and flat for $|X-0.5| \leq  0.3$, but different and non-linear
otherwise 
(Figure 4 in Section E of the supplement).
This shape of conditional mean function 
allows us to experiment with scenarios with or without estimation bias, and inside or outside the sharp null, depending on the choice of bandwidth.
Both designs fall under the null hypothesis; they fall under the sharp null if we further set  $\sigma_1^2 = \sigma_2^2$
and restrict the sample to $|X-0.5| \leq  0.3$.

The test statistic $T_n$ is the difference of consistent estimators $\ha{\theta}_1 - \ha{\theta}_2$ multiplied by $\sqrt{nh}$ (Equation \ref{eq:setup:teststat}).
The studentized 
statistic  
$S_n$ equals the difference of consistent estimators divided by the standard error of the difference (Equation \ref{eq:setup:teststat:stu}).
The conditional mean functions at point $0.5$ are consistently estimated by local linear regressions with triangular kernel and a bandwidth choice $h$ that shrinks to zero as $n$ increases.
A practical choice for $h$ is the estimated MSE-optimal bandwidth for local-linear regression (LLR), which decreases at rate $n^{-1/5}$.
In particular, we adapt the algorithm of \cite{imbens2012}, denoted IK bandwidth, to our setting.
This choice of bandwidth implies that $T_n$ and $S_n$ have asymptotic distributions not centered at zero.
Thus, we employ local quadratic regressions, using the same kernel and bandwidth as before, to construct the test statistics and avoid the 
asymptotic bias.\footnote{For more details on bias correction see discussion in Section 3.1. 
Section D.5 
of the supplement demonstrates validity of our permutation tests with the LPR estimator.}
We use White's robust formula for  local quadratic regressions to compute standard errors, 
where the the squared residuals are obtained by the nearest-neighbor matching estimator using three neighbors (\cite{abadie2006}).

We consider 10,000 simulated samples and 1,000 random permutations for each variation of the two designs.
We compare the null rejection probability of three different tests at 5\% nominal size:
the non-studentized permutation test (NSP), the studentized permutation test (SP),  and the $t$-test ($t$).
All tests use the same bandwidth choice, and we experiment with four possibilities: three fixed choices of $h$, $0.1,$ $0.3,$ and $0.5$, and the IK data-driven MSE-optimal bandwidth $\ha{h}_{mse}$.

Table \ref{table:size} displays simulated rejection rates under the null hypothesis.
DGPs with $\sigma_1^2=\sigma_2^2$ and $h=0.1$ or $0.3$ fall under the sharp null hypothesis.
As the theory predicts, both NSP and SP control size in these cases, but $t$ fails to do so, most notably in Design 1 with  $n_1=40$  and Design 2 with $\mu=1$.
Models with $\sigma_1^2=\sigma_2^2$ and $h=0.5$ fall outside the sharp null, and the local quadratic estimators are biased.
Bias makes the mean of $T_n$ diverge to infinity as the sample size increases, which explains the increasing size distortion of all tests.
In these cases, all tests fail to control size, although the distortions are smaller for SP than for $t$.
 
The rows of Table \ref{table:size} with $\sigma_1^2 \neq \sigma_2^2$ fall outside of the sharp null.
Cases with $\sigma_1^2 \neq \sigma_2^2$ and $h\leq 0.3$  violate  the sharp null, but the distribution of $T_n$ does not diverge as in the case of $h=0.5$.
Design 1 with $h=0.1$ or $0.3$ has a large size distortion of NSP and a small size distortion of SP that decreases with $n$, as predicted by our theory.
The size distortion of SP is much smaller than that of $t$, especially for smaller samples.
For Design 2 with $h=0.1$ or $0.3$, the size distortions of the permutation test are again smaller than $t$ and decrease with $n$.
Finally, the IK MSE-optimal bandwidth $\ha{h}_{mse}$ balances bias and variance, and does a good job keeping low size distortions of SP in all cases.

Since \cite{hall1990}, many authors have proposed resampling procedures to test nonparametric hypotheses.
It is natural to ask how some of these procedures compare to our robust permutation test.
We implement the tests of Designs 1 and 2 using critical values for $S_n$ generated by the wild bootstrap of \cite{cao1991} and the subsample of \cite*{politis1999subsampling}.
The last two columns of Table \ref{table:size} report the simulated rejection rates of studentized bootstrap (SB) and studentized subsample (SS) using the IK MSE-optimal bandwidth 
(Table 4 in Section E of the supplement reports the rates for fixed $h$).
The performance of SB and SS are generally similar to that of the $t$-test, with SS having worse size control in some cases.\footnote{
It is worth noting that  SB takes longer than SP to compute because it requires re-estimation of $\ha{m}_k(X_{k,i})$ for two different bandwidths and multiple observations $i$ (\cite{cao1991}, page 2227).
We compare the computation time of all tests with two empirical illustrations in the next section (Table \ref{table:applications}).}

\begin{sidewaystable}[htbp]
\caption{Simulated Rejection Rates - 5\% Nominal Size}
\label{table:size}
\small
\begin{minipage}{18cm}
\centering

\begin{tabular}{c c c c|    c c c|     c c c|     c c c|     c c c  c c}  
\hline \hline 
\multicolumn{4}{c|}{\textbf{Design 1}}   &  \multicolumn{3}{c|}{$h=0.1$}  & \multicolumn{3}{c|}{$h=0.3$}   &  \multicolumn{3}{c|}{$h=0.5$} &  \multicolumn{5}{c}{$\ha{h}_{mse}$} 
\\ 
$\sigma_1^2$ & $\sigma_2^2$ & $n_1$ & $n_2$ & NSP & SP & $t$ & NSP & SP & $t$ & NSP & SP & $t$ &  NSP & SP & $t$ & SB & SS 
\\ 
\hline 
 1 &  1 &  100 &  1900 &  0.048 &  0.049 &  0.093 &  0.047 &  0.046 &  0.062 &  0.158 &  0.147 &  0.177 &  0.047 &  0.045 &  0.065 &  0.067 &  0.090 
\\ 
 1 &  1 &  250 &  4750 &  0.048 &  0.047 &  0.065 &  0.049 &  0.048 &  0.055 &  0.327 &  0.321 &  0.340 &  0.048 &  0.046 &  0.056 &  0.057 &  0.056 
\\ 
 1 &  1 &  500 &  9500 &  0.051 &  0.048 &  0.059 &  0.048 &  0.048 &  0.051 &  0.577 &  0.569 &  0.589 &  0.050 &  0.048 &  0.054 &  0.055 &  0.048 
\\ 
\hline 
 1 &  1 &  40 &  1960 &  0.050 &  0.050 &  0.124 &  0.049 &  0.051 &  0.090 &  0.089 &  0.082 &  0.121 &  0.047 &  0.051 &  0.101 &  0.105 &  0.258 
\\ 
 1 &  1 &  100 &  4900 &  0.051 &  0.052 &  0.099 &  0.050 &  0.048 &  0.065 &  0.166 &  0.153 &  0.181 &  0.049 &  0.050 &  0.073 &  0.078 &  0.097 
\\ 
 1 &  1 &  200 &  9800 &  0.050 &  0.047 &  0.075 &  0.045 &  0.046 &  0.055 &  0.274 &  0.266 &  0.292 &  0.046 &  0.045 &  0.059 &  0.063 &  0.060 
\\ 
\hline 
 5 &  1 &  100 &  1900 &  0.298 &  0.068 &  0.098 &  0.310 &  0.056 &  0.064 &  0.360 &  0.073 &  0.081 &  0.310 &  0.056 &  0.065 &  0.065 &  0.095 
\\ 
 5 &  1 &  250 &  4750 &  0.314 &  0.059 &  0.069 &  0.314 &  0.053 &  0.056 &  0.426 &  0.104 &  0.112 &  0.316 &  0.053 &  0.055 &  0.052 &  0.056 
\\ 
 5 &  1 &  500 &  9500 &  0.322 &  0.054 &  0.058 &  0.323 &  0.051 &  0.053 &  0.521 &  0.164 &  0.172 &  0.326 &  0.052 &  0.052 &  0.050 &  0.046 
\\ 
\hline 
 5 &  1 &  40 &  1960 &  0.264 &  0.060 &  0.129 &  0.344 &  0.057 &  0.091 &  0.365 &  0.060 &  0.084 &  0.338 &  0.057 &  0.099 &  0.103 &  0.308 
\\ 
 5 &  1 &  100 &  4900 &  0.335 &  0.058 &  0.100 &  0.351 &  0.053 &  0.066 &  0.392 &  0.072 &  0.085 &  0.347 &  0.051 &  0.069 &  0.069 &  0.096 
\\ 
 5 &  1 &  200 &  9800 &  0.347 &  0.054 &  0.075 &  0.349 &  0.047 &  0.054 &  0.447 &  0.089 &  0.100 &  0.345 &  0.047 &  0.055 &  0.055 &  0.056 
\\ 
\hline \hline 
\end{tabular}

\end{minipage}

\vspace{0.25cm}

\begin{minipage}{18cm}
\centering

\begin{tabular}{c c c c|    c c c|     c c c|     c c c|     c c c  c c}  
\hline \hline 
\multicolumn{4}{c|}{\textbf{Design 2}}   &  \multicolumn{3}{c|}{$h=0.1$}  & \multicolumn{3}{c|}{$h=0.3$}   &  \multicolumn{3}{c|}{$h=0.5$} &  \multicolumn{5}{c}{$\ha{h}_{mse}$} 
\\ 
$\sigma_1^2$ & $\sigma_2^2$ & $\mu$ & $n_1=n_2$ & NSP & SP & $t$ & NSP & SP & $t$ & NSP & SP & $t$ &  NSP & SP & $t$ & SB & SS 
\\ 
\hline 
 1 &  1 &  10 &  75 &  0.050 &  0.051 &  0.087 &  0.048 &  0.049 &  0.061 &  0.085 &  0.082 &  0.094 &  0.039 &  0.039 &  0.050 &  0.044 &  0.128 
\\ 
 1 &  1 &  10 &  150 &  0.051 &  0.051 &  0.073 &  0.051 &  0.050 &  0.057 &  0.138 &  0.132 &  0.144 &  0.047 &  0.045 &  0.052 &  0.050 &  0.084 
\\ 
 1 &  1 &  10 &  1000 &  0.047 &  0.049 &  0.053 &  0.050 &  0.049 &  0.052 &  0.605 &  0.606 &  0.616 &  0.048 &  0.047 &  0.051 &  0.049 &  0.055 
\\ 
\hline 
 1 &  1 &  1 &  75 &  0.054 &  0.053 &  0.219 &  0.052 &  0.050 &  0.179 &  0.069 &  0.070 &  0.213 &  0.046 &  0.045 &  0.168 &  0.336 &  0.429 
\\ 
 1 &  1 &  1 &  150 &  0.049 &  0.051 &  0.192 &  0.052 &  0.050 &  0.168 &  0.088 &  0.091 &  0.249 &  0.050 &  0.050 &  0.162 &  0.266 &  0.360 
\\ 
 1 &  1 &  1 &  1000 &  0.047 &  0.045 &  0.167 &  0.050 &  0.049 &  0.161 &  0.333 &  0.339 &  0.582 &  0.048 &  0.048 &  0.163 &  0.184 &  0.273 
\\ 
\hline 
 5 &  1 &  10 &  75 &  0.068 &  0.060 &  0.095 &  0.056 &  0.051 &  0.063 &  0.062 &  0.061 &  0.069 &  0.044 &  0.044 &  0.053 &  0.040 &  0.129 
\\ 
 5 &  1 &  10 &  150 &  0.059 &  0.058 &  0.076 &  0.055 &  0.054 &  0.060 &  0.079 &  0.076 &  0.080 &  0.046 &  0.045 &  0.051 &  0.044 &  0.076 
\\ 
 5 &  1 &  10 &  1000 &  0.050 &  0.048 &  0.053 &  0.047 &  0.048 &  0.050 &  0.253 &  0.254 &  0.260 &  0.048 &  0.049 &  0.051 &  0.049 &  0.053 
\\ 
\hline 
 5 &  1 &  1 &  75 &  0.057 &  0.066 &  0.159 &  0.055 &  0.058 &  0.111 &  0.063 &  0.070 &  0.124 &  0.048 &  0.054 &  0.104 &  0.228 &  0.309 
\\ 
 5 &  1 &  1 &  150 &  0.055 &  0.060 &  0.128 &  0.053 &  0.053 &  0.098 &  0.073 &  0.080 &  0.135 &  0.047 &  0.051 &  0.091 &  0.157 &  0.220 
\\ 
 5 &  1 &  1 &  1000 &  0.050 &  0.049 &  0.092 &  0.046 &  0.048 &  0.087 &  0.201 &  0.208 &  0.298 &  0.047 &  0.048 &  0.087 &  0.096 &  0.145 
\\ 
\hline \hline 
\end{tabular} 

\end{minipage} 
 \caption*{\footnotesize
Notes:  The table displays  simulated rejection rates under the null hypothesis for the non-studentized permutation test (NSP), 
studentized permutation test (SP), and $t$-test ($t$).
Rows correspond to variations of Designs 1 and 2, as explained in the text.
The bandwidth choice $h$ equals one of three  fixed values (0.1, 0.3, and 0.5) or the IK data-driven MSE-optimal bandwidth $\ha{h}_{mse}.$
The last two columns display simulated rejection rates under the null hypothesis for the studentized bootstrap (SB) and studentized subsample (SS) tests, both using $\ha{h}_{mse}.$
}
\end{sidewaystable}

\begin{sidewaysfigure}[htbp]
    \centering
    \begin{minipage}{0.49\textwidth}
        \centering
        \caption{\label{figure:power_d1}Design 1 - Simulated Power Curves}
\begin{center}
  \begin{minipage}{.5 \textwidth}
    \centering
    (a) {\small $\sigma_1^2=5$, $\sigma_2^2 = 1$ \\ $n_1 / (n_1+n_2) = 0.05$ \\ SP(---) and $t$(- -) }

    \includegraphics[height=1.8in]{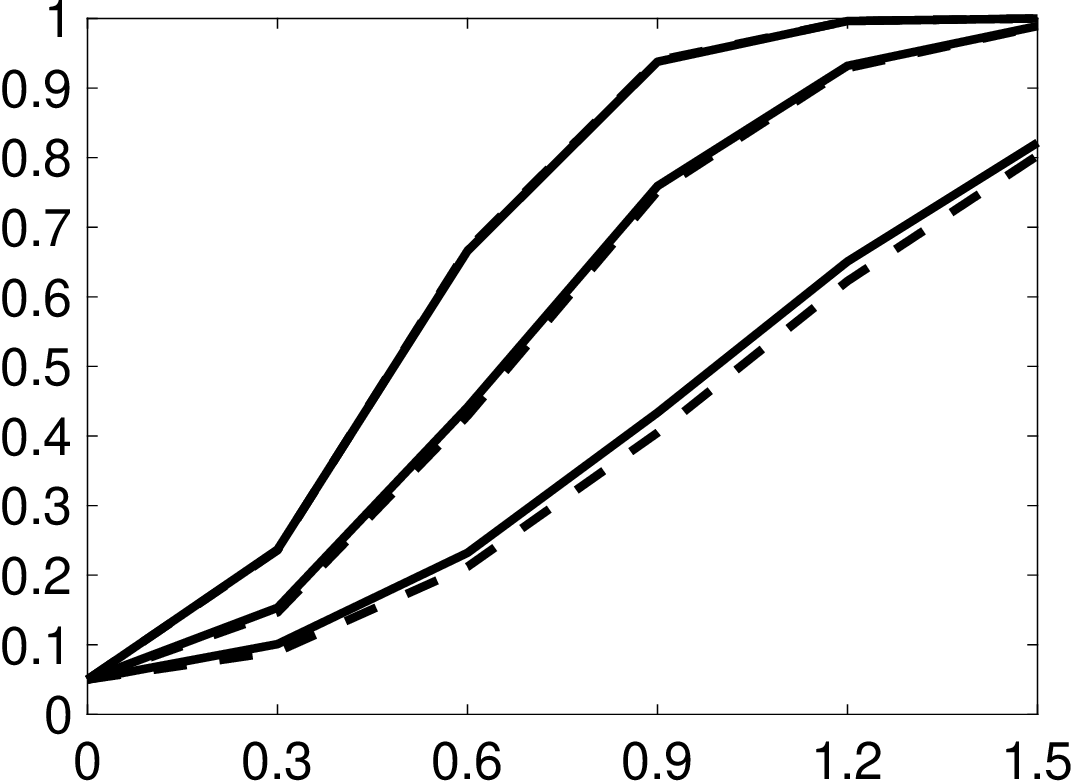}
  \end{minipage}%
  \begin{minipage}{.5 \textwidth}
    \centering
    (b) {\small $\sigma_1^2=5$, $\sigma_2^2 = 1$ \\ $n_1 / (n_1+n_2) = 0.02$ \\SP(---) and $t$(- -) }

    \includegraphics[height=1.8in]{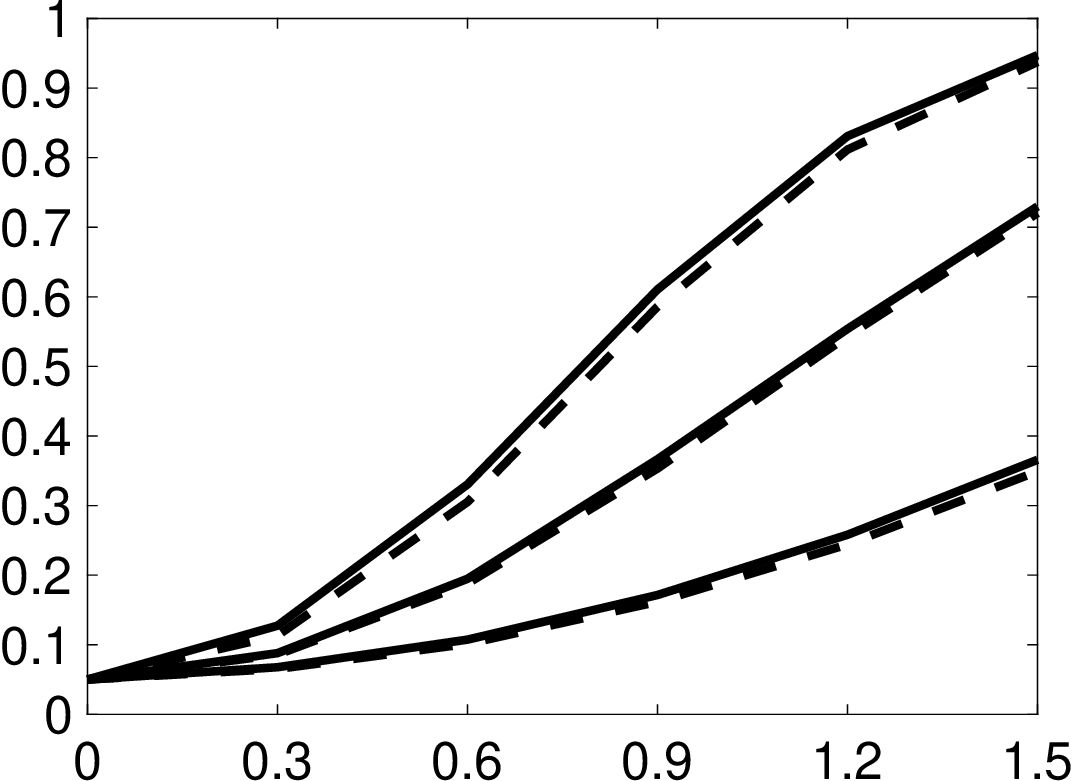}
  \end{minipage}%
\end{center}

\begin{center}
  \begin{minipage}{.5 \textwidth}
    \centering
    (c) {\small $\sigma_1^2= 5$, $\sigma_2^2 = 1$ \\ $n_1 / (n_1+n_2) = 0.05$ \\ SP(---), SB(- -), and SS($\cdot\hspace{-.8mm}\cdot\hspace{-.8mm}\cdot$) } 

    \includegraphics[height=1.8in]{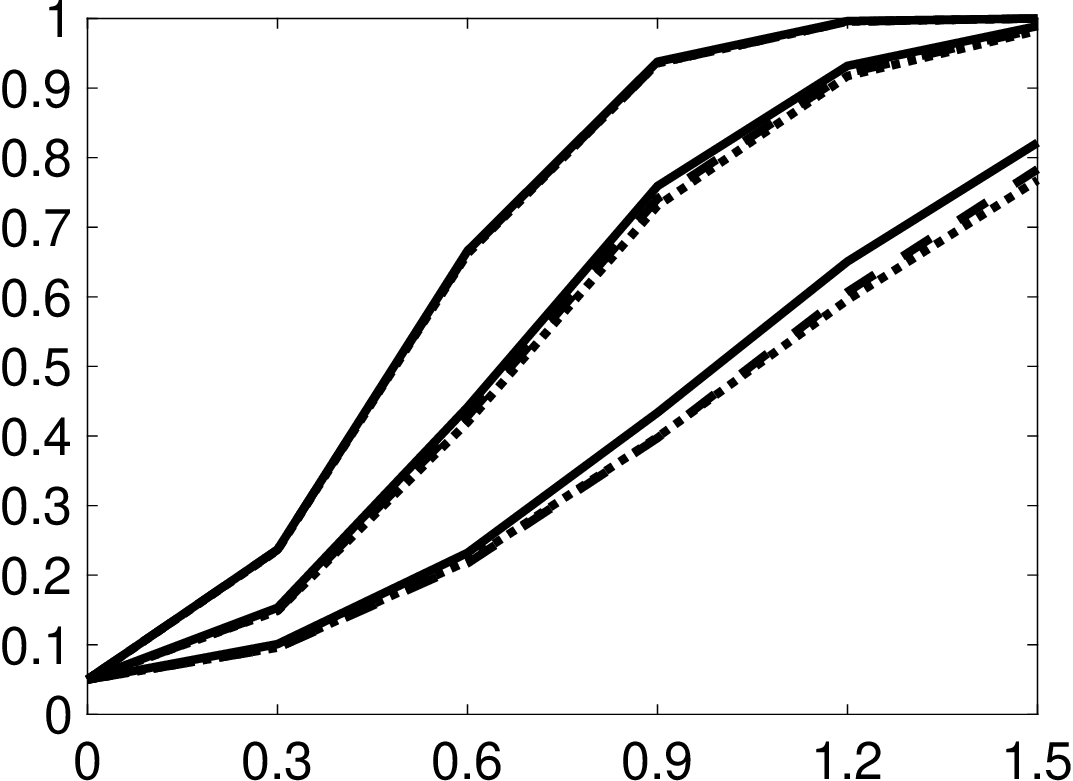}
  \end{minipage}%
  \begin{minipage}{.5 \textwidth}
    \centering
    (d) {\small $\sigma_1^2= 5$, $\sigma_2^2 = 1$ \\ $n_1 / (n_1+n_2) = 0.02$ \\ SP(---), SB(- -), and SS($\cdot\hspace{-.8mm}\cdot\hspace{-.8mm}\cdot$)}

    \includegraphics[height=1.8in]{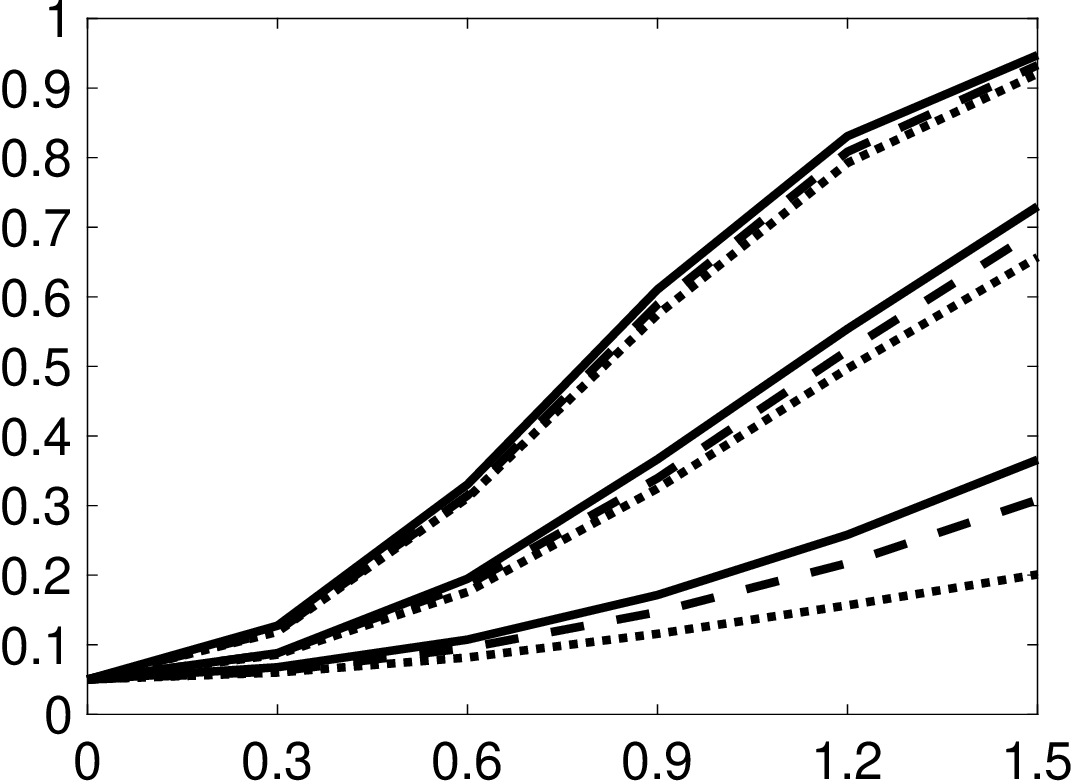}
  \end{minipage}%
\end{center}
\caption*{\footnotesize
Notes: Panels (a)--(b) compare power of the studentized permutation test (SP) with the $t$-test ($t$).
Panels (c)--(d) compare SP with the studentized bootstrap (SB) and studentized subsample (SS).
Darker lines correspond to bigger sample sizes, i.e., 
Panels (a) and (c): black $(n_1,n_2) = (500,9500)$,  dark gray $(n_1,n_2) = (250,4750)$, and light gray $(n_1,n_2) = (100,1900)$; 
Panels (b) and (d): black $(n_1,n_2) = (200,9800)$,  dark gray $(n_1,n_2) = (100,4900)$, and light gray $(n_1,n_2) = (40,1960)$.
The x-axis represents the difference $\theta(P_1)-\theta(P_2)$ and the y-axis, the simulated probability of rejection.
Sizes of all tests 
are artificially adjusted such that the simulated rejection under the null is always equal to 5\%.
All estimates use the IK MSE-optimal bandwidth $\ha{h}_{mse}$.
}
    \end{minipage} \hfill
    \begin{minipage}{0.49\textwidth}
        \centering
        \caption{\label{figure:power_d2}Design 2 - Simulated Power Curves}
\begin{center}
  \begin{minipage}{.5 \textwidth}
    \centering
    (a) {\small $\sigma_1^2=5$, $ \sigma_2^2 = 1$ \\ $\mu=10$ \\ SP(---) and $t$(- -)  }

    \includegraphics[height=1.8in]{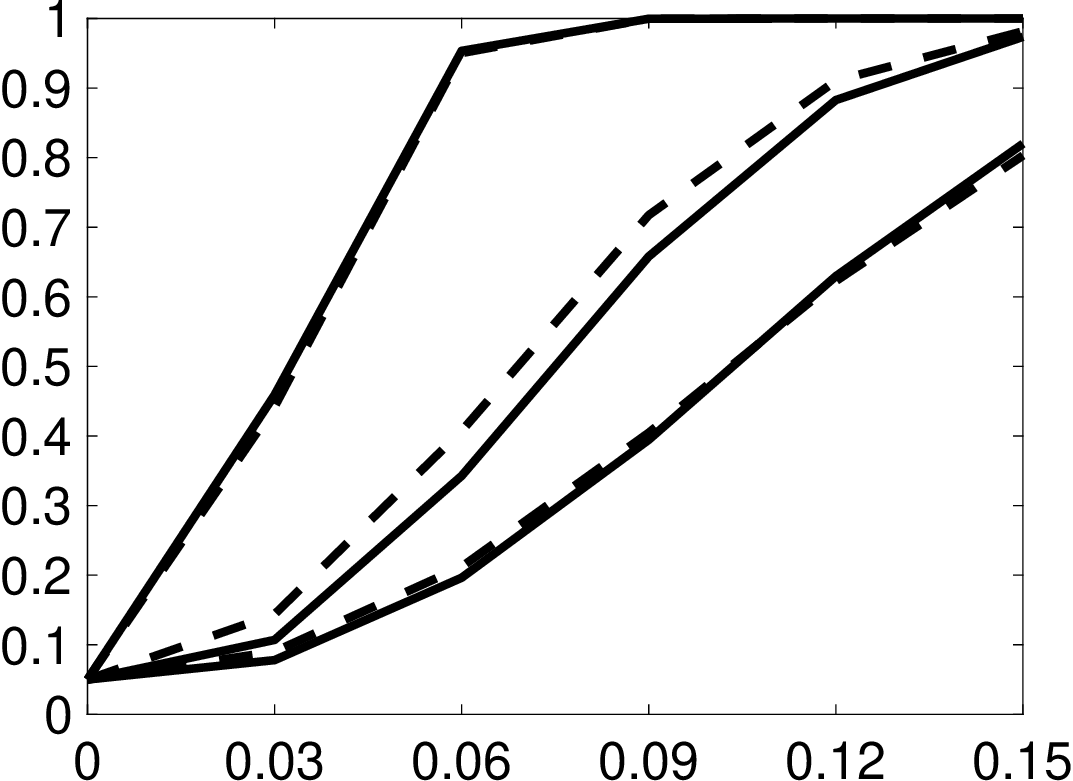}
  \end{minipage}%
  \begin{minipage}{.5 \textwidth}
    \centering
    (b) {\small $\sigma_1^2=5$, $\sigma_2^2 = 1$ \\ $\mu=1$ \\ SP(---) and $t$(- -) }

    \includegraphics[height=1.8in]{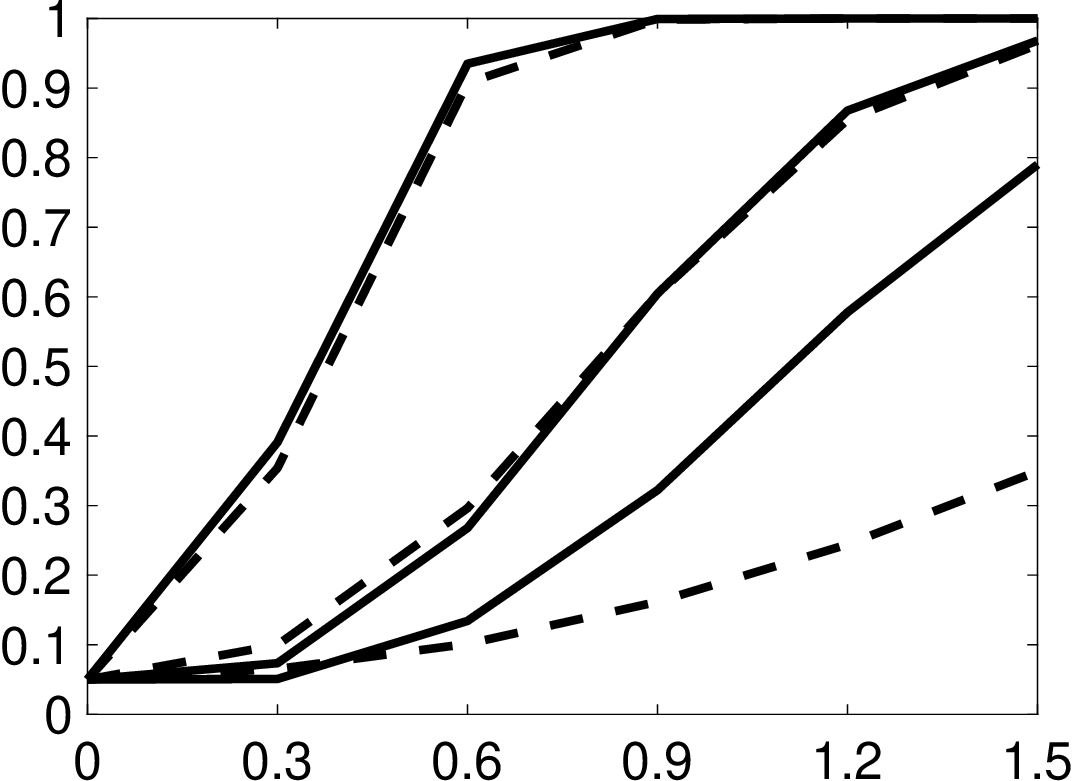}
  \end{minipage}%
\end{center}

\begin{center}
  \begin{minipage}{.5 \textwidth}
    \centering
    (c) {\small $\sigma_1^2= 5$, $\sigma_2^2 = 1$ \\ $\mu=10$ \\ SP(---), SB(- -), and SS($\cdot\hspace{-.8mm}\cdot\hspace{-.8mm}\cdot$) }

    \includegraphics[height=1.8in]{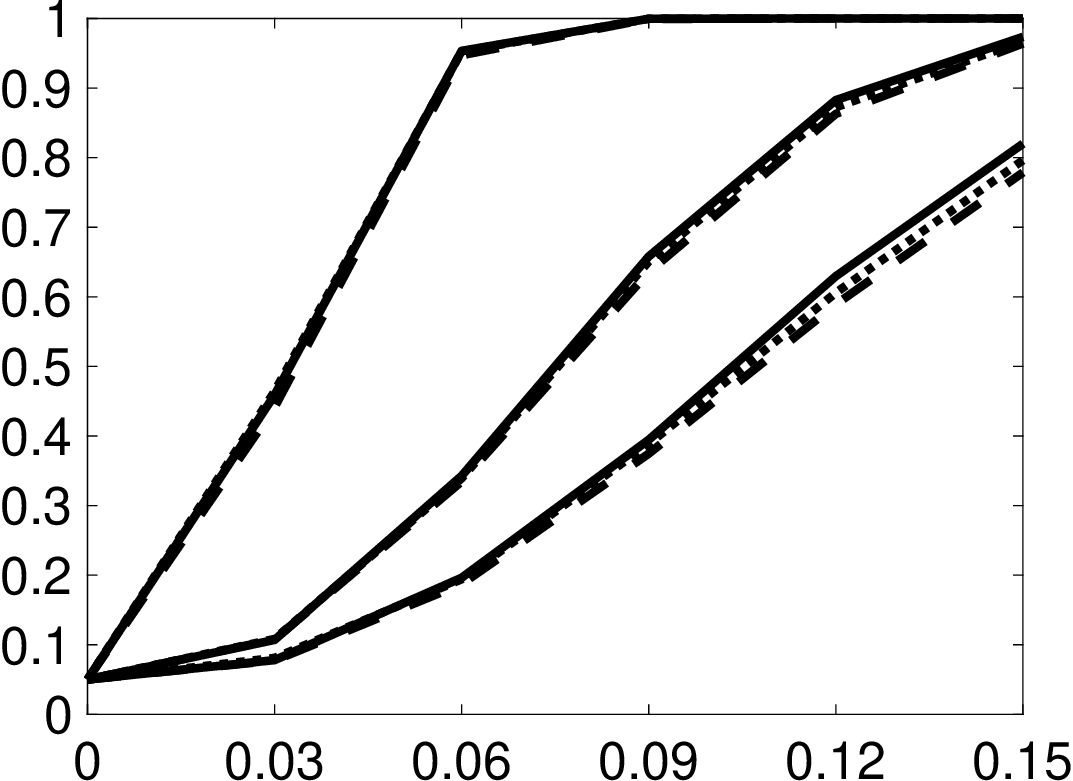}
  \end{minipage}%
  \begin{minipage}{.5 \textwidth}
    \centering
    (d) {\small $\sigma_1^2= 5$, $\sigma_2^2 = 1$ \\ $\mu=1$ \\ SP(---), SB(- -), and SS($\cdot\hspace{-.8mm}\cdot\hspace{-.8mm}\cdot$) }

    \includegraphics[height=1.8in]{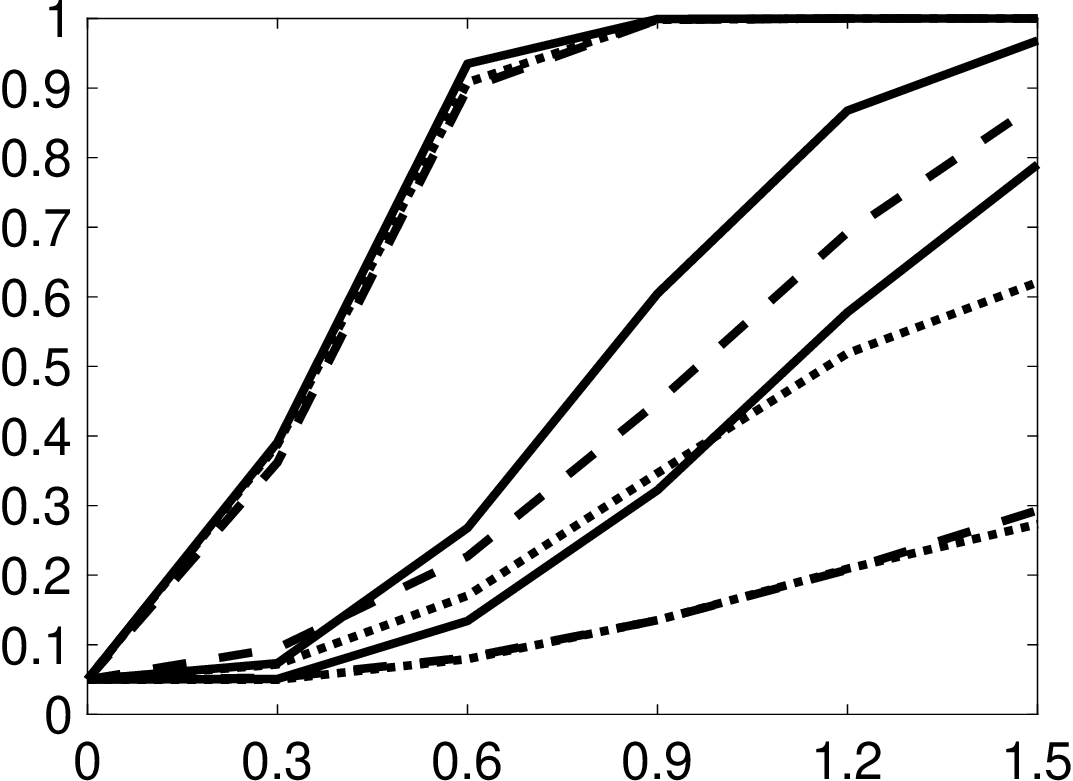}
  \end{minipage}%
\end{center}
\caption*{\footnotesize 
Notes: 
Panels (a)--(b) compare power of the studentized permutation test (SP) with the $t$-test ($t$).
Panels (c)--(d) compare SP with the studentized bootstrap (SB) and studentized subsample (SS).
Darker lines correspond to bigger sample sizes, i.e., 
black $(n_1,n_2) = (1000,1000)$,  dark gray $(n_1,n_2) = (150,150)$, and light gray $(n_1,n_2) = (75,75)$.
The x-axis represents the difference $\theta(P_1)-\theta(P_2)$ and the y-axis, the simulated probability of rejection.
Sizes of all tests 
are artificially adjusted such that the simulated rejection under the null is always equal to 5\%.
All estimates use the IK MSE-optimal bandwidth $\ha{h}_{mse}$.
\\
\vspace{3mm}
}
    \end{minipage}
\end{sidewaysfigure}

We shift the conditional mean function of $Y_2$ in both designs so that $\theta(P_1) \neq \theta(P_2)$.
We then examine the power of SP and compare it to that of the $t$, SB, and SS tests.
A direct comparison is difficult because not all tests control size in all cases.
Thus, we artificially adjust the size of the tests to make sure they have a simulated rejection rate of 5\% under the null hypothesis.\footnote{For the $t$-test, we obtain critical values from the simulated distribution under the null hypothesis, and keep those critical values to examine the simulated rejection rates under the alternative hypotheses. 
For the permutation, bootstrap, and subsampling tests, we numerically search for a nominal level $\alpha$ that gives us the simulated rejection rate of 5\% under the null hypothesis.
Once that artificial nominal level is found, we fix that nominal level and compute the simulated rejection probabilities under the various alternative hypotheses.
Rejection may be randomized in case of ties (Eq. \ref{eq:setup:testfun}) in order for the numerical search to find a solution.}
Figures \ref{figure:power_d1} and \ref{figure:power_d2} display the simulated power curves for Designs 1 and 2, respectively, for cases with $\sigma_1^2 > \sigma_2^2=1$.
Cases with $\sigma_1^2 = \sigma_2^2=1$ are found in 
Figures 5--6 of Section E 
in the supplement. 
Panels (a)--(b) compare  SP (solid line) to $t$ (dashed line), and Panels (c)--(d) compare SP (solid line) to SB (dashed line) and SS (dotted line).
Each panel displays curves associated with three different sample sizes,
 with darker colors representing larger samples.
The x-axis shows $\theta(P_1) - \theta(P_2)$, and the y-axis plots the simulated probability of rejection.
All estimates in our power analysis use the IK MSE-optimal bandwidth $\ha{h}_{mse}$.

In Design 1 (Figure \ref{figure:power_d1}), SP outperforms $t$, SB, and SS, particularly in cases with $n_1/(n_1+n_2)=0.02$; the dominance over SS is more pronounced for smaller samples.
In Design 2 (Figure \ref{figure:power_d2}), there is no clear pattern of dominance between SP and $t$, however, SP dominates SB and SS in all cases.
The discrepancies between SP and other tests  
converge to zero as $n$ increases, as predicted by our theory.
Overall, we conclude that SP has size control superior to other tests  without substantial costs, if any, in terms of power.

\vspace{-6mm}

\section{Empirical Examples}\label{sec:empirical}

\vspace{-2mm}

\indent

In this section, we revisit two classical examples in the RDD literature: \cite{lee2008} on US House elections and \cite{ludwig2007} on the Head Start (HS) funding program.
We illustrate the performance of our permutation test in practice and compare it
to that of the $t$, bootstrap, and subsample tests. 

\cite{lee2008} studies the electoral advantage of incumbent parties, using data on US House of Representatives elections from 1946 to 1998. Since districts where a party's candidate narrowly won an election are comparable to districts where the party's candidate lost by a small margin, the difference in the electoral outcomes between these two groups in the subsequent election identifies the causal effect of party incumbency. \cite{lee2008} finds that an incumbent party has a significant causal advantage of a 0.08 vote share increase in the next election (Table 2 of \cite{lee2008}).

\cite{ludwig2007} study the effects of HS on students' health and schooling. The HS program was established in 1965 to provide preschool, health, and social services to poor children, aged three to five, and their families. The program provided technical assistance to the 300 poorest counties in the US, based on the 1960 poverty rate.
This created a discontinuity in program funding between the 300th and 301st poorest counties. 
Ludwig and Miller compare child mortality rates above and below this cutoff, and estimate that HS reduces mortality by 1.198 per 100,000 children (see Table 3 of \cite{ludwig2007}, ages 5--9, HS-related causes, 1973--1983). 

We test the null hypothesis of zero discontinuity using our robust permutation test on both datasets.\footnote{We downloaded the datasets from Michal Kolesar's repository, \texttt{https://github.com/kolesarm}.}
We estimate the discontinuities with local-quadratic regressions and MSE-optimal bandwidths $\ha{h}_{mse}$ for local-linear regressions, as explained in Section \ref{sec:mc}.
The number of observations is 6,558 for \cite{lee2008} and
3,103 for \cite{ludwig2007}, respectively.

Table \ref{table:applications} reports the $p$-values of our studentized permutation test (SP) and compares them to those of the $t$, studentized bootstrap (SB), and studentized subsample (SS) tests, as implemented in Section \ref{sec:mc}. 
The party incumbency effect of \cite{lee2008} is strongly significant and robust across different tests for both choices of $\ha{h}_{mse}$, that is, the  $\ha{h}_{mse}$ of \cite{calonico2014} in the first row, and that of \cite{imbens2012} in the second row.
On the other hand, \cite{ludwig2007} find the effect of HS on child  mortality to be marginally significant, and we find the significance level ranges between 1\% to 12\%, depending on the test and choice of $\ha{h}_{mse}$.
Finally, both examples demonstrate that our permutation test is feasible to compute in mere seconds, comparing favorably with the bootstrap in terms of computation time.

\begin{table}[htbp]
\caption{
Statistical Significance and Computation Time of Four Tests
}
\label{table:applications}

\vspace{-5mm} 

\begin{center}
\small
  \begin{minipage}{.5 \textwidth}
    \centering
    (a) \cite{lee2008}
    \begin{tabular}{c|cccc}
  \hline
$\ha{h}_{mse}$ & $t$ & SP & SB & SS  \\
   \hline
   13.4400 &  0.0002 & 0.0000 &  0.0000 & 0.0000  \\
   29.3903 &  0.0000  &   0.0000 &0.0000 & 0.0000  \\ \hline
Time (sec) & 0.0113 &  9.3293 &    14.1085 &     4.1356  \\
\hline
\end{tabular}
  \end{minipage}%
  \begin{minipage}{.5 \textwidth}
    \centering
    (b) \cite{ludwig2007}

    \begin{tabular}{c|cccc}
  \hline
$\ha{h}_{mse}$ & $t$ & SP & SB & SS   \\
   \hline
    6.9510 &    0.0066 &     0.0680 &     0.0070    & 0.0931 \\
    17.0846 &    0.0357 &     0.0750 &      0.0350  & 0.1141 \\ \hline
Time (sec) & 0.0049 &  3.8703 &     4.2179 &     2.5887 \\
\hline
\end{tabular}
  \end{minipage}%
\end{center}

\vspace{-5mm}

\caption*{
\footnotesize Notes: The table reports $p$-values of studentized test statistics for the $t$, permutation (SP), bootstrap (SB), and subsample (SB) tests for two MSE-optimal bandwidths $\ha{h}_{mse}$.
The first row is based on the bandwidth selection rule proposed by \cite{calonico2014} and the second row on the rule by \cite{imbens2012}.
The last row displays the computation time in seconds for both bandwidth selections in a
2.3 GHz Quad-Core Intel Core i7 computer running Matlab in a single core. 
We use 1,000 simulated samples for SP, SB, and SS.
}
\end{table}

\vspace{-6mm}

 \section{Conclusion}
 
\vspace{-2mm}

\indent 

Classical two-sample permutation tests for the sharp null hypothesis of equal distributions are easy to implement and have exact size in finite samples.
However, for testing equality of parameters that summarize distributions, classical permutation tests fail to control size. To fix this problem, we propose robust permutation tests based on studentized test statistics. 
Our framework is general enough to cover both parametric and nonparametric models with two samples or one sample split into two subsamples. 
We also propose confidence sets with correct asymptotic coverage that have exact coverage in finite samples if population distributions are the same up to a class of transformations. 
In a simulation study, our permutation test has good size control and power curves in finite samples, outperforming the conventional $t$-test and other resampling methods.
Finally, we illustrate our permutation test with two empirical examples and show that its computation time is feasible, comparing favorably to the bootstrap. 

\ifblind
\else

\textbf{Acknowledgements:} We wish to thank Roger Koenker, Marcelo Moreira, Joe Romano, Azeem Shaikh, Xiaofeng Shao, the editor, associate editor, and two anonymous referees for many insightful comments. 
The paper benefited from feedback given by seminar participants at the U. of Sao Paulo (FEARP), 
U. of Michigan,  Boston U., U. of Essex, IAAE-Rotterdam, Bristol-ESG, KER Int'l Conference, UIUC, and Penn State. 
Bertanha acknowledges financial support received while visiting the Kenneth C. Griffin Department of Economics at the University of Chicago, where part of this work was conducted.
\fi

\vspace{-2mm}

\begin{onehalfspace}
\bibliographystyle{econ}
\bibliography{biblio}

\end{onehalfspace}

\newpage
\setcounter{page}{1}

\begin{center}
 \Large Supplement to ``Permutation Tests at Nonparametric Rates''

\ifblind
\else
\normalsize Marinho Bertanha, EunYi Chung
\fi

\end{center}

\bigskip

\textit{This supplement is organized as follows.
Section \ref{sec:publication} presents a table listing selected publications that use permutation tests. 
Section \ref{sec:app:a} contains the proofs for the lemma, theorems, and corollary  
in Section \ref{sec:theory}. 
Section \ref{sec:param} graphically 
illustrates
(non)validity of the (non)studentized permutation test using 
the
parametric example of Section \ref{sec:theory}. 
Section \ref{sec:app:b} presents all the proofs for Section \ref{sec:applications}. 
Section \ref{sec:app:montecarlo} 
plots the conditional mean functions of the simulation designs of our Monte Carlo experiments and brings additional results.
Lastly, Section \ref{sec:app:c} 
contains
auxiliary lemmas and their proofs. 
}

\begin{appendix}

\begin{singlespace}

\section{A List of Publications}
\label{sec:publication}

\indent

\begin{table}[H]
\caption{Selected Publications in Social and Natural Sciences}
\label{table:literature}

\small

\centering

\begin{tabular}{l   l}
\hline \hline 
\multicolumn{2}{l}{\textbf{Economics}} \\ \hline
\citeSM{alan2019} & Qtly. J. Economics\\
\citeSM{bick2018} & Ame. Econ. Rev. \\
\citeSM{bursztyn2019} & Rev. Econ. Stud. \\
\citeSM{cunningham2018} & Rev. Econ. Stud.\\
\citeSM{rao2019} & Ame. Econ. Rev. \\ \hline 
\multicolumn{2}{l}{\textbf{Public, Environmental \& Occupational Health}} \\ \hline
\citeSM{agha2015} & Int. J. Epidemiol.\\
\citeSM{arnup2016} & J. Clin. Epidemiol. \\
\citeSM{berk2020} & Ame. J. Public Health\\
\citeSM{burrows2017} & MMWR Morb Mortal Wkly Rep \\
\citeSM{huang2016} & Int. J. Epidemiol.\\ \hline
\multicolumn{2}{l}{\textbf{Medicine, General \& Internal}} \\ \hline
\citeSM{berk2013} & JAMA\\
\citeSM{goldfine2013} & Lancet \\
\citeSM{hansen2016} & JAMA \\
\citeSM{rajagopalan2013} & N. Engl. J. Med.\\
\citeSM{ryan2016} & Lancet \\ \hline
\multicolumn{2}{l}{\textbf{Political Science}} \\ \hline
\citeSM{bechtel2015} & Ame. J. Pol. Sci.\\
\citeSM{lax2010} & J. Politics\\
\citeSM{ramos2019} & Comp. Polit. Stud. \\
\citeSM{schafer2020} & J. Politics\\
\citeSM{wood2020} & Ame. J. Pol. Sci.\\ \hline
\multicolumn{2}{l}{\textbf{Psychology, Multidisciplinary}} \\ \hline
\citeSM{bishara2012} & Psychol. Methods\\
\citeSM{hu2014} & Psychol. Methods\\
\citeSM{jorgensen2018} & Psychol. Methods \\
\citeSM{pietschnig2015} & Perspect Psychol Sci\\
\hline \hline 
\end{tabular}

 \caption*{\footnotesize
Notes: The table lists selected publications from top journals in various disciplines
as categorized by the Journal Citation Report produced by the Web of Science.
We ranked journals by the 2020 Journal Impact Factor and 
searched for publications in the last decade from a subset of top journals.
}
\end{table}

\section{Proofs of Theorems and Corollaries}
\label{sec:app:a}

\subsection{Proof of Lemma \ref{lemma:exact} - Exact Size in Finite Samples}

\indent

Summing $\phi(\lbar{W}_n, \lbar{\bZ}_n^\pi)$ over $\pi \in \bG_n$ and taking 
the conditional expectation given $\lbar{W}_n$ yields 
\begin{align}
\alpha n! = \sum_{\pi \in \bG_n} \Exp[ \phi(\lbar{W}_n, \lbar{\bZ}_n^\pi)  |  \lbar{W}_n] = &  n!  \Exp[ \phi(\lbar{W}_n, \lbar{\bZ}_n)  |  \lbar{W}_n],
\end{align}
which implies that $\Exp[ \phi(\lbar{W}_n, \lbar{\bZ}_n)  ] = \Exp \left\{ \Exp[ \phi(\lbar{W}_n, \lbar{\bZ}_n)  |  \lbar{W}_n] \right\} = \alpha$.

$\square$

\subsection{Proof of Theorem \ref{theo1} - Asymptotic Distributions Without Studentization}

\subsubsection{Asymptotic Distribution of Test Statistic}
\label{proof:theo1:asy_dist}

\indent

Consider a sequence of $\lbar{W}_n$ 
that satisfies 
$  (n_1/n - \lambda) \to 0 $ for $\lambda \in (0,1)$.
Condition on  $\lbar{W}_n$, 
\begin{align*}
T_{n} - \sqrt{nh}\left( \theta(P_1) - \theta(P_2)\right) &= \sqrt{nh } \left[ \frac{ \sqrt{n_1 h}  }{ \sqrt{n_1 h} }  \left( \ha\theta_1 -\theta(P_1) \right)  -   \frac{ \sqrt{n_2 h} }{ \sqrt{n_2 h} } \left( \ha\theta_2 - \theta(P_2) \right) \right]
\\
  &= \frac{ \sqrt{n } }{ \sqrt{n_1 } }  \left[ \frac{1}{\sqrt{n_1}} \sum_{i=1}^{n_1} \psi_{n}(\bZ_{1,i}, P_1)  \right]  
  	-  \frac{ \sqrt{n } }{ \sqrt{n_2 } } \left[ \frac{1}{\sqrt{n_2}} \sum_{i=1}^{n_2}  \psi_{n}(\bZ_{2,i}, P_2)  \right]
\\
&+ \frac{ \sqrt{n } }{ \sqrt{n_1 } } o_{P_1}(1) 
- \frac{ \sqrt{n } }{ \sqrt{n_2 } } o_{P_2}(1),
\end{align*}
where $o_{P_k}(1)$ is a term that depends on $Z_{k,1}, \ldots, Z_{k,n_k}$ and converges in probability to zero as $n\to\infty$, $k=1,2$.

First,
\[
 \frac{ \sqrt{n } }{ \sqrt{n_1 } } o_{P_1}(1) 
- \frac{ \sqrt{n } }{ \sqrt{n_2 } } o_{P_2}(1)
\pto 0~.
\]

Second,
for each $k=1,2$,
it suffices to show that the Lindeberg condition holds.
Abbreviate $\psi_{n}(\bZ_{k,i},P_k)$ by $\psi_{n,k}$ and $\mmv[\psi_{n}(\bZ_{k,i},P_k)]$ by $\delta_{n,k}^2$.
For every $\eps>0$ 
and $\zeta$ of Assumption \ref{aspt:asy_linear}-\eqref{eq:aspt:asy_linear_lind}, note that
\begin{align*} 
\left| \frac{ \psi_{n,k} }{\delta_{n,k} } \right|^{2+\zeta}
	& \geq
	 \frac{ \psi_{n,k}^2 }{\delta_{n,k}^2 }  \left( \eps n_k \right)^{\zeta/2}
	\mmi\left\{  \frac{ \psi_{n,k}^2 }{\delta_{n,k}^2}   > \eps n_k   \right\},
\end{align*}
which implies that the Lindeberg condition holds 
and for $\xi^2(P_k) = \lim_{n\to\infty} \mmv[ \psi_{n}(\bZ_{k,i},P_k)]$,
\begin{align*}
\frac{1}{\sqrt{n_k}}  \sum_{i=1}^{n_k} \psi_{n }(\bZ_{k,i},P_k) \dto N\left(0 ; \xi^2(P_k) \right).
\end{align*}

Therefore,
\begin{align*}
  & \frac{ \sqrt{n } }{ \sqrt{n_1 } }  \left[ \frac{1}{\sqrt{n_1}} \sum_{i=1}^{n_1} \psi_{n }(\bZ_{1,i},P_1)  \right]  
  	-  \frac{ \sqrt{n } }{ \sqrt{n_2 } } \left[ \frac{1}{\sqrt{n_2}} \sum_{i=1}^{n_2}  \psi_{n}(\bZ_{2,i},P_2)  \right]
\\
&\dto N\left(0 ; \frac{ \xi^2(P_1)}{\lambda} + \frac{ \xi^2(P_2) }{ 1-\lambda }\right),
\end{align*}
which shows convergence in distribution conditional on  $\lbar{W}_n$.

Lemma \ref{lemma:cond:clt} gives convergence in distribution unconditionally.

\subsubsection{Asymptotic Linear Representation of Permuted Test Statistic}
\label{proof:theo1:asy_linear_perm}

\indent

In this and the following subsections, we make the entire analysis conditional 
on a sequence of $\lbar{W}_n$ 
that satisfies
 $  (n_1/n - \lambda) \to 0 $ for $\lambda \in (0,1)$.
Without loss of generality, re-order observations in the sample such that  
$\lbar{\bZ}_n=(\bZ_1, \ldots, \bZ_{n_1}, \bZ_{n_1+1}, \ldots, \bZ_{n})= (\bZ_{1,1}, \ldots, \bZ_{1,n_1}, \bZ_{2,1}, \ldots, \bZ_{1,n_2})$
and
$\lbar{W}_n=(W_{1,n}, \ldots, W_{n_1,n}, W_{n_1+1,n}, \ldots, W_{n,n})= (1, \ldots, 1, 2, \ldots, 2)$.

Let $\pi$ be a random permutation that is uniformly distributed over $\bG_n$ and independent of the data.
The goal of this subsection is to show that, for the test statistic $T_n$ defined in Equation \ref{eq:setup:teststat},
\begin{align*}
T_n(\lbar{W}_n, \lbar{\bZ}_n^{\pi}) = & 
 \frac{ \sqrt{n} }{ \sqrt{n_1}  }  \left[ \frac{1}{\sqrt{n_1}} \sum_{i=1}^{n_1} \psi_{n}( \bZ_{\pi(i)}, \ubar{P}_n )  \right]  
 \\
 - &  
 \frac{ \sqrt{n}  }{ \sqrt{n_2}  } \left[ \frac{1}{\sqrt{n_2}} \sum_{i=n_1+1}^{n}  \psi_{n}( \bZ_{\pi(i)}, \ubar{P}_n )  \right] 
 + o_p(1),
\end{align*}
where $\ubar{P}_n = p_{1,n} P_1 + p_{2,n} P_2$,  
$p_{k,n} = n_k/n$, $k=1,2$,
and $o_p(1)$ is a term that depends on the data and the random permutation, and it converges in probability to zero.

For each $k=1,2$, let $\bV_{1,n}, \ldots, \bV_{n_k,n}$ be an \textit{iid} sample from the distribution $\ubar{P}_n$.
As $\ubar{P}_n \in \m{P}~~ \forall n$, the uniform asymptotic linear representation of Assumption \ref{aspt:asy_linear} guarantees that
\begin{align}
R_{k,n}(\bV_{1,n}, \ldots, \bV_{n_k,n}) \doteq &
\sqrt{ n_k h} \left( \theta_{ n_k,n }(\bV_{1,n}, \ldots, \bV_{n_k,n}) - \theta(\ubar{P}_n)  \right) 
\\
& \hspace{1cm}
  - \left( \frac{1}{\sqrt{n_k}} \sum_{i=1}^{n_k} \psi_{n}   (\bV_{i,n}, \ubar{P}_n) \right) 
  \pto 0.
\end{align}

Lemma 5.3 and Remark A.2 by \citeSM{chung2013} show that
\begin{align*}
R_{1,n}(\bZ_{\pi(1)}, \ldots, \bZ_{\pi(n_1)}) = &
\sqrt{n_1h} \left( \theta_{n_1,n}(\bZ_{\pi(1)}, \ldots, \bZ_{\pi(n_1)}) - \theta(\ubar{P}_n)  \right) 
\\
& \hspace{1cm}
  - \left( \frac{1}{\sqrt{n_1}} \sum_{i=1}^{n_1} \psi_{n}   (\bZ_{\pi(i)}, \ubar{P}_n) \right) 
  \pto 0,
\\
R_{2,n}(\bZ_{\pi(n_1+1)}, \ldots, \bZ_{\pi(n)}) = &
\sqrt{n_2 h } \left( \theta_{n_2,n}(\bZ_{\pi(n_1+1)}, \ldots, \bZ_{\pi(n)}) - \theta(\ubar{P}_n)  \right) 
\\
& \hspace{1cm}
  - \left( \frac{1}{\sqrt{n_2}} \sum_{i=n_1+1}^{n} \psi_{n }   (\bZ_{\pi(i)}, \ubar{P}_n) \right) 
  \pto 0.
\end{align*}

Then, 
\begin{align*}
T_n(\lbar{W}_n, \lbar{\bZ}_n^{\pi}) 
&= 
 \frac{ \sqrt{n}  }{ \sqrt{n_1}  }  \left[ \frac{1}{\sqrt{n_1}} \sum_{i=1}^{n_1} \psi_{n}( \bZ_{\pi(i)}, \ubar{P}_n )  \right]  
 -   
 \frac{ \sqrt{n}  }{ \sqrt{n_2}  } \left[ \frac{1}{\sqrt{n_2}} \sum_{i=n_1+1}^{n}  \psi_{n }( \bZ_{\pi(i)}, \ubar{P}_n )  \right] 
\\
&\hspace{1cm} + \frac{ \sqrt{n}  }{ \sqrt{n_1}  } R_{1,n}(\bZ_{\pi(1)}, \ldots, \bZ_{\pi(n_1)})
- \frac{ \sqrt{n}  }{ \sqrt{n_2}  } R_{2,n}(\bZ_{\pi(n_1+1)}, \ldots, \bZ_{\pi(n)})
\\
&= 
 \frac{ \sqrt{n}  }{ \sqrt{n_1}  }  \left[ \frac{1}{\sqrt{n_1}} \sum_{i=1}^{n_1} \psi_{n}( \bZ_{\pi(i)}, \ubar{P}_n )  \right]  
 -   
 \frac{ \sqrt{n}  }{ \sqrt{n_2}  } \left[ \frac{1}{\sqrt{n_2}} \sum_{i=n_1+1}^{n}  \psi_{n }( \bZ_{\pi(i)}, \ubar{P}_n )  \right] 
\\
&\hspace{1cm} + o_p(1).
\end{align*}

\subsubsection{Coupling Approximation}
\label{proof:theo1:coupling}

\indent

The goal of this section is to create a data set $\lbar{\bZ}^*_n=( {\bZ}_1^*, \ldots, {\bZ}_n^*)$ that is \textit{iid} from $\ubar{P}_n $ and a permutation $\pi_0$
such that, for a random permutation $\pi$,
$T_n(\lbar{W}_n,\lbar{\bZ}^{*\pi\pi_0}_n)-T_n(\lbar{W}_n,\lbar{\bZ}^{\pi}_n) \pto 0$.

Given that 
$\lbar{\bZ}_n=(\bZ_1, \ldots, \bZ_{n_1}, \bZ_{n_1+1}, \ldots, \bZ_{n})= (\bZ_{1,1}, \ldots, \bZ_{1,n_1}, \bZ_{2,1}, \ldots, \bZ_{1,n_2})$,
the distribution of $\lbar{\bZ}_n$ is  $P_{1}^{n_1} \times P_{2}^{n_2}$, that is, the independent product of distributions 
$P_{1}$  ($n_1$ times) and $P_{2}$ (${n_2}$ times).
In what follows, we construct a dataset $\lbar{\bZ}^*_n=( {\bZ}_1^*, \ldots, {\bZ}_n^*)$ that is \textit{iid} from $\ubar{P}_n$.
For observation $i=1$, draw an index $k$ out of $\{1,  2\}$ at random with probabilities $(p_{1,n}, p_{2,n})$.
Given the resulting index $k$, set ${\bZ}_1^*={\bZ}_{k,1}$.
Move on to $i=2$ and draw an index $k'$ as before. 
If the index $k'$ drawn is the same as $k$, set ${\bZ}_2^* = {\bZ}_{k,2}$. 
Otherwise, if the index $k' \neq k$, then set ${\bZ}_2^* = {\bZ}_{k',1}$.
Keep going until you reach a point where you may run out of observations from one of the two samples.
For example, for observation $i$, if you randomly pick $k=1$, but you have already exhausted all $n_1$ observations from $P_{1}$,
then randomly draw ${\bZ}_i^*$ from $P_{1}$. 
We end up with $\lbar{\bZ}^*_n$ and $\lbar{\bZ}_n$ having many of the same observations in common.
Call $D$ the random number of observations that are different.

Reorder observations in $\lbar{\bZ}^*_n$  by a permutation $\pi_0$ so that 
$\bZ_{\pi_0(i)}^*$ equals  ${\bZ}_i$ for most $i$, except for $D$ of them.
The re-ordered sample $\lbar{\bZ}^{* \pi_0}_n$ is constructed as follows.
The sample $\lbar{\bZ}_n$ has  the $n_1$ observations from $P_{1}$ appear first, then the $n_2$ observations from $P_{2}$ appear second.
Simply take all the observations from $P_{1}$ in  $\lbar{\bZ}^*_n$ and place them first, up to $n_1$.
The observations from $P_{1}$ in  $\lbar{\bZ}^*_n$ that are equal to observations from $P_{1}$ in  $\lbar{\bZ}_n$ are placed first and in the same order as in $\lbar{\bZ}_n$.
If there are more than $n_1$ observations from $P_{1}$ in  $\lbar{\bZ}^*_n$,  put the extra observations on the side.
If there are less than $n_1$, leave the spots blank and start at spot $n_1+1$  with the observations from $P_{2}$ in $\lbar{\bZ}^*_n$.
Repeat the same procedure for those observations in  $\lbar{\bZ}^*_n$ that were drawn from $P_{2}$, and place them starting at spot $n_1+1$ up to spot $n_2$.
There are remaining spots in the newly created sample $\lbar{\bZ}^{* \pi_0}_n$, and these should be filled up with
the observations from $\lbar{\bZ}^*_n$ that were placed on the side, in any order.
Note that the distribution of $\lbar{\bZ}^{* \pi_0}_n$ is also \textit{iid} $\ubar{P}_n$.

We first need to show that the (random) number of observations $D$, that are different between $\lbar{\bZ}^*_n$ and $\lbar{\bZ}_n$, is ``small'' as a fraction of $n$.
Let $n_1^*$ denote the number of observations in $\lbar{\bZ}^*_n$ that are generated from $P_{1}$. 
Then, $n_1^*$ has the binomial distribution with $(n, p_{1,n})$ parameters.
So the mean of $n_1^*$ is $n p_{1,n} = n_1$.
We have that $D = | n_1^* - n_1|$.
\[
\mme[D] =  \mme | n_1^* - n_1 | = \mme | n_1^* - n p_{1,n} | 
\]
\[
\leq 
[\mme\{(n_1^* - n p_{1,n})^2\}]^{1/2} = [n p_{1,n}(1-p_{1,n})]^{1/2}= O(n^{1/2} ), 
\] 
where the inequality follows from the Jensen's inequality and $p_{1,n}(1-p_{1,n}) \leq 1/4$.
By a similar argument, $\mme[D^2]=O(n)$ and $\mmv[D] = O(n)$.

Let $\pi$ be a random permutation that is uniformly distributed over $\bG_n$ and independent of everything else.
Define $\Delta_i = 1$  for $i \leq n_1$, and $\Delta_i = -\frac{n_1}{n_2 }$ for $i > n_1$. 
From before, we have
\begin{align*}
T_n(\lbar{W}_n, \lbar{\bZ}_n^{\pi}) 
&= 
 \frac{ \sqrt{n}  }{ \sqrt{n_1}  }  \left[ \frac{1}{\sqrt{n_1}} \sum_{i=1}^{n_1} \psi_{n}( \bZ_{\pi(i)}, \ubar{P}_n )  \right]  
 -   
 \frac{ \sqrt{n}  }{ \sqrt{n_2}  } \left[ \frac{1}{\sqrt{n_2}} \sum_{i=n_1+1}^{n}  \psi_{n }( \bZ_{\pi(i)}, \ubar{P}_n )  \right] 
\\
&\hspace{1cm} + o_p(1)
\\
&\dd 
\underbrace{
    \frac{ \sqrt{n}  }{ n_1  }  \sum_{i=1}^{n} \Delta_{\pi(i)} \psi_{n}( \bZ_{i}, \ubar{P}_n )
}_{\doteq T_n^{\pi}}
+ o_p(1)
=
T_n^{\pi} + o_p(1),
\end{align*}
where $A \dd B$ means $A$ and $B$ have the same distribution. 
The test statistic that uses $\lbar{\bZ}^{* \pi_0}_n$ in the place of $\lbar{\bZ}_n$ as initial sample and then undergoes permutation $\pi$ is
\[
T_n(\lbar{W}_n, \lbar{\bZ}_n^{*\pi\pi_0}) \dd
\underbrace{ 
    \frac{ \sqrt{n}  }{ n_1  }  \sum_{i=1}^{n} \Delta_{\pi(i)} \psi_{n}( \bZ^*_{\pi_0(i)}, \ubar{P}_n )
}_{\doteq T_n^{* \pi \pi_0}}  + o_p(1) =  T_n^{* \pi \pi_0} + o_p(1). 
\]

The rest of this subsection shows that $T_{n}^{* \pi \pi_0} - T_n^{\pi} \pto 0$ as $n\to \infty$.

First, the expected value.
\begin{align*}
\mme \left[  T_{n}^{* \pi \pi_0} - T_n^{\pi} \right]   & = 
\frac{\sqrt{n}}{n_1} 
\sum_{i=1}^{n}
	\mme\left[ \Delta_{\pi(i)} \right] \mme\left[ \left(  \psi_{n}( \bZ^*_{\pi_0(i)}, \ubar{P}_n ) -  \psi_{n}( \bZ_{i}, \ubar{P}_n ) \right) \right] 
\\
	& = 0
\end{align*}
because $\pi$ is independent of everything 
and $\mme[\Delta_{\pi(i)}] = 0$.

Second, the variance.
\begin{align}
\Var\left( T_{n}^{* \pi \pi_0} - T_n^{\pi}  \right) & =  \Exp\left[ \Var\left(\left. T_{n}^{* \pi \pi_0} - T_n^{\pi} \right| D, \pi, \pi_0  \right) \right] 
\label{eq:coupling:var1}\\
& + 
\Var\left[ \Exp\left( \left. T_{n}^{* \pi \pi_0} - T_n^{\pi} \right| D, \pi, \pi_0 \right) \right].
\label{eq:coupling:var2}
\end{align}

\underline{Part \ref{eq:coupling:var1} :}
The elements in $\lbar{\bZ}^{* \pi_0}_n$ and $\lbar{\bZ}_n$ are the same except for $D$ of them.
This makes all the terms in the difference $T_{n}^{* \pi \pi_0} - T_n^{\pi}$ zero, except for at most $D$ of them.
Conditioning on
$D$,
$\pi_0$,
and 
$\pi$, the variance is
\begin{gather}
\Var\left[\left. T_{n}^{* \pi \pi_0} - T_n^{\pi} \right| D, \pi, \pi_0  \right] =  \frac{ n }{n_1^2} 
D
\Var \left[ \left. \Delta_{\pi(i)} \left(  \psi_{n}( \bZ_{\pi_0(i)}^*, \ubar{P}_n )  - \psi_{n}( \bZ_{i}, \ubar{P}_n ) \right) \right| D, \pi, \pi_0  \right] 
\nonumber
\\
 \leq \frac{ n }{n_1^2} 
D \max\left\{\left( \frac{n_1}{n_2} \right)^2, 1 \right\} \left\{ \mmv[ \psi_{n}( \bZ_{1}, \ubar{P}_n )   ] + \mmv[\psi_{n}( \bZ_{2}, \ubar{P}_n ) ] \right\} 
\\
=
\frac{n}{n_1^2 } D . O(1) 
\end{gather}
because
$n_1/n_2 = O(1)$ and
$\mmv[\psi_{n}( \bZ_{k}, \ubar{P}_n )] = O(1), ~  k=1,2,$ by Assumption \ref{aspt:asy_linear}-\eqref{eq:aspt:asy_linear_mom}.

Taking the expectation,
\begin{gather}
\Exp\left[ \Var\left(\left. T_{n}^{* \pi \pi_0} - T_n^{\pi} \right| D, \pi, \pi_0  \right) \right]  
\leq  
\frac{n}{n_1^2 }  
\Exp[  D ]
O(1)
=o(1).
\end{gather}

\underline{Part \ref{eq:coupling:var2} :}

Equation \ref{eq:coupling:var2} is bounded by 
\begin{gather} \label{eq:bound}
    \frac{nh}{\min \{n_1^2, n_2^2\}}[\theta(P_1)-\theta(P_2)+o(1)]^2 \mmv\{D\},
\end{gather}
which converges to 0 
under the null. 
However, when the null is not imposed, we need a new argument to bound the variance in 
\eqref{eq:coupling:var2}.

To this end, let $S$ be the number of observations among those $D$ observations that have $\Delta_{\pi(i)}=1$.
Conditioning on the random drawing of indices in the coupling construction (hence conditioning on $D$ and $\pi_0$),
the distribution of $S$ is Hypergeometric with $D$ draws out of $n$ elements, among which $n_1$ have $\Delta_{\pi(i)}=1$.
Then,
\begin{align*}
&\mme \left[ \left. T_{n}^{* \pi \pi_0} - T_n^{\pi} \right| D, \pi, \pi_0 \right]   
\\
& = 
\frac{\sqrt{n}}{n_1} 
	\left[
		S \left(\frac{n}{n_2} \right) - \left(\frac{n_1}{n_2} \right) D
	\right]
(-1)^{\mmi\{n_1^* \leq n_1 \} }
	\left[
		\mme \left[  \psi_{n}( \bZ_{1}, \ubar{P}_n ) \right]  
		-
		\mme \left[  \psi_{n}( \bZ_{2}, \ubar{P}_n ) \right]  
	\right].
\end{align*}
Take the variance conditional on $(D,\pi_0)$,
\begin{align*}
&\mmv\left[ \left.  \mme \left( \left. T_{n}^{* \pi \pi_0} - T_n^{\pi} \right| D, \pi, \pi_0 \right)  \right| D, \pi_0 \right] 
\\
& = 
\frac{n }{n_1^2} 
	\mmv \left[
		\left.  S 
		\left(\frac{n}{n_2} \right) - \left(\frac{n_1}{n_2} \right) D
		\right| D, \pi_0 
	\right]
	\left\{
		\mme \left[  \psi_{n}( \bZ_{1}, \ubar{P}_n ) \right]  
		-
		\mme \left[  \psi_{n}( \bZ_{2}, \ubar{P}_n ) \right]  
	\right\}^2
\\
& = 
\frac{n^3 }{n_1^2 n_2^2} 
	\mmv \left[
		\left.  S 
		\right| D, \pi_0 
	\right]
	O(1)
\\
& = 
\frac{n^3 }{n_1^2 n_2^2} 
	D \left( \frac{n_1}{n} \right) \left( \frac{n_2}{n} \right) \left( \frac{n-D}{n-1} \right)
	O(1)
\\
& = 
\frac{n^2  }{n_1 n_2 (n-1) } 
\left[ 
	D  
	-
	D^2 \left( \frac{1}{n} \right) 
\right]
	O(1),
\end{align*}
where the 
$\left\{
		\mme \left[  \psi_{n}( \bZ_{1}, \ubar{P}_n ) \right]  
		-
		\mme \left[  \psi_{n}( \bZ_{2}, \ubar{P}_n ) \right]  
\right\}^2=O(1)$ because of Assumption \ref{aspt:asy_linear}-\eqref{eq:aspt:asy_linear_mom}.
Take the expectation,
\begin{align*}
& \mme\left\{ \mmv\left[ \left.  \mme \left( \left. T_{n}^{* \pi \pi_0} - T_n^{\pi} \right| D, \pi, \pi_0 \right)  \right| D, \pi_0 \right] \right\}
\\
& = 
\frac{n^2  }{n_1 n_2 (n-1) } 
\left[ 
	\mme(D)  
	-
	\mme(D^2) \left( \frac{1}{n} \right) 
\right]\left\{
		\mme \left[  \psi_{n}( \bZ_{1}, \ubar{P}_n ) \right]  
		-
		\mme \left[  \psi_{n}( \bZ_{2}, \ubar{P}_n ) \right]  
\right\}^2,
\end{align*}
where $\left\{
		\mme \left[  \psi_{n}( \bZ_{1}, \ubar{P}_n ) \right]  
		-
		\mme \left[  \psi_{n}( \bZ_{2}, \ubar{P}_n ) \right]  
\right\}^2=O(1)$ because of Assumption \ref{aspt:asy_linear}-\eqref{eq:aspt:asy_linear_mom}.
Therefore, this bound converges to 0.
Furthermore, we have that \\
$\Exp\left[ \left. \Exp\left( \left. T_{n}^{* \pi \pi_0} - T_n^{\pi} \right| D, \pi, \pi_0 \right) \right| D, \pi_0 \right]=0$,
which makes\\
$\Var\left\{ \Exp\left[ \left. \Exp\left( \left. T_{n}^{* \pi \pi_0} - T_n^{\pi} \right| D, \pi, \pi_0 \right) \right| D, \pi_0 \right] \right\} = 0$,
and the law of total variance makes \ref{eq:coupling:var2} converge to $0$.

\subsubsection{Hoeffding's CLT}
\label{proof:theo1:hoeff}

\indent

Let $\pi$ and $\pi'$ be permutations that are mutually independent, uniformly distributed over $\bG_n$,
and independent of everything else. 
The 
goal of this section is to show that
\begin{align}
( T_{n}^{\pi }, T_{n}^{\pi'} )  \overset{d}{\to}(T,T'),
\end{align}
where $T,T'$ are independent normal random variables with the same distribution.
To this end, we first show that
\begin{align}
( T_{n}^{*\pi}, T_{n}^{*\pi'} )
= \left( \frac{\sqrt{n}}{n_1} \sum_{i=1}^n \Delta_{\pi(i)} \psi_{n}( \bZ_{i}^*, \ubar{P}_n ), \frac{\sqrt{n}}{n_1}  \sum_{i=1}^n \Delta_{\pi'(i)} \psi_{n}( \bZ_{i}^*, \ubar{P}_n ) \right)
\overset{d}{\to}(T,T').
\label{eq:hoeffing_mixed}
\end{align}

By the Cram\`er--Wold device, we need to verify that 
\begin{align}
 \sum_{i=1}^n \underbrace{\frac{\sqrt{n}}{n_1}(a\Delta_{\pi(i)} + b \Delta_{\pi'(i)})}_{\doteq C_{n,i}}\psi_{n}( \bZ_{i}^*, \ubar{P}_n )  =  \sum_{i=1}^n C_{n,i} \psi_{n}( \bZ_{i}^*, \ubar{P}_n ) 
\label{eq:cramer-wold}
\end{align}
is asymptotically normal for any choice of constants $a$ and $b$, where $a\neq 0$ or $b\neq 0$.
Note that $C_{n,1}, \ldots, C_{n,n}$ is a sequence of random variables that are independent of $\psi_{n}( \bZ_{i}^*, \ubar{P}_n )$, $i=1,\ldots, n$.

Call $\delta_n^2 = \mmv[\psi_{n}( \bZ_{i}^*, \ubar{P}_n )]$.
In order to apply  Lemma \ref{lemma:clt} and conclude that
\[\frac{\sum_{i=1}^n C_{n,i}  \psi_{n}( \bZ_{i}^*, \ubar{P}_n ) }{\delta_n \sqrt{\sum_{l=1}^n C_{n,l}^2}} \dto N(0,1),\]
we need to show that  there exists $\zeta>0$ for which
\begin{align}
& \left( \frac{\max_{i=1,\ldots, n}C_{n,i}^2}{\sum_{l=1}^n C_{n,l}^2} \right)^{\zeta/2} \mme\left| \frac{ \psi_{n}( \bZ_{i}^*, \ubar{P}_n ) }{\delta_{n}} \right|^{2+\zeta} \pto 0.
\label{lr1133_lind}
\end{align}

We verify \eqref{lr1133_lind}  in three steps.

First, we show that $\max_{i=1,\ldots, n}C_{n,i}^2=O_p(n^{-1})$:
\begin{align*}
& C_{n,i}^2 
= \frac{n}{n_1^2}\left(a^2\Delta_{\pi(i)} ^2 + 2ab\Delta_{\pi(i)} \Delta_{\pi'(i)} + b^2 \Delta_{\pi'(i)}^2\right)
= \frac{n}{n_1^2} O_p(1) ,
\\
&\max_{i=1, \ldots, n} C_{n,i}^2 =  \frac{n}{n_1^2}O_p(1) = O_p(n^{-1}).
\end{align*}

Second, we derive the probability limit of $\sum_{l=1}^n C_{n,l}^2$.
Note that:
\begin{align}
& \mme[\Delta_{\pi(i)}] = \mme[\Delta_{\pi'(i)}]  =
0,
\nonumber\\
&\mmv[\Delta_{\pi(i)}] = \mmv[\Delta_{\pi'(i)}]  = 
\frac{n_1}{n_2}, 
\nonumber \\
& \mmc[\Delta_{\pi(i)},\Delta_{\pi'(i)}] =\mme[\Delta_{\pi(i)}\Delta_{\pi'(i)}] = 0,\nonumber 
\\
& \mme\left[\sum_{i=1}^n C_{n,i}^2\right]  
\to 
\frac{1}{ \lambda (1- \lambda )} \left(a^2 +  b^2 \right),
\nonumber \\
& \mmv\left[ \sum_{i=1}^n \frac{n}{n_1^2}\left(a^2\Delta_{\pi(i)} ^2 + 2ab\Delta_{\pi(i)} \Delta_{\pi'(i)} + b^2 \Delta_{\pi(i)'}^2\right) \right]
= o(1).
\nonumber
\end{align}
Therefore,
\[\sum_{i=1}^n C_{n,i}^2 \pto \frac{1}{\lambda (1-\lambda)} \left(a^2 +  b^2 \right), \]
which is bounded away from zero.

Third, combining the two previous steps
\begin{align*}
\left( \frac{\max_{i=1,\ldots, n}C_{n,i}^2}{\sum_{l=1}^n C_{n,l}^2} \right) ^{\zeta/2} = O_p \left( n^{-\zeta/2}\right)
\end{align*}
which combined with Assumption \ref{aspt:asy_linear}-\eqref{eq:aspt:asy_linear_lind} yields  \eqref{lr1133_lind}.

Next, we derive the limiting distribution of $\sum_{i=1}^n C_{n,i}  \psi_{n}( \bZ_{i}^*, \ubar{P}_n )$, that is, without the standardization. 
Since we already know the probability limit of  $\sum_{l=1}^n C_{n,l}^2$, we need the limit of $\delta^2_n$.
Define $\ubar{P} = \lambda P_1 + (1-\lambda) P_2$.
\begin{align*}
\delta^2_n & = \mmv[\psi_{n}( \bZ_{i}^*, \ubar{P}_n )] - \xi^2(\ubar{P}_n) + \xi^2(\ubar{P}_n) - \xi^2(\ubar{P}) + \xi^2(\ubar{P})
\\
& \to \xi^2(\ubar{P}),
\end{align*}
where we used Assumption \ref{aspt:asy_linear}-\eqref{eq:aspt:asy_linear_var} and Assumption \ref{aspt:asy_linear}-\eqref{eq:aspt:asy_linear_xicont}.

Therefore, 
\begin{align*}
\sum_{i=1}^n C_{n,i} \psi_{n}( \bZ_{i}^*, \ubar{P}_n )  
& \dto 
N\left(
	0,
	\xi^2(\ubar{P})
	\left( 
		\frac{1}{ \lambda (1-\lambda)} \left(a^2 +  b^2 \right)
	\right) 
 \right).
\end{align*}

By the Cram\`er--Wold device, we conclude that 
$( T_{n}^{*\pi}, T_{n}^{*\pi'} ) \dto (T,T')$,
where $(T,T')$ is bivariate normal with zero means, equal variances 
\begin{gather}
\mmv[T]=\mmv[T'] = 		\left( 
		\frac{\xi^2(\ubar{P})}{\lambda(1- \lambda) }  
	 \right) \doteq \tau^2
	 \label{eq:var_perm_nonstud}
\end{gather}
and zero covariance. Thus, $T$ and $T'$ are  independent.

So far, we have shown that $( T_{n}^{*\pi}, T_{n}^{*\pi'} ) \dto (T,T')$.
 Consider the permutation $\pi_0$ from 
 Section \ref{proof:theo1:coupling}.
The conditions on $(\pi, \pi')$ imply that the permutations $(\pi \pi_0, \pi' \pi_0)$ are also mutually independent, uniformly distributed over $\bG_n$,
and independent of everything else.
This implies that  $( T_{n}^{*\pi\pi_0}, T_{n}^{*\pi'\pi_0} )  \dto (T,T')$.

Finally,
\[
( T_{n}^{\pi}, T_{n}^{\pi'} ) = (T_{n}^{\pi} - T_{n}^{*\pi\pi_0},  T_{n}^{\pi'} - T_{n}^{*\pi'\pi_0}) + ( T_{n}^{*\pi\pi_0}, T_{n}^{*\pi'\pi_0} )
\dto (T,T'),
\]
where we use the coupling argument from Section \ref{proof:theo1:coupling} to obtain $ T_n^{\pi} - T_n^{*\pi\pi_0} \pto 0$ and $ T_n^{\pi'} - T_n^{*\pi'\pi_0} \pto 0$,
and the Slutsky theorem.

\subsubsection{Unconditional Argument}
\label{proof:theo1:uncond}

\indent

In this subsection, we apply Lemma \ref{lemma:cond:clt} to show the conclusion of Section \ref{proof:theo1:hoeff} also holds unconditionally.
The analysis in sections \ref{proof:theo1:coupling} and \ref{proof:theo1:hoeff} are conditional on $\lbar{W}^\infty$
that satisfies $\left| \frac{n_1}{n} - \lambda\right|  \to 0$.

Unconditionally, $n_1$ is a random variable. 
Assumption \ref{aspt:sampling} implies that $  (n_1/n - \lambda) \pto 0 $ for $\lambda \in (0,1)$.
By Lemma \ref{lemma:cond:clt}, 
$ ( T_{n}^{\pi}, T_{n}^{\pi'} )   \dto (T,T')$ unconditionally.
By the Hoeffding's CLT (\citeSM{lehmann2005}, Theorem 15.2.3),
\[
\ha{R}_{T_n}(t) \pto \Phi\left(\frac{t}{\tau}\right),
\]
where $\Phi$ is the CDF of the standard normal distribution, and $\tau^2$ is given by \eqref{eq:var_perm_nonstud}.
Uniform consistency follows from continuity and monotonicity of $\Phi$ (Lemma \ref{lemma:unifcdf}).



\subsection{Proof of Theorem \ref{theo2} -  Asymptotic Distributions With Studentization}

\subsubsection{Asymptotic Distribution of Test Statistic}

Consider a sequence of $\lbar{W}_n$ 
that satisfies
$  (n_1/n - \lambda) \to 0 $ for $\lambda \in (0,1)$.
Condition on  $\lbar{W}_n$ for each $n$,
\begin{align*}
S_{n} - \sqrt{nh} \frac{\left( \theta(P_1) - \theta(P_2)\right)}{\ha \sigma_n} 
&= 
\frac{\sqrt{nh }}{\ha \sigma_n} 
\left[ 
	 \left( \ha\theta_1 -\theta(P_1) \right)  
	-   
	 \left( \ha\theta_2 - \theta(P_2) \right) 
\right]
\\
& \dto  N(0, 1) 
\end{align*}
because ${\sigma }/{\ha \sigma_n} \pto 1$ by Assumption \ref{aspt:var:est}, $T_n - \sqrt{nh} \left( \theta(P_1) - \theta(P_2)\right) \dto N(0,\sigma^2)$ by Theorem \ref{theo1},
and the Slutsky theorem.
The same is true unconditional on $\lbar{W}_n$ by
Lemma \ref{lemma:cond:clt}.

\subsubsection{Asymptotic Permutation Distribution}
\label{proof:thm2:contg}

Again, consider a sequence of $\lbar{W}_n$ 
that satisfies
$  (n_1/n - \lambda) \to 0 $ for $\lambda \in (0,1)$.
Condition on  $\lbar{W}_n$ for each $n$.
Without loss of generality, re-order observations in the sample such that  
$\lbar{\bZ}_n=(\bZ_1, \ldots, \bZ_{n_1}, \bZ_{n_1+1}, \ldots, \bZ_{n})= (\bZ_{1,1}, \ldots, \bZ_{1,n_1}, \bZ_{2,1}, \ldots, \bZ_{1,n_2})$
and
$\lbar{W}_n=(W_{1,n}, \ldots, W_{n_1,n}, W_{n_1+1,n}, \ldots, W_{n,n})= (1, \ldots, 1, 2, \ldots, 2)$.

Let $\pi$ be a random permutation that is uniformly distributed over $\bG_n$ and independent of the data.
For each $k=1,2$, let $\bV_{1,n}, \ldots, \bV_{n_k,n}$ be an \textit{iid} sample from the distribution $\ubar{P}_n$.
Assumption \ref{aspt:var:est} guarantees that
\begin{align}
\xi^2_{n_k,n}(\bV_{1,n}, \ldots, \bV_{n_k,n})  - \xi^2(\ubar{P}_n) \pto 0.
\end{align}
Lemma 5.3 and Remark A.2 by \citeSM{chung2013} show that
\begin{align*}
\xi^2_{n_1,n}(\bZ_{\pi(1)}, \ldots, \bZ_{\pi(n_1)}) - \xi^2(\ubar{P}_n) & \pto  0,
\\
\xi^2_{n_2,n}(\bZ_{\pi(n_1+1)}, \ldots, \bZ_{\pi(n)}) - \xi^2(\ubar{P}_n) & \pto  0.
\end{align*}
Using Assumption \ref{aspt:asy_linear}-\eqref{eq:aspt:asy_linear_xicont},
\begin{align*}
& \xi^2_{n_1,n}(\bZ_{\pi(1)}, \ldots, \bZ_{\pi(n_1)}) \pto \xi^2(\ubar{P}),
\\
& \xi^2_{n_2,n}(\bZ_{\pi(n_1+1)}, \ldots, \bZ_{\pi(n)})  \pto \xi^2(\ubar{P}),
\\
& \ha\sigma_n^{2 \pi} \doteq   
\frac{n}{n_1}  \xi^2_{n_1,n}(\bZ_{\pi(1)}, \ldots, \bZ_{\pi(n_1)})
+  
\frac{n}{n_2}    \xi^2_{n_2,n}(\bZ_{\pi(n_1+1)}, \ldots, \bZ_{\pi(n)})
\pto \frac{\xi^2(\ubar{P})}{\lambda(1-\lambda)} = \tau^2.
\end{align*}

By Lemma \ref{lemma:cond:clt}, $\ha\sigma_n^{2 \pi} \pto \tau^2$ unconditional on $\lbar{W}_n$.
Section \ref{proof:theo1:uncond} shows that $(T_n^\pi, T_n^{\pi'})\dto (T,T')$, where $(T,T')$ are independent normals with variance  $\tau^2$.
Given these two facts together with Theorem 5.2 by \citeSM{chung2013}, the asymptotic permutation distribution of
$T_n / \tau$ is the same as that of  $S_n = T_n / \ha\sigma_n^{2}$. 
Therefore, 
$\ha R_{S_n}(t) \pto \Phi(t)$.
Uniform consistency follows from the monotonicity and continuity of $\Phi$ (Lemma \ref{lemma:unifcdf}).

\subsection{Proof of Corollary \ref{coro1} - Asymptotic Size and Power}

\begin{proof}

 For $a \in (0,1)$, define $r(a) =  \inf\{t: \Phi (t) \geq a \} = \Phi^{-1}(a)$
and
$\ha r_n( a ) = \inf\{t: \ha{R}_{S_n}(t) \geq a\}$.
Lemma 11.2.1 by \citeSM{lehmann2005}  says that $\ha{R}_{S_n} \pto \Phi$
implies $\ha r_n(a) \pto r(a)$.
Rewrite the test $\phi$ as,

\begin{math}
\phi(\lbar W_n, \lbar \bZ_n) = 
\left\{\begin{array}{l l}
1 & \mbox{if} 
	~S_n  > \ha r_n(1 - \alpha/2 ) 
	~\mbox{or}~
	 S_n  < \ha r_n( \alpha/2 )  ,
\\
a & \mbox{if} 
	~S_n = \ha r_n(1 - \alpha/2 )
	~\mbox{or}~
	S_n  = \ha r_n( \alpha/2 )  ,
\\
0 & \mbox{if} 
	~\ha r_n( \alpha/2 )  < S_n
	< \ha r_n(1 - \alpha/2 ).
\end{array}
\right.
\end{math}

First, assume the null hypothesis $\theta(P_1) - \theta(P_2)$ holds.
For $n \to \infty$, $S_n  \dto S$, where $S$ is a standard normal.
\begin{align*}
& S_n  - \ha r_n( a ) +  r( a ) \dto S,
\\
&\mmp \left[ S_n   < \ha r_n(a) \right]   
\to  \mmp \left[ S < r(a)  \right]  = a,
\\
&\mmp \left[ S_n   > \ha r_n(a) \right]     \to  \mmp \left[ S >  r(a)   \right]  = 1- a,
\\
&\mmp \left[ S_n   = \ha r_n(a) \right]     \to  \mmp \left[ S =  r(a)    \right]  = 0.
\end{align*}

Then, the probability of rejection is 
\begin{align*}
\mme\left[ \phi(\lbar W_n, \lbar \bZ_n)  \right] &
\\
& \hspace{-2cm} = \mmp\left[
	 ~S_n (\lbar W_n, \lbar \bZ_n)  > \ha r_n(1 - \alpha/2 ) 
	 \right]
	 +
	 \mmp\left[
	 S_n (\lbar W_n, \lbar \bZ_n)  < \ha r_n( \alpha/2 )
\right]
+o(1)
\\
&\hspace{-2cm}   \to 1- (1-\alpha/2) +\alpha/2 =\alpha.
\end{align*}

\bigskip

Second, assume that $\theta(P_1) - \theta(P_2) = \eta$ and $\eta>0$ without loss of generality.
For $m$ fixed and $n\to \infty$,
\begin{align*}
& S_n - \sqrt{nh_n}  \eta /\ha{\sigma} + \sqrt{m h_m}  \eta /\ha{\sigma}  - \ha r_n( a ) +  r( a ) \dto S + \sqrt{m h_m}   \eta /{\sigma}.
\end{align*}

For $m$ fixed and $n$ larger than $m$,
\begin{align*}
\mmp \left[ S_n   < \ha r_n(a) \right]   
&\leq   \mmp \left[ S_n   - \sqrt{nh_n} \eta /\ha{\sigma}  + \sqrt{m h_m} \eta /\ha{\sigma} - \ha r_n(a) + r(a) < r(a)   \right] ,
\\
\lim_{n\to\infty }\mmp \left[ S_n   < \ha r_n(a) \right]   & 
\leq  \mmp \left[ S + \sqrt{m h_m}   \eta /{\sigma} < r(a)   \right] 
\\
&= \Phi\left(r(a)- \sqrt{m h_m}   \eta /{\sigma} \right) ,
\\
\mmp \left[ S_n   > \ha r_n(a) \right]   
&\geq   \mmp \left[ S_n   - \sqrt{nh_n} \eta /\ha{\sigma}  + \sqrt{m h_m} \eta /\ha{\sigma} - \ha r_n(a) + r(a) > r(a)   \right] ,
\\
\lim_{n\to\infty } \mmp \left[ S_n   > \ha r_n(a) \right]
&\geq  \mmp \left[ S + \sqrt{m h_m}   \eta /{\sigma} > r(a)   \right] \\
&= 1- \Phi\left(r(a)- \sqrt{m h_m}   \eta /{\sigma} \right). 
\end{align*}

Take limits as $m\to\infty$ from both sides,
\begin{align*}
\lim_{n\to\infty }\mmp \left[ S_n   < \ha r_n(a) \right]   & \leq \lim_{m \to\infty } \Phi\left(r(a)- \sqrt{m h_m}   \eta /{\sigma} \right) =0 ,
\\
\lim_{n\to\infty } \mmp \left[ S_n   > \ha r_n(a) \right]   & \geq  \lim_{m\to\infty } 1- \Phi\left(r(a)- \sqrt{m h_m}   \eta /{\sigma} \right) = 1. 
\end{align*}

Then,  
\begin{align*}
\mme\left[ \phi(\lbar W_n, \lbar \bZ_n)  \right] 
\to 1.
\end{align*}

Moreover, there is no loss in power in using permutation critical values. The asymptotic test rejects when $S_n > r(1-\alpha/2)$ or $S_n < r(\alpha/2)$, where $r(a) = \Phi^{-1}(a)$ is nonrandom. Suppose
\[S_n (\lbar W_n, \lbar \bZ_n) \dto \mathcal L_{\eta}\]
for some $\mathcal L_{\eta}$ under a sequence of alternatives that are contiguous to some distribution satisfying the null hypothesis
and
$\theta(P_{1n})-\theta(P_{2n})=\eta/\sqrt{n h_n}$.
Then the power of the test against local alternatives would tend to $1-\mathcal L_{\eta}(\Phi^{-1}(1-\alpha/2)) + \mathcal L_{\eta}(\Phi^{-1}(\alpha/2))$. 
For the permutation test, we have that $\hat r_{n}$ obtained from the permutation distribution satisfies $\hat r_n(a) \pto \Phi^{-1}(a)$ under the null hypothesis. 
The same results follows under the sequence of contiguous alternatives, thus implying that the permutation test has the same limiting local power as the asymptotic test which uses nonrandom critical values. 

\end{proof}

\subsection{Proof of Corollary \ref{coro2} - Confidence Set }
\label{proof:cset}

\begin{proof}

Fix $n$ and $\bQ_n$ arbitrary.
Pick any pair $(P_1,P_2)$.
Call $\delta = \Psi( P_1, P_2 )$.
By assumption, $\psi_\delta P_1  = P_2$.
Lemma \ref{lemma:exact} says that   
$\mme[\phi_{\delta}(\lbar{W}_n,\lbar{\bZ}_n)] = \mme[\phi_{\delta_0}(\lbar{W}_n,\ti{\lbar{\bZ}}_n)] = \alpha$.
Therefore,
\begin{align*}
\mmp\left[\Psi( P_1,  P_2 ) \in C_n (\lbar{W}_n,\lbar{\bZ}_n)\right] 
&= 
\mmp\left[U > \phi_{\delta }(\lbar{W}_n,\lbar{\bZ}_n) \right]
\\
&= \mme\left[ \mmp\left( U > \phi_{\delta }(\lbar{W}_n,\lbar{\bZ}_n)  |  \lbar{W}_n,\lbar{\bZ}_n \right) \right]
\\
& = 1- \mme\left[ \phi_{\delta }(\lbar{W}_n,\lbar{\bZ}_n)  \right]
=1- \alpha.
\end{align*}

Now, suppose $\psi_\delta P_1  \neq P_2$ and Assumptions \ref{aspt:asy_linear}--\ref{aspt:var:est} hold. 
Corollary \ref{coro1} says that \\
$\mme\left[ \phi_{\delta }(\lbar{W}_n,\lbar{\bZ}_n)  \right] \to \alpha$.
Take the limit as $n \to \infty$ on both sides of the equality above.
It follows that the asymptotic coverage of $C_n (\lbar{W}_n,\lbar{\bZ}_n)$ is $1-\alpha$.

\end{proof}

\section{Example \ref{example:param} (Parametric Model) - Simulation}\label{sec:param}

\indent

To illustrate the conclusions of Theorems \ref{theo1} and \ref{theo2}, we consider the following simulation design:
for $k=1,2$, $X_k \sim U[0,1], \varepsilon_k \sim N(0, v_k)$, where $X_k$ is independent of $\varepsilon_k$, and $Y_k=\theta(P_k) + (X_k-0.5) + \varepsilon_k$; the sample sizes are $n_1=20$ and $n_2 = 980$, and the variances are $v_1 = 5$ and $v_2 = 1$. 
We simulate 10,000 samples under the null hypothesis $H_0: \theta(P_1) = \theta(P_2) = 0$ and use 1,000 permutations for each sample. 
The left-hand side of Figure \ref{fig:parametric} plots the permutation distribution and the sampling distribution when the test statistic is not studentized. As predicted by Theorem \ref{theo1}, the variance of the permutation distribution is different from that of the sampling distribution because both $n_1 \neq n_2$ and $v_1 \neq v_2$. In our simulation, critical values obtained from the permutation distribution are smaller than critical values from the sampling distribution, leading to over-rejection. 
In contrast, the permutation distribution based on the studentized statistic is approximately equal to the sampling distribution as depicted on the right-hand side of Figure \ref{fig:parametric}. The permutation test now has a correct size (Theorem \ref{theo2}). 

\begin{figure}[H]
\caption{Permutation Distribution vs. Sampling Distribution}
\label{fig:parametric}
\begin{center}
\includegraphics[width=6.8in]{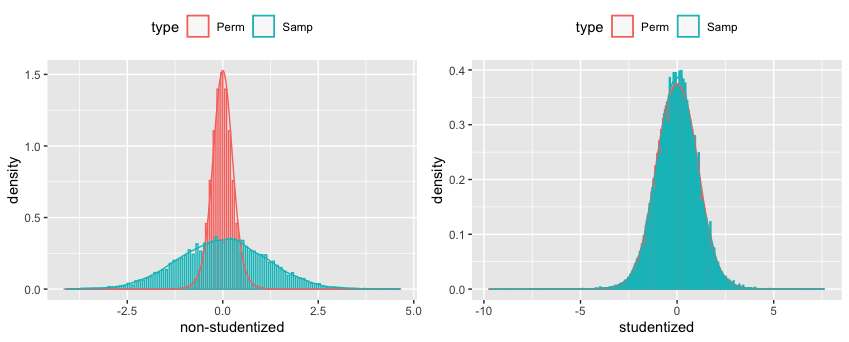}
\end{center}
\end{figure}

\section{Proof of Applications}
\label{sec:app:b}

\indent 

This appendix gathers the proofs of Propositions \ref{prop:control_mean} -- \ref{prop:bunching}. 
Key steps in these proofs consist of demonstrating uniform convergence over $\m{P}$.
By uniformly over  $\m{P}$ we mean over $P$ distribution of $\bV$ and $P$ argument of functions, e.g., $f_R(x;P)$ or $m_{S|R}(x;P)$.
For $A_n(P)$ random function of $P$, with distribution depending on $P$, we use $A_n(P) = o_{\m{P}}(1)$ to denote $\sup_{P\in \m{P}} \mmp_{P}[ |A_n(P)| > \eps ] \to 0 $;
we also use $A_n(P) = O_{\m{P}}(1)$ to denote that, for every $\delta>0$, there exists $M_{\delta}<\infty$ such that 
$\sup_{P\in \m{P}} \mmp_{P}[ |A_n(P)| > M_{\delta} ] <\delta $.
The same notation is applied when $A_n(P)$ is a deterministic function of $P$.
See Definition \ref{def:Oop} and Lemma \ref{lemma:Oop} in  Appendix \ref{sec:app:c}.

\subsection{Proof of Proposition \ref{prop:control_mean} -  Controlled Means}
\label{proof:control_mean}

\indent 

The goal of this proof is to use the assumptions listed in Proposition \ref{prop:control_mean}  to verify Assumptions \ref{aspt:asy_linear} and \ref{aspt:sampling}.
It builds on standard arguments from the literature on nonparametrics.
See, for example, Theorem 2.2 by \citeSM{liracine2007}.

Consider an \textit{iid} sample from $P\in\m{P}$ with $m$ observations, $\bV_1=(R_1,S_1), \ldots, \bV_m=(R_m,S_m)$.
The number $m$ grows with $n$ such that $m_n/n \to \gamma$, for some $\gamma \in (0,1)$.
The parameter of interest is $\theta(P) = \mme[S|R=x]$, and 
the NW estimator is 
\[
\ha{\theta}^b 
= \theta_{m,n}^b(\bV_1 , \ldots, \bV_m) 
= \frac{\sum\limits_{i=1}^{m} K \left(  \frac{ R_{i}-x }{ h }  \right)  S_{i} }
{\sum\limits_{i=1}^{m} K \left(  \frac{ R_{i}-x }{ h }  \right) }.
\]
In this section we study the asymptotic representation of
the bias-corrected estimator:
$\ha{\theta} = \ha{\theta}^b - \theta(P) - h^2 \ha{B}   $.
$\ha{B}$ is a consistent estimator for the bias term $B(P)$ (Equation \ref{eq:control_quan:BP} below).

The assumptions in Proposition \ref{prop:control_mean}  imply the following facts:
\begin{enumerate}
\item \textit{As $m\to\infty$, $h \to 0$,  $m h \to \infty$, 
$\sqrt{mh}h^2 = O(1)$, and $\sqrt{mh}h^3 = o(1)$};

\item \label{fact:control_mean:prop_kernel} 
\textit{ $\int K(u) u ~ du =0$, $\int K^r(u) u^s g(u) ~ du < \infty$  for $1 \leq r <\infty$,  $0 \leq s \leq 3$, and bounded function $g(u)$};

\item \textit{The distribution of $R$ has PDF $f_{R}(r;P)$
that is three times differentiable with respect to (henceforth wrt) $r$: 
$\nabla_r f_{R}(r;P)$, $\nabla_{r^2} f_{R}(r;P)$, and
$\nabla_{r^3} f_{R}(r;P)$;
these derivatives are bounded as functions of $(r,P)$;
$f_{R}(r;P)$ is bounded away from zero as a function of $(r,P)$;}

To see this, note that $f_{R}(r;P)$ is a convex combination of $f_{X_1}(r)$ and $f_{X_2}(r)$,
each bounded, with bounded derivatives, and bounded away from zero.

\item \label{fact:control_mean:prop_m} \textit{$m_{S|R}(r;P) = \Exp_P[S|R=r]$ has first, second, and third derivatives wrt $r$ denoted 
$\nabla_r m_{S|R}(r;P)$,
$\nabla_{r^2} m_{S|R}(r;P)$,
and $\nabla_{r^3} m_{S|R}(r;P)$, respectively;
$m_{S|R}$, $\nabla_r m_{S|R}$, $\nabla_{r^2} m_{S|R}$, and
$\nabla_{r^3} m_{S|R}$ are all  bounded as functions of $(r,P)$;}

To see this, take  $P = \alpha P_1 + (1-\alpha) P_2$ and note that, 
\[
m_{S|R}(r;P)  = \frac{\alpha f_{X_1}(r) }{ f_{R}(r;P) } m_{Y_1|X_1}(r)   + \frac{(1-\alpha) f_{X_2}(r) }{ f_{R}(r;P) } m_{Y_2|X_2}(r).
\]
The expectations $m_{Y_k|X_k}(r)$ are bounded functions of $r$.
The weights $\omega_1(r;P) \doteq \alpha \frac{f_{X_1}(r) }{ f_{R}(r;P) }$ and
$\omega_2(r;P) \doteq  (1-\alpha) \frac{f_{X_2}(r) }{ f_{R}(r;P) }$
 are bounded functions of $(r,P)$ because they are positive and sum to $1$.
The expectations $m_{Y_k|X_k}(r)$ and the weights $\omega_k(r;P)$ are three  times differentiable wrt $r$.
The derivatives of $m_{Y_k|X_k}(r)$ are bounded wrt $r$. 
The  derivatives of the weights are bounded because the derivatives of the PDFs $f_{X_k}(r)$  are bounded plus the fact that $f_{R}(r;P)$ is bounded away from zero over $(r,P)$.

\item \textit{$v_{S|R}(r;P) = \Var_P[S|R=r]$ has first derivative wrt $r$ denoted $\nabla_r v_{S|R}(r;P)$;
$v_{S|R}$, $\nabla_r v_{S|R}$ are both  bounded as functions of $(r,P)$;
$v_{S|R}(x;P)$ is bounded away from zero as a function of  $P$;}

Again, $v_{S|R}(r;P) = m_{S^2|R}(r;P) -  m_{S|R}^2(r;P)$, where
\begin{align*}
&m_{S^2|R}(r;P)  = \omega_1(r;P) m_{Y_1^2|X_1}(r)   + \omega_2(r;P) m_{Y_2^2|X_2}(r) 
\\
& \hspace{2cm} =\omega_1(r;P) \left[ v_{Y_1|X_1}(r) + m_{Y_1|X_1}^2(r) \right]  + \omega_2(r;P) \left[ v_{Y_2|X_2}(r) +  m_{Y_2|X_2}^2(r) \right],
\\
& m_{S|R}^2(r;P) = \left[ \omega_1(r;P) m_{Y_1|X_1}(r)   + \omega_2(r;P) m_{Y_2|X_2}(r)  \right]^2.
\end{align*}
A similar argument to Fact \ref{fact:control_mean:prop_m} shows that $v_{S|R}$ and $\nabla_r v_{S|R}$ are bounded functions of $(r,P)$.
Next, $v_{Y_k|X_k}(x)= m_{Y_k^2 |X_k}(x) - m_{Y_k|X_k}^2(x)$ is bounded away from zero,
so that
$m_{Y_k^2 |X_k}(x)$ is bounded away from $ m_{Y_k|X_k}^2(x)$.
It follows that $m_{S^2|R}(x;P) = \omega_1(x;P) m_{Y_1^2|X_1}(x)   + \omega_2(x;P) m_{Y_2^2|X_2}(x)$ is bounded away
from 
$\omega_1(x;P) m_{Y_1|X_1}^2(x)   + \omega_2(x;P) m_{Y_2|X_2}^2(x)$,
which is greater than or equal to
$m_{S|R}^2(x;P)$.
Thus,  $v_{S|R}(x;P)$ is bounded away from zero as a function of $P$.

\item \textit{Define $\eta(r;P) = \mme_P[|S - m_{S | R }(r;P) |^{2+\zeta} | R=r]$.
$\eta(r;P)$ is a bounded function of $(r,P)$.}
By the $c_r$-inequality,
\begin{align*}
\eta(r;P) \leq & 2^{1+\zeta} \mme_P[|S| ^{2+\zeta} | R=r]  + 2^{1+\zeta} | m_{S | R }(r;P) |^{2+\zeta} 
\\
= & \omega_1(r;P) \mme[|Y_1  |^{2+\zeta} | X_1=r] +  \omega_2(r;P) \mme[|Y_2  |^{2+\zeta} | X_2=r] 
\\
+ &  2^{1+\zeta} | m_{S | R }(r;P) |^{2+\zeta} ,
\end{align*}
which is a bounded function of $(r,P)$ because 
the weights $\omega_k(r;P)$  are bounded, 
$\mme[|Y_k  |^{2+\zeta} | X_k=r]$
are bounded,
and $m_{S | R }(r;P)$ is bounded.

\end{enumerate}

We re-write 
$\sqrt{m h}  \left( \ha\theta -\theta(P) \right)$
$= \sqrt{m h}  \left( \ha\theta^b -h^2 \ha{B} -\theta(P) \right)$
to find the asymptotic linear representation.
\begin{align}
\sqrt{m h}  \left( \ha\theta -\theta(P) \right)   
&= 
\left( 
	\frac{1}{\sqrt{m h} } \sum_{i=1}^{m} K\left( \frac{ R_{i} - x }{h} \right) 
		\left( S_{i} - m_{S|R}( R_i;P) \right) 
\right)
f_{R}(x;P)^{-1}
\label{eq:control_mean:thetahat:clt}
\\
&+ 
\left( 
	\frac{1}{\sqrt{m h} } \sum_{i=1}^{m} K\left( \frac{ R_{i} - x }{h} \right) 
		\left( S_{i} - m_{S|R}( R_i;P) \right) 
\right)
\notag
\\
&\hspace{2cm}
\left[
	\left( \frac{1}{ m h } \sum_{i=1}^{m} K\left( \frac{ R_{i} - x }{h} \right) \right)^{-1} 
	-
	f_{R}(x;P)^{-1}  
\right]
\label{eq:control_mean:thetahat:deno}
\\
&+ 
\left( 
	\frac{1}{\sqrt{m h} } \sum_{i=1}^{m} K\left( \frac{ R_{i} - x }{h} \right) 
		\left( m_{S|R}( R_i;P)  - m_{S|R}( x;P) \right) 
\right)
\notag
\\*
&\hspace{2cm}
\left( 
	\frac{1}{ m h } \sum_{i=1}^{m} K\left( \frac{ R_{i} - x }{h} \right) 
\right)^{-1} 
- \sqrt{mh} h^2 \ha{B}.
\label{eq:control_mean:thetahat:bias}
\end{align}

\begin{enumerate}

\item \underline{Assumption \ref{aspt:asy_linear} - \eqref{eq:aspt:asy_linear}:} asymptotic expansion.

 Equation \ref{eq:control_mean:thetahat:clt} above gives  the influence function $\psi_n$.

\begin{align*}
\frac{1}{\sqrt{m} } \sum_{i=1}^{m} \underbrace{ 
		K\left( \frac{ R_{i} - x }{h} \right) 
		\left( S_{i} - m_{S|R}( R_i;P) \right) h^{-1/2} f_{R}^{-1}(x;P)
}_{\doteq \psi_{n}(\bV_{i}, P)}
=
\frac{1}{\sqrt{m} } \sum_{i=1}^{m} \psi_{n}(\bV_{i}, P).
\end{align*}

We need to show that Equations \ref{eq:control_mean:thetahat:deno} and \ref{eq:control_mean:thetahat:bias} converge in probability to zero uniformly over $\m{P}$.

\noindent
\textit{Equation \ref{eq:control_mean:thetahat:deno}:} is $o_{\m{P}}(1)$. 
We show this in 3 steps.

First,
\begin{align*}
\mmv_P\left( 
	\frac{1}{\sqrt{m h} } \sum_{i=1}^{m} K\left( \frac{ R_{i} - x }{h} \right) 
		\left( S_{i} - m_{S|R}(R_i;P) \right) 
\right)
\\
 = \int K^2(u) v_{S|R}(x+uh ; P)  f_{R}(x+uh;P) ~ du,
\end{align*}
which is bounded over $\m{P}$ because of the kernel properties (Fact \ref{fact:control_mean:prop_kernel} above),  and $v_{S|R}(r;P)$ and $f_{R}(r;P)$ are bounded functions of $(r, P)$. 
Next,
\begin{align*}
\mme_P\left( 
	\frac{1}{\sqrt{m h} } \sum_{i=1}^{m} K\left( \frac{ R_{i} - x }{h} \right) 
		\left( S_{i} - m_{S|R}(R_i;P) \right) 
\right)
= 0.
\end{align*}
Use Lemma \ref{lemma:Oop}, part 2, to conclude that 
\[
	\frac{1}{\sqrt{m h} } \sum_{i=1}^{m} K\left( \frac{ R_{i} - x }{h} \right) 
		\left( S_{i} - m_{S | R}(R_i;P) \right) 
=O_{\m{P}}(1).
\]

Second, 
\begin{align*}
\mme_P\left[ \frac{1}{ m h } \sum_{i=1}^{m} K\left( \frac{ R_{i} - x }{h} \right) \right] - f_{R}(x;P) 
= \int K(u) \nabla_r f_{R}(x^*_{uh};P) uh~ du,
\end{align*}
where $x^*_{uh}$ is a point between $x$ and $x+uh$.
The expression above converges to zero uniformly over $\m{P}$ because of kernel properties (Fact \ref{fact:control_mean:prop_kernel}),  the derivative $\nabla_r f_{R}(r;P)$ is a bounded function of $(r,P)$,
and $h \to 0 $ as $m\to\infty$.
Next, the variance of the same term.
\begin{align*}
\mmv_P\left[ \frac{1}{ m h } \sum_{i=1}^{m} K\left( \frac{ R_{i} - x }{h} \right) \right]  
&=
\frac{1}{ m h^2 } \mmv_P\left[  K\left( \frac{ R_{i} - x }{h} \right) \right]  
\\
&\leq
\frac{1}{ m h^2 } \mme_P\left[ K^2\left( \frac{ R_{i} - x }{h} \right) \right]  
\\
& = \frac{1}{ m h } \int K^2(u) f_{R}(x +uh;P ) ~ du,
\end{align*}
which converges to zero uniformly over $\m{P}$ because $mh\to\infty$,  kernel properties (Fact \ref{fact:control_mean:prop_kernel}), and $f_{R}(r;P)$ is a bounded function of $(r,P)$.
Use Lemma \ref{lemma:Oop}, parts 2 and 3 to arrive at:
\begin{align}
&\frac{1}{ m h } \sum_{i=1}^{m} K\left( \frac{ R_{i} - x }{h} \right)  - f_{R}(x;P)  = o_{\m{P}}(1).
\label{eq:control_mean:fhat1}
\\
&\left( \frac{1}{ m h } \sum_{i=1}^{m} K\left( \frac{ R_{i} - x }{h} \right) \right)^{-1} 
-
f_{R}^{-1} (x;P) 
= o_{\m{P}}(1).
\label{eq:control_mean:fhat2}
\end{align}

Third, combine steps 1 and 2 and use Lemma \ref{def:Oop} - part 1:
\begin{align*}
&\left( 
	\frac{1}{\sqrt{m h} } \sum_{i=1}^{m} K\left( \frac{ R_{i} - x }{h} \right) 
		\left( S_{i} - m_{S  | R }(R_i;P) \right) 
\right)
\\
&\hspace{.5cm}\left[
	\left( \frac{1}{ m h } \sum_{i=1}^{m} K\left( \frac{ R_{i} - x }{h} \right) \right)^{-1} 
	-
	\left( f_{R}(x;P) \right)^{-1}
\right]
\\
& = O_{\m{P}}(1) o_{\m{P}}(1) = o_{\m{P}}(1).
\end{align*}

\bigskip

\noindent
\textit{Equation \ref{eq:control_mean:thetahat:bias}:} is $o_{\m{P}}(1)$.
We derive the probability limit of 
$\eqref{eq:control_mean:thetahat:bias} + \sqrt{m h} h^2 \ha{B}$
in 3 steps.

First, 
$\left( \frac{1}{ m h } \sum_{i=1}^{m} K\left( \frac{ R_{i} - x }{h} \right) \right)^{-1}
-
f_R^{-1}(x;P)= o_{\m{P}}(1) $ 
by what was shown above (Equations \ref{eq:control_mean:fhat1} and \ref{eq:control_mean:fhat2}).

Second, 
\begin{align*}
& 
\mme_P\left[ 
	\frac{1}{\sqrt{m h} } \sum_{i=1}^{m} K\left( \frac{ R_{i} - x }{h} \right) 
		\left( m_{S  | R }(R_i;P) - m_{S  | R }(x;P)  \right) 
\right]
\\
=& 
\sqrt{mh}  \int K\left( u \right) [m_{S  | R }(x+uh;P) - m_{S  | R }(x;P)] f_{R}(x+uh;P) ~ du
\\
=& 
\sqrt{mh} \int K\left( u \right) 
\left[ 
    \nabla_r m_{S  | R }(x;P) uh 
    + \nabla_{r^2} m_{S  | R }(x;P) u^2 h^2/2 
    + \nabla_{r^3} m_{S  | R }(x^{*}_{uh};P) u^3 h^3/6 
\right]   
\\
& \hspace{2.7cm} 
    f_{R}(x + uh;P) ~ du
\\
=& 
\sqrt{mh} \int K\left( u \right) 
    \nabla_r m_{S  | R }(x;P)  u h
    [f_{R}(x;P) + \nabla_r f_{R}(x;P) uh   
    + \nabla_{r^2} f_{R}(x^{**}_{uh};P)u^2 h^2/2    ] ~ du
\\
&+ 
\sqrt{mh} \int K\left( u \right) 
    \nabla_{r^2} m_{S  | R }(x;P) u^2 (h^2/2) 
       [f_{R}(x;P) + \nabla_r f_{R}(x^{***}_{uh} ; P) uh] 
       ~ du
\\
&+ 
\sqrt{mh}  \int K\left( u \right) 
    \nabla_{r^3} m_{S  | R }(x^{*}_{uh};P) u^3  (h^3/6)  
    f_{R}(x + uh;P) ~ du,
\end{align*}
where we use the existence of derivatives for the Taylor expansions and that $x^*_{uh},x^{**}_{uh},$ and  $x^{***}_{uh}$ are points between $x+uh$ and $x$.
Define $\kappa_{s,t} = \int u^s K^t(u) ~du$.
The last equation above equals to
\begin{align*}
=& 
\sqrt{mh} h \nabla_r m_{S  | R }(x;P)   
    f_{R}(x;P)
    \int u K\left( u \right) 
    ~ du
\\
&+ 
\sqrt{mh} h^2 \nabla_r m_{S  | R }(x;P) 
\nabla_r f_{R}(x;P)   
\int u^2 K\left( u \right) 
     ~ du
\\
& +
\sqrt{mh} (h^3/2) 
\nabla_r m_{S  | R }(x;P)
\int u^3 K\left( u \right) 
\nabla_{r^2} f_{R}(x^{**}_{uh};P)      ~ du
\\
&+ 
\sqrt{mh} (h^2/2) 
\nabla_{r^2} m_{S  | R }(x;P)  
f_{R}(x;P)  
\int u^2 K\left( u \right) 
       ~ du
\\
&+ 
\sqrt{mh} (h^3/2) 
 \nabla_{r^2} m_{S  | R }(x;P)   
 \int u^3 K\left( u \right) 
       \nabla_r f_{R}(x^{***}_{uh} ; P)  
       ~ du
\\
&+ 
\sqrt{mh} (h^3/6) \int u^3 K\left( u \right) 
    \nabla_{r^3} m_{S  | R }(x^{*}_{uh};P)   
    f_{R}(x + uh;P) ~ du,
\\
=& 0
\\
&+ 
\sqrt{mh} h^2 \nabla_r m_{S  | R }(x;P) 
\nabla_r f_{R}(x;P)   
\kappa_{2,1}
\\
& +
O_{\m{P}}\left( \sqrt{mh} h^3\right)
\\
&+ 
\sqrt{mh} h^2
\nabla_{r^2} m_{S  | R }(x;P)  
f_{R}(x;P)  
\frac{\kappa_{2,1}}{2}
\\
&+ 
O_{\m{P}}\left( \sqrt{mh} h^3\right)
\\
&+ 
O_{\m{P}}\left( \sqrt{mh} h^3\right)
\\
=&
\sqrt{mh} h^2 \underbrace{
\kappa_{2,1} 
\left[ 
    \nabla_r m_{S  | R }(x;P) 
\nabla_r f_{R}(x;P)
+
\nabla_{r^2} m_{S  | R }(x;P)  
f_{R}(x;P)  /2
\right]}_{\doteq B_0(P) }
+ o_{\m{P}}\left( 1\right)
\\
=& \sqrt{mh} h^2 B_0(P) + o_{\m{P}}\left( 1\right),
\end{align*}
where we use the following: (i) $\int K(u) u ~ du =0$ and other  kernel properties (Fact \ref{fact:control_mean:prop_kernel}); 
(ii) $\nabla_{r^3}  m_{S|R} (r;P)$,
$f_{R}(r;P)$,  $\nabla_r f_{R}(r;P)$, and
$\nabla_{r^2} f_{R}(r;P)$ are bounded functions of $(r,P)$;
(iii) $\sqrt{mh} h^2 =O(1)$ and $\sqrt{mh} h^3 =o(1)$.

Next, the variance. 
\begin{align*}
& 
\mmv_P\left[ 
	\frac{1}{\sqrt{m h} } \sum_{i=1}^{m} K\left( \frac{ R_{i} - x }{h} \right) 
		\left( m_{S  | R }(R_i;P)  -  m_{S  | R }(x;P)   \right) 
\right]
\\
=&
\frac{1}{ h } \mmv_P\left[ 
	 K\left( \frac{ R_{i} - x }{h} \right) 
		\left( m_{S  | R }(R_i;P)  -  m_{S  | R }(x;P)   \right) 
\right]
\\
\leq & 
\frac{1}{ h } \mme_P \left[  
		K^2\left( \frac{ R_{i} - x }{h} \right) 
		\left( m_{S  | R }(R_i;P)  -  m_{S  | R }(x;P)  \right)^2 
\right]
\\
=& 
h^2 \int K^2\left( u \right) [\nabla_r m_{S  | R }(x^*_{uh};P)]^2 u^2  f_{R}(x+uh;P) ~ du
\\
=& O_{\m{P}} \left(  h^2  \right) = o_{\m{P}} \left( 1  \right),
\end{align*}
where we use that 
(i) $\nabla_r  m_{S|R} (r;P)$ and  $f_{R}(r;P)$ are bounded functions of $(r,P)$;
(ii) $ h^2 \to 0$;
and
(iii) kernel properties (Fact \ref{fact:control_mean:prop_kernel}).

Apply  Lemma \ref{lemma:Oop}- part 2 to get
\[
	\frac{1}{\sqrt{m h} } \sum_{i=1}^{m} K\left( \frac{ R_{i} - x }{h} \right) 
		\left( m_{ S  | R }(R_i;P)  - m_{ S  | R }(x;P) \right) 
= 
\sqrt{m h}h^2 B_0(P) + o_{\m{P}}(1).
\]

Third, combine the first and second steps and apply Lemma \ref{lemma:Oop}- part 1 to arrive at
\begin{align}
& 
\left( 
	\frac{1}{\sqrt{m h} } \sum_{i=1}^{m} K\left( \frac{ R_{i} - x }{h} \right) 
		\left( m_{ S  | R }(R_i;P) - m_{ S  | R }(x;P) \right) 
\right)
\left( 
	\frac{1}{ m h } \sum_{i=1}^{m} K\left( \frac{ R_{i} - x }{h} \right) 
\right)^{-1}
\notag
\\
& =
 \sqrt{mh}h^2 
 \underbrace{
 \kappa_{2,1} 
\left[ 
    \frac{\nabla_r m_{S  | R }(x;P) 
\nabla_r f_{R}(x;P)}{f_{R}(x;P)}
+
\frac{\nabla_{r^2} m_{S  | R }(x;P)}{2}
\right]}_{\doteq B(P)}
 +o_{\m{P}}(1)
\label{eq:control_mean:BP} 
 \\
& = \sqrt{mh}h^2 B(P) + o_{\m{P}}(1),
\notag
\end{align}
which gives the bias term $B(P)$.
A consistent estimator $\ha{B}$ is,
\[
\ha{B} = B_{m,n}(\bV_1 , \ldots, \bV_m) \doteq  
\kappa_{2,1} \left[ 
        \frac{\ha{\nabla_r m_{S|R}}(x;P) \ha{\nabla_r f_{R}}(x;P) }
        { \ha{f_{R}}(x;P) }  +
        \frac{ \ha{\nabla_{r^2} m_{S|R}}(x;P) }{2}
    \right],
\]
that is, by replacing 
$f_{R}(x;P)$,
$\nabla_r f_{R}(x;P)$,
$\nabla_r m_{S|R}(x;P)$,
and
$\nabla_{r^2} m_{S|R}(x;P)$
in $B(P)$ by consistent nonparametric estimators readily available in the literature.
The tuning parameters for these additional estimators and corresponding moment conditions may be set such that the bias-correction condition (Assumption \ref{aspt:prop:bias}) 
is met.

Finally, \ref{eq:control_mean:thetahat:bias} equals to,
\begin{align*}
& 
\left( 
	\frac{1}{\sqrt{m h} } \sum_{i=1}^{m} K\left( \frac{ R_{i} - x }{h} \right) 
		\left( m_{ S  | R }(R_i;P) - m_{ S  | R }(x;P) \right) 
\right)
\left( 
	\frac{1}{ m h } \sum_{i=1}^{m} K\left( \frac{ R_{i} - x }{h} \right) 
\right)^{-1}
\\
&- \sqrt{mh}h^2 \ha{B} 
\\
& = \sqrt{mh}h^2 \left( B(P) - \ha{B} \right) + o_{\m{P}}(1)
= o_{\m{P}}(1),
\end{align*}
where we use $\sqrt{mh}h^2 = O(1)$ and 
Assumption \ref{aspt:prop:bias}
that says $ B(P) - \ha{B}  = o_{\m{P}}(1)$.

\item \underline{Assumption \ref{aspt:asy_linear} - \eqref{eq:aspt:asy_linear_zeromean}:} zero mean of influence function. 

 $\mme_P[ \psi_{n}(\bV_{i}, P) ]=0 ~~\forall P$ by construction.

\item \underline{Assumption \ref{aspt:asy_linear} - \eqref{eq:aspt:asy_linear_var}:} variance of influence function.

 Define $\xi^2(P) = \kappa_{0,2} v_{ S  | R} (x;P)  /f_{R}(x;P)$, where 
$\kappa_{0,2} = \int_{-\infty}^{\infty}K^2(u) ~ du$.
\begin{align*}
&\mmv_P\left( 
	\frac{1}{f_{R}(x;P) \sqrt{h} }  K\left( \frac{ R_{i} - x }{h} \right) 
		\left( S_{i} - m_{ S  | R} (R_i;P)  \right) 
\right) - \xi^2(P) 
\\
 =&\frac{1}{f_{R}^2(x;P) }  \int K^2(u)  \left\{ 
 	 \underbrace{v_{ S  | R }(x+uh;P)  f_{R}(x+uh;P)}_{\doteq g(uh;P)} - 
 	\underbrace{ v_{ S | R}(x;P)  f_{R}(x;P) 
 	 }_{\doteq g(x;P)} \right\}
 ~ du
 \\
=&\frac{h}{f_{R}^2(x;P) }  \int K^2(u) \nabla_r g(x^*_{uh};P ) u  ~ du = o_{\m{P}}(1).
\end{align*}
where $\nabla_r g(r;P )$ denotes the derivative of $g(r;P)$ wrt $r$.
The expression above converges to zero uniformly over $\m{P}$
because
$h\to0$,
Fact \ref{fact:control_mean:prop_kernel} on the kernel,
the derivative $\nabla_r g(r;P )$ is a bounded function of $(r,P)$,
and 
$f_{R}(x;P)$ is bounded away from zero as a function of $P$.
Therefore,
\[
\sup\limits_{P \in \m{P} } \left| \mmv_P[  \psi_n(\bV_{i}, {P} ) ] - \xi^2(P) \right| \to 0.
\]

\item \underline{Assumption \ref{aspt:asy_linear} - \eqref{eq:aspt:asy_linear_mom}:} $ \sup\limits_{P \in \m{P} } \mme [\psi_n^2(\bZ_{k,i},P)]  < \infty $ for $k=1,2$.

We have,
\begin{align*}
& \psi_n(\bZ_k,P) = K\left( \frac{ X_{k} - x }{h} \right) 
		\left( Y_{k} - m_{S|R}(X_k;P) \right) h^{-1/2} f_{R}^{-1}(x;P).
\\
& \psi_n^2(\bZ_k,P) = \frac{K^2\left( \frac{ X_{k} - x }{h} \right)}{ h f_{R}^{2}(x;P)} 
		\left[ \left(Y_{k} -  m_{Y_k|X_k}(X_k) \right) + \left(m_{Y_k|X_k}(X_k)  - m_{S|R}(X_k;P) \right)  \right]^2 
\\
 & = \frac{K^2\left( \frac{ X_{k} - x }{h} \right)}{ h f_{R}^{2}(x;P) } 
		\left[ \left( Y_{k} -  m_{Y_k|X_k}(X_k) \right)^2 + \left(m_{Y_k|X_k}(X_k)  - m_{S|R}(X_k;P)  \right)^2 \right.
\\
& \hspace{2.7cm} \left. + 2 \left(Y_{k} -  m_{Y_k|X_k}(X_k) \right) \left(m_{Y_k|X_k}(X_k)  - m_{S|R}(X_k;P) \right)   \right].
\end{align*}
\begin{align*}
&\mme\left[ \psi_n^2(\bZ_k,P) | X_k \right] 
= 
 \frac{K^2\left( \frac{ X_{k} - x }{h} \right)}{ h f_{R}^{2}(x;P) } 
		\left[ v_{Y_k|X_k}(X_k) + \left(m_{Y_k|X_k}(X_k)  - m_{S|R}(X_k;P) \right)^2 \right].
\\
&\mme\left[ \psi_n^2(\bZ_k,P) \right] 
=
\frac{1}{f_{R}^{2}(x;P)}
\int
	K^2(u)
	\left[
		v_{Y_k|X_k}(x+uh) 
	\right.
\\
& \hspace{2cm}
	\left.
		+ \left(m_{Y_k|X_k}(x+uh)  - m_{S|R}(x+uh;P) \right)^2
	\right]
	f_{X_k}(x+uh)
~du
\\
& = O_{\m{P}}(1),
\end{align*}
because 
$f_{R}(x;P)$ is bounded away from zero as a function of $P$, 
the conditional moment functions and $f_{X_k}$ inside the integral are bounded,
and
Fact \ref{fact:control_mean:prop_kernel} on the kernel.

\item \underline{Assumption \ref{aspt:asy_linear} - \eqref{eq:aspt:asy_linear_lind}:} $(2+\zeta)$-th moment condition. 

We verify it in two steps.

First,
\begin{align*}
\mmv_P \left(  \psi_n(\bV_i,P) \right) 
= &\mmv_P\left( 
	\frac{1}{f_{R}(x;P) \sqrt{h} }  K\left( \frac{ R_{i} - x }{h} \right) 
		\left( S_{i} - m_{S  | R }(R_i;P) \right) 
\right) 
\\
 =&\frac{1}{f_{R}^2(x;P) }  \int K^2(u)  \left\{ 
 		 v_{S  | R}(x+uh;P) f_{R}(x+uh;P)  
 	\right\} 
 ~ du 
\end{align*}
is bounded away from zero uniformly over $\m{P}$ and $n$ because
(i) $f_{R}(x;P)$ is bounded as a function of $P$;
and
(ii)  $v_{S  | R}(r;P)$ and $f_{R}(r;P)$  are continuous functions of $r$ and  bounded away from zero at $x=r$ and over $P$.

Second, for $\zeta$ of the moment condition in Proposition \ref{prop:control_mean}, 
call $\eta(r;P) = \mme_P[|S_i - m_{S | R }(R_i;P) |^{2+\zeta} | R_i=r]$.
\begin{align*}
n^{-\zeta/2} \mme_P \left|  \psi_n(\bV_i,P) \right| ^{2+\zeta} & 
\\
&\hspace{-2cm} =\frac{(m/n)^{\zeta/2}}{ f_{R}^{2+\zeta}(x;P) (mh)^{\zeta/2}} \int |K(u)|^{2 +\zeta} \eta(x+uh;P) f_{R}(x+uh;P) ~ du 
\\
&\hspace{-2cm} = o_{\m{P}}(1),
\end{align*}
because 
(i) $mh \to \infty$, $m/n=O(1)$;
(ii)  $f_{R}(x;P) = O_{\m{P}}(1) $ and  $f_{R}^{-1}(x;P) = O_{\m{P}}(1) $;
(iii) $\eta(x;P) = O_{\m{P}}(1)$;
and
(iv) Fact \ref{fact:control_mean:prop_kernel} on the kernel.
Combining steps 1 and 2,
\begin{align*}
n^{-\zeta/2} \sup\limits_{P \in \m{P} }  \mme_P\left| \frac{ \psi_n(\bV_i,P) }{ \sqrt{\mmv_P\left(  \psi_n(\bV_i,P) \right)} } \right|^{2 + \zeta} = o(1).
\end{align*}

\item \underline{Assumption \ref{aspt:asy_linear} - \eqref{eq:aspt:asy_linear_xicont}:} $\xi^2\left( \frac{m}{n} P_1 + \frac{n-m}{n} P_2 \right) \to \xi^2 \left( \gamma P_1 + (1-\gamma ) P_2  \right)$.

Let $\overline{P}_n = \frac{m}{n} P_1 + \frac{n-m}{n} P_2$ and $\overline{P} = \gamma  P_1 + (1-\gamma ) P_2$.
We have that \\ $\xi^2(\overline{P}_n) = \kappa_{0,2} v_{S  | R }(x; \overline{P}_n)  /f_{R}(x;\bar{P}_n)$, 
so it suffices to show that 
$v_{ S  | R }(x; \bar{P}_n ) \to v_{ S  | R }(x; \bar{P})$
and
$f_{R}(x;\bar{P}_n) \to f_{R}(x;\bar{P})$.

First, convergence of the PDF,
\begin{align*}
f_{R}(x;\bar{P}_n)  = \frac{m  }{n   } f_{X_1}(x)  + \frac{n-m  }{n  } f_{X_2}(x)  \to  \gamma f_{X_1}(x)  + (1-\gamma) f_{X_2}(x) =  f_{R}(x;\bar{P}).
\end{align*}

Second, convergence of moments. For $g(x)=x$ or $g(x)=x^2$,
\begin{align*}
\Exp_{ \bar{P}_n }[g(S)  | R =x ] = &  \frac{m f_{X_1}(x) }{ n f_{R}(x;\bar{P}_n) } \mme[g(Y_1)|X_1=x] + \frac{(n-m) f_{X_2}(x) }{n f_{R}(x;\bar{P}_n)  } \mme[g(Y_2) | X_2=x]
\\
\to & \frac{\gamma f_{X_1}(x) }{ f_{R}(x;\bar{P}) } \mme[g(Y_1)|X_1=x] + \frac{(1 -\gamma ) f_{X_2}(x) }{f_{R}(x;\bar{P})  } \mme[g(Y_2) | X_2=x]
\\
= & \Exp_{\bar{P}} [g( S  )  |  R   =x ],
\end{align*} 
which implies that $v_{ S  | R }(x; \bar{P}_n ) \to v_{ S  | R }(x; \bar{P})$.

Therefore, $\xi^2(\overline{P}_n) \to \xi^2(\overline{P})$.
 
\item \underline{Assumption \ref{aspt:sampling} :} 

We have that   $n_1$ is a deterministic sequence  and $(n_1 /n - \lambda) \to 0$ by assumption.
Assumption  \ref{aspt:asy_linear} has already been verified above for any sequence $m_n$ such that $(m/n -\gamma) \to 0$ for arbitrary $\gamma \in (0,1)$.
In particular it holds for $\gamma \in \{ \lambda, 1-\lambda\}.$

\end{enumerate}

$\square$

\subsection{Proof of Proposition \ref{prop:control_quan} - Controlled Quantiles}
\label{proof:control_quantile}

\indent

The goal of this proof is to use the assumptions listed in Proposition \ref{prop:control_quan}  to verify Assumptions \ref{aspt:asy_linear} and \ref{aspt:sampling}.
It adapts arguments from \citeSM{pollard1991}, \citeSM{chaudhuri1991}, and \citeSM{fan1994}.

Consider an \textit{iid} sample from $P\in\m{P}$ with $m$ observations, $\bV_1=(R_1,S_1), \ldots, \bV_m=(R_m,S_m)$.
The number $m$ grows with $n$ such that $m/n \to \gamma$, for some $\gamma \in (0,1)$.
The parameter of interest is $\theta(P) = \mmq_{\chi}[S|R=x]$, $\chi \in (0,1)$, and 
the NW-style estimator  is given by
\[
\hat \theta^b = \theta_{m,n}^b (\bV_1 , \ldots, \bV_m) \doteq \arg\min_{\theta} \sum_{i=1}^{m} \rho_{\chi}(S_i - \theta) K\left(\frac{R_i-x}{h}\right)~,
\]
where
$\rho_{\chi}(u) = (\chi - \mmi(u \leq 0))u$.
This section studies the asymptotic behavior of 
$\ha{\theta} = \ha{\theta}^b - h \ha{B} $,
where the expression for $\ha{B}$ is given below Equation \ref{eq:control_quan:BP}.

Define $U = S - \theta(P)$.
The assumptions in Proposition \ref{prop:control_quan}  imply the following facts:

\begin{enumerate}

\item \textit{As $m\to\infty$, $h \to 0$, $m h \to \infty$, $\sqrt{m h} h^2 =O(1)$, $\sqrt{m h} h^3 =o(1)$};

\item 
\textit{ $\int K(u) u ~ du =0$, $\int K^r(u) u^s g(u) ~ du < \infty$  for $1 \leq r <\infty$,  $0 \leq s \leq 3$, and bounded function $g(u)$};

\item \textit{The distribution of $R$ has PDF $f_{R}(r;P)$
that is three times differentiable  wrt $r$: 
$\nabla_r f_{R}(r;P)$,
$\nabla_{r^2} f_{R}(r;P)$, and
$\nabla_{r^3} f_{R}(r;P)$
respectively;
$f_{R}(r;P)$ and these derivatives are bounded as functions of $(r,P)$;
$f_{R}(r;P)$ is bounded away from zero as a function of $(r,P)$;}

To see this, note that $f_{R}(r;P)$ is a convex combination of $f_{X_1}(r)$ and $f_{X_2}(r)$,
each bounded, with bounded derivatives, and bounded away from zero.

\item  \textit{The conditional distribution of $U$ given $R$ has PDF $f_{U|R}(u|r;P)$ that is a bounded function of $(u,r,P)$, $f_{U|R}(0|x;P)$ is bounded away from zero over $P$, $f_{U|R}(u|r;P)$ is differentiable as function of $(u,r)$ and has bounded partial derivatives;}

To see this, take  $P = \alpha P_1 + (1-\alpha) P_2$ and note that, 
\[
f_{U|R}(u|r;P)  = \frac{\alpha f_{X_1}(r) }{ f_{R}(r;P) } f_{Y_1|X_1}(\theta(P) + u |r)   + \frac{(1-\alpha) f_{X_2}(r) }{ f_{R}(r;P) } f_{Y_2|X_2}(\theta(P) + u |r).
\]
The PDFs $f_{Y_k|X_k}(y_k | x_k)$ are bounded functions of $(x_k,y_k)$.
The weights $\omega_1(r;P) \doteq \alpha \frac{f_{X_1}(r) }{ f_{R}(r;P) }$ and
$\omega_2(r;P) \doteq  (1-\alpha) \frac{f_{X_2}(r) }{ f_{R}(r;P) }$
 are bounded functions of $(r,P)$ because they are positive and sum to $1$.
 The PDFs $f_{Y_k|X_k}(y_k | x_k)$ are differentiable and so are the weights.
 The partial derivatives of $f_{Y_k|X_k}(y_k | x_k)$ are bounded.
 The  derivatives of the weights wrt $r$  also are bounded because the derivatives of the PDFs $f_{X_k}(r)$  are bounded plus the fact that $f_{R}(r;P)$ is bounded away from zero over $(r,P)$.
Finally, $f_{Y_k|X_k}(y_k | x)$ is bounded away from zero over $y_k$.

\item \textit{The conditional distribution of $U$ given $R$ has CDF $F_{U|R}(u|r;P)$ that is three times partially differentiable wrt $r$ and has partial derivatives
 $\nabla_r F_{U|R}(0|r;P)$,
 $\nabla_{r^2} F_{U|R}(0|r;P)$, and
  $\nabla_{r^3} F_{U|R}(0|r;P)$ 
  that are bounded functions of $(r,P)$;}

Again, 
\[
F_{U|R}(0|r;P)  = \omega_1(r;P) F_{Y_1|X_1}(\theta(P)   |r)   + \omega_2(r;P) F_{Y_2|X_2}(\theta(P)  |r).
\]
We have that and   $F_{Y_k|X_k}(y_k   |x_k)$ are three times partially differentiable wrt $x_k$.
The weights $\omega_k(r;P)$ are three times differentiable wrt $r$ because the PDFs $f_{X_k}(x_k)$ are three times differentiable.
The first three partial derivatives of $F_{Y_k|X_k}(y_k   |x_k)$  wrt $x_k$ are bounded functions of $(x_k,y_k)$.
The first three  derivatives of $\omega_k(r;P)$ wrt $r$ are bounded functions of $(r,P)$
because the first three derivatives of $f_{X_k}(x_k)$ wrt $x_k$ are bounded and
$f_{R}(r;P)$ is bounded away from zero over $(r,P)$.

\end{enumerate}

\begin{enumerate}

\item[1.] \underline{Assumption \ref{aspt:asy_linear} - \eqref{eq:aspt:asy_linear}:} asymptotic expansion.

\end{enumerate}

Define $Z_m^b  = \sqrt{mh}\left(\hat \theta^b - \theta(P)\right) $
and
$Z_m  = \sqrt{mh}\left(\hat \theta - \theta(P)\right) =  
\sqrt{mh}\left(\hat \theta^b - \theta(P) - h^2 \ha{B} \right)$. 
We want to study the asymptotic behavior of $Z_m^b$ and $Z_m$, where $Z_m^b$ is the value that minimizes the objective function $L_m(z)$, i.e.,

\begin{align*}
Z_m^b &= \arg\min_z \underbrace{\sum_{i=1}^m \rho_{\chi}\left(S_i - \theta(P) - \frac{z}{\sqrt{mh}}\right)K\left(\frac{R_i-x}{h}\right)}_{\doteq L_m(z)} \\
&= \arg\min_z L_m(z) - L_m(0), 
\end{align*}
since $L_m(0)$ is not a function of $z$ and hence does not affect the argmin.  
 
Let $Q_m(z) = L_m(z) - L_m(0)$ and $U_i = S_i - \theta(P)$. 
We have,

\[ Q_{m}(z) =  \sum_{i=1}^{m}\left( \rho_{\chi}\left(U_{i} -\frac{1}{\sqrt{mh}}z\right) - \rho_{\chi}(U_{i})\right)K \left(  \frac{R_i-x }{ h }  \right) ,\]
which is  minimized by $Z_m^b = \sqrt{mh}\left(\hat \theta^b - \theta(P)\right)$.
Notice that since $\rho_{\chi}(\cdot)$s are convex functions of z, so is $Q_m(z)$, which is a sum of convex functions.

The derivation of the asymptotic linear representation is done in two parts.
First, we approximate $Q_m(z)$ by a quadratic function $Q_m^*(z)$ whose minimizing value $z=\eta_m$ has an asymptotic linear representation plus a bias term.
Second, we show that $Z_m^b = \sqrt{m h}(\ha{\theta}^b - \theta(P) )$ converges to $\eta_m$ in probability and therefore they share the same asymptotic behavior.
The bias-corrected version of $Z_m^b$ is $Z_m$, and $Z_m$ has an asymptotic linear representation 
with an influence function that satisfies Assumption \ref{aspt:asy_linear}.

\underline{Part I: approximating the objective function}

Let $D_i = -\chi \mmi(U_i \geq 0) + (1-\chi)\mmi(U_i <0) = \mmi(U_i<0) - \chi $ and 
\[V_{i}(z) \doteq  \rho_{\chi}\left(U_i-\frac{1}{\sqrt{mh}}z\right) -\rho_{\chi}(U_i)- \frac{1}{\sqrt{mh}}zD_i .\]
We can rewrite $Q_m(z)$ in terms of $D_i$ and $V_i(z)$ by adding and subtracting the conditional expectation of $Q_m(z)$ as follows:
\begin{align}
Q_m(z) &= \mme_P[Q_m(z)| \lbar{\bR}_m ] \label{eq:expanQ_1}
\\
&+\frac{1}{\sqrt{mh}}\sum_{i=1}^m z\left(D_i-\mme_P[D_i|R_i]\right)  K \left(  \frac{R_i-x }{ h }  \right)\label{eq:expanQ_2} 
\\
&+ \sum_{i=1}^m (V_{i}(z) - \mme_P[V_{i}(z)|R_i])K \left(  \frac{R_i-x }{ h }  \right)~,\label{eq:expanQ_3}
\end{align}
where $\lbar{\bR}_m$ is the vector $(R_1, \ldots, R_m)$.
In what follows, we show that 
\begin{align*}
&\eqref{eq:expanQ_1} = \frac{1}{2} f_{U|R}(0| x ; P)f_R(x;P) z^2 
\\ &
+ \sqrt{mh} h^2  z \kappa_{2,1} \left[ 
    \nabla_{r} F_{U|R}(0|x;P) \nabla_r f_R(x ;P)  
    +\frac{1}{2} \nabla_{r^2} F_{U|R}(0|x;P) f_R(x;P)  
\right] + o_{\mathcal P}(1).
\\
&\eqref{eq:expanQ_3} = o_{\mathcal P}(1),
\end{align*}
where $\kappa_{s,t} = \int u^s K^t(u) ~du$.

Regarding $\eqref{eq:expanQ_1}$, 
 define 
 \[
 	M(t|r;P) \doteq \mme_P\left[\rho_{\chi}(S - \theta(P) +t) |R=r \right] = \mme_P\left[\rho_{\chi}(U +t) |R=r \right].
 \]
Notice that although the check function is not differentiable, the $M$ function is differentiable.
\begin{align*}
&\nabla_t M(t|r;P) = \chi(1-F_{U|R}(-t|r;P)) -(1-\chi) F_{U|R}(-t|r;P) = \chi - F_{U|R}(-t|r;P),
\\
&\nabla_{t^2} M(t|r;P) = - \nabla_{t} \{ F_{U|R}(-t|r;P) \} =  f_{U|R}(-t|r;P),
\\
&\nabla_{t^3} M(t|r;P) = - \nabla_{u} f_{U|R}(-t|r;P),
\\
&\nabla_{t r} M(t|r;P) = - \nabla_{r} F_{U|R}(-t|r;P),
\\
&\nabla_{t r^2} M(t | r ; P ) =-\nabla_{r^2} F_{U|R}(-t|r;P),
\\
&\nabla_{t r^3} M(t | r ; P ) =-\nabla_{r^3} F_{U|R}(-t|r;P),
\end{align*}
where we use the Leibniz rule, the existence of the derivatives 
$\nabla_u f_{U|R}(u|r;P)$,  $\nabla_{r} F_{U|R}(-t|r;P)$, $\nabla_{r^2} F_{U|R}(-t|r;P)$,
and $\nabla_{r^3} F_{U|R}(-t|r;P)$.
We can write $\mme[Q_m(z)|\lbar{\bR}_m]$ in terms of $M$ as follows, 
\begin{align}
&\mme_P[Q_m(z)|\lbar{\bR}_m]  \nonumber\\*
 &= \sum_{i=1}^{m}\mme_P \bigg[ \rho_{\chi}\left(U_i -\frac{1}{\sqrt{mh}}z\Big| R_i\right)   
   - \rho_{\chi}\left(U_i \Big| R_i\right) \bigg]   K \left(  \frac{R_i-x }{ h }  \right) 
\nonumber
\\
&= \sum_{i=1}^m \left[ M\left(  - \frac{z}{\sqrt{mh}}\Big|R_i ; P\right)  
- M\left( 0\big| R_i ; P\right) \right]K\left(\frac{R_i-x}{h}\right).
\nonumber
\end{align}
Taylor expand $M$ as a function of $t$ around $0$,
\begin{align}
&\mme_P[Q_m(z)|\lbar{\bR}_m]  
\nonumber
\\*
 &= \sum_{i=1}^m \left[ \nabla_{t} M\left( 0 \big| R_i ;P \right) \frac{-z}{\sqrt{mh}}
 										+ \frac{1}{2} \nabla_{t^2} M\left(   0   \big|   R_i ; P \right)  \frac{z^2}{mh} \right. 
 \nonumber 
 \\
 & ~~~~~~~~~~~\left. - \frac{1}{6}\nabla_{t^3} M( q^*  | R_i ; P) \frac{z^3}{(mh)^{3/2}} \right]  K\left(\frac{R_i-x}{h}\right) 
 \nonumber
\\
 & =\frac{-z}{\sqrt{mh}}\sum_{i=1}^m  \nabla_{t} M\left(0 \big| R_i ; P \right) K\left(\frac{R_i-x}{h}\right) 
 \label{eq:1st}
 \\ 
& ~~~~~~~~+\frac{1}{2}\frac{z^2}{mh}\sum_{i=1}^m f_{U|R}\left( 0 | R_i ; P \right) K\left(\frac{R_i-x}{h}\right) ~
\label{eq:2nd}
\\
&~~~~~~~~ + \frac{z^3}{6\sqrt{mh}}\frac{1}{mh}\sum_{i=1}^m \nabla_u f_{U|R}\left( -q^*|R_i ;  P \right)  K\left(\frac{R_i-x}{h}\right)~
\label{eq:3rd},
\end{align}
where $q^*$ is a point between $0$ and $- z/\sqrt{mh}$.
The goal is to show that 
\begin{align*}
& \eqref{eq:1st}= \sqrt{mh} h^2  z \kappa_{2,1} \left[ 
    \nabla_{r} F_{U|R}(0|x;P) \nabla_r f_R(x ;P)  
    +\frac{1}{2} \nabla_{r^2} F_{U|R}(0|x;P) f_R(x;P)  
\right] + o_{\m{P}}(1),
\\
& \eqref{eq:2nd} = \frac{1}{2} f_{U|R}(0| x ; P)f_R(x;P) z^2 +o_{\mathcal P}(1),
\\
& \eqref{eq:3rd} =o_{\m{P}}(1).
\end{align*}

Expectation of (\ref{eq:1st}).
We Taylor expand $\nabla_{t} M\left(0 \big| R_i ;P\right)$ as a function of $R_i$ around $R_i=x$ and use the fact that
$\nabla_{t} M\left(0 \big| x ;P\right) = \chi - \mmp_P (S_i - \theta(P)   \leq  0 | R_i = x) = 0 $:
\begin{align*}
&\mme_P \left[\frac{-z}{\sqrt{mh}}\sum_{i=1}^m  \nabla_{t} M\left(0 \big| R_i ;P\right) K\left(\frac{R_i-x}{h}\right)\right]
\\
=& \mme_P \left[\frac{-z}{\sqrt{mh}}\sum_{i=1}^m  \underbrace{ \nabla_{t} M\left(0 \big| x ;P \right)}_{=0} K\left(\frac{R_i-x}{h}\right)\right]
\\
&+ \mme_P \left[\frac{-z}{\sqrt{mh}}\sum_{i=1}^m  \underbrace{ \nabla_{tr} M\left(0 \big| x ;P \right)}_{= - \nabla_{r} F_{U|R}(0|x;P) } \left(R_i-x\right)   K\left(\frac{R_i-x}{h}\right)\right]
\\
&+ \mme_P \left[\frac{-z}{\sqrt{mh}}\sum_{i=1}^m  \frac{1}{2} \underbrace{   \nabla_{tr^2}  M\left(0 \big| x ;P \right)}_{= - \nabla_{r^2} F_{U|R}(0|x;P)}\left(R_i-x\right)^2   K\left(\frac{R_i-x}{h}\right)\right]
\\
&+ \mme_P \left[\frac{-z}{\sqrt{mh}}\sum_{i=1}^m  \frac{1}{6} \underbrace{   \nabla_{tr^3}  M\left(0 \big| x^* ;P \right)}_{= - \nabla_{r^3} F_{U|R}(0|x^*;P)}\left(R_i-x\right)^3   K\left(\frac{R_i-x}{h}\right)\right]
\\
=& z \sqrt{mh}\int \left[\nabla_{r} F_{U|R}(0|x;P) uh  \right] K(u) f_R(x+uh;P)du 
\\
&+ z \sqrt{mh}\int \left[\nabla_{r^2} F_{U|R}(0|x;P) u^2 h^2 \right] K(u) f_R(x+uh;P)du
\\
&+ z \sqrt{mh}\int \left[ \nabla_{r^3} F_{U|R}(0|x^*;P) u^3 h^3 \right] K(u) f_R(x+uh;P)du 
\\
=& z \nabla_{r} F_{U|R}(0|x;P) \sqrt{mh} h \int  u   K(u) 
[f_R(x;P) + \nabla_r f_R(x;P) uh +  \frac{1}{2}\nabla_{r^2} f_R(x^{**};P) u^2 h^2]  du 
\\
&+ \frac{z}{2} \nabla_{r^2} F_{U|R}(0|x;P) \sqrt{mh} h^2 \int   u^2    K(u) [f_R(x;P) + \nabla_r f_R(x^{***};P) uh] du
\\
&+ \frac{z}{6} \sqrt{mh} h^3 \int \nabla_{r^3} F_{U|R}(0|x^*;P) u^3   K(u) f_R(x+uh;P)du,
\end{align*}
where $x^*$ is a point between $R_i$ and $x$, $x^{**}$ and  $x^{***}$ are points between $x+uh $ and $x$.
The last equation above equals to
\begin{align*}
=& z \nabla_{r} F_{U|R}(0|x;P) f_R(x;P)  \sqrt{mh} h \underbrace{\int  u   K(u) du}_{=0} 
\\
& + z \nabla_{r} F_{U|R}(0|x;P) \nabla_r f_R(x ;P) \sqrt{mh} h^2 \underbrace{\int  u^2   K(u)   du}_{= \kappa_{2,1}} 
\\
& + \frac{z}{2} \nabla_{r} F_{U|R}(0|x;P)  \sqrt{mh} h^3 \int  u^3   K(u)  \nabla_{r^2} f_R(x^{**};P)  du 
\\
&+ \frac{z}{2} \nabla_{r^2} F_{U|R}(0|x;P) f_R(x;P) \sqrt{mh} h^2 \underbrace{\int   u^2    K(u)   du}_{= \kappa_{2,1}}
\\
&+ \frac{z}{2} \nabla_{r^2} F_{U|R}(0|x;P) \sqrt{mh} h^3 \int   u^3    K(u) \nabla_r f_R(x^{***};P)  du
\\
&+ \frac{z}{6} \sqrt{mh} h^3 \int \left[ \nabla_{r^3} F_{U|R}(0|x^*;P) u^3  \right] K(u) f_R(x+uh;P)du 
\\
=& ~ 0
\\
& + \sqrt{mh} h^2  z \kappa_{2,1} \nabla_{r} F_{U|R}(0|x;P) \nabla_r f_R(x ;P) 
\\
& + O_{\m{P}} \left( \sqrt{mh} h^3 \right)  
\\
&+ \sqrt{mh} h^2  \frac{z \kappa_{2,1} }{2} \nabla_{r^2} F_{U|R}(0|x;P) f_R(x;P) 
\\
&+ O_{\m{P}} \left( \sqrt{mh} h^3 \right)  
\\
&+ O_{\m{P}} \left( \sqrt{mh} h^3 \right)  
\\
= & \sqrt{mh} h^2  z \kappa_{2,1} \left[ 
    \nabla_{r} F_{U|R}(0|x;P) \nabla_r f_R(x ;P)  
    +\frac{1}{2} \nabla_{r^2} F_{U|R}(0|x;P) f_R(x;P)  
\right] + o_{\m{P}}(1),
\end{align*}
where we use the kernel properties,
the facts that
$\nabla_{r} F_{U|R}(0|x;P)$ and $\nabla_{r^2} F_{U|R}(0|x;P)$ are bounded functions of $P$,
$\nabla_{r^3} F_{U|R}(0|r;P)$ is a bounded function of $(r,P)$,
$f_R(r ; P )$ is a bounded function of $(r,P)$, 
$\nabla_r f_R(r ; P )$ and $\nabla_{r^2} f_R(r ; P )$ are bounded functions of $(r,P)$,
and $\sqrt{mh}h^3 \to 0$.

Variance of (\ref{eq:1st}).
\begin{align*}
&\mmv_P \left[\frac{-z}{\sqrt{mh}}\sum_{i=1}^m  \nabla_{t} M\left(0 \big| R_i ;P\right) K\left(\frac{R_i-x}{h}\right)\right]
\\
=&  \frac{z^2}{h}\mmv_P \left[ \nabla_{t} M\left(0 \big| R_i ; P \right) K\left(\frac{R_i-x}{h}\right)\right]
\\
\leq &  \frac{z^2}{h} \mme_P \left[ \left\{ \nabla_{t} M \left( 0 \big| R_i ; P  \right) \right\}^2 K^2\left(\frac{R_i-x}{h}\right)\right]
\\
= &z^2 h^2\int \left\{ \nabla_{r} F_{U|R}(0|x^*;P) \right\}^2 u^2f_{R}(x+uh;P)K^2\left(u\right)du
= o_{\m{P}}(1).
\end{align*}
Therefore, 
\begin{align*}
& \eqref{eq:1st}= \sqrt{mh} h^2  z \kappa_{2,1} \left[ 
    \nabla_{r} F_{U|R}(0|x;P) \nabla_r f_R(x ;P)  
    +\frac{1}{2} \nabla_{r^2} F_{U|R}(0|x;P) f_R(x;P)  
\right] + o_{\m{P}}(1).
\end{align*}

It remains to show that the probability limits of \eqref{eq:2nd} and \eqref{eq:3rd} are zero.
Expectation of \eqref{eq:2nd}:
\begin{align*}
&\mme_P \left[\frac{1}{2}\frac{z^2}{mh}\sum_{i=1}^m f_{U|R}\left( 0 |R_i ;P \right) K\left(\frac{R_i-x}{h}\right)\right]
\\
=&\mme_P\left[\frac{1}{2}\frac{z^2}{h}\left\{f_{U|R}\left(0|x;P \right) 
		+ \nabla_r f_{U|R}\left(0|x^* \right) (R_i-x) \right\} K\left(\frac{R_i-x}{h}\right)\right]
\\
=&\frac{z^2}{2} \int \left\{f_{U|R}\left(0|x;P \right) + \nabla_r f_{U|R}\left(0|x^* \right) uh \right\} K\left(u\right) f_R(x+uh;P)du
\\
=&f_{U|R}\left(0|x;P \right) \frac{z^2}{2} \int K\left(u\right) \left\{f_R(x;P)  + \nabla_r f_R(x^{**} ; P )  uh  \right\} du
\\
&+
\frac{z^2}{2} h \int \nabla_r f_{U|R}\left(0|x^*;P \right) u  K\left(u\right) f_R(x+uh;P)du
\\
=&\frac{z^2}{2} f_{U|R}\left(0|x;P \right) f_R(x;P)  +  o_{\m{P}}(1),
\end{align*}
where we use that $f_R(r ; P )$, $\nabla_r f_R(r ; P )$, $f_{U|R}\left(0|x;P \right)$, $\nabla_r f_{U|R}\left(0|r ;P\right)$ are bounded over $(r,P)$.

Variance of \eqref{eq:2nd}:
\begin{align*}
&\mmv_P\left[\frac{1}{2}\frac{z^2}{mh}\sum_{i=1}^m f_{U|R}\left( 0 | R_i ;P \right)  K\left(\frac{R_i-x}{h}\right)\right]
\\
& = \frac{z^4}{4}\frac{1}{mh^2} \mmv_P \left[f_{U|R}\left( 0 |R_i;P \right)   K\left(\frac{R_i-x}{h}\right)   \right]
\\
& \leq \frac{z^4}{4}\frac{1}{mh^2}\mme_P\left[f_{U|R}^2\left(0 |R_i ; P \right)   K^2\left(\frac{R_i-x}{h}\right)  \right]
\\
&= \frac{z^4}{4}\frac{1}{mh}\int f_{U|R}^2\left( 0| x+uh ; P \right)  K^2\left(u\right) f_{R}(x+uh;P)du~ = o_{\m{P}}(1),
\end{align*}
because  $f_{U|R}\left( 0 | r ; P \right)$ and $f_{R}(r;P)$ are bounded functions of $(r,P)$ and $mh \to \infty$. 

Therefore, we have that 
 \eqref{eq:2nd} $=\frac{1}{2} f_{U|R}\left(0|x;P\right)f_R(x;P)z^2$ $+ o_{\m{P}}(1)$.
Moreover, $\eqref{eq:3rd} = o_{\m{P}}(1)$ because
 $mh \to \infty$, $\frac{1}{mh}\sum_{i=1}^m K\left(\frac{R_i-x}{h}\right) = O_{\mathcal P}(1)$, and $\nabla_u f_{U|R}(u|r;P)$ is a bounded function of $(u,r,P)$.
We are done in showing the probability limit of $\mme_P[Q_m(z)| \lbar{\bR}_m ]$,
\begin{align*}
& \eqref{eq:expanQ_1} = \frac{1}{2} f_{U|R}\left(0|x;P\right)f_R(x;P)z^2  
\\
& + \sqrt{mh} h^2  z \kappa_{2,1} \left[ 
    \nabla_{r} F_{U|R}(0|x;P) \nabla_r f_R(x ;P)  
    +\frac{1}{2} \nabla_{r^2} F_{U|R}(0|x;P) f_R(x;P)  
\right] + o_{\m{P}}(1).
\end{align*}
It remains to show that $\eqref{eq:expanQ_3} = o_{\m{P}}(1)$.

We show that the expectation of \eqref{eq:expanQ_3} is zero and its variance converges to zero.
The expectation of \eqref{eq:expanQ_3} is zero because it equals the expectation of
\begin{align*}
& \mme_P \left[ \sum_{i=1}^n  (V_{i}(z) - \mme_P [V_{i}(z) | R_i ] ) K\left(\frac{R_i-x}{h}\right) \Big|  \bR_m  \right] =0.
\end{align*}

Variance of \eqref{eq:expanQ_3}:
\begin{align*}
&\mmv_P \left[ \sum_{i=1}^m  (V_{i}(z) - \mme_P[V_{i}(z)|R_i]) K\left(\frac{R_i-x}{h}\right) \right]  
\\
= & \mmv_P \left[\mme_P \left[ \sum_{i=1}^m  (V_{i}(z) - \mme_P[V_{i}(z)|R_i]) K\left(\frac{R_i-x}{h} \right) \Big| \bR_m \right] \right]  
\\
& + \mme_P \left[\mmv_P \left[ \sum_{i=1}^m  (V_{i}(z) - \mme_P [V_{i}(z) | R_i ] ) K\left(\frac{R_i-x}{h}\right)\Big| \bR_m \right] \right]
\\
& \leq \sum_{i=1}^m  \mme_P \left[K^2\left(\frac{R_i-x}{h}\right)\mme_P \left[  V_{i}(z)^2\Big|R_i\right] \right]\\
&= \sum_{i=1}^m \mme_P \left[K^2\left(\frac{R_i-x}{h}\right)V_i(z)^2\right] \\
& \leq  4z^2 \int K^2(u)f_{R}(x+uh ;P) \left\{ F_{U|R}\left(\left|\frac{z}{\sqrt{mh}}\right| \bigg| x+uh;P \right) \right.
\\
&~~~~~~~~~~~~~~~~~~~~~~~~~~~~ \left. -F_{U|R}\left(-\left|\frac{z}{\sqrt{mh}}\right| \bigg|   x+uh ;P\right)\right\} ~ du
 \\
& =  \frac{8 z^3}{\sqrt{mh}} \int K^2(u)f_{R}(x+uh ;P) \nabla_u F_{U|R}\left( u^* \bigg| x+uh; P \right)    ~ du
 \\ 
&=o_{\m{P}}(1),
\end{align*}
where we use that $f_R$ and $\nabla_u F_{U|R} $ are bounded over $(u,r,P)$ and 
\begin{align*}
|V_i(z)| &= \left| \left(\rho_{\chi}\left(U_i-z/\sqrt{mh}\right) - \rho_{\chi}(U_i) - D_iz/\sqrt{mh}\right)\right|\\
&\leq 2\left|z/\sqrt{mh}\right| \mmi\left(\left|U_i\right| \leq \left|\frac{z}{\sqrt{mh}}\right|\right).
\end{align*}

Consider \eqref{eq:expanQ_1}--\eqref{eq:expanQ_3} and the probability limits of \eqref{eq:expanQ_1} and \eqref{eq:expanQ_3} that we found.
Define the following objects,
\begin{align*}
B_0(P) = & \kappa_{2,1} \left[ 
    \nabla_{r} F_{U|R}(0|x;P) \nabla_r f_R(x ;P)  
    +\frac{1}{2} \nabla_{r^2} F_{U|R}(0|x;P) f_R(x;P)  
\right],
\\
 Q_m^*(z) = & z^2 \frac{1}{2}f_{U|R}(0  | x ; P)f_R(x;P)  
\\
& + z \left[ \frac{1}{\sqrt{mh}}\sum_{i=1}^m \left(D_i-\mme_P[D_i|R_i]\right) K \left(  \frac{ R_i-x }{ h }  \right)
+  \sqrt{mh} h^2 B_0(P) \right],
\\
 r_m(z) = & Q_m(z) - Q_m^*(z),
\end{align*}
so that $Q_m(z) = Q_m^*(z) + r_m(z)$
and 
$r_m(z) = o_{\m{P}}(1)$ for fixed $z$.

Rewrite $Q_m^*(z)$ as follows.
\begin{align}
Q_m^*(z) = & z^2 \frac{1}{2}f_{U|R}(0  | x ; P)f_R(x;P)  
\nonumber
\\
& + z  \underbrace{  \left[ \frac{1}{\sqrt{mh}}\sum_{i=1}^m \left(D_i-\mme_P[D_i|R_i]\right) K \left(  \frac{ R_i-x }{ h }  \right)
+  \sqrt{mh} h^2 B_0(P) \right] }_{\doteq M_m} 
 \nonumber 
\\
& = \frac{1}{2}f_{U|R}(0|x; P)f_R(x;P)z^2  +z M_m \nonumber 
\\
&= \frac{1}{2}f_{U|R}(0|x; P)f_R(x;P) \left(z + \underbrace{\frac{1}{f_{U|R}(0|x; P)f_R(x;P)} M_m}_{\doteq -\eta_m } \right)^2 \nonumber 
\\
& ~~~~- \frac{1}{2f_{U|R}(0 | x ; P)f_{R}(x;P)}M_m^2 \nonumber
\\
& = \frac{1}{2}f_{U|R}(0 | x ; P)f_R(x;P) \left(z -\eta_m \right)^2 - \frac{1}{2}f_{U|R}(0 | x ; P)f_{R}(x;P) \eta_m^2 ,\label{eq:Qstar}
\end{align}
which is minimized at 
\begin{align*}
\eta_m & = -\frac{1}{f_{U|R}(0 | x ; P)f_R(x;P)} M_m 
\\
&=   -\frac{1}{f_{U|R}(0 | x ; P) f_R(x;P)} \frac{1}{\sqrt{mh}} \sum_{i=1}^m \left(D_i-\mme_P[D_i|R_i]\right) K \left(  \frac{ R_i-x }{ h } \right)
\\
&+ \sqrt{mh} h^2 \underbrace{\frac{-B_0(P) }{f_{U|R}(0 | x ; P) f_R(x;P)}}_{\doteq B(P)}
\\
& =   \frac{1}{\sqrt{mh}} \sum_{i=1}^m 
 \left(\frac{-1}{f_{U|R}(0 | x ; P) f_R(x;P)}\right)
\left(D_i-\mme_P[D_i|R_i]\right) K \left(  \frac{ R_i-x }{ h } \right)
\\
&+ \sqrt{mh} h^2 B(P).
\end{align*}

The bias term $B(P)$ is
\begin{align}
B(P) 
=  \frac{-\kappa_{2,1} }{f_{U|R}(0 | x ; P) f_R(x;P)}
& \bigg[
    \nabla_{r} F_{U|R}(0|x;P) \nabla_r f_R(x ;P)  
\notag
\\
&  \hspace{1cm}  +\frac{1}{2} \nabla_{r^2} F_{U|R}(0|x;P) f_R(x;P)  
\bigg]
\label{eq:control_quan:BP}
\end{align}
and is consistently estimated by $\ha{B}$,
\begin{align*}
& \ha{B} = B_{m,n}(\bV_1 , \ldots, \bV_m) 
\\
& \doteq \frac{-\kappa_{2,1} }{\ha{f_{U|R}}(0 | x ; P) \ha{f_R}(x;P)}
\left[ 
    \ha{\nabla_{r} F_{U|R}}(0|x;P) \ha{\nabla_r f_R}(x ;P)  
    +\frac{1}{2} \ha{\nabla_{r^2} F_{U|R}}(0|x;P) \ha{f_R}(x;P)  
\right],
\end{align*}
that is, by replacing 
$ f_R(x;P)$, 
$\nabla_r f_R(x ;P)$,
$f_{U|R}(0 | x ; P)$,
$\nabla_{r} F_{U|R}(0|x;P)$,
and
$\nabla_{r^2} F_{U|R}(0|x;P)$,
in $B(P)$ by consistent nonparametric estimators readily available from the literature.
The tuning parameters for these additional estimators and appropriate moment conditions may be set such that $\ha{B} - B(P) = o_{\m{P}}(1)$, as required by Assumption \ref{aspt:prop:bias}.

We have already shown above that $r_m(z) = o_{\m{P}}(1)$ for fixed $z$.
Now, we show that the convergence is also uniform over $z$ in a compact set $\m{K} \subset \mathbb R$.
To this end, consider \\
$\Lambda_m(z) \doteq  Q_m(z) 
- z \frac{1}{\sqrt{mh}}\sum_{i=1}^m  \left(D_i-\mme_P[D_i|R_i]\right)  K \left(  \frac{ R_i-x }{ h }  \right)
- z \sqrt{mh}h^2 B_0(P)$,
\\
$\Lambda(z) \doteq z^2 \frac{1}{2}f_{U|R}(0| r; P)f_R(x;P)$,
and note that $\Lambda_m(z)$ is a convex function of $z$.
Note also that $r_m(z) = Q_m(z) - Q_m^*(z) = \Lambda_m(z) - \Lambda(z)$.
By the convexity lemma (\citeSM{pollard1991}, page 187), $\sup_{z \in \m{K} } |  \Lambda_m(z) - \Lambda(z) | = o_{\m{P}}(1)$.
We have that,
for any compact subset $\m{K} \subset \mathbb R$,   
$\sup_{z \in \m{K}}\left|r_m(z)\right| = o_{\mathcal P}(1)$.

\bigskip

\underline{Part II: $Z_m^b - \eta_m =o_{\mathcal P}(1)$.}

We want to show that for each $\epsilon >0$,
\[\inf_{P \in \m{P} } \mmp_P \left(\left|Z_m^b - \eta_m\right|  \leq  \epsilon\right) \to 1.\]

Consider the closed interval $B(m)$ with center $\eta_m$ and radius $\epsilon$. Since $\eta_m$ converges in distribution, it is stochastically bounded. 
The compact set $\m{K}$ can be chosen to contain $B(m)$ with probability arbitrarily close to one, thereby implying 
$\sup_{z \in B(m) }\left|  r_m(z)  \right| = o_{\mathcal P}(1).$

For a value outside of the interval $B(m)$, suppose $z = \eta_m + \delta$ with $\delta > \epsilon$. For the boundary point $z^* = \eta_m+ \epsilon$, the convexity of $Q_m$ and (\ref{eq:Qstar}) imply
\begin{align*}
\frac{\epsilon}{\delta} Q_m(z) + &\left(1-\frac{\epsilon}{\delta}\right) Q_m(\eta_m)  \geq Q_m\left(\frac{\epsilon}{\delta} z +\left(1-\frac{\epsilon}{\delta}\right) \eta_m\right) =  Q_m(z^*)
\\
&\geq \frac{f_{U|R}(0 | x ;  P)f_R(x;P)}{2}(z^* - \eta_m)^2 - \frac{f_{U|R}(0| r ; P)f_R(x;P)}{2}\eta_m^2 - \sup_{z \in B(m)}\left|r_m(z)\right|
\\
&\geq \frac{f_{U|R}(0 | x ; P)f_R(x;P)}{2}\epsilon^2 + Q_m(\eta_m) - 2\sup_{z \in B(m)}\left|r_m(z)\right|.
\\
Q_m(z)  & \geq Q_m(\eta_m)  + \left( \frac{\delta}{\epsilon } \right) \left[ \frac{f_{U|R}(0 | x ;  P)f_R(x;P)}{2}\epsilon^2 
- 2 \sup_{z \in B(m)}\left|r_m(z)\right| \right].
\end{align*}
An analogous argument holds for $z = \eta_m - \delta$.
Define the event $A_m$ as
\begin{align*}
A_m ~ : ~  2 \sup_{z \in B(m)}\left|  r_m(z) \right|  < \frac{f_{U|R}(0 | x ;  P)f_R(x;P)}{4}\epsilon^2.
\end{align*}
We have that $\inf_{P \in \m{P}} \mmp_P [A_m] \to 1$
because  \\
$\sup_{P \in \m{P}} \mmp_P \left[ \sup_{z \in B(m)}\left|r_m(z)\right| >  f_{U|R}(0 | x ;  P)f_R(x;P) \epsilon^2 / 4  \right] \to 0$.
Conditional on $A_m$, 
\begin{align}
Q_m(z)  \geq Q_m(\eta_m)  + \frac{f_{U|R}(0 | x ; P)f_R(x;P)}{4} \epsilon^2 \text{, for any $z : |z - \eta_m| > \epsilon$}
\label{eq:ineq_qm}
\end{align}
happens with probability one.
The event in \eqref{eq:ineq_qm}   implies that $| Z_m^b - \eta_m | \leq \epsilon$
because $Z_m^b$ minimizes $Q_m(z)$ and thus $Q_m(Z_m^b) \leq Q_m(\eta_m)$.
Finally, 
\begin{align*}
&\mmp_P \left[ | Z_m^b - \eta_m | \leq \epsilon \right]
\\
&\geq
\mmp_P \left[ \inf\limits_{z : |z - \eta_m| > \epsilon} Q_m(z)  \geq Q_m(\eta_m)  + \frac{f_{U|R}(0 | x ; P)f_R(x;P)}{4} \epsilon^2 \right]
\geq
\mmp_P [ A_m ] \text{, so that }
\\
& \inf_{P \in \m{P}} \mmp_P \left[ | Z_m^b - \eta_m | \leq \epsilon \right]
\geq
\inf_{P \in \m{P}} \mmp_P [ A_m ] \to  1. 
\end{align*}
Therefore, we have the asymptotic linear representation of the estimator as follows:
\begin{align*}
& \sqrt{mh}\left(\hat \theta^b - \theta(P)\right) 
\\
& =   \frac{1}{\sqrt{mh}} \sum_{i=1}^m 
 \left(\frac{-1}{f_{U|R}(0 | x ; P) f_R(x;P)}\right)
\left(D_i-\mme_P[D_i|R_i]\right) K \left(  \frac{ R_i-x }{ h } \right)
\\
&+ \sqrt{mh} h^2 B(P) + o_{\m{P}}(1).
\end{align*}
For $\ha{B}$ such that $\ha{B} = B(P) + o_{\m{P}}(1)$,
\begin{align*}
& \sqrt{mh}\left(\hat \theta   - \theta(P)   \right)
= \sqrt{mh}\left(\hat \theta^b - \theta(P) - h^2 \ha{B} \right)
= \sqrt{mh}\left(\hat \theta^b - \theta(P) - h^2  B(P)  \right) + o_{\m{P}}(1)
\\
& =  \frac{1}{\sqrt{m}}\sum_{i=1}^m 
\left( \frac{-1}{\sqrt{h} f_{U|R}(0 | x ;  P)f_R(x;P)}\right)
K\left(\frac{R_i-x}{h}\right)\left(D_i-\mme_P[D_i |R_i ]   \right)
\\
& =  \frac{1}{\sqrt{m}}\sum_{i=1}^m 
\underbrace{
\frac{-1}{f_{U|R}(0 | x ;  P)f_R(x;P) \sqrt{h} } K\left(\frac{R_i-x}{h}\right) \left( \mmi\{S_i < \theta(P) \} - F_{S|R}(\theta(P)|R_i;P) \right)
}_{\doteq \psi_n(\bV_i, P)}
 + o_{\mathcal P}(1)
\\
& = \frac{1}{\sqrt{m}}\sum_{i=1}^m \psi_n(\bV_i, P) + o_{\mathcal P}(1),
\end{align*}
where 
we use that
$D_i - \mme_P[D_i |  R_i] =  \mmi\{S_i < \theta(P) \} - F_{S|R}(\theta(P) | R_i; P)$
and 
$\sqrt{mh}h^2 = o(1)$.

\begin{enumerate}
\item[2.] \underline{Assumption \ref{aspt:asy_linear} - \eqref{eq:aspt:asy_linear_zeromean}:} zero mean of influence function. 

 $\mme_P[ \psi_{n}(\bV_{i}, P) ]=0 ~~\forall P$ by construction.

\item[3.] \underline{Assumption \ref{aspt:asy_linear} - \eqref{eq:aspt:asy_linear_var}:} variance of influence function.

 Define
 \begin{align*}
\xi^2(P) &\doteq  \frac{1}{f_{U|R}^2(0 | x ;  P)  f_R(x;P)}  \kappa_{0,2} \mmv_P [D_i | R_i =x]
\\
& =  \frac{1}{f_{U|R}^2(0 | x ;  P)  f_R(x;P)}  \kappa_{0,2} F_{S|R}(\theta(P)| x ;P) (1-F_{S|R}(\theta(P)| x ;P))
\\
& =  \frac{\chi (1-\chi) }{f_{U|R}^2(0 | x ;  P)  f_R(x;P)}  \kappa_{0,2}.
\end{align*}
\begin{align*}
&\mmv_P \left( 
	\frac{-1}{f_{U|R}(0 | x ;  P)f_R(x;P)\sqrt{h}}K\left(\frac{R_i-x}{h}\right)\left(D_i-\mme_P[D_i|R_i]\right)\right) - \xi^2(P) 
\\
 =&\frac{1}{f_{U|R}^2(0 | x ; P)f_{R}^2(x;P) }  \int K^2(u)  \bigg\{ 
 	\underbrace{ \mmv_P[D_i | R_i=x+uh] f_{R}(x+uh;P)}_{\doteq g(uh;P)} 
\\*
& \hspace{4cm} - 
 	\underbrace{ \mmv_P [D_i|R_i=x] f_R(x;P) 
 	 }_{\doteq g(x;P)} \bigg\}
 ~ du
 \\
=&\frac{h}{f_{U|R}^2(0 | x ;  P)f_{R}^2(x;P) }  \int K^2(u)  \nabla_r g(x^*_{uh};P) u  ~ du
= o_{\m{P}}(1).
\end{align*}
where $x^*_{uh}$ is a point between $x+uh$ and $x$, and $\nabla_r g(r;P)$ denotes the derivative of $g $ wrt $r$.
The expression above is $o_{\m{P}}(1)$ 
because
$h\to0$,
the derivative $\nabla_r g(r;P)$ is a bounded function of $(r,P)$,
and 
$f_R(x;P)$ and $f_{U|R}(0| x; P)$ are bounded away from zero over $P$.
The derivative $\nabla_x g(x;P)$  is bounded because
$f_R(r;P)$, $\nabla_r f_R(r;P)$, \\ 
$\mmv_P[D_i|R_i=r] = F_{U|R}(0 | r ;P) (1-F_{U|R}( 0 | r ;P))$,
and \\
$\nabla_r \left\{F_{U|R}(0 | r ;P) (1-F_{U|R}( 0 | r ;P)) \right\} $ 
are bounded functions of $(r,P)$.

Therefore,
\[
\sup\limits_{P \in \m{P} } \left| \mmv_P [  \psi_n(\bV_{i}, {P} ) ] - \xi^2(P) \right| \to 0.
\]

\item[4.] \underline{Assumption \ref{aspt:asy_linear} - \eqref{eq:aspt:asy_linear_mom}:} $ \sup_P \mme [\psi_n^2(\bZ_k ,P)]  < \infty .$

For $\bZ_k=(X_k,Y_k) \sim P_k, ~ k=1,2$,
define
$D_{k} =  \mmi(Y_k - \theta(P_k) < 0) -\chi$,
$m_{D_k|X_k}(x_k)=\mme[  D_k | X_k=x_k  ]$,
and
$v_{D_k|X_k}(x_k)=\mmv[  D_k | X_k=x_k  ]$.
For $\bV=(R,S) \sim P \in \m{P}$,
$D =  \mmi(S - \theta(P) < 0) -\chi$
and
$m_{D | R}(r;P)=\mme_P[  D | R=r  ]$.

We have,
\begin{align*}
\psi_n(\bZ_k,P) & 
	= \frac{-1}{f_{U|R}(0 | x ; P)f_{R}(x;P)\sqrt{h}}K\left(\frac{X_k-x}{h}\right)\left(D_k-m_{D | R}(X_k ; P )\right).
 \\
\mme\left[ \psi_n^2(\bZ_k,P) \right] 
&=
\frac{1}{ f_{U|R}^2(0 | x ;  P) f_{R}^{2}(x;P)} 
\\
& ~~\int
	K^2(u)
	\bigg[
		v_{P_k}(x+uh) 
		+ \left(m_{D_k|X_k}(x+uh)  - m_{D|R}(x+uh;P) \right)^2
	\bigg]
\\
& ~~~~ ~ f_{X_k}(x+uh)
~du
\\
& = O_{\m{P}}(1),
\end{align*}
because 
$v_{D_k|X_k}(r)=F_{Y_k|X_k}(\theta(P_k)| r ) (1-F_{Y_k|X_k}(\theta(P_k)| r)) $,\\
$m_{D_k|X_k}(r)=F_{D_k|X_k}(\theta(P_k)| r ) - \chi$,
$m_{D | R}(r;P)= F_{S|R}(\theta(P) |  r) - \chi$,
and
$f_{X_k}(r)$
are bounded over $(r,P)$;
and
$f_R(x;P)$ and $f_{U|R}(0| x; P)$ are bounded away from zero over $P$.

\item[5.] \underline{Assumption \ref{aspt:asy_linear} - \eqref{eq:aspt:asy_linear_lind}:} $(2+\zeta)$-th moment condition. 

We verify it in two steps.

First, $\mmv_P\left(  \psi_n(\bZ_i,P) \right) - \xi^2(P) = o_{\m{P}}(1)$ 
and 
$\xi^2(P)$ is bounded away from zero, uniformly over $\m{P}$.
Thus,
$\mmv_P^{-1}\left(  \psi_n(\bZ_i,P) \right) = O_{\m{P}}(1)$.

Second, for any $\zeta>0$, 
call $\eta(r;P) = \mme_P[ | D_i - \mme_P [ D_i | R_i ] |^{2+\zeta} | R_i=r]$
and note that
$|\eta(r;P)| \leq 2.$
\begin{align*}
&n^{-\zeta/2} \mme_P \left|  \psi_n(\bV_i,P) \right| ^{2+\zeta} 
\\
& =\frac{(m/n)^{\zeta/2}}{ f_{U|R}^{2+\zeta}(0 | x ; P) f_{R}^{2+\zeta}(x;P) (mh)^{\zeta/2}} \int |K(u)|^{2 +\zeta} \eta(x+uh;P) f_{R}(x+uh;P) ~ du 
\\
& = o_{\m{P}}(1).
\end{align*}
Combining steps 1 and 2, 
\begin{align*}
n^{-\zeta/2} \sup\limits_{P \in \m{P} }  \mme_P \left| \frac{ \psi_n(\bV_i,P) }{ \sqrt{\mmv_P\left(  \psi_n(\bV_i,P) \right)} } \right|^{2 + \zeta} = o(1).
\end{align*}

\item[6.] \underline{Assumption \ref{aspt:asy_linear} - \eqref{eq:aspt:asy_linear_xicont}:} 
$\xi^2\left( \frac{m}{n} P_1 + \frac{n-m}{n} P_2 \right) \to \xi^2 \left( \gamma P_1 + (1-\gamma) P_2  \right)$.

Let $\overline{P}_n = \frac{m}{n} P_1 + \frac{n-m}{n} P_2$ and $\overline{P} = \gamma P_1 + (1-\gamma) P_2$.
Consider the expression for $\xi^2\left( P \right)$ given above.
It suffices to show that
$f_{R}(x;\bar{P}_n) \to f_{R}(x; \bar P)$
and
$f_{U|R}(0|x;  \bar{P_n}) \to f_{U|R }(0|x; \bar P)$.
The first is straightforward (see Section \ref{proof:control_mean}).
For the second, let $U_k =  Y_k - \theta(P_k) $ and note that
\begin{align*}
f_{U|R}(0|x ; \bar P_n) 
&= 
\frac{m}{n}\frac{f_{X_1}(x)}{f_R(x;\bar P_n)} f_{U_1|X_1}(0|x) + \frac{n-m  }{n  } \frac{f_{X_2}(x)}{f_R(x;\bar P_n)}f_{U_2|X_2}(0|x) 
\\
&\to  
\gamma \frac{f_{X_1}(x )}{f_R(x;\bar P)}f_{U_1 | X_1 }(0|x )  + (1-\gamma)\frac{f_{X_2}(x )}{f_R(x;\bar P)} f_{U_2 | X_2 }(0|x) 
\\
&= f_{U|R}(0|x;\bar P).
\end{align*}

\item[7.] \underline{Assumption \ref{aspt:sampling} :} $n_1 /n - \lambda \pto 0$.

In this case,  $n_1$ is deterministic and $(n_1 /n - \lambda) \to 0$.
Assumption  \ref{aspt:asy_linear} has already been verified above for any sequence $m_n$ such that $(m/n -\gamma) \to 0$ for arbitrary $\gamma \in (0,1)$.
In particular it holds for $\gamma \in \{ \lambda, 1-\lambda\}.$

\end{enumerate}

$\square$

\subsection{Proof of Proposition \ref{prop:rdd} - Discontinuity of Conditional Mean}
\label{proof:rdd}

\indent 

The goal of this proof is to use the assumptions listed in Proposition \ref{prop:rdd}  to verify Assumptions \ref{aspt:asy_linear} and \ref{aspt:sampling}.
It follows the general lines of the proof of Proposition \ref{prop:control_mean}, so the reader may refer to Section \ref{proof:control_mean} for the redundant details that we omit here.
It builds on arguments from the literature on asymptotic approximations of the NW estimator at a boundary point.
See, for example,  Theorem 3.2 by \citeSM{fan1996} with $p=\nu=0$.
For a generalization of this proof to the local polynomial regression estimator (LPR), see Section \ref{proof:lpr} below.

Consider an \textit{iid} sample from $P\in\m{P}$ with $m$ observations, $\bV_1=(R_1,S_1), \ldots, \bV_m=(R_m,S_m)$,
where the minimum value in the support of $R$ is 0.
The number $m$ grows with $n$ such that 
$m/n \to \gamma$, for some $\gamma \in (0,1)$.
The parameter of interest is $\theta(P) = \mme[S|R=0^+]$, and 
the NW estimator is 
\[
\ha{\theta}^b 
= \theta_{m,n}^b(\bV_1 , \ldots, \bV_m) 
\doteq \frac{\sum\limits_{i=1}^{m} K \left(  \frac{ R_{i} }{ h }  \right)  S_{i} }
{\sum\limits_{i=1}^{m} K \left(  \frac{ R_{i} }{ h }  \right) }.
\]
This section studies the asymptotic behavior of 
$\ha{\theta} = \ha{\theta}^b - h \ha{B} $,
where the expression for $\ha{B}$ is given below Equation \ref{eq:rdd:BP}.
The assumptions in Proposition \ref{prop:rdd}  imply the following facts:

\begin{enumerate}
\item \textit{As $m\to\infty$, $h \to 0$, $m h \to \infty$,
$\sqrt{m h} h =O(1),$ and $\sqrt{m h} h^2 =o(1)$};

\item \textit{The distribution of $R$ has PDF $f_{R}(r;P)$
that is twice differentiable wrt $r$ denoted 
$\nabla_r f_{R}(r;P)$ and $\nabla_{r^2} f_{R}(r;P)$;
$f_{R}(r;P)$ and the derivatives  are bounded functions of $(r,P)$;
$f_{R}(r;P)$ is bounded away from zero as a function of $(r,P)$;}

\item \textit{$m_{S|R}(r;P) = \Exp_P[S|R=r]$ is twice differentiable  wrt $r$ denoted $\nabla_r m_{S|R}(r;P)$ and $\nabla_{r^2} m_{S|R}(r;P)$;
$m_{S|R}$ and the derivatives are   bounded functions of $(r,P)$;}

\item \textit{$v_{S|R}(r;P) = \Var_P[S|R=r]$ has first derivative wrt $r$ denoted $\nabla_r v_{S|R}(r;P)$;
$v_{S|R}$, $\nabla_r v_{S|R}$ are both  bounded as functions of $(r,P)$;
$v_{S|R}(0^+;P)$ is bounded away from zero as a function of  $P$;}

\item \textit{$\eta(r;P) \doteq \mme_P[|S - m_{S | R }(R;P) |^{2+\zeta} | R=r]$ is a bounded function of $(r,P)$.}

\end{enumerate}

We re-write 
$\sqrt{m h}  \left( \ha\theta - \theta(P) \right)$
$=\sqrt{m h}  \left( \ha\theta^b - h \ha{B} - \theta(P) \right)$   
to find the asymptotic linear representation.
\begin{align}
\sqrt{m h}  \left( \ha\theta  -\theta(P) \right)   
&= 
\left( 
	\frac{1}{\sqrt{m h} } \sum_{i=1}^{m} K\left( \frac{ R_{i} }{h} \right) 
		\left( S_{i} - m_{S  | R }(R_i;P)  \right) 
\right)\left(f_{R}(0^+;P)/2\right)^{-1}
\label{eq:rdd:thetahat:clt}
\\
&+ 
\left( 
	\frac{1}{\sqrt{m h} } \sum_{i=1}^{m} K\left( \frac{ R_{i}  }{h} \right) 
		\left( S_{i} - m_{S  | R }(R_i;P)  \right) 
\right)
\notag
\\
&\hspace{2cm}
\left[
	\left( \frac{1}{ m h } \sum_{i=1}^{m} K\left( \frac{ R_{i} }{h} \right) \right)^{-1} 
	-
	\left(f_{R}(0^+;P)/2\right)^{-1}  
\right]
\label{eq:rdd:thetahat:deno}
\\
&+ 
\left( 
	\frac{1}{\sqrt{m h} } \sum_{i=1}^{m} K\left( \frac{ R_{i} }{h} \right) 
		\left( m_{S  | R }(R_i;P)   -  m_{S  | R }(0^+;P)  \right) 
\right)
\notag
\\
&\hspace{2cm}
\left( 
	\frac{1}{ m h } \sum_{i=1}^{m} K\left( \frac{ R_{i} }{h} \right) 
\right)^{-1} 
-\sqrt{mh} h  \ha{B}.
\label{eq:rdd:thetahat:bias}
\end{align}

\begin{enumerate}

\item \underline{Assumption \ref{aspt:asy_linear} - \eqref{eq:aspt:asy_linear}:} asymptotic expansion.

 Equation \ref{eq:rdd:thetahat:clt} above gives  the influence function $\psi_n$.

\begin{align*}
\frac{1}{\sqrt{m} } \sum_{i=1}^{m} \underbrace{ 
		K\left( \frac{ R_{i}  }{h} \right) 
		\left( S_{i} - m_{S  | R }(R_i;P)  \right) h^{-1/2} \left(f_{R}(0^+;P)/2\right)^{-1}
}_{\doteq \psi_{n}(\bV_{i}, P)}
=
\frac{1}{\sqrt{m} } \sum_{i=1}^{m} \psi_{n}(\bV_{i}, P).
\end{align*}

We need to show that Equations \ref{eq:rdd:thetahat:deno} and \ref{eq:rdd:thetahat:bias} converge in probability to zero uniformly over $\m{P}$.

\noindent
\textit{Equation \ref{eq:rdd:thetahat:deno}:} is $o_{\m{P}}(1)$.
We show this in 3 steps.
First,
\begin{align*}
\mmv_P\left( 
	\frac{1}{\sqrt{m h} } \sum_{i=1}^{m} K\left( \frac{ R_{i}  }{h} \right) 
		\left( S_{i} - m_{S  | R }(R_i;P)  \right) 
\right)
\\
 = \int_0^{\infty} K^2(u) v_{S|R}(uh;P) f_{R}(uh;P) ~ du = O_{\m{P}}(1).
\end{align*}
The expected value of the expression inside the variance above is zero, so we have that  
\[
	\frac{1}{\sqrt{m h} } \sum_{i=1}^{m} K\left( \frac{ R_{i} }{h} \right) 
		\left( S_{i} - m_{S  | R }(R_i;P)  \right) 
=O_{\m{P}}(1).
\]

Second, 
\begin{align*}
&\mme_P\left[ \frac{1}{ m h } \sum_{i=1}^{m} K\left( \frac{ R_{i} }{h} \right) \right] - f_{R}(0^+;P)/2 
= h \int_0^{\infty}  K(u) \nabla_r f_{R}(x^*_{uh};P) u~ du = o_{\m{P}}(1).
\\
&\mmv_P\left[ \frac{1}{ m h } \sum_{i=1}^{m} K\left( \frac{ R_{i} }{h} \right) \right]  
\leq
\frac{1}{ m h } \int K^2(u) f_{R}(uh;P ) ~ du
= o_{\m{P}}(1).
\end{align*}
Therefore,
\begin{align}
&\frac{1}{ m h } \sum_{i=1}^{m} K\left( \frac{ R_{i}  }{h} \right)  - f_{R}(0^+;P)/2  = o_{\m{P}}(1),
\label{eq:rdd:fhat1}
\\
&\left( \frac{1}{ m h } \sum_{i=1}^{m} K\left( \frac{ R_{i}  }{h} \right) \right)^{-1} 
-
\left(f_{R}(0^+;P)/2\right)^{-1}  
= o_{\m{P}}(1).
\label{eq:rdd:fhat2}
\end{align}

Third, combining steps 1 and 2 gives
\begin{align*}
&\left( 
	\frac{1}{\sqrt{m h} } \sum_{i=1}^{m} K\left( \frac{ R_{i} }{h} \right) 
		\left( S_{i} - m_{S  | R }(R_i;P)  \right) 
\right)
\\
&\hspace{1cm}
\left[
	\left( \frac{1}{ m h } \sum_{i=1}^{m} K\left( \frac{ R_{i} }{h} \right) \right)^{-1} 
	-
	\left(f_{R}(0^+;P)/2\right)^{-1}  
\right]
& = O_{\m{P}}(1) o_{\m{P}}(1) = o_{\m{P}}(1).
\end{align*}

\bigskip

\noindent
\textit{Equation \ref{eq:rdd:thetahat:bias}:} is $o_{\m{P}}(1)$.
We derive the probability limit of \eqref{eq:rdd:thetahat:bias} + 
$\sqrt{mh}h \ha{B}$ in 3 steps.\\
First, 
$\left( \frac{1}{ m h } \sum_{i=1}^{m} K\left( \frac{ R_{i} }{h} \right) \right)^{-1} = 2 f_R^{-1}(0^+;P) + o_{\m{P}}(1) $. \\
Second, 
\begin{align*}
& 
\mme_P\left[ 
	\frac{1}{\sqrt{m h} } \sum_{i=1}^{m} K\left( \frac{ R_{i}  }{h} \right) 
		\left( m_{S  | R }(R_i;P)  - m_{S  | R }(0^+;P)  \right) 
\right]
\\
=& 
\sqrt{mh}  \int_0^{\infty}  K\left( u \right) [m_{S  | R }(uh;P) - m_{S  | R }(0^+;P)] f_{R}(uh;P) ~ du
\\
=& 
\sqrt{mh}  \int_0^{\infty}  K\left( u \right) [
    \nabla_r m_{S  | R }(0^+;P)uh 
    + \nabla_{r^2} m_{S  | R }(x^*_{uh};P) u^2 h^2 
] f_{R}(uh;P) ~ du
\\
=& 
\sqrt{mh}  \int_0^{\infty}  K\left( u \right) [
    \nabla_r m_{S  | R }(0^+;P)uh 
] [
    f_{R}(0^+;P) + \nabla_r f_{R}(x^{**}_{uh};P) uh
]~ du
\\
&+ 
\sqrt{mh}  \int_0^{\infty}  K\left( u \right) [
    \nabla_{r^2} m_{S  | R }(x^*_{uh};P) u^2 h^2 
] f_{R}(uh;P) ~ du
\\
=& 
\sqrt{mh} h  \nabla_r m_{S  | R }(0^+;P)  
f_{R}(0^+;P) 
\int_0^{\infty}  u K\left( u \right) 
~ du
\\
&+ 
\sqrt{mh} h^2 \nabla_r m_{S  | R }(0^+;P)  
\int_0^{\infty}  u^2 K\left( u \right) 
 \nabla_r f_{R}(x^{**}_{uh};P) ~ du
\\
&+ 
\sqrt{mh} h^2  \int_0^{\infty}  u^2 K\left( u \right) 
    \nabla_{r^2} m_{S  | R }(x^*_{uh};P) f_{R}(uh;P) ~ du
\\
=& 
\sqrt{mh} h  \nabla_r m_{S  | R }(0^+;P)  
f_{R}(0^+;P) 
\kappa_{1,2}^{+}
\\
&+
O_{\m{P}} \left( \sqrt{mh} h^2 \right)
\\
&+ 
O_{\m{P}} \left( \sqrt{mh} h^2 \right)
\\
=& \sqrt{mh} h  \underbrace{   \nabla_r m_{S  | R }(0^+;P)  
f_{R}(0^+;P) 
\kappa_{1,2}^{+}}_{\doteq B_0(P) }
+ o_{\m{P}} \left( 1 \right)
\\
=&  \sqrt{mh} h  B_0(P) 
+ o_{\m{P}} \left( 1 \right),
\end{align*}
where $\kappa_{s,t}^{+}=\int_0^{\infty} u^s K^t(u) du$,
and we use the existence and boundedness of derivatives, kernel moments, and
the rate conditions $\sqrt{mh}h = O(1)$ and $\sqrt{mh}h^2 = o(1)$.
\begin{align*}
& \mmv_P\left[ 
	\frac{1}{\sqrt{m h} } \sum_{i=1}^{m} K\left( \frac{ R_{i}  }{h} \right) 
		\left( m_{S | R }(R_i;P) - m_{S | R}(0^+;P) \right) 
\right]
\\
=&h^2 \int_0^{\infty}  K^2\left( u \right) [\nabla_r m_{S|R}(x^*_{uh};P)]^2 u^2  f_{R}(uh;P) ~ du
= h^2  O_{\m{P}}(1) = o_{\m{P}}(1).
\end{align*}
This gives,
\begin{align*}
& \frac{1}{\sqrt{m h} } \sum_{i=1}^{m} K\left( \frac{ R_{i}  }{h} \right) 
		\left( m_{S | R }(R_i ;P) - m_{S | R }( 0^+;P)  \right) 
\\
& = \sqrt{mh} h  B_0(P)  + o_{\m{P}}(1).
\end{align*}

Third, combine the first and second steps:
\begin{align}
& 
\left( 
	\frac{1}{\sqrt{m h} } \sum_{i=1}^{m} K\left( \frac{ R_{i}  }{h} \right) 
		\left( m_{S | R }(R_i;P) - m_{S | R}(0^+;P)  \right) 
\right)
\left( 
	\frac{1}{ m h } \sum_{i=1}^{m} K\left( \frac{ R_{i} }{h} \right) 
\right)^{-1}
\notag
\\
&=
 \sqrt{mh} h  ~ 
 \underbrace{ 2 \kappa_{1,2}^{+} \nabla_r m_{S  | R }(0^+;P)}_{\doteq B(P)}  
+ o_{\m{P}}(1)
\label{eq:rdd:BP}
\\
&=
 \sqrt{mh} h  ~ 
  B(P)  
+ o_{\m{P}}(1),
\notag
\end{align}
which gives the bias term $B(P)$.
That term is consistently estimated  by $\ha{B}$,
\[
\ha{B} = B_{m,n}(\bV_1 , \ldots, \bV_m) \doteq  2 \kappa_{1,1}^{+}  
        \ha{\nabla_r m_{S|R}}(x;P),
\]
that is, by replacing 
$\nabla_r m_{S|R}(x;P)$
in $B(P)$ by a consistent nonparametric estimator.
The additional tuning parameters for such estimator  and corresponding moment conditions may be set such that the bias-correction condition in Assumption \ref{aspt:prop:bias} is met.

\item \underline{Assumption \ref{aspt:asy_linear} - \eqref{eq:aspt:asy_linear_zeromean}:} zero mean of influence function. 

 $\mme_P[ \psi_{n}(\bV_{i}, P) ]=0 ~~\forall P$ by construction.

\item \underline{Assumption \ref{aspt:asy_linear} - \eqref{eq:aspt:asy_linear_var}:} variance of influence function.

 Call $\xi^2(P) = 4 \kappa_{0,2}^{+} v_{S | R}(0^+;P) /f_{R}(0^+;P)$, where 
$\kappa_{0,2}^{+} = \int_{0}^{\infty}K^2(u) ~ du$.
\begin{align*}
&\mmv_P\left( 
	\frac{2}{f_{R}(0^+;P) \sqrt{h} }  K\left( \frac{ R_{i}  }{h} \right) 
		\left( S_{i} - m_{S | R}(R_i;P) \right) 
\right) - \xi^2(P) 
\\
 =&\frac{4}{f_{R}^2(0^+;P) }  \int_0^{\infty}  K^2(u)  \left\{ 
 	\underbrace{ v_{S | R}(uh;P) f_{R}(uh;P) }_{\doteq g(uh;P)} - 
 	\underbrace{ v_{S | R}(0^+;P) f_{R}(0^+;P) 
 	 }_{\doteq g(0^+;P) } \right\}
 ~ du
 \\
=&\frac{4h}{f_{R}^2(0^+;P) }  \int_0^{\infty}  K^2(u)  \nabla_r g(x^*_{uh};P) u  ~ du = o_{\m{P}}(1).
\end{align*}
Therefore,
\[
\sup\limits_{P \in \m{P} } \left| \mmv_P[  \psi_n(\bV_{i}, {P} ) ] - \xi^2(P) \right| \to 0.
\]

\item \underline{Assumption \ref{aspt:asy_linear} - \eqref{eq:aspt:asy_linear_mom}:} $ \sup_P \mme [\psi_n^2(\bZ_{k,i},P)]  < \infty .$

We have,
\begin{align*}
& \psi_n^2(\bZ_k,P)  = \frac{4K^2\left( \frac{ X_{k}  }{h} \right)}{ h f_{R}^{2}(0^+;P)} 
		\left[ \left(Y_{k} -  m_{Y_k|X_k}(X_k) \right) + \left(m_{Y_k|X_k}(X_k)  - m_{S|R}(X_k;P) \right)  \right]^2 ,
\\
&\mme\left[ \psi_n^2(\bZ_k,P) \right] 
\\
&=
\frac{4}{f_{R}^{2}(0^+;P)}
\int_0^{\infty}
	K^2(u)
	\left[
		v_{Y_k|X_k}(uh) 
		+ \left(m_{Y_k|X_k}(uh)  - m_{S|R}(uh;P) \right)^2
	\right]
	f_{X_k}(uh)
~du 
\\
&= O_{\m{P}}(1).
\end{align*}

\item \underline{Assumption \ref{aspt:asy_linear} - \eqref{eq:aspt:asy_linear_lind}:} $(2+\zeta)$-th moment condition. 

First, $\mmv_P^{-1}\left(  \psi_n(\bV_i,P) \right) = O_{\m{P}}(1). $

Second, $n^{-\zeta/2} \mme_P \left|  \psi_n(\bV_i,P) \right| ^{2+\zeta} = o_{\m{P}}(1)$
because
$\eta(r;P) = O_{\m{P}}(1)$.

Therefore,  
\begin{align*}
n^{-\zeta/2} \sup\limits_{P \in \m{P} }  \mme_P\left| \frac{ \psi_n(\bV_i,P) }{ \sqrt{\mmv_P\left(  \psi_n(\bV_i,P) \right)} } \right|^{2 + \zeta} = o(1).
\end{align*}

\item \underline{Assumption \ref{aspt:asy_linear} - \eqref{eq:aspt:asy_linear_xicont}:} 
$\xi^2\left( \frac{m}{n} P_1 + \frac{n-m}{n} P_2 \right) \to \xi^2 \left( \gamma P_1 + (1-\gamma) P_2  \right)$.

Let $\overline{P}_n = \frac{m}{n} P_1 + \frac{n-m}{n} P_2$ and $\overline{P} = \gamma  P_1 + (1-\gamma ) P_2$.
We have that \\ $\xi^2(\overline{P}_n) = \kappa_{0,2}^{+} v_{S  | R }(0^+; \overline{P}_n)  /f_{R}(0^+;\bar{P}_n)$, 
$v_{ S  | R }(0^+; \bar{P}_n ) \to v_{ S  | R }(0^+; \bar{P})$
and
$f_{R}(0^+;\bar{P}_n) \to f_{R}(0^+;\bar{P})$
as in Section \ref{proof:control_mean}.
Therefore, $\xi^2(\overline{P}_n) \to \xi^2(\overline{P})$.

\item \underline{Assumption \ref{aspt:sampling} :} $ (n_1 /n - \lambda) \pto 0$. 

In this setting, $n_1/n = \sum_i \mmi\{ X_i \geq 0 \} /n$ and $(n_1 /n - \lambda) \pto 0$ holds.
Assumption  \ref{aspt:asy_linear} has already been verified above for any sequence $m_n$ such that $(m/n -\gamma) \to 0$ for arbitrary $\gamma \in (0,1)$.
In particular it holds for $\gamma \in \{ \lambda, 1-\lambda\}.$

\end{enumerate}

$\square$

\subsection{Proof of Proposition \ref{prop:bunching} - Discontinuity of Density}
\label{proof:bunching}

\indent 

The goal of this proof is to use the assumptions listed in Proposition \ref{prop:bunching}  to verify Assumptions \ref{aspt:asy_linear} and \ref{aspt:sampling}.
It follows the general lines of the proof of Proposition \ref{prop:control_mean}, so the reader may refer to Section \ref{proof:control_mean} for the redundant details that we omit here.
Additional references on the kernel density estimator at boundary points include
\citeSM{marron1994} and Sections 1.9--1.10 of \citeSM{liracine2007}.

Consider an \textit{iid} sample from $P\in\m{P}$ with $m$ observations, $\bV_1=R_1, \ldots, \bV_m=R_m$,
where $R$ is a scalar random variable and $0$ is an interior point of its compact support.
The number $m$ grows with $n$ such that $m/n \to \gamma$, for some $\gamma \in (0,1)$.
The parameter of interest is $\theta(P) = (f_R(0^+;P) - f_R(0^-;P))/2$,
where $f_R(0^+;P)$ is the side limit as $x \downarrow 0$ of the PDF of $R$ under distribution $P$, and $f_R(0^-;P)$ is the negative side limit.
The kernel density estimator  for $\theta(P)$
is
\[
\ha{\theta}_{k}^b = \frac{1}{ n_k h} \sum\limits_{i=1}^{n_k} K \left(  \frac{ X_{k,i} }{ h }  \right)  \left(\mmi\{X_{k,i} \ge 0\} - \mmi\{X_{k,i} < 0\}  \right)
\text{, for }k=1,2.
\]
This section investigates the asymptotic behavior
of 
$\ha{\theta}   = \ha{\theta}^b - h \ha{B}$, where the expression for 
$\ha{B}$ is given below Equation \ref{eq:bunching:BP}.

The assumptions in Proposition \ref{prop:bunching}  imply the following facts:

\begin{enumerate}
\item \textit{As $m\to\infty$, $h \to 0$, $m h \to \infty$; 
$\sqrt{m h} h =O(1)$, and $\sqrt{m h} h^2 =o(1)$;}

\item \textit{The distribution of $R$ has PDF $f_{R}(r;P)$ that is twice differentiable wrt $r$ except at $r=0$;
$f_{R}(r;P)$ and the derivatives $\nabla_r f_{R}(r;P)$
and
$\nabla_{r^2} f_{R}(r;P)$
are bounded functions of $(r,P)$;
$f_{R}(0^+;P)$ and $f_{R}(0^-;P)$  
are bounded away from zero as functions of $P$;}

\end{enumerate}

We re-write 
$\sqrt{m h}  \left( \ha\theta -\theta(P) \right)$
$= \sqrt{m h}  \left( \ha\theta^b -h \ha{B} -\theta(P) \right)$
to find the asymptotic linear representation.
\begin{align}
&\sqrt{m h}  \left( \ha\theta -\theta(P) \right)  
\nonumber  
\\
&=  \frac{1}{\sqrt{m h} } \sum_{i=1}^{m} \left\{
	K\left( \frac{ R_i }{h} \right)\left(\mmi\{R_i \ge 0\} - \mmi\{R_i < 0\}  \right) \right.
\notag
\\
&	\hspace{3cm} \left. - \mme_P\left[ K\left( \frac{ R }{h} \right)\left(\mmi\{R \ge 0\} - \mmi\{R < 0\}  \right)\right] \right\}
\label{eq:bunching:thetahat:clt}
\\
&+ \frac{1}{\sqrt{m h} } \sum_{i=1}^{m} 
	\mme_P\left[ K\left( \frac{ R  }{h} \right)\left(\mmi\{R  \ge 0\} - \mmi\{R  < 0\}  \right)\right]
\notag
\\
&\hspace{3cm}  
	 -    \frac{ \sqrt{mh} }{2} \left(f_R(0^+;P) - f_R(0^-;P)  \right)
	 - \sqrt{mh} h \ha{B}.
\label{eq:bunching:thetahat:bias}
\end{align}

\begin{enumerate}

\item \underline{Assumption \ref{aspt:asy_linear} - \eqref{eq:aspt:asy_linear}:} asymptotic expansion.

 Equation \ref{eq:bunching:thetahat:clt} above gives  the influence function $\psi_n$, that is, 
 $\eqref{eq:bunching:thetahat:clt}=m^{-1/2} \sum_{i=1}^m \psi_n(\bV_i,P)$, where
\begin{align*}
\psi_n(\bV_i,P) 
& \doteq  \frac{1}{\sqrt{h} } \left\{
	K\left( \frac{ R_i }{h} \right)\left(\mmi\{R_i \ge 0\} - \mmi\{R_i < 0\}  \right) \right.
\notag
\\
&	\hspace{3cm} \left. - \mme_P\left[ K\left( \frac{ R }{h} \right)\left(\mmi\{R \ge 0\} - \mmi\{R < 0\}  \right)\right] \right\}.
\end{align*}

We need to show that Equation \ref{eq:bunching:thetahat:bias} converges to zero uniformly over $\m{P}$. 
\\
Define $\kappa_{s,t}^+ = \int_0^{\infty} u^s K^t(u) ~du$.
First, consider the limit of  
\eqref{eq:bunching:thetahat:bias} $+ \sqrt{mh}h \ha{B}$. 
\begin{align}
&\frac{1}{\sqrt{m h} } \sum_{i=1}^{m} 
	\mme_P\left[ K\left( \frac{ R  }{h} \right)\left(\mmi\{R  \ge 0\} - \mmi\{R < 0\}  \right)\right]  
	- \frac{\sqrt{mh}}{2}\left(f_R(0^+;P)-f_R(0^-;P)\right)
	\notag
\\
= &  \sqrt{mh}\left( \int_0^{\infty} K(u) f_{R}(uh;P) ~ du -\int_{-\infty}^0 K(u) f_{R}(uh;P) ~ du \right.
\notag
 \\
& \hspace{3cm} \left. - \frac{1}{2}\left(f_R(0^+;P) - f_R(0^-;P)\right)   \right) 
\notag
 \\
= &\sqrt{mh} \left(
    \int_0^{\infty} K(u)  [
        \nabla_r f_{R}(0^+;P)uh
        +\nabla_{r^2} f_{R}(x^*_{uh};P)u^2 h^2/2
    ] ~ du 
\right.
\notag
\\
&\hspace{3cm} \left. - 
        \int_{-\infty}^0 K(u)  [
            \nabla_r f_{R}(0^-;P) uh 
            + \nabla_{r^2} f_{R}(x^{**}_{uh};P) u^2 h^2/2 
            ] ~ du \right)~
\notag
 \\
= &\sqrt{mh} h \nabla_r f_{R}(0^+;P)
    \int_0^{\infty} u K(u)   ~ du 
    \notag
\\
 &+\sqrt{mh} (h^2)/2 
    \int_0^{\infty} u^2 K(u) \nabla_{r^2} f_{R}(x^*_{uh};P)  ~ du 
\notag
\\
& - \sqrt{mh} h
\nabla_r f_{R}(0^-;P)
\int_{-\infty}^0 u K(u) ~ du 
\notag
\\
& - \sqrt{mh} (h^2/2) \int_{-\infty}^0 u^2 K(u)  
\nabla_{r^2} f_{R}(x^{**}_{uh};P)   ~ du
\notag
\\
= &\sqrt{mh} h \nabla_r f_{R}(0^+;P)
    \kappa_{1,1}^{+}
    \notag
\\
 &+O_{\m{P}} \left( \sqrt{mh} h^2 \right) 
 \notag
\\
& + \sqrt{mh} h
\nabla_r f_{R}(0^-;P)
\kappa_{1,1}^{+}
\notag
\\
& +O_{\m{P}} \left( \sqrt{mh} h^2 \right) 
\notag
\\
= & \sqrt{mh}h ~\underbrace{\kappa_{1,1}^{+} \left(
    \nabla_r f_{R}(0^+;P)
    + \nabla_r f_{R}(0^-;P)
\right)}_{\doteq B(P)}+  o_{\m{P}}(1)
\label{eq:bunching:BP}
\\
= & \sqrt{mh}h ~ B(P) +  o_{\m{P}}(1),
\notag
\end{align}
where we used the existence and boundedness of derivatives, kernel moments, 
and the rate condition $\sqrt{mh}h^2=o(1)$.

Equation \ref{eq:bunching:BP} defines the bias term $B(P)$.
That term is consistently estimated by $\ha{B}$,
\[
\ha{B} = B_{m,n}(\bV_1 , \ldots, \bV_m) 
\doteq  \kappa_{1,1}^{+} \left[
        \ha{ \nabla_r f_{R}} (0^+;P)
        +
        \ha{ \nabla_r f_{R}} (0^-;P) 
\right],
\]
that is, by replacing 
$\nabla_r f_{R}(0^+;P)$ and 
$\nabla_r f_{R}(0^-;P)$
in $B(P)$ by consistent nonparametric estimators.
The tuning parameters for these additional estimators and corresponding moment conditions may be set such that the bias-correction condition in Assumption \ref{aspt:prop:bias}   is met.

Finally, the limit of \eqref{eq:bunching:thetahat:bias} is,
\begin{align*}
& \frac{1}{\sqrt{m h} } \sum_{i=1}^{m} 
	\mme_P\left[ K\left( \frac{ R  }{h} \right)\left(\mmi\{R  \ge 0\} - \mmi\{R  < 0\}  \right)\right]
\notag
\\
&\hspace{3cm}  
	 -    \frac{ \sqrt{mh} }{2} \left(f_R(0^+;P) - f_R(0^-;P)  \right)
	 - \sqrt{mh} h \ha{B}
\\
=& \sqrt{mh} h \left(B(P) - \ha{B} \right) +  o_{\m{P}}(1)
=  o_{\m{P}}(1),
\end{align*}
where we use $\ha{B}  - B(P) =  o_{\m{P}}(1)$ 
and
$\sqrt{mh}h=O(1)$.

\item \underline{Assumption \ref{aspt:asy_linear} - \eqref{eq:aspt:asy_linear_zeromean}:} zero mean of influence function. 

 $\mme_P[ \psi_{n}(\bV_i, P) ]=0 ~~\forall P$ by construction.

\item \underline{Assumption \ref{aspt:asy_linear} - \eqref{eq:aspt:asy_linear_var}:} variance of influence function.

 Call 
 $\xi^2(P) =  \kappa_{0,2}^{+} \left( f_{R}(0^+;P) + f_{R}(0^-;P) \right) $,
 where
 $\kappa_{0,2}^{+} = \int_{0}^{\infty}K^2(u) ~ du$.
\begin{align*}
&\mmv_P\left( 
	\frac{1}{ \sqrt{h} }  K\left( \frac{R_i  }{h} \right)\left(\mmi\{R_i \ge 0\} - \mmi\{R_i < 0\}  \right) \right) - \xi^2(P) 
 \\
 =&\frac{1}{h} \left[
 	\int_{-\infty}^{\infty}  K^2(u)f_R(uh;P)h ~du 
 \right.
 \\
 & \left. \hspace{1cm} - \left(
 		\int_0^{\infty} K(u)f_R(uh;P)h~du 
 		- \int_{-\infty}^0 K(u)f_R(uh;P)h~du
 	\right)^2
\right] - \xi^2(P)
\\
 =& \int_{0}^{\infty}  K^2(u)[f_R(0^+;P) + \nabla_r f_{R}(x^*_{uh};P) uh ]  ~du  
\\
&+ \int_{-\infty}^{0}  K^2(u)[f_R(0^-;P) + \nabla_r f_{R}(x^{**}_{uh};P)uh ]   ~du 
 \\
 & -  h \left(\int_0^{\infty} K(u) f_R(uh;P) ~du
 - \int_{-\infty}^0 K(u)f_R(uh;P) ~ du\right)^2  - \xi^2(P)
 \\
=&h \left[\int_{0}^{\infty}  K^2(u)  \nabla_r f_R(x^*_{uh};P) u  ~ du + \int_{-\infty}^{0}  K^2(u)  \nabla_r f_R(x^{**}_{uh};P) u  ~ du \right. 
\\
& \left. - \left(\int_0^{\infty} K(u) f_R(uh;P) ~du 
- \int_{-\infty}^0 K(u) f_R(uh;P) ~du\right)^2
\right]
\\
=& h O_{\m{P}}(1) = o_{\m{P}}(1).
\end{align*}
Therefore,
\[
\sup\limits_{P \in \m{P} } \left| \mmv_P [  \psi_n(\bV_i, {P} ) ] - \xi^2(P) \right| \to 0.
\]

\item \underline{Assumption \ref{aspt:asy_linear} - \eqref{eq:aspt:asy_linear_mom}:} $ \sup_P \mme [\psi_n^2(\bZ_k,P)]  < \infty$ for $k=1,2.$

We have,
\begin{align*}
& \psi_n(\bZ_k,P)  = h^{-1/2} K\left( \frac{X_{k} }{h} \right)\left(\mmi\{X_{k} \ge 0\} - \mmi\{X_{k} < 0\}  \right)  
 \\
&  - h^{-1/2}  \mme_P \left[ K\left( \frac{ V }{h} \right)\left(\mmi\{ V \geq 0\} - \mmi\{V < 0\}  \right)\right],
\\
&\psi_n^2(\bZ_k,P)  = 
		\frac{1 }{h} 
		 K^2\left( \frac{X_{k} }{h} \right) 
		 \\
		 & + 
		 \frac{1}{h} \left\{ 
		 	\mme_P\left[ K\left( \frac{ V }{h} \right) \left(\mmi\{V \ge 0\} - \mmi\{V < 0\}  \right)\right] 
		 \right\}^2 
\\
		& -\frac{2}{h} K\left( \frac{ X_k }{h} \right) \left(\mmi\{X_k \ge 0\} - \mmi\{X_k < 0\}  \right) 
		\mme_P\left[ K\left( \frac{ V }{h} \right) \left(\mmi\{V \ge 0\} - \mmi\{V < 0\}  \right)\right], 		
\\
&\mme\left[ \psi_n^2(\bZ_k,P)  \right] = 
		\int K^2\left( u \right) f_{X_k}(uh) ~ du
		 \\
		 & + 
		 h \left\{ 
		 	\int_0^{\infty} K\left( u \right) f_{R}(uh;P) ~ du 
		 	- \int_{-\infty}^{0} K\left( u \right) f_{R}(uh;P) ~ du 
		 \right\}^2 
\\
		& -2  \left\{
			\int_0^{\infty} K\left( u \right) f_{X_k}(uh) ~ du - \int_{-\infty}^{0} K\left( u \right) f_{X_k}(uh) ~ du 
		\right\}
\\
& \hspace{1cm} \times
		 h \left\{ 
		 	\int_0^{\infty} K\left( u \right) f_{R}(uh;P) ~ du 
		 	- \int_{-\infty}^{0} K\left( u \right) f_{R}(uh;P) ~ du 
		 \right\},
\\
& \mme\left[ \psi_n^2(\bZ_k,P)  \right] = 
		O_{\m{P}}(1) +  O_{\m{P}}(h) +  O_{\m{P}}(h) = O_{\m{P}}(1).
\end{align*}

\item \underline{Assumption \ref{aspt:asy_linear} - \eqref{eq:aspt:asy_linear_lind}:} $(2+\zeta)$-th moment condition. 

We verify it in two steps.

First, $\mmv_P\left(  \psi_n(\bZ_i,P) \right) - \xi^2(P) = o_{\m{P}}(1)$ 
and 
$\xi^2(P)$ is bounded away from zero, uniformly over $\m{P}$.
Thus,
$\mmv_P^{-1}\left(  \psi_n(\bZ_i,P) \right) = O_{\m{P}}(1)$.

Second, pick $\zeta>0$.
From before, $\mme_P\left[ K\left( \frac{ R }{h} \right)\left(\mmi\{R \ge 0\} - \mmi\{R< 0\}  \right)\right] = O_{\m{P}}(h)$.
\begin{align*}
& n^{-\zeta/2}  \mme_P \left|  \psi_n(\bV_{i}, {P} )    \right|^{2+\zeta} 
\\
&= \frac{1}{(nh)^{\zeta/2} h } 
\mme_P \left|  K\left( \frac{ R_i }{h} \right)\left(\mmi\{V_i \ge 0\} - \mmi\{V_i < 0\}  \right) - O_{\m{P}}(h) \right| ^{2+\zeta} 
\\
&= \frac{1}{(nh)^{\zeta/2} }
	\int_0^{\infty} \left| K(u) - O_{\m{P}}(h) \right|^{2+\zeta} f_R(uh;P) du
\\
&\hspace{1cm} + \frac{1}{(nh)^{\zeta/2} }	
	\int_{-\infty}^{0} \left| - K(u) - O_{\m{P}}(h) \right|^{2+\zeta} f_R(uh;P) du
\\
&= \frac{1}{(nh)^{\zeta/2} }
O_{\m{P}}(1) 
= o_{\m{P}}(1).
\end{align*}
Combining both steps gives, 
\begin{align*}
n^{-\zeta/2} \sup\limits_{P \in \m{P} }  \mme_P\left| \frac{ \psi_n(\bV_i,P) }{ \sqrt{\mmv_P\left(  \psi_n(\bV_i,P) \right)} } \right|^{2 + \zeta} = o(1).
\end{align*}

\item \underline{Assumption \ref{aspt:asy_linear} - \eqref{eq:aspt:asy_linear_xicont}:} 
$\xi^2\left( \frac{m}{n} P_1 + \frac{n-m}{n} P_2 \right) \to \xi^2 \left( \gamma P_1 + (1-\gamma) P_2  \right)$.

Let $\overline{P}_n = \frac{m}{n} P_1 + \frac{n-m}{n} P_2$
and  $\overline{P} = \gamma P_1 + (1-\gamma) P_2$.
Given that \\
$\xi^2(\overline{P}_n) =  \kappa_{0,2}^{+} \left( f _{R}(0^+; \overline{P}_n ) + f _{R}(0^-; \overline{P}_n ) \right)  ~ du$,
the result follows because $m/n \to \gamma$,
$f _{R}(0^+; \overline{P}_n )   \to f _{R}(0^+; \overline{P} ) $,
and
$f _{R}(0^-; \overline{P}_n )   \to f _{R}(0^-; \overline{P} ) $.

\item \underline{Assumption \ref{aspt:sampling} :} $ (n_1 /n - \lambda) \pto 0$. 

In this setting, $n_1$ is a deterministic sequence, and we have that  $n_1/n  = \lfloor n/2 \rfloor / n \to 1/2$.

Assumption  \ref{aspt:asy_linear} has already been verified above for any sequence $m_n$ such that $(m/n -\gamma) \to 0$ for arbitrary $\gamma \in (0,1)$.
In particular it holds for $\gamma = 1/2.$

\end{enumerate}

$\square$

\subsection{Extension of Proposition \ref{prop:rdd} - Local Polynomial Regression}
\label{proof:lpr}

\indent 

Sections \ref{sec:applications:rdd} and \ref{proof:rdd} gave sufficient conditions and demonstrated the asymptotic validity of our robust permutation test in the context of RDD using the NW estimator.
This section generalizes that finding to the Local Polynomial Regression (LPR) estimator of order $\rho \in \mathbb{Z}_+$.

The researcher has an \textit{iid} sample $(X_1,Y_1), \ldots, (X_n,Y_n)$ that is split into two samples
$\bZ_{k,i} = (X_{k,i},Y_{k,i})$, $k=1,2$, $i =1, \ldots, n_k$, as explained in Section \ref{sec:applications:rdd}, 
where the distribution of $\bZ_{k,i}$ conditional on $\underline{W}_n$ is $P_k$.
We have that $P_1$ is the distribution of $(X,Y)|X \geq 0$,
$P_2$ is the distribution of $(-X,Y)|X < 0$,
and
$\m{P}$ is the set of all convex combinations of $P_1$ and $P_2$.
The researcher selects a polynomial order $\rho \in \mathbb{Z}_+$ and a bandwidth $h>0$.
The LPR estimator for $\theta(P_k)$ is defined as follows:
\begin{align*}
&(\ha{a}, \ha{\boup{b}}  )
= \argmin\limits_{(a, \boup{b} )}
\sum\limits_{i=1}^{n_k}  K \left( \frac{X_{k,i}  }{h  } \right) 
 \bigg[ Y_{k,i} - a - b_1 X_{k,i}  - b_2 X_{k,i}^2  - \ldots -b_{\rho} X_{k,i}^{\rho} \bigg]^2,
\\
&\ha{\theta}^b_k 
= \theta_{n_k,n}^b(\bZ_{k,1} , \ldots, \bZ_{k,n_k}) = \ha{a},
\end{align*}
where ${a}$ is a scalar and ${\boup{b}}$ is the vector $(b_1,\ldots,b_{\rho})$.
Note that the LPR estimator becomes the NW estimator when we set $\rho=0$.

The superscript $b$ in $\ha{\theta}_{k}^b$ indicates there is bias in the asymptotic distribution of   
$\sqrt{n h} (\ha{\theta}_{k}^b - \theta(P_k))$ 
whenever the bandwidth choice converges to zero at the slowest possible rate, i.e., $h=O(n^{-1/(2\rho +3)})$
(Proposition \ref{prop:rddlpr} below).
This is the case of MSE-optimal bandwidths and inference requires bias correction in that case.
A conventional solution is to subtract  a first-order bias term $h^{\rho+1} B(P_k)$ from $\ha{\theta}_{k}^b$, 
where $B(P_k)$ is nonparametrically estimated by $\ha{B}_k = B_{n_k,n}(\bZ_{k,1}, \ldots, \bZ_{k,n_k})$.
We give the analytical formulas for $B(P_k)$ and $\ha{B}_k$ in the proof of Proposition \ref{prop:rddlpr} below (Equations \ref{eq:rddlpr:bias:bp} and \ref{eq:rddlpr:bias:bphat}).
Our permutation tests utilize the bias-corrected LPR estimator 
$\ha{\theta}_{k} = \theta_{n_k,n}(\bZ_{k,1}, \ldots, \bZ_{k,n_k}) \doteq 
 \theta_{n_k,n}^b(\bZ_{k,1}, \ldots, \bZ_{k,n_k}) - h^{\rho+1}  B_{n_k,n}(\bZ_{k,1}, \ldots, \bZ_{k,n_k})
=\ha{\theta}_{k}^b - h^{\rho+1}  \ha{B}_k$.
Note that no bias correction is needed if $h=o(n^{-1/(2\rho +3)})$.

\begin{proposition}\label{prop:rddlpr}
Assume that:   
(i) as $n\to \infty$, $h \to 0$,  $n h \to \infty$, and $\sqrt{n h} h^{\rho+1} \to c \in [0,\infty)$;
(ii) $K$ is a kernel density function that is non-negative, bounded, symmetric, and 
$\int K(u) |u|^{4(\rho+1)}~du <\infty$;
(iii) the distribution of $X$ has PDF $f_{X}$ that is bounded, bounded away from zero,
$\rho+2$ times differentiable except at $x=0$, and has bounded derivatives;
(iv) $\Exp[Y | X = x]$ is bounded, $\rho+2$ times differentiable except at $x=0$, and has bounded derivatives;
(v) $\Var[Y | X = x]$ is bounded, differentiable except at $x=0$, has bounded derivative,
$\Var[Y|X=0^+]>0$, and $\Var[Y|X=0^-]>0$, where $0^+$ and $0^-$ denote side limits;
and
(vi) $\exists \zeta>0$ such that $\Exp[|Y|^{2+\zeta} | X]$ is almost surely bounded.
Let $\bV_1=(R_1,S_1), \ldots, \bV_m=(R_m,S_m)$ be an \textit{iid} sample from a distribution $P \in \m{P}$,
where $m$ grows with $n$.
Use the definitions above to construct the the LPR estimators:
$\ha{\theta}^b = \theta_{m,n}^b(\bV_{1}, \ldots, \bV_{m})$,
$\ha{B} = B_{m,n}(\bV_{1}, \ldots, \bV_{m})$,
and 
$\ha{\theta} = \theta_{m,n}(\bV_{1}, \ldots, \bV_{m})$.
Finally, assume that 
(vii)
$\ha{B} - B(P) \pto 0$
uniformly over $P \in \mathcal{P}$ for any  sequence $m$ such that  $ m/n  \to   \lambda $
or
$ m/n  \to   1-\lambda $.
Then, 
the RDD setting with the LPR estimator defined above satisfies Assumptions \ref{aspt:asy_linear} and \ref{aspt:sampling}  with
\begin{align*}
\psi_n(\bV,P) & = 
 \frac{1}{\sqrt{h} f_R(0^+;P) }  K\left( \frac{ R  }{h } \right) 
 {\be_1}' \bGam^{-1} \bHt^{\rho} \left(S - m_{S|R}(R;P) \right),		 
\\
\xi^2(P) = & {\be_1}' \bGam^{-1} \bDel \bGam^{-1} {\be_1} \frac{ v_{S|R}(0^+;P) }{f_R(0^+;P)},
\end{align*}
where the following definitions are used,
\begin{align*}
& \be_1 \text{ is the } (\rho_1+1 \times 1) \text{ column vector }  \be_1=[1~~ 0~~ 0~~ \cdots~~ 0]', 
\\
& \bHt^{\rho}
= \left[ 1 ~~ \left( \frac{R}{h} \right) ~~ \cdots ~~ \left( \frac{R  }{h } \right)^{\rho } \right]',  ~~ (\rho+1 \times 1) \text{ column vector, }
\\
& \kappa_{s,t}^{+} = \int_{0}^{\infty} u^s K^t(u) ~ du \text{ for integers s and t, }
\\
& \bGam = \left[
\begin{array}{ccc}
\kappa_{0,1}^{+} & \ldots & \kappa_{\rho,1}^{+}
\\
\vdots & \vdots & \vdots
\\
\kappa_{\rho,1}^{+} & \ldots & \kappa_{2 \rho,1}^{+}
\end{array}
\right]
\text{  and  }
\bDel =
\left[
	\begin{array}{ccc}
		\kappa_{0,2}^{+} & \ldots & \kappa_{\rho,2}^{+}
		\\
		\vdots & \vdots & \vdots
		\\
		\kappa_{\rho,2}^{+} & \ldots & \kappa_{2 \rho,2}^{+}
	\end{array}
\right].
\end{align*}
Moreover, 
$m_{S|R}(r;P)$ is the conditional mean of $S$ given $R=r$,
$v_{S|R}(r;P)$ is the conditional variance of $S$ given $R=r$,
and 
$f_{R}(r;P)$ is the PDF of $R$ at $r$, 
all three assuming $\bV=(R,S) \sim P \in \m{P}$.
\end{proposition}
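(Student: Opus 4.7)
The plan is to closely mirror the proof of Proposition \ref{prop:rdd} in Section \ref{proof:rdd}, upgrading the scalar computations there to matrix computations that accommodate the LPR structure. First I would write the estimator in closed form. Let $\bX$ be the $m\times(\rho+1)$ matrix whose $i$-th row is $(\bHt^{\rho}_i)'$ with $\bHt^{\rho}_i = [1,\,R_i/h,\,\ldots,(R_i/h)^{\rho}]'$, let $\bW$ be the $m\times m$ diagonal matrix with entries $K(R_i/h)$, and $\bY=(S_1,\ldots,S_m)'$. Then $\ha{\theta}^b = \be_1' (\bX'\bW\bX)^{-1}\bX'\bW\bY$. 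Splitting $S_i = m_{S|R}(R_i;P) + (S_i - m_{S|R}(R_i;P))$ decomposes the numerator into a ``stochastic'' piece (leading to the influence function) and a ``mean'' piece (leading to bias plus the true value $\theta(P)$).

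Next I would study the Gram matrix. Using the same boundary Taylor arguments as Equations \ref{eq:rdd:fhat1}--\ref{eq:rdd:fhat2}, the $(j,k)$ entry of $(\bX'\bW\bX)/(mh)$ has $P$-expectation $f_R(0^+;P)\kappa_{j+k,1}^+ + O(h)$ uniformly over $\m{P}$, and variance $O((mh)^{-1})$, so $(\bX'\bW\bX)/(mh) = f_R(0^+;P)\bGam + o_{\m{P}}(1)$. Since $\bGam$ is positive definite (non-negative symmetric kernel with full support on $[0,\infty)$) and $f_R(0^+;P)$ is bounded away from zero over $\m{P}$, this matrix is uniformly invertible, and its inverse equals $[f_R(0^+;P)]^{-1}\bGam^{-1} + o_{\m{P}}(1)$. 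The stochastic piece $\frac{1}{\sqrt{mh}}\bX'\bW[\bY - m_{S|R}(\bR;P)]$ has zero mean and its $j$-th coordinate has variance $\int_0^\infty u^{2j}K^2(u) v_{S|R}(uh;P)f_R(uh;P)\,du = v_{S|R}(0^+;P)f_R(0^+;P)\kappa_{2j,2}^+ + o_{\m{P}}(1)$. Multiplying the inverse Gram matrix by this vector and using Lemma \ref{lemma:Oop} yields the asserted $\psi_n$ and $\xi^2(P) = \be_1'\bGam^{-1}\bDel\bGam^{-1}\be_1 \cdot v_{S|R}(0^+;P)/f_R(0^+;P)$.

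The key step, and the main obstacle, is the bias analysis. Taylor-expanding $m_{S|R}(R_i;P)$ around $0^+$ to order $\rho+1$ gives $m_{S|R}(R_i;P) = \sum_{\ell=0}^{\rho} \frac{R_i^\ell}{\ell!}\nabla_{r^\ell}m_{S|R}(0^+;P) + \frac{R_i^{\rho+1}}{(\rho+1)!}\nabla_{r^{\rho+1}}m_{S|R}(0^+;P) + R_{m,i}$, where $R_{m,i}$ is a remainder of order $R_i^{\rho+2}$ that is uniformly bounded by condition (iv). The crucial algebraic fact is that LPR of order $\rho$ reproduces polynomials of degree $\le \rho$ exactly: writing these polynomial terms in the basis of $\bHt^{\rho}_i$ and inserting into $\be_1'(\bX'\bW\bX)^{-1}\bX'\bW(\cdot)$ produces $\theta(P)=m_{S|R}(0^+;P)$ plus terms involving powers $R_i^\ell$ with $\ell\le \rho$ that vanish (after rescaling by $h^\ell$ the $j$-th coordinate of $\bX'\bW(\bHt^{\rho})_\ell/(mh)$ converges to $f_R(0^+;P)\kappa_{j+\ell,1}^+$, yielding the $(j,\ell)$ entry of $f_R(0^+;P)\bGam$, so after left-multiplication by $\be_1'\bGam^{-1}[f_R(0^+;P)]^{-1}$ one recovers $\be_1'\be_\ell = \mmi\{\ell=0\}$). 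The only surviving leading term comes from the $\rho+1$-st derivative, giving
\begin{equation}
B(P) = \frac{\nabla_{r^{\rho+1}}m_{S|R}(0^+;P)}{(\rho+1)!}\, \be_1'\bGam^{-1}\bnu,\quad \bnu = (\kappa_{\rho+1,1}^+,\ldots,\kappa_{2\rho+1,1}^+)',
\label{eq:rddlpr:bias:bp}
\end{equation}
with the natural plug-in estimator
\begin{equation}
\ha{B} = \frac{\widehat{\nabla_{r^{\rho+1}}m_{S|R}}(0^+;P)}{(\rho+1)!}\, \be_1'\bGam^{-1}\bnu.
\label{eq:rddlpr:bias:bphat}
\end{equation}
The remainder $R_{m,i}$ contributes $O_{\m{P}}(\sqrt{mh}\,h^{\rho+2}) = o_{\m{P}}(1)$ under condition (i). Subtracting $h^{\rho+1}\ha{B}$ absorbs the leading bias up to $o_{\m{P}}(1)$ by condition (vii).

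Finally, I would verify the remaining clauses of Assumption \ref{aspt:asy_linear} as in Section \ref{proof:rdd}: zero mean by construction, convergence of the variance to $\xi^2(P)$ uniformly over $\m{P}$ by the same side-limit Taylor argument, the $(2+\zeta)$-moment/Lindeberg condition via $\mme[|Y|^{2+\zeta}|X]$ almost-surely bounded together with $\int K(u)|u|^{4(\rho+1)}du<\infty$, and continuity of $P\mapsto\xi^2(P)$ at $\lambda P_1+(1-\lambda)P_2$ using that both $f_R(0^+;P)$ and $v_{S|R}(0^+;P)$ are continuous in $P$ under the convex-combination parametrization. Assumption \ref{aspt:sampling} is unchanged from the proof of Proposition \ref{prop:rdd} since the sample-splitting mechanism is identical.
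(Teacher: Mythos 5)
Your proposal is correct and follows essentially the same route as the paper's proof: closed-form LPR via a $(\rho+1)\times(\rho+1)$ Gram matrix, uniform convergence of $(\bX'\bW\bX)/(mh)$ to $f_R(0^+;P)\bGam$, a decomposition of $S_i$ into conditional-mean-plus-noise, exact polynomial reproduction to kill the degree-$\le\rho$ Taylor terms, and identification of the leading bias with the $(\rho+1)$-st derivative term, giving $B(P)$ and $\xi^2(P)$ identical to the paper's. The paper organizes the bookkeeping differently — it centers on $\bar{\eps}_i^{\rho}=S_i-\bH_i^{\rho\prime}\bphi_\rho(P)$ and splits $\bar{\eps}_i^{\rho}$ into $\bar{\eps}_i^{\rho}-\mme_P[\bar{\eps}_i^{\rho}|R_i]$ plus the conditional bias plus a deterministic piece, yielding three displayed terms (Equations \ref{eq:rddlpr:clt}--\ref{eq:rddlpr:bias}) — but this is algebraically equivalent to your split $S_i=m_{S|R}(R_i;P)+\eps_i$ followed by Taylor expansion. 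One point you treat implicitly that the paper handles explicitly (its Equation \ref{eq:rddlpr:op1}) is the stochastic fluctuation of the $(\rho+1)$-st order term around its expectation: after exact reproduction the remaining contribution is $\tfrac{h^{\rho+1}}{(\rho+1)!}\nabla_{r^{\rho+1}}m_{S|R}(0^+;P)\,\be_1'\bG_m\cdot\tfrac{1}{mh}\sum_i K(R_i/h)\bHt_i^\rho (R_i/h)^{\rho+1}$, whose deterministic part gives $\sqrt{mh}\,h^{\rho+1}B(P)$ but whose sampling fluctuation (an $(mh)^{-1/2}$-scale deviation, re-inflated by $\sqrt{mh}\,h^{\rho+1}$) is $O_{\m{P}}(h^{\rho+1})=o_{\m{P}}(1)$ and must be shown negligible separately; the paper bounds this variance by $O_{\m{P}}(h^{2(\rho+1)})$. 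Worth adding one line for that, but it does not change the structure or validity of your argument.
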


\textbf{\textit{Proof. }}
The goal of this proof is to use the assumptions listed in the proposition to verify that the RDD setting with the LPR estimator satisfies Assumptions \ref{aspt:asy_linear} and \ref{aspt:sampling}.
It follows the general lines of the proof of Proposition \ref{prop:control_mean}, so the reader may refer to Section \ref{proof:control_mean} for the redundant details that we omit here.
The proof also builds on arguments from the literature on asymptotic approximations of the LPR estimator at a boundary point: 
\cite{porter2003} (Theorem 3(a)) and \cite{fan1996} (Theorem 3.2).

Consider an \textit{iid} sample from $P\in\m{P}$ with $m$ observations, $\bV_1=(R_1,S_1), \ldots, \bV_m=(R_m,S_m)$,
where the minimum value in the support of $R$ is 0.
The number $m$ grows with $n$ such that 
$m/n \to \gamma$, for some $\gamma \in (0,1)$.
The assumptions listed in the proposition imply the following facts (see proof of Proposition \ref{prop:control_mean} in Section \ref{proof:control_mean} for details):

\begin{enumerate}
\item \textit{As $m\to\infty$, $h \to 0$, $m h \to \infty$,
$\sqrt{m h} h^{\rho+1} =O(1),$ and $\sqrt{m h} h^{\rho+2} =o(1)$};

\item \textit{The distribution of $R$ has PDF $f_{R}(r;P)$
that is $\rho+2$ times differentiable wrt $r$ denoted 
$\nabla_r f_{R}(r;P)$, \ldots,  $\nabla_{r^{\rho+2}} f_{R}(r;P)$;
$f_{R}(r;P)$ and its derivatives  are bounded functions of $(r,P)$;
$f_{R}(r;P)$ is bounded away from zero as a function of $(r,P)$;}

\item \textit{$m_{S|R}(r;P) = \Exp_P[S|R=r]$ is $\rho+2$ times differentiable  wrt $r$ 
denoted $\nabla_r m_{S|R}(r;P)$, \ldots,  $\nabla_{r^{\rho+2}} m_{S|R}(r;P)$;
$m_{S|R}$ and its derivatives are  bounded functions of $(r,P)$;}

\item \textit{$v_{S|R}(r;P) = \Var_P[S|R=r]$ has first derivative wrt $r$ denoted $\nabla_r v_{S|R}(r;P)$;
$v_{S|R}$, $\nabla_r v_{S|R}$ are both  bounded  functions of $(r,P)$;
$v_{S|R}(0^+;P)$ is bounded away from zero as a function of  $P$;}

\item \textit{$\eta(r;P) \doteq \mme_P[|S - m_{S | R }(R;P) |^{2+\zeta} | R=r]$ is a bounded function of 
$(r,P)$.}

\end{enumerate}

Use the data $\bV_1=(R_1,S_1), \ldots, \bV_m=(R_m,S_m)$ and the definitions above to construct the the LPR estimators:
$\ha{\theta}^b = \theta_{m,n}^b(\bV_{1}, \ldots, \bV_{m})$,
$\ha{B} = B_{m,n}(\bV_{1}, \ldots, \bV_{m})$,
and 
$\ha{\theta} = \theta_{m,n}(\bV_{1}, \ldots, \bV_{m})$.
The proof studies the asymptotic behavior of 
$\ha{\theta} = \ha{\theta}^b - h \ha{B} $.

Consider the following definitions,
\begin{align*}
&\be_1 \text{ is the } (\rho_1+1 \times 1) \text{ column vector }  \be_1=[1~~ 0~~ 0~~ \cdots~~ 0]', 
\\
&\bH_i^{\rho} = \left[ 1 ~~  R_i ~~ \cdots ~~ R_i^{\rho} \right]',~~ (\rho+1 \times 1) \text{ column vector, }
\\
&\bHt_i^{\rho}
= \left[ 1 ~~ \left( \frac{R_i}{h} \right) ~~ \cdots ~~ \left( \frac{R_i  }{h } \right)^{\rho } \right]',  ~~ (\rho+1 \times 1) \text{ column vector, }
\\
&\bG_m  = \left[ \frac{1}{m h } \sum_{i=1}^m K\left( \frac{R_i  }{h } \right) \bHt_{i}^{\rho}   \bHt_{i}^{' \rho}  \right]^{-1}, ~~ (\rho+1 \times \rho+1) \text{ matrix, }
\\
&\bphi_{\rho}(P) =  \left[ m_{S|R}(0^+;P)~~ \nabla_r m_{S|R}(0^+;P)/1! ~~ \cdots ~~ \nabla_{r^{\rho}} m_{S|R}(0^+;P)/\rho! \right]', 
\\
&\hspace{2cm} ~~ (\rho+1 \times 1) \text{ column vector, }
\\
& \eps_i = S_i - m_{S|R}(R_i;P), ~~ \text{scalar, }
\\
&\bar{\eps}_i^{\rho} = S_i - \bH_i^{\rho'} \bphi_{\rho}(P) = \eps_i + m_{S|R}(R_i;P)  - \bH_i^{\rho'} \bphi_{\rho}(P), ~~ \text{scalar. }
\end{align*}

It is possible to rewrite $\sqrt{m h}  \left( \ha{\theta} - \theta(P) \right)$ as follows,
\begin{align}
& \sqrt{m h}  \left( \ha{\theta} - \theta(P) \right) 
= \sqrt{m h}  \left( \ha{\theta}^b - \theta(P) \right) - \sqrt{m h} h^{\rho+1} \ha{B}
\nonumber
\\
& ={\be_1}' \bG_m 
\left[  \frac{1}{\sqrt{m h } } \sum_{i=1}^m K\left( \frac{ R_i  }{h } \right)  \bHt_{i}^{\rho}
  \bar{\eps}_i^{\rho}  \right]
- \sqrt{m h} h^{\rho+1} \ha{B}  
\nonumber
\\
& ={\be_1}' \bG_m 
\left[  \frac{1}{\sqrt{m h } } \sum_{i=1}^m K\left( \frac{ R_i  }{h } \right)  \bHt_{i}^{\rho}
\left( \bar{\eps}_i^{\rho}  - \mme_P [\bar{\eps}_i^{\rho} |R_i] \right)   \right]
\label{eq:rddlpr:clt}
\\
& +
{\be_1}' \bG_m
\left\{ \frac{1}{\sqrt{m h}} \sum_{i=1}^m K\left( \frac{ R_i  }{h } \right)  \bHt_{i}^{\rho}
\mme_P [\bar{\eps}_i^{\rho} |R_i]
-
\mme_P  \left[
\frac{1}{\sqrt{mh}} \sum_{i=1}^m K\left( \frac{ R_i  }{h } \right)  \bHt_{i}^{\rho}
\bar{\eps}_i^{\rho}
\right]
~\right\}
\label{eq:rddlpr:op1}
\\
&+
{\be_1}' \bG_m
\mme_P \left[
\frac{1}{\sqrt{m h }} \sum_{i=1}^m K\left( \frac{ R_i  }{h } \right)  \bHt_{i}^{\rho}
\bar{\eps}_i^{\rho}
\right]
- \sqrt{m h} h^{\rho+1} \ha{B}.
\label{eq:rddlpr:bias}
\end{align}

Equation \ref{eq:rddlpr:clt} gives the influence function representation and
Equations \ref{eq:rddlpr:op1}--\ref{eq:rddlpr:bias} are $o_{\m{P}}(1)$.
The limit of the first term in \eqref{eq:rddlpr:bias} characterizes the bias term $B(P)$ 
(Equation \ref{eq:rddlpr:bias:bp} below).

\bigskip

In what follows, we verify Assumptions \ref{aspt:asy_linear} and \ref{aspt:sampling} in 7 parts.

\begin{enumerate}

\item \underline{Assumption \ref{aspt:asy_linear} - \eqref{eq:aspt:asy_linear}:} asymptotic expansion.

\noindent
\underline{\textit{Equation \ref{eq:rddlpr:clt}:}} 

Let $\kappa_{s,t}^{+} = \int_{0}^{\infty} u^s K^t(u) ~ du$ and define
\begin{align*}
& \bGam = \left[
\begin{array}{ccc}
\kappa_{0,1}^{+} & \ldots & \kappa_{\rho,1}^{+}
\\
\vdots & \vdots & \vdots
\\
\kappa_{\rho,1}^{+} & \ldots & \kappa_{2 \rho,1}^{+}
\end{array}
\right],
\\
& \bG(P) = \frac{1}{f_R(0^+;P)} \bGam^{-1}.
\end{align*}

Rewrite Equation \ref{eq:rddlpr:clt} as,
\begin{align}
& {\be_1}' \bG_m 
\left[  \frac{1}{\sqrt{m h } } \sum_{i=1}^m K\left( \frac{ R_i  }{h } \right)  \bHt_{i}^{\rho}
 \eps_i    \right]
\nonumber
\\
=& {\be_1}' \left[ \bG_m - \bG(P) \right]
\left[  \frac{1}{\sqrt{m h } } \sum_{i=1}^m K\left( \frac{ R_i  }{h } \right)  \bHt_{i}^{\rho}
 \eps_i    \right]
\label{eq:rddlpr:clt:1} 
\\
&+
\left[  \frac{1}{\sqrt{m h } } \sum_{i=1}^m K\left( \frac{ R_i  }{h } \right) {\be_1}' \bG(P) \bHt_{i}^{\rho}
 \eps_i    \right].
\label{eq:rddlpr:clt:2} 
\end{align}

We show that Equation \ref{eq:rddlpr:clt:1} is $o_{\m{P}}(1)$ by showing that $\bG_m - \bG(P) = o_{\m{P}}(1)$ and \\
$\frac{1}{\sqrt{m h } } \sum_{i=1}^m K\left( \frac{ R_i  }{h } \right)  \bHt_{i}
 \eps_i$ is $O_{\m{P}}(1)$.
 
We have that $\bG_m^{-1}$ is a sample average and $\bG_m^{-1} \pto \bG^{-1}(P)$ (\cite{porter2003}, top of page 45).
The convergence is made uniform over $\m{P}$ by the fact that $\nabla_r f_R(r;P)$ is a bounded
function of $(r,P)$
and the first $2 (\rho+1)$ moments of $K(u)$ are bounded.
The absolute value of the determinant of the matrix $\bG^{-1}(P)$ is bounded away from zero because
$f_R(0^+;P)$ is bounded away from zero over $P \in \m{P}$. 
Therefore, $\bG_m - \bG(P) = o_{\m{P}}(1)$.

We have that $\frac{1}{\sqrt{m h } } \sum_{i=1}^m K\left( \frac{ R_i  }{h } \right)  \bHt_{i}
 \eps_i$ is $O_{\m{P}}(1)$ because it has zero mean and variance $O_{\m{P}}(1)$.
The variance is bounded because the kernel has bounded moments, 
and $f_R(r;P)$ and $v_{S|R}(r;P)$ are bounded functions of $(r,P)$.

Equation \ref{eq:rddlpr:clt:2} gives the influence function.
\begin{align*} 
\psi_n(\bV_i,P) & = 
 \frac{1}{\sqrt{h} }  K\left( \frac{ R_i  }{h } \right) {\be_1}' \bG(P) \bHt_{i}^{\rho} \left(S_i - m_{S|R}(R_i;P) \right)
\\
& = 
\frac{1}{\sqrt{h} f_R(0^+;P) }  
K\left( \frac{ R_i  }{h } \right) 
{\be_1}' \bGam^{-1} \bHt_{i}^{\rho} \left(S_i - m_{S|R}(R_i;P) \right).
\end{align*}

\bigskip

\noindent
\underline{\textit{Equation \ref{eq:rddlpr:op1}:}}

Equation \ref{eq:rddlpr:op1} is shown to be $o_{\m{P}}(1)$: from above $\bG_m= O_{\m{P}}(1)$ and we show that 
\[
\frac{1}{\sqrt{m h}} \sum_{i=1}^m K\left( \frac{ R_i  }{h } \right)  \bHt_{i}^{\rho}
\mme_P[\bar{\eps}_i^{\rho} |R_i]
-
\mme_P  \left[
\frac{1}{\sqrt{mh}} \sum_{i=1}^m K\left( \frac{ R_i  }{h } \right)  \bHt_{i}^{\rho}
\bar{\eps}_i^{\rho}
\right]
= o_{\m{P}}(1).
\]

To see this last equation, note that
\[
\mme_P  \left\{ 
	\frac{1}{\sqrt{m h}} \sum_{i=1}^m K\left( \frac{ R_i  }{h } \right)  \bHt_{i}^{\rho}
	\mme_P [\bar{\eps}_i^{\rho} |R_i]
\right\}
=
\mme_P  \left[
\frac{1}{\sqrt{mh}} \sum_{i=1}^m K\left( \frac{ R_i  }{h } \right)  \bHt_{i}^{\rho}
\bar{\eps}_i^{\rho}
\right].
\]
It remains to show that 
\[
\mmv_P \left\{
	\frac{1}{\sqrt{m h}  } \sum_{i=1}^m K\left( \frac{ R_i  }{h } \right)  \left( \frac{R_i}{h}\right)^l
	\mme_P [\bar{\eps}_i^{\rho} |R_i]
\right\}
= o_{\m{P}}(1),
\]
for $l=0, \ldots, \rho$.
In fact, 
\begin{align*}
&\mmv_P \left\{
	\frac{1}{\sqrt{m h}  } \sum_{i=1}^m K\left( \frac{ R_i  }{h } \right)  \left( \frac{R_i}{h}\right)^l
	\mme_P [\bar{\eps}_i^{\rho} |R_i]
\right\}
\\
&=\frac{1}{m h   } \sum_{i=1}^m \mmv_P  \left\{ 
	K\left( \frac{ R_i  }{h } \right)  \left( \frac{R_i}{h} \right)^l
	\mme_P [\bar{\eps}_i^{\rho} |R_i]
	\right\}
\\
&=\frac{1}{h } \mmv_P  \left\{ 
	K\left( \frac{ R_i  }{h } \right)  \left( \frac{R_i}{h} \right)^l
	\mme_P [\bar{\eps}_i^{\rho} |R_i]
	\right\}
\\
&\leq \frac{1}{h } \mme_P  \left\{ 
	K^2 \left( \frac{ R_i  }{h } \right)  \left( \frac{R_i}{h} \right)^{2l}
	\left( \mme_P [\bar{\eps}_i^{\rho} |R_i] \right)^2
	\right\}
\\
&= \frac{1}{h } \mme_P  \left\{ 
	K^2 \left( \frac{ R_i  }{h } \right)  \left( \frac{R_i}{h} \right)^{2l}
	\left( \mme_P [\bar{\eps}_i^{\rho+1} |R_i] + R_i^{\rho+1} \nabla_{r^{\rho+1}} m_{S|R}(0^+;P)/(\rho+1)!  \right)^2
	\right\}
\\
&\leq \frac{2}{h } \mme_P  \left\{ 
	K^2 \left( \frac{ R_i  }{h } \right)  \left( \frac{R_i}{h} \right)^{2l}
	\left[
		\left( \mme_P [\bar{\eps}_i^{\rho+1} |R_i] \right)^2 
		+ \left( R_i^{\rho+1} \nabla_{r^{\rho+1}} m_{S|R}(0^+;P)/(\rho+1)!  \right)^2
	\right]
	\right\}	
\\
&= 2 \bigintssss_0^{\infty}   
	K^2 \left( u \right)    u^{2l}
	\left( \mme_P [\bar{\eps}_i^{\rho+1} |R_i=uh] \right)^2 
	 	 f_R(uh;P) ~ du
\\
&\hspace{1cm} 
+ 2 \frac{ h^{2\rho+2} }{ [(\rho+1)!]^2 }  \left[ \nabla_{r^{\rho+1}} m_{S|R}(0^+;P)   \right]^2
	\bigintssss_0^{\infty}  
		K^2 \left( u \right)   u ^{2l+2\rho+2} f_R(uh;P)
	 	 ~ du
\\
&= 2 \bigintssss_0^{\infty}   
	K^2 \left( u \right)    u^{2l}
	\left[ \nabla_{r^{\rho+2}} m_{S|R}(r^*_u;P) (uh)^{\rho+2}/(\rho+2)! \right]^2 
	 	 f_R(uh;P) ~ du
\\
&\hspace{1cm} 
+ 2 \frac{ h^{2\rho+2} }{ [(\rho+1)!]^2 }  \left[ \nabla_{r^{\rho+1}} m_{S|R}(0^+;P)   \right]^2
	\bigintssss_0^{\infty}  
		K^2 \left( u \right)   u ^{2l+2\rho+2} f_R(uh;P)
	 	 ~ du
\\
&= 2 \frac{ h^{2\rho+4} }{ [(\rho+2)!]^2 } 
\bigintssss_0^{\infty}   
	K^2 \left( u \right)    u^{2l + 2\rho + 4}
	\left[ \nabla_{r^{\rho+2}} m_{S|R}(r^*_u;P) \right]^2 
	 	 f_R(uh;P) ~ du
\\
&\hspace{1cm} 
+ 2 \frac{ h^{2\rho+2} }{ [(\rho+1)!]^2 }  \left[ \nabla_{r^{\rho+1}} m_{S|R}(0^+;P)   \right]^2
	\bigintssss_0^{\infty}  
		K^2 \left( u \right)   u ^{2l+2\rho+2} f_R(uh;P)
	 	 ~ du	
\\
& = O_{\m{P}} \left( h^{2\rho+4} \right) + O_{\m{P}} \left( h^{2\rho+2} \right) 
= O_{\m{P}} \left( h^{2 (\rho+1) } \right)  = o_{\m{P}} \left( 1 \right),  	 		 	 		 	 	
\end{align*}
where $r^*_u$ lies between $uh$ and $0$, 
and we use the following facts:
(i) the first $4\rho+4$ moments of $K$ are bounded,
(ii) the functions $f_R(r;P)$,
$\nabla_{r^{\rho+1}} m_{S|R}(0^+;P)$,
and
$\nabla_{r^{\rho+2}} m_{S|R}(r;P)$ 
are bounded functions of $(r,P)$.

\bigskip

\noindent
\underline{\textit{Equation \ref{eq:rddlpr:bias}:}} is shown to be $o_{\m{P}}(1)$. 
Define
\begin{align}
\bgam^* &  = \left[	
	\begin{array}{c}
		\kappa_{\rho+1,1}^{+}
		\\
		\vdots 
		\\
		\kappa_{2\rho+1,1}^{+}
	\end{array}
\right],
\\
B(P) &  = \frac{1}{ (\rho+1)! } \nabla_{r^{\rho+1}} m_{S|R}(0^+;P) {\be_1}' \bGam^{-1}
\bgam^*.
\label{eq:rddlpr:bias:bp}
\end{align}
The term $B(P)$ is consistently estimated by
\begin{align}
& \ha{B} =  
B_{m,n}(\bV_1 , \ldots, \bV_m) \doteq 
\frac{1}{ (\rho+1)! } \ha{\nabla_{r^{\rho+1}} m_{S|R}}(0^+;P) {\be_1}' \bGam^{-1}
\bgam^*,
\label{eq:rddlpr:bias:bphat}
\end{align}
that is, by replacing 
$\nabla_{r^{\rho+1}} m_{S|R}(0^+;P)$
in $B(P)$ by a consistent nonparametric estimator (condition (vii) of Proposition \ref{prop:rddlpr}).

First, we show that 
\begin{align*}
& \frac{1}{h^{\rho+1}}
\mme_P \left[
	\frac{1}{ m h } \sum_{i=1}^m K\left( \frac{ R_i  }{h } \right)  \bHt_{i}^{\rho}
	\bar{\eps}_i^{\rho}
\right] 
\\
&= 
\frac{f_R(0^+;P) }{ (\rho+1)! } \nabla_{r^{\rho+1}} m_{S|R}(0^+;P) 
\bgam^*
+ o_{\m{P}}(1).
\end{align*}
It suffices to show that, for any $l=0, \ldots, \rho$, 
\begin{align*}
& \frac{1}{h^{\rho+1}}
\mme_P \left[
	\frac{1}{ m h } \sum_{i=1}^m K\left( \frac{ R_i  }{h } \right)  \left( \frac{ R_i  }{h } \right)^l
	\bar{\eps}_i^{\rho}
\right] 
\\
&= 
\frac{f_R(0^+;P) }{ (\rho+1)! } \nabla_{r^{\rho+1}} m_{S|R}(0^+;P) 
\kappa_{\rho+1+l,1}^{+}
+ o_{\m{P}}(1).
\end{align*}
To see that, 
\begin{align*}
& \frac{1}{h^{\rho+1}}
\mme_P \left[
	\frac{1}{ m h } \sum_{i=1}^m K\left( \frac{ R_i  }{h } \right)  \left( \frac{ R_i  }{h } \right)^l
	\bar{\eps}_i^{\rho}
\right] 
\\
& = \frac{1}{h^{\rho+1}}
\mme_P \left\{
	\frac{1}{ m h } \sum_{i=1}^m K\left( \frac{ R_i  }{h } \right)  \left( \frac{ R_i  }{h } \right)^l
	\mme_P\left[ \bar{\eps}_i^{\rho} | R_i \right]
\right\} 
\\
&= \frac{1}{h^{\rho+2} } \mme_P  \left\{ 
	K \left( \frac{ R_i  }{h } \right)  \left( \frac{R_i}{h} \right)^{l}
	\left( \mme_P [\bar{\eps}_i^{\rho+1} |R_i] + R_i^{\rho+1} \nabla_{r^{\rho+1}} m_{S|R}(0^+;P)/(\rho+1)!  \right)
	\right\}
\\
&= \frac{1}{h^{\rho+1} }
\bigintssss_0^{\infty}   
	K \left( u \right)    u^{l}
	\mme_P [\bar{\eps}_i^{\rho+1} |R_i=uh]
	 	 f_R(uh;P) ~ du
\\
&\hspace{1cm} 
+ \frac{ \nabla_{r^{\rho+1}} m_{S|R}(0^+;P)   }{ (\rho+1)! }  
	\bigintssss_0^{\infty}  
		K \left( u \right)   u ^{l + \rho + 1} f_R(uh;P)
	 	 ~ du
\\
&= \frac{1}{h^{\rho+1} }
\bigintssss_0^{\infty}   
	K \left( u \right)    u^{l}
	\left[ \nabla_{r^{\rho+2}} m_{S|R}(r^*_u;P) (uh)^{\rho+2}/(\rho+2)! \right]
	 	 f_R(uh;P) ~ du
\\
&\hspace{1cm} 
+ \frac{ \nabla_{r^{\rho+1}} m_{S|R}(0^+;P)   }{ (\rho+1)! }  
	\bigintssss_0^{\infty}  
		K \left( u \right)   u ^{l + \rho + 1} f_R(uh;P)
	 	 ~ du
\\
&= \frac{h}{ (\rho+2)! }
\bigintssss_0^{\infty}   
	K \left( u \right)    u^{l+\rho+2}
	\left[ \nabla_{r^{\rho+2}} m_{S|R}(r^*_u;P)  \right]
	 	 f_R(uh;P) ~ du
\\
&\hspace{1cm} 
+ \frac{ \nabla_{r^{\rho+1}} m_{S|R}(0^+;P)   }{ (\rho+1)! }  
	\bigintssss_0^{\infty}  
		K \left( u \right)   u ^{l + \rho + 1} f_R(uh;P)
	 	 ~ du	 	 
\\
&=  \frac{ \nabla_{r^{\rho+1}} m_{S|R}(0^+;P)  }{ (\rho+1)! }   
	\bigintssss_0^{\infty}  
		K \left( u \right)   u ^{l + \rho + 1} \left[ f_R(0^+;P) + \nabla_r f_R(r^*_u;P) uh \right]
	 	 ~ du	 	 
\\
& \hspace{1cm} +O_{\m{P}} \left( h \right)
\\
& 
=  \frac{ \nabla_{r^{\rho+1}} m_{S|R}(0^+;P)  f_R(0^+;P) }{ (\rho+1)! }   
	\bigintssss_0^{\infty}  
		K \left( u \right)   u ^{l + \rho + 1} 
	 	 ~ du	 	
\\
& 
\hspace{1cm} - h  \frac{ \nabla_{r^{\rho+1}} m_{S|R}(0^+;P)  }{ (\rho+1)! }   
	\bigintssss_0^{\infty}  
		K \left( u \right)   u ^{l + \rho + 2} \nabla_r f_R(r^*_u;P) 
	 	 ~ du	 	
\\
& \hspace{1cm} + O_{\m{P}} \left( h \right)	 	 
\\
& 
=  \frac{ \nabla_{r^{\rho+1}} m_{S|R}(0^+;P)  f_R(0^+;P) }{ (\rho+1)! }   
	\bigintssss_0^{\infty}  
		K \left( u \right)   u ^{l + \rho + 1} 
	 	 ~ du	 	
\\
& \hspace{1cm} +O_{\m{P}} \left( h \right) +O_{\m{P}} \left( h \right)	 	 
\\
& 
=  \frac{ \nabla_{r^{\rho+1}} m_{S|R}(0^+;P)  f_R(0^+;P) }{ (\rho+1)! }   
	\kappa_{\rho+1+l,1}^{+}	 	
+o_{\m{P}} \left( 1 \right),	 			
\end{align*}
where the first $O_{\m{P}} \left( h \right)$ term is obtained from the following facts: 
(i) the first $2 \rho+2$ moments of the kernel are bounded;
(ii) $\nabla_{r^{\rho+2}} m_{S|R}(r;P)$ is bounded as a function of $(r,P)$;
and
(iii) $f_R(r;P)$ is bounded as a function of $(r,P)$.
The Taylor expansion of the PDF $f_R(r;P)$ and the second $O_{\m{P}} \left( h \right)$ term are obtained from the following facts:
(i) the first $2 \rho+2$ moments of the kernel are bounded;
(ii) $\nabla_r f_R(r;P)$ is a well-defined and bounded function of $(r,P)$;
and
(iii) $\nabla_{r^{\rho+1}} m_{S|R}(0^+;P)$ is a well-defined and bounded function of $(r,P)$.

Rewrite Equation \ref{eq:rddlpr:bias} as,
\begin{align*}
&
{\be_1}' \bG_m
\mme_P\left[
\frac{1}{\sqrt{m h }} \sum_{i=1}^m K\left( \frac{ R_i  }{h } \right)  \bHt_{i}^{\rho}
\bar{\eps}_i^{\rho}
\right]
- \sqrt{m h} h^{\rho+1} \ha{B}
\\
&=
\sqrt{m h } h^{\rho+1} \left\{
	{\be_1}' \bG_m  \frac{1}{h^{\rho+1}}
	\mme_P\left[
		\frac{1}{ m h } \sum_{i=1}^m K\left( \frac{ R_i  }{h } \right)  \bHt_{i}^{\rho}
		\bar{\eps}_i^{\rho}
	\right]
	- \ha{B}
\right\}
\\
&=
\sqrt{m h } h^{\rho+1} \left\{
	{\be_1}' \left[ \bG(P) +  o_{\m{P}} \left( 1 \right) \right] 
	\left[
		\frac{f_R(0^+;P) }{ (\rho+1)! } \nabla_{r^{\rho+1}} m_{S|R}(0^+;P) \bgam^*
		+ o_{\m{P}}(1)
	\right]
	- \ha{B}
\right\}
\\
&=
\sqrt{m h } h^{\rho+1} \left\{
	B(P) + o_{\m{P}}(1)
	- \ha{B}
\right\}
= O(1) o_{\m{P}}(1) = o_{\m{P}}(1),
\end{align*}
where we use the facts that:
$\bG_m = \bG(P) + o_{\m{P}}(1)$, $\bG(P)=O_{\m{P}}(1)$, \\ 
${f_R(0^+;P) } \nabla_{r^{\rho+1}} m_{S|R}(0^+;P) \bgam^* /{ (\rho+1)! } = O_{\m{P}}(1)$,
$\ha{B}= B(P) + o_{\m{P}}(1)$,
and $\sqrt{mh}h^{\rho+1} = O(1)$.

\item \underline{Assumption \ref{aspt:asy_linear} - \eqref{eq:aspt:asy_linear_zeromean}:} zero mean of influence function. 

 $\mme_P[ \psi_{n}(\bV_{i}, P) ]=0 ~~\forall P$ by construction.

\item \underline{Assumption \ref{aspt:asy_linear} - \eqref{eq:aspt:asy_linear_var}:} variance of influence function.

Recall that $\kappa_{s,t}^{+} = \int_{0}^{\infty} u^s K^t(u) ~ du$ and define
\begin{align*}
\bDel = & 
\left[
	\begin{array}{ccc}
		\kappa_{0,2}^{+} & \ldots & \kappa_{\rho,2}^{+}
		\\
		\vdots & \vdots & \vdots
		\\
		\kappa_{\rho,2}^{+} & \ldots & \kappa_{2 \rho,2}^{+}
	\end{array}
\right],
\\
\xi^2(P) = & {\be_1}' \bGam^{-1} \bDel \bGam^{-1} {\be_1} \frac{ v_{S|R}(0^+;P) }{f_R(0^+;P)}
\\
= & {\be_1}' \bG(P) \bDel \bG(P) {\be_1} v_{S|R}(0^+;P)  f_R(0^+;P). 
\end{align*}

We need to show that
$ \mmv_P[  \psi_n(\bV , {P} ) ] - \xi^2(P) = o_{\m{P}}(1).$
Note that,
\begin{align*}
& \mmv_P[  \psi_n(\bV , {P} ) ] = {\be_1}' \bG(P) 
\left\{
	\frac{1}{h} \mme_P\left[
		K^2\left( \frac{R}{h} \right) \bHt_{i}^{\rho} \bHt_{i}^{\rho'} v_{S|R}(R;P),
	\right]
\right\} 
\bG(P) {\be_1}.
\end{align*}
Thus, it suffices to show that
\begin{align*}
& \frac{1}{h} \mme_P\left[
		K^2\left( \frac{R}{h} \right) \bHt_{i}^{\rho} \bHt_{i}^{\rho'} v_{S|R}(R;P),
	\right]
	= \bDel v_{S|R}(0^+;P)  f_R(0^+;P) + o_{\m{P}}(1),
\end{align*}
or equivalently to show that
\begin{align*}
& \frac{1}{h} \mme_P\left[
		K^2\left( \frac{R}{h} \right) 
		\left( \frac{R}{h} \right)^{j} \left( \frac{R}{h} \right)^{l} v_{S|R}(R;P),
	\right]
	= \kappa_{j+l,2}^{+} v_{S|R}(0^+;P)  f_R(0^+;P) + o_{\m{P}}(1),
\end{align*}
for every $j,l=0,\ldots, \rho$.
In fact,
\begin{align*}
&\frac{1}{h} \mme_P\left[
		K^2\left( \frac{R}{h} \right) 
		\left( \frac{R}{h} \right)^{j} \left( \frac{R}{h} \right)^{l} v_{S|R}(R;P)
	\right]
-	\kappa_{j+l,2}^{+} v_{S|R}(0^+;P)  f_R(0^+;P)
\\
 =& \int_0^{\infty}  K^2(u) u^{j+l} \left\{ 
 	\underbrace{ v_{S | R}(uh;P) f_{R}(uh;P) }_{\doteq g(uh;P) }  - 
 	\underbrace{ v_{S | R}(0^+;P) f_{R}(0^+;P)}_{= g(0^+;P) } \right\}
 ~ du
 \\
=& h \int_0^{\infty}  K^2(u)  u^{j+l} \nabla_r g(x^*_{uh};P)  ~ du = o_{\m{P}}(1),
\end{align*}
where we use the fact that $g(r;P)$ is differentiable wrt $r$ with derivative $\nabla_r g(r;P)$ bounded
over $(r,P)$, which is implied by bounded derivatives of $v_{S|R}(r;P)$ and $f_R(r;P)$ wrt $r$.

\item \underline{Assumption \ref{aspt:asy_linear} - \eqref{eq:aspt:asy_linear_mom}:} $ \sup_P \mme [\psi_n^2(\bZ_{k,i},P)]  < \infty .$

We have,
\begin{align*}
& \psi_n^2(\bZ_k,P)  = 
\frac{1}{ h  f_R^2(0^+;P) }  
K^2 \left( \frac{ X_k  }{h } \right) 
\left( {\be_1}' \bGam^{-1} \bHt_{k}^{\rho} \right)^2 
\left(Y_k - m_{S|R}(X_k;P) \right),
\end{align*}
where $\bHt_{k}^{\rho} 
= \left[ 1 ~~ \left( \frac{X_k}{h} \right) ~~ \cdots ~~ \left( \frac{X_k  }{h } \right)^{\rho } \right]'$,
which is a $ (\rho+1 \times 1)$  column vector.
It follows that
\begin{align*}
& \psi_n^2(\bZ_k,P)  = 
\frac{1}{ h  f_R^2(0^+;P) }  
K^2 \left( \frac{ X_k  }{h } \right) 
\left( {\be_1}' \bGam^{-1} \bHt_{k}^{\rho} \right)^2 
\\*
& \hspace{3cm} \left[ 
	\left(Y_{k} -  m_{Y_k|X_k}(X_k) \right) 
	+ \left(m_{Y_k|X_k}(X_k)  - m_{S|R}(X_k;P) \right)  
\right]^2 .
\\
&\mme\left[ \psi_n^2(\bZ_k,P) \right] = \frac{1}{ h  f_R^2(0^+;P) } \mme\left\{ 
	K^2 \left( \frac{ X_k  }{h } \right) 
	\left( {\be_1}' \bGam^{-1} \bHt_{k}^{\rho} \right)^2 	
	\right.
\\
& \hspace{5.5cm} \left. \left[
	v_{Y_k|X_k}(X_k) +  \left(m_{Y_k|X_k}(X_k)  - m_{S|R}(X_k;P) \right)^2
\right] \right\}
\\	
&= \frac{1}{ f_R^2(0^+;P) } 
	\int_0^{\infty} K^2(u)
	\left( {\be_1}' \bGam^{-1} \bHt^{\rho}(u) \right)^2 	
\\
& \hspace{2.5cm} \left[
	v_{Y_k|X_k}(uh) +  \left(m_{Y_k|X_k}(uh)  - m_{S|R}(uh;P) \right)^2
\right] 
f_{X_k}(uh)
~du
\\	
&= O_{\m{P}}(1).
\end{align*}
where $\bHt^{\rho}(u) = \left[ 1 ~~ \left( u \right) ~~ \cdots ~~  u^{\rho } \right]'$,
which is a $ (\rho+1 \times 1)$  column vector.

\item \underline{Assumption \ref{aspt:asy_linear} - \eqref{eq:aspt:asy_linear_lind}:} 
$(2+\zeta)$-th moment condition. 

First, $\mmv_P^{-1}\left(  \psi_n(\bV_i,P) \right) = O_{\m{P}}(1). $

Second, there exists $\zeta>0$ such that 
$n^{-\zeta/2} \mme_P \left|  \psi_n(\bV_i,P) \right| ^{2+\zeta} = o_{\m{P}}(1)$
because
$\eta(r;P) = O_{\m{P}}(1)$ by assumption.

Therefore,  
\begin{align*}
n^{-\zeta/2} \sup\limits_{P \in \m{P} }  
\mme_P\left| \frac{ \psi_n(\bV_i,P) }{ \sqrt{\mmv_P\left(  \psi_n(\bV_i,P) \right)} } \right|^{2 + \zeta} = o(1).
\end{align*}

\item \underline{Assumption \ref{aspt:asy_linear} - \eqref{eq:aspt:asy_linear_xicont}:} 
$\xi^2\left( \frac{m}{n} P_1 + \frac{n-m}{n} P_2 \right) \to \xi^2 \left( \gamma P_1 + (1-\gamma) P_2  \right)$.

Let $\overline{P}_n = \frac{m}{n} P_1 + \frac{n-m}{n} P_2$ and $\overline{P} = \gamma  P_1 + (1-\gamma ) P_2$.
We have that 
$v_{ S  | R }(0^+; \bar{P}_n ) \to v_{ S  | R }(0^+; \bar{P})$
and
$f_{R}(0^+;\bar{P}_n) \to f_{R}(0^+;\bar{P})$
as in Section \ref{proof:control_mean}.
This implies that 
$\xi^2(\overline{P}_n) \to \xi^2(\overline{P})$.

\item \underline{Assumption \ref{aspt:sampling} :} $ (n_1 /n - \lambda) \pto 0$. 

In this setting, $n_1/n = \sum_i \mmi\{ X_i \geq 0 \} /n$ and $(n_1 /n - \lambda) \pto 0$ holds.

Assumption  \ref{aspt:asy_linear} has already been verified above for any sequence $m_n$ such that $(m/n -\gamma) \to 0$ for arbitrary $\gamma \in (0,1)$.
In particular it holds for $\gamma \in \{ \lambda, 1-\lambda\}.$

\end{enumerate}

$\square$

\section{Monte Carlo Simulations}
\label{sec:app:montecarlo}

\indent 

This section contains additional figures and tables regarding the Monte Carlo simulations in the main text (Section \ref{sec:mc}).

\begin{figure}[H]
\caption{Conditional Mean Functions}
\label{fig:condmean}
\begin{center}
\includegraphics[width=5in]{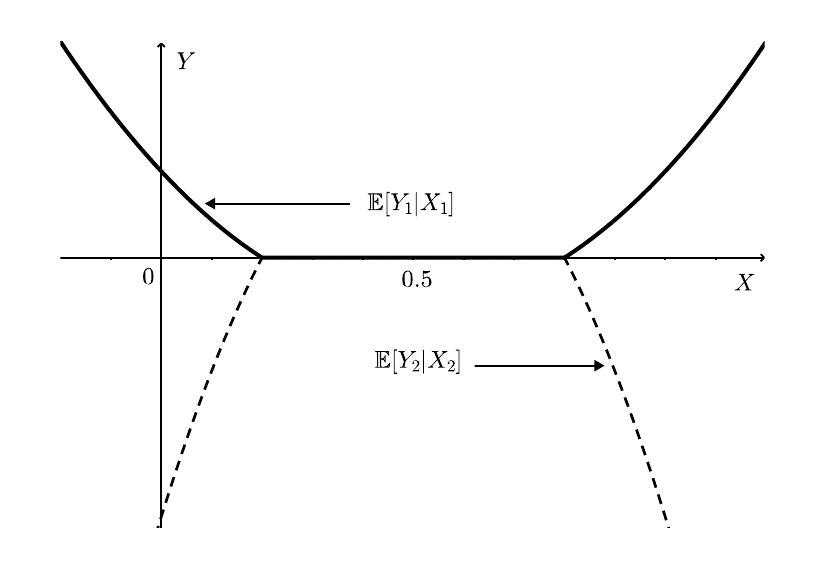}
\end{center}
\vspace{-.9cm}
\caption*{\footnotesize Notes: Conditional mean functions of Design 1.
The lines are equal and flat for  $|X-0.5| \leq  0.3$.
The conditional means  of Design 2 equal those of Design 1 shifted upwards by $1$.
}
\end{figure}

\begin{sidewaystable}[p]
\caption{Simulated Rejection Rates - 5\% Nominal Size}
\label{table:app:size}
\small
\begin{minipage}{18cm}
\centering

\begin{tabular}{c c c c|    c c c|     c c c|     c c c|     c c c  }  
\hline \hline 
\multicolumn{4}{c|}{\textbf{Design 1}}   &  \multicolumn{3}{c|}{$h=0.1$}  & \multicolumn{3}{c|}{$h=0.3$}   &  \multicolumn{3}{c|}{$h=0.5$} &  \multicolumn{3}{c}{$\ha{h}_{mse}$} 
\\ 
$\sigma_1^2$ & $\sigma_2^2$ & $n_1$ & $n_2$ & SP & SB & SS & SP & SB & SS & SP & SB & SS & SP & SB & SS 
\\ 
\hline 
 1 &  1 &  100 &  1900 &  0.049 &  0.100 &  0.137 &  0.046 &  0.061 &  0.091 &  0.147 &  0.140 &  0.244 &  0.045 &  0.067 &  0.090 
\\ 
 1 &  1 &  250 &  4750 &  0.047 &  0.067 &  0.062 &  0.048 &  0.054 &  0.062 &  0.321 &  0.277 &  0.376 &  0.046 &  0.057 &  0.056 
\\ 
 1 &  1 &  500 &  9500 &  0.048 &  0.057 &  0.043 &  0.048 &  0.053 &  0.053 &  0.569 &  0.495 &  0.606 &  0.048 &  0.055 &  0.048 
\\ 
\hline 
 1 &  1 &  40 &  1960 &  0.050 &  0.207 &  0.346 &  0.051 &  0.093 &  0.253 &  0.082 &  0.108 &  0.331 &  0.051 &  0.105 &  0.258 
\\ 
 1 &  1 &  100 &  4900 &  0.052 &  0.104 &  0.146 &  0.048 &  0.064 &  0.097 &  0.153 &  0.148 &  0.250 &  0.050 &  0.078 &  0.097 
\\ 
 1 &  1 &  200 &  9800 &  0.047 &  0.076 &  0.075 &  0.046 &  0.053 &  0.065 &  0.266 &  0.234 &  0.332 &  0.045 &  0.063 &  0.060 
\\ 
\hline 
 5 &  1 &  100 &  1900 &  0.068 &  0.104 &  0.146 &  0.056 &  0.063 &  0.094 &  0.073 &  0.073 &  0.137 &  0.056 &  0.065 &  0.095 
\\ 
 5 &  1 &  250 &  4750 &  0.059 &  0.071 &  0.056 &  0.053 &  0.053 &  0.060 &  0.104 &  0.097 &  0.134 &  0.053 &  0.052 &  0.056 
\\ 
 5 &  1 &  500 &  9500 &  0.054 &  0.058 &  0.037 &  0.051 &  0.053 &  0.051 &  0.164 &  0.147 &  0.183 &  0.052 &  0.050 &  0.046 
\\ 
\hline 
 5 &  1 &  40 &  1960 &  0.060 &  0.221 &  0.403 &  0.057 &  0.095 &  0.308 &  0.060 &  0.082 &  0.330 &  0.057 &  0.103 &  0.308 
\\ 
 5 &  1 &  100 &  4900 &  0.058 &  0.106 &  0.147 &  0.053 &  0.063 &  0.097 &  0.072 &  0.078 &  0.139 &  0.051 &  0.069 &  0.096 
\\ 
 5 &  1 &  200 &  9800 &  0.054 &  0.077 &  0.066 &  0.047 &  0.052 &  0.060 &  0.089 &  0.087 &  0.127 &  0.047 &  0.055 &  0.056 
\\ 
\hline \hline 
\end{tabular}

\end{minipage}

\vspace{0.25cm}

\begin{minipage}{18cm}
\centering

\begin{tabular}{c c c c|    c c c|     c c c|     c c c|     c c c  }  
\hline \hline 
\multicolumn{4}{c|}{\textbf{Design 2}}   &  \multicolumn{3}{c|}{$h=0.1$}  & \multicolumn{3}{c|}{$h=0.3$}   &  \multicolumn{3}{c|}{$h=0.5$} &  \multicolumn{3}{c}{$\ha{h}_{mse}$} 
\\ 
$\sigma_1^2$ & $\sigma_2^2$ & $\mu$ & $n_1=n_2$ & SP & SB & SS & SP & SB & SS & SP & SB & SS & SP & SB & SS 
\\ 
\hline 
 1 &  1 &  10 &  75 &  0.051 &  0.057 &  0.189 &  0.049 &  0.053 &  0.138 &  0.082 &  0.086 &  0.206 &  0.039 &  0.044 &  0.128 
\\ 
 1 &  1 &  10 &  150 &  0.051 &  0.048 &  0.096 &  0.050 &  0.053 &  0.088 &  0.132 &  0.131 &  0.205 &  0.045 &  0.050 &  0.084 
\\ 
 1 &  1 &  10 &  1000 &  0.049 &  0.052 &  0.043 &  0.049 &  0.051 &  0.057 &  0.606 &  0.560 &  0.636 &  0.047 &  0.049 &  0.055 
\\ 
\hline 
 1 &  1 &  1 &  75 &  0.053 &  0.483 &  0.535 &  0.050 &  0.366 &  0.451 &  0.070 &  0.363 &  0.457 &  0.045 &  0.336 &  0.429 
\\ 
 1 &  1 &  1 &  150 &  0.051 &  0.414 &  0.448 &  0.050 &  0.279 &  0.372 &  0.091 &  0.321 &  0.421 &  0.050 &  0.266 &  0.360 
\\ 
 1 &  1 &  1 &  1000 &  0.045 &  0.217 &  0.323 &  0.049 &  0.179 &  0.257 &  0.339 &  0.555 &  0.645 &  0.048 &  0.184 &  0.273 
\\ 
\hline 
 5 &  1 &  10 &  75 &  0.060 &  0.079 &  0.196 &  0.051 &  0.047 &  0.135 &  0.061 &  0.058 &  0.159 &  0.044 &  0.040 &  0.129 
\\ 
 5 &  1 &  10 &  150 &  0.058 &  0.049 &  0.098 &  0.054 &  0.051 &  0.084 &  0.076 &  0.072 &  0.123 &  0.045 &  0.044 &  0.076 
\\ 
 5 &  1 &  10 &  1000 &  0.048 &  0.048 &  0.037 &  0.048 &  0.047 &  0.054 &  0.254 &  0.222 &  0.277 &  0.049 &  0.049 &  0.053 
\\ 
\hline 
 5 &  1 &  1 &  75 &  0.066 &  0.432 &  0.467 &  0.058 &  0.274 &  0.347 &  0.070 &  0.241 &  0.331 &  0.054 &  0.228 &  0.309 
\\ 
 5 &  1 &  1 &  150 &  0.060 &  0.330 &  0.357 &  0.053 &  0.189 &  0.255 &  0.080 &  0.184 &  0.257 &  0.051 &  0.157 &  0.220 
\\ 
 5 &  1 &  1 &  1000 &  0.049 &  0.129 &  0.189 &  0.048 &  0.094 &  0.146 &  0.208 &  0.264 &  0.330 &  0.048 &  0.096 &  0.145 
\\ 
\hline \hline 
\end{tabular} 

\end{minipage} 
 \caption*{\footnotesize
Notes:  The table displays simulated rejection rates under the null hypothesis for the  studentized permutation (SP),
studentized bootstrap (SB), and studentized subsample (SS) tests.
Rows correspond to variations of Designs 1 and 2, as explained in the text.
The bandwidth choice $h$ equals one of three fixed values (0.1, 0.3, and 0.5) or the IK data-driven MSE-optimal bandwidth $\ha{h}_{mse}.$
}
\end{sidewaystable}

\begin{sidewaysfigure}[p]
    \centering
    \begin{minipage}{0.49\textwidth}
        \centering
        \caption{\label{figure:app:power_d1}Design 1 - Simulated Power Curves}
\begin{center}
  \begin{minipage}{.5 \textwidth}
    \centering
    (a) {\small $\sigma_1^2=\sigma_2^2 = 1$ \\ $n_1 / (n_1+n_2) = 0.05$ \\ SP(---) and $t$(- -)}

    \includegraphics[height=1.8in]{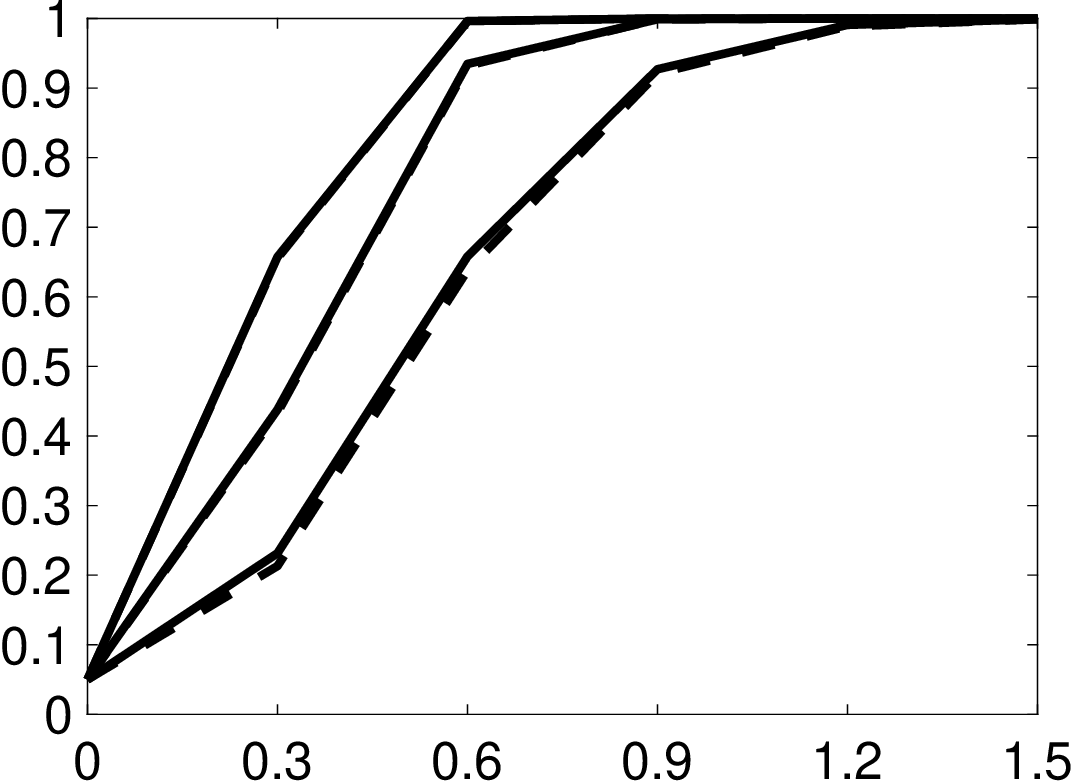}
  \end{minipage}%
  \begin{minipage}{.5 \textwidth}
    \centering
    (b) {\small $\sigma_1^2=\sigma_2^2 = 1$ \\ $n_1 / (n_1+n_2) = 0.02$ \\ SP(---) and $t$(- -)}

    \includegraphics[height=1.8in]{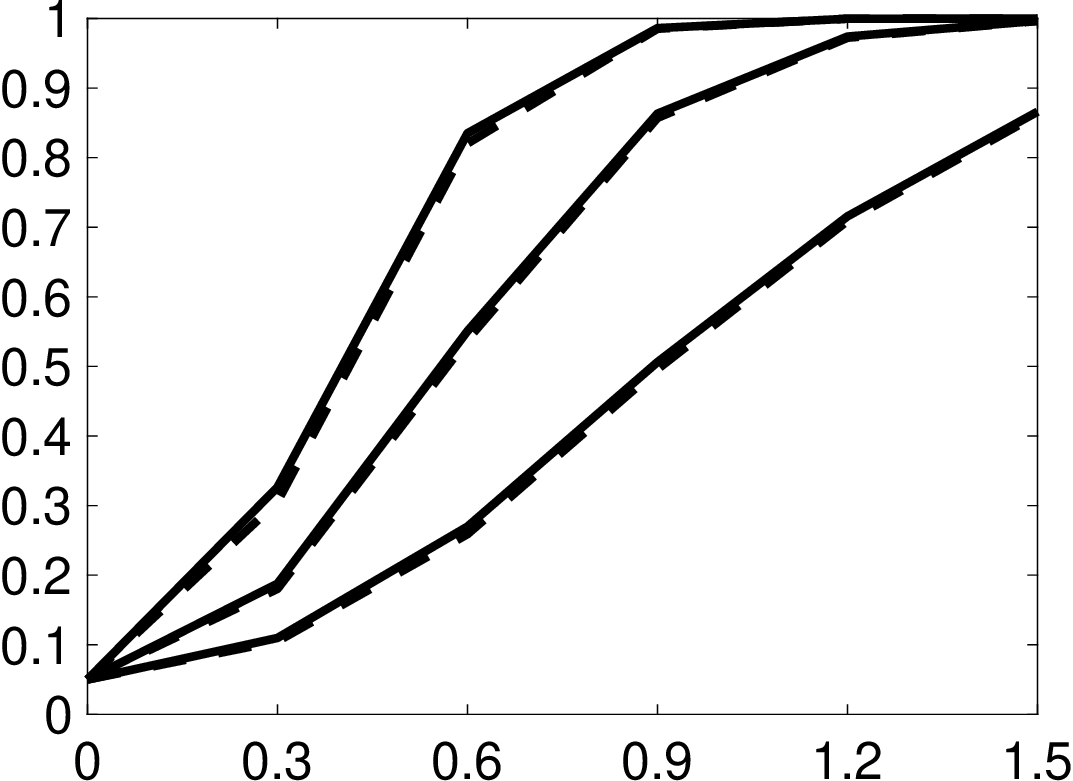}
  \end{minipage}%
\end{center}

\begin{center}
  \begin{minipage}{.5 \textwidth}
    \centering
    (c) {\small $\sigma_1^2= \sigma_2^2 = 1$ \\ $n_1 / (n_1+n_2) = 0.05$ \\ SP(---), SB(- -), and SS($\cdot\hspace{-.8mm}\cdot\hspace{-.8mm}\cdot$)}

    \includegraphics[height=1.8in]{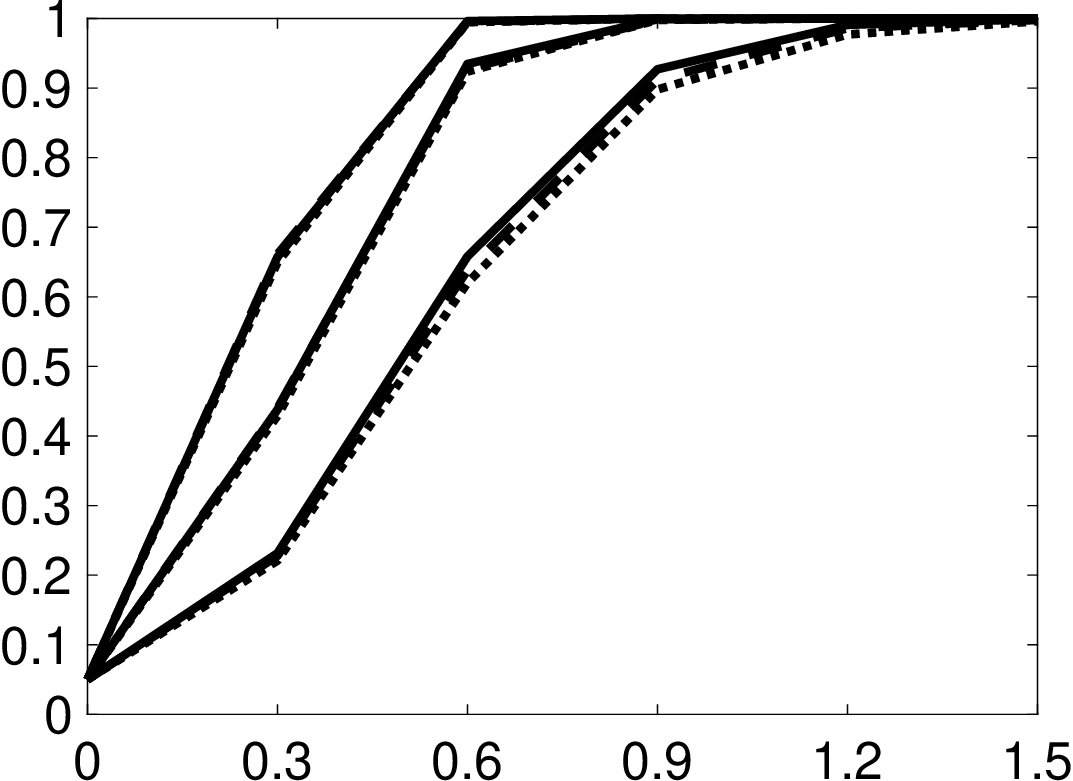}
  \end{minipage}%
  \begin{minipage}{.5 \textwidth}
    \centering
    (d) {\small $\sigma_1^2= \sigma_2^2 = 1$ \\ $n_1 / (n_1+n_2) = 0.02$ \\ SP(---), SB(- -), and SS($\cdot\hspace{-.8mm}\cdot\hspace{-.8mm}\cdot$)}

    \includegraphics[height=1.8in]{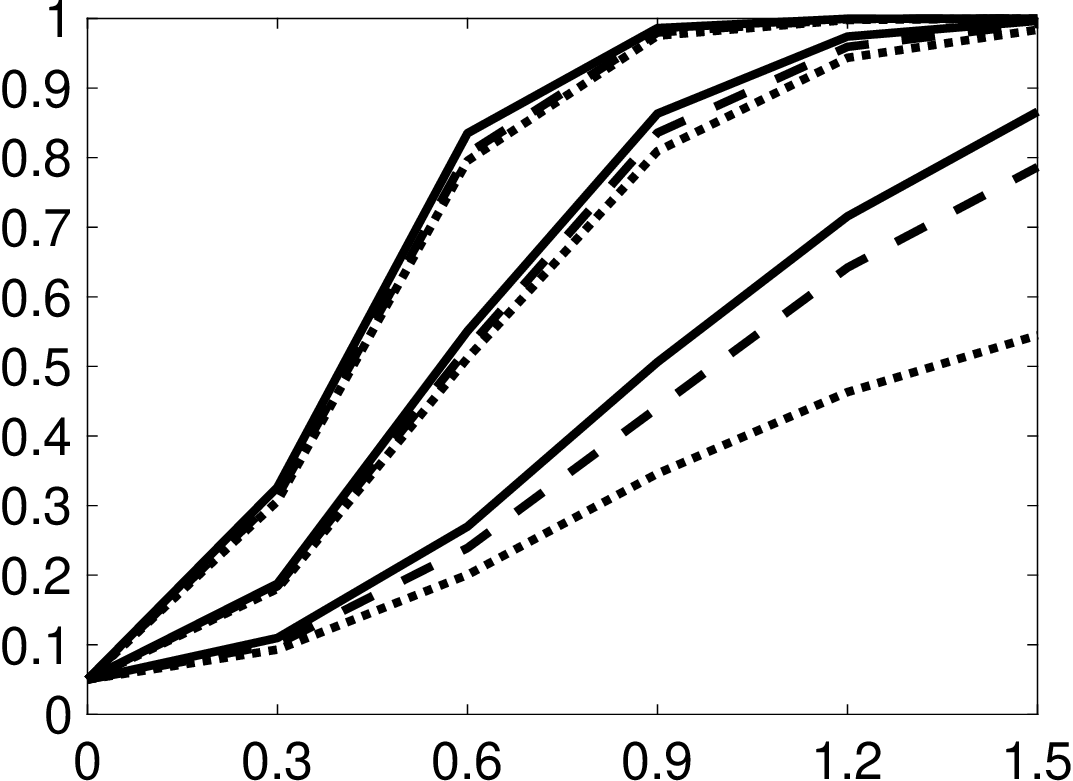}
  \end{minipage}%
\end{center}
\caption*{\footnotesize
Notes: Panels (a)--(b) compare power of the studentized permutation test (SP) with the $t$-test ($t$).
Panels (c)--(d) compare SP with the studentized bootstrap (SB) and studentized subsample (SS).
Darker lines correspond to bigger sample sizes, i.e., 
Panels (a) and (c): black $(n_1,n_2) = (500,9500)$,  dark gray $(n_1,n_2) = (250,4750)$, and light gray $(n_1,n_2) = (100,1900)$; 
Panels (b) and (d): black $(n_1,n_2) = (200,9800)$,  dark gray $(n_1,n_2) = (100,4900)$, and light gray $(n_1,n_2) = (40,1960)$.
The x-axis represents the difference $\theta(P_1)-\theta(P_2)$ and the y-axis, the simulated probability of rejection.
Sizes of all tests 
are artificially adjusted such that the simulated rejection under the null is always equal to 5\%.
All estimates use the IK MSE-optimal bandwidth $\ha{h}_{mse}$.
}
    \end{minipage} \hfill
    \begin{minipage}{0.49\textwidth}
        \centering
        \caption{\label{figure:app:power_d2}Design 2 - Simulated Power Curves}
\begin{center}
  \begin{minipage}{.5 \textwidth}
    \centering
    (a) {\small $\sigma_1^2= \sigma_2^2 = 1$ \\ $\mu=10$ \\ SP(---) and $t$(- -)}

    \includegraphics[height=1.8in]{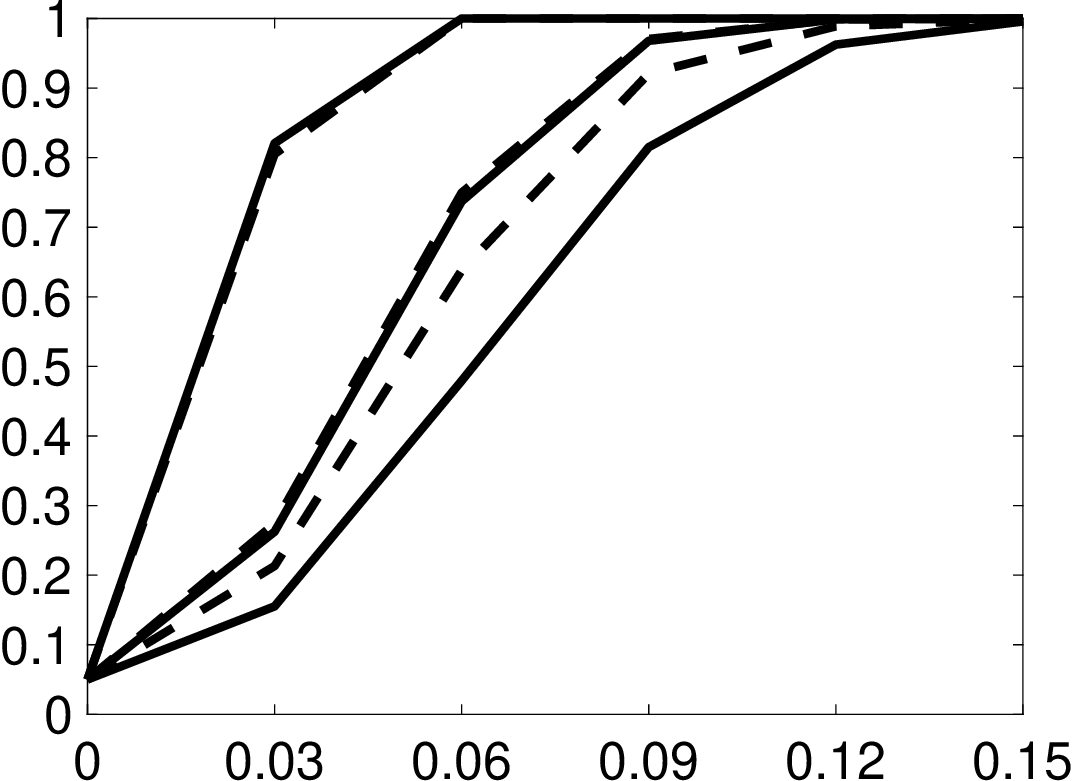}
  \end{minipage}%
  \begin{minipage}{.5 \textwidth}
    \centering
    (b) {\small $\sigma_1^2=\sigma_2^2 = 1$ \\ $\mu=1$ \\ SP(---) and $t$(- -)}

    \includegraphics[height=1.8in]{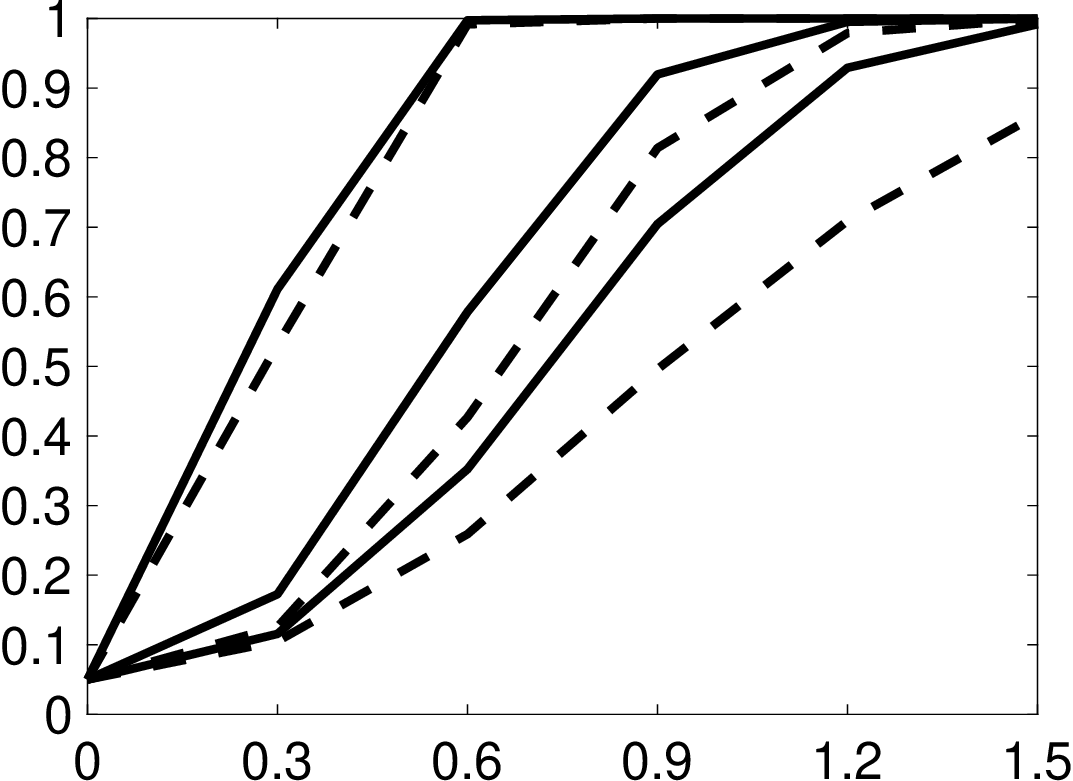}
  \end{minipage}%
\end{center}

\begin{center}
  \begin{minipage}{.5 \textwidth}
    \centering
    (c) {\small $\sigma_1^2= \sigma_2^2 = 1$ \\ $\mu=10$ \\ SP(---), SB(- -), and SS($\cdot\hspace{-.8mm}\cdot\hspace{-.8mm}\cdot$)}

    \includegraphics[height=1.8in]{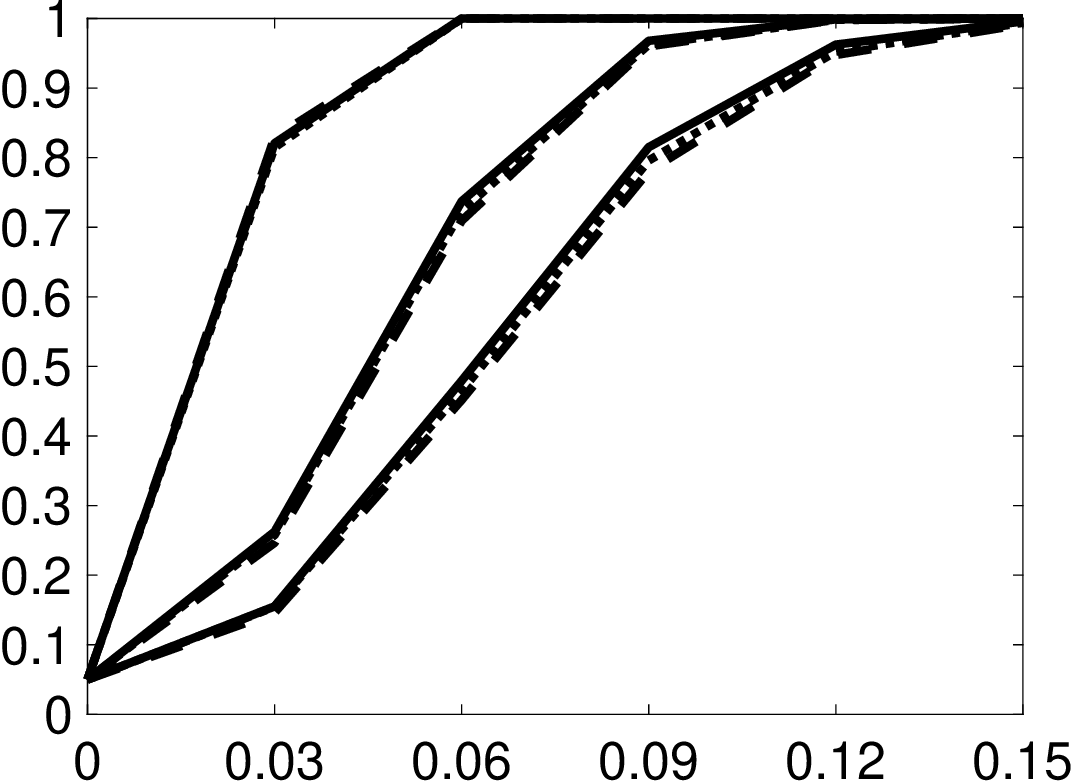}
  \end{minipage}%
  \begin{minipage}{.5 \textwidth}
    \centering
    (d) {\small $\sigma_1^2= \sigma_2^2 = 1$\\ $\mu=1$ \\ SP(---), SB(- -), and SS($\cdot\hspace{-.8mm}\cdot\hspace{-.8mm}\cdot$)}

    \includegraphics[height=1.8in]{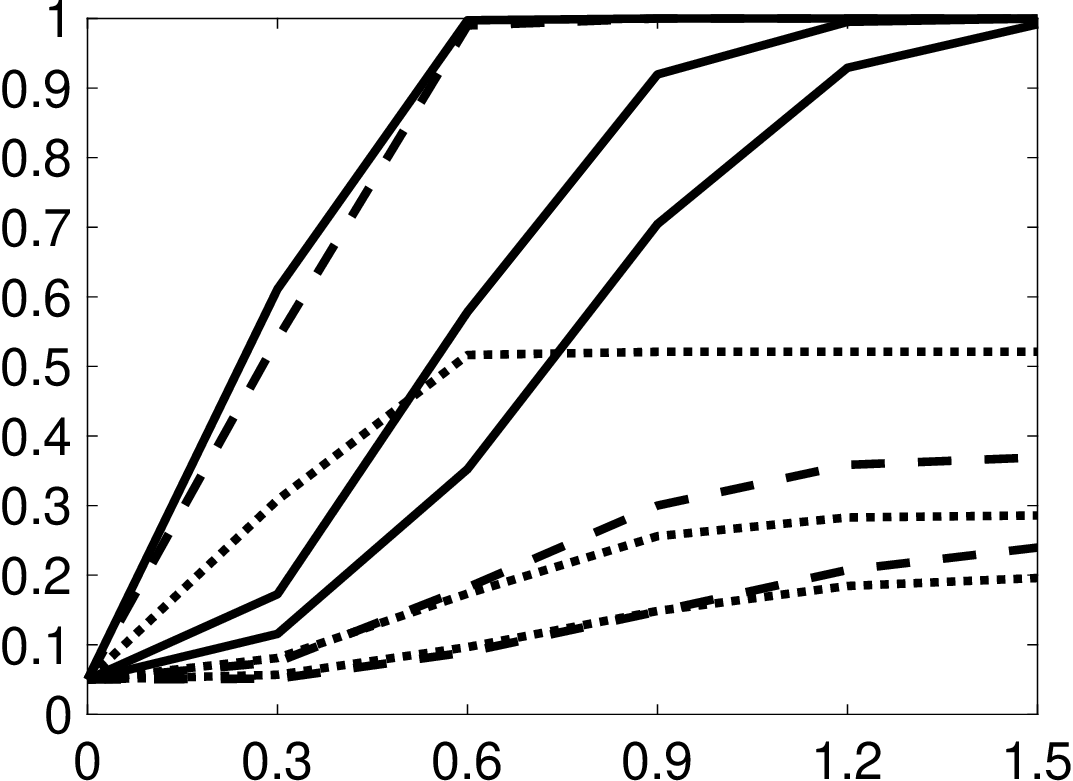}
  \end{minipage}%
\end{center}
\caption*{\footnotesize
Notes: Panels (a)--(b) compare power of the studentized permutation test (SP) with the $t$-test ($t$).
Panels (c)--(d) compare SP with the studentized bootstrap (SB) and studentized subsample (SS).
Darker lines correspond to bigger sample sizes, i.e., 
black $(n_1,n_2) = (1000,1000)$,  dark gray $(n_1,n_2) = (150,150)$, and light gray $(n_1,n_2) = (75,75)$.
The x-axis represents the difference $\theta(P_1)-\theta(P_2)$ and the y-axis, the simulated probability of rejection.
Sizes of all tests 
are artificially adjusted such that the simulated rejection under the null is always equal to 5\%.
All estimates use the IK MSE-optimal bandwidth $\ha{h}_{mse}$.
\\
\vspace{3mm}
}
    \end{minipage}
\end{sidewaysfigure}

\newpage

\section{Auxiliary Lemmas}
\label{sec:app:c}

\begin{definition}
\label{def:Oop}
Consider a sequence of measurable functions $X_n : \Omega \times \m{P} \to \mmr^q$ 
for a probability space $(\Omega, \m{B}, P)$, where $P$ belongs to a set of distributions $\m{P}$.
In other words, for any fixed $P_0 \in \m{P}$, $X_n(P_0)$ is a random variable defined on $(\Omega, \m{B}, P)$ to the Euclidean space, whose probability distribution depends on  $n$ and $P \in \m{P}$.
We say $X_n$ is uniformly bounded in probability over $\m{P}$ by a deterministic sequence $\alpha_n$ if, for every $\delta>0$, there exists a deterministic constant $M_{\delta} \in (0,\infty)$ such that
\[
\sup\limits_{P \in \m{P} } \mmp_P \left[ \left\| X_n(P) \right\| > M_{\delta} \alpha_n \right] < \delta .
\]
We denote this as $X_n = O_{\m{P}}( \alpha_n )$.
Similarly, we say $X_n = o_{\m{P}}( \alpha_n )$ if, for every $\eps,\delta>0$  $\exists n_{\eps,\delta}$ such that 
\[
\sup\limits_{P \in \m{P} } \mmp_P \left[ \left\| X_n(P) \right\| > \eps \alpha_n \right] < \delta ~~~ \forall n \geq n_{\eps,\delta}.
\]
 \end{definition}

\begin{lemma} 
\label{lemma:Oop}
Consider a deterministic sequence $\alpha_n$ and a sequence of bivariate random functions $(X_n,Y_n): \Omega \times \m{P} \to \mmr^2$ 
as in Definition \ref{def:Oop}.
\begin{enumerate}

\item If $X_n = O_{\m{P}}(\alpha_n)$  and $Y_n = o_{\m{P}}(1)$, then $X_n Y_n = o_{\m{P}}(\alpha_n)$.

\item  Assume $X_n(P)$ has expectation $\mme_P [X_n(P)]$ and variance $\mmv_P [X_n(P)]$ that are bounded over  $\m{P}$.
Then,
\[
X_n = O_{\m{P}}\left( \sup\limits_{P \in \m{P} }  \mme^2_P [X_n(P) ] +  \sup\limits_{P \in \m{P} }  \mmv_P [ X_n(P)]  \right).
\]

\item Suppose  there exists a deterministic function $y:\m{P} \to [\underline{y},\overline{y}]$ with $0<\underline{y} < \overline{y} <\infty$ such that $Y_n - y = o_{\m{P}}(1) $.
Then,
\[
\frac{1}{Y_n} - \frac{1}{y} = o_{\m{P}}(1).
\]

\end{enumerate} 
\end{lemma}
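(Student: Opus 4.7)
The plan is to prove each of the three parts by extracting uniformity over $\m{P}$ from standard pointwise arguments, making careful use of the $\sup_P$ in Definition \ref{def:Oop}.

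For part 1, I would fix arbitrary $\eps, \delta > 0$ and use the elementary bound
\begin{align*}
\mmp_P\left[ |X_n(P) Y_n(P)| > \eps \alpha_n \right]
\leq \mmp_P\left[ |X_n(P)| > M \alpha_n \right] + \mmp_P\left[ |Y_n(P)| > \eps/M \right],
\end{align*}
valid for any $M>0$. The hypothesis $X_n = O_{\m{P}}(\alpha_n)$ supplies some $M = M_{\delta/2}$ so that the first term is bounded by $\delta/2$ uniformly in $P$, and then $Y_n = o_{\m{P}}(1)$ applied with threshold $\eps/M$ makes the second term uniformly below $\delta/2$ for all sufficiently large $n$. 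Taking $\sup_P$ on both sides yields $X_n Y_n = o_{\m{P}}(\alpha_n)$. No real obstacle here; the key point is that $M$ is chosen from the uniform bound, not pointwise in $P$.

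For part 2, the natural tool is Markov's inequality applied to $X_n^2(P)$, combined with the identity $\mme_P[X_n^2(P)] = (\mme_P[X_n(P)])^2 + \mmv_P[X_n(P)]$. Setting $\beta_n = \sup_{P\in\m{P}}(\mme_P[X_n(P)])^2 + \sup_{P\in\m{P}}\mmv_P[X_n(P)]$, I would write
\begin{align*}
\mmp_P\left[ |X_n(P)| > M \beta_n^{1/2} \right]
\leq \frac{\mme_P[X_n^2(P)]}{M^2 \beta_n}
\leq \frac{1}{M^2},
\end{align*}
so taking $\sup_P$ on the left and choosing $M_\delta = \delta^{-1/2}$ delivers $X_n = O_{\m{P}}(\beta_n^{1/2})$, which is equivalent to the stated conclusion up to reinterpreting the rate (the paper's statement should be read as $O_{\m{P}}$ of a rate tied to these moment bounds).

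For part 3, I would write $1/Y_n - 1/y = (y - Y_n)/(Y_n\, y)$ and control the denominator via the fact that $y(P) \in [\underline{y}, \overline{y}]$ with $\underline{y}>0$. The main obstacle is that $Y_n(P)$ could approach zero, so I would split on the event $A_n(P) = \{|Y_n(P) - y(P)| \leq \underline{y}/2\}$. On $A_n(P)$, we have $Y_n(P) \geq \underline{y}/2$ so that $|1/Y_n - 1/y| \leq 2 |Y_n - y|/\underline{y}^2$, and thus
\begin{align*}
\mmp_P\left[ |1/Y_n - 1/y| > \eps \right]
\leq \mmp_P\left[ |Y_n - y| > \eps \underline{y}^2/2 \right]
+ \mmp_P\left[ A_n(P)^c \right].
\end{align*}
Both terms are uniformly $o(1)$ by $Y_n - y = o_{\m{P}}(1)$ (with thresholds $\eps\underline{y}^2/2$ and $\underline{y}/2$ respectively), yielding the claim after taking $\sup_P$.
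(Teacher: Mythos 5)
Your proof is correct, and in Parts 2 and 3 it takes a somewhat cleaner route than the paper's.

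Part 1 is essentially identical to the paper's argument, so nothing to add there. In Part 2, you apply Chebyshev directly — $\mmp_P[|X_n(P)| > M\beta_n^{1/2}] \le \mme_P[X_n^2(P)]/(M^2\beta_n) \le 1/M^2$ — which correctly yields $X_n = O_{\m{P}}(\beta_n^{1/2})$, and you are right to flag that this is not literally the rate $\beta_n$ stated in the lemma. The paper's own proof applies "Markov" to conclude $\mmp_P[|X_n(P)| > \delta^{-1/2}\mme_P(X_n^2(P))] \le \delta$, but as written this inequality only holds when $\mme_P(X_n^2(P)) \ge 1$: the Markov bound actually gives $\delta/\mme_P(X_n^2(P))$. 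Your Chebyshev version is the clean and unconditionally valid statement. The distinction is immaterial for how the lemma is invoked in the paper (the applications always have bounded moment expressions and only need $O_{\m{P}}(1)$), but your rate $\sqrt{\beta_n}$ is the one that the argument actually supports. In Part 3 you use a direct Lipschitz-type bound $|1/Y_n - 1/y| \le 2|Y_n-y|/\underline{y}^2$ on the good event $\{|Y_n - y| \le \underline{y}/2\}$, whereas the paper instead invokes uniform continuity of $u \mapsto 1/u$ on a fixed compact neighborhood of $[\underline{y},\overline{y}]$ to extract a $\gamma_\eps$. Both deliver the same conclusion with the same structure (bound the bad event, control the difference on the good event); your version is more elementary because it supplies the modulus of continuity explicitly rather than abstractly.
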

\begin{proof}
\textbf{Part 1:} 

Fix $\delta>0$ and $\eps>0$. There  exists $M_{\delta} \in (0,\infty)$ such that
\[
\sup\limits_{P \in \m{P} } \mmp_P \left[ \left| X_n(P) \right| > M_{\delta} \alpha_n \right] < \frac{\delta}{2} .
\]
There exists $\exists n_{\eps,\delta}$ such that
\[
\sup\limits_{P \in \m{P} } \mmp_P \left[ \left| Y_n(P) \right| > \frac{\eps}{M_{\delta}} \right] < \frac{\delta}{2} ~~~ \forall n \geq n_{\eps,\delta}.
\]
Given that $\left| X_n(P) Y_n(P) \right| > \eps \alpha_n$ implies $\left| X_n(P) \right| > M_{\delta} \alpha_n$ or $\left|Y_n(P) \right| > \eps / M_{\delta}$ we have
\begin{align*}
&\sup\limits_{P \in \m{P} } \mmp_P \left[ \left| X_n(P) Y_n(P) \right| > \eps \alpha_n \right] 
\\
&\leq
\sup\limits_{P \in \m{P} } \mmp_P \left[ \left| X_n(P) \right| > M_{\delta} \alpha_n	\right] 
+
\sup\limits_{P \in \m{P} } \mmp_P \left[	\left|Y_n(P) \right| > \eps / M_{\delta} \right] 
< \delta ~~~ \forall n \geq n_{\eps,\delta}.
\end{align*}

\bigskip

\textbf{Part 2:}

Call 
$A_n = \sup\limits_{P \in \m{P} } \mme^2_P [X_n(P)]  +  \sup\limits_{P \in \m{P} }  \mmv_P [ X_n(P)] $
and 
$B_n= \sup\limits_{P \in \m{P} }\mme_P\left( X_n^2(P) \right)$.
Note that
$A_n \geq B_n \geq   \mme_P\left( X_n^2(P) \right)$ for every $P \in \m{P}$.

For $\delta>0$ and $P\in \m{P}$,
\begin{align*}
&\mmp_P\left[ |X_n(P)| > \delta^{-1/2} A_n \right]
\\
&\leq
\mmp_P\left[ |X_n(P)| >  \delta^{-1/2} B_n  \right] 
\leq 
\mmp_P\left[ |X_n(P)| >  \delta^{-1/2} \mme_P\left( X_n^2(P) \right) \right] \leq \delta,
\end{align*}
where the last inequality is the Markov  inequality.
The result follows by taking the supremum of both sides.

 \bigskip
 
 \textbf{Part 3:}
 
Pick $\eta>0$ such that $\underline{y}-\eta>0$.
Define the set $A = \{y:  \underline{y}-\eta \leq y \leq \overline{y} +\eta \}$.
The function $g(y)=1/y$ is uniformly continuous over $A$.
Thus, for every $\eps>0$, there exists $\gamma_{\eps}>0$ such that $|g(y') - g(y)|>\eps \Rightarrow |y'-y|>\gamma_{\eps}$ for any $y',y \in A$. 
This implies that, for fixed $P$,
\begin{gather*}
\mmp_P\left[ \left| 1/Y_n(P) - 1/y(P) \right| > \eps, Y_n(P) \in A \right] \leq  \mmp_P\left[ \left| Y_n(P) - y(P) \right| > \gamma_\eps \right] \to 0.
\end{gather*}
This right-hand side is uniform over $\m{P}$ because $Y_n - y = o_{\m{P}}(1)$.
 
 Next, for fixed $P$, $|Y_n(P)-y(P)| < \eta \Rightarrow \underline{y} -\eta < Y_n(P)  < \overline{y} +\eta  \Rightarrow Y_n(P) \in A$.
 This implies that $\mmp_P [|Y_n(P)-y(P)| < \eta ] \leq \mmp_P[Y_n(P) \in A]  $
 and that
 $\mmp_P[Y_n(P) \in A ] \to 1 $ uniformly over $\m{P}$ because $Y_n - y = o_{\m{P}}(1)$ implies that $\mmp_P [|Y_n(P)-y(P)| < \eta ] \to 1$ uniformly over $\m{P}$.
 Then,
 \begin{align*}
 &\mmp_P\left[ \left| 1/Y_n(P) - 1/y(P) \right| > \eps \right] 
 -
 \mmp_P\left[ \left| 1/Y_n(P) - 1/y(P) \right| > \eps, Y_n(P) \in A \right]  
 \\
 &\leq 
 1 - \mmp_P\left[  Y_n(P) \in A \right]  
 \to 0
 \end{align*}
 uniformly over $\m{P}$.
 
\end{proof}

\begin{lemma}\label{lemma:lr:11.61}
Consider a sequence of random variables $X_n$ in the Euclidean space.
$X_n \pto X$ if, and only if, for every subsequence $X_{n_{k}}$ there exists a further subsequence $X_{n_{k_{j}}}$ such that $X_{n_{k_{j}}} \asto X $.
\end{lemma}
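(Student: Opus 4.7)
My plan is to prove this standard probability result in two directions, using as the main tools the Borel--Cantelli lemma for the forward direction and a contradiction argument for the reverse direction.

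For the forward direction ($\Rightarrow$), I would assume $X_n \pto X$. Given any subsequence $X_{n_k}$, observe that $X_{n_k} \pto X$ as well, because for every $\eps>0$, $\mmp[|X_{n_k} - X|>\eps] \to 0$ is inherited from $\mmp[|X_n - X|>\eps] \to 0$. Then I would extract a further subsequence $X_{n_{k_j}}$ by choosing indices $k_j$ such that
\[
\mmp\bigl[ |X_{n_{k_j}} - X| > 2^{-j} \bigr] < 2^{-j}.
\]
Since $\sum_j 2^{-j} < \infty$, Borel--Cantelli gives that, with probability one, the event $\{|X_{n_{k_j}} - X|>2^{-j}\}$ occurs only finitely often, which in turn forces $X_{n_{k_j}} \to X$ almost surely.

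For the reverse direction ($\Leftarrow$), I would argue by contraposition. Suppose $X_n$ does \emph{not} converge in probability to $X$. Then there exist $\eps>0$, $\delta>0$, and a subsequence $n_k$ such that $\mmp[|X_{n_k}-X|>\eps] \geq \delta$ for every $k$. By hypothesis applied to this subsequence, there is a further subsequence $X_{n_{k_j}}$ with $X_{n_{k_j}} \asto X$. Almost sure convergence implies convergence in probability, so $\mmp[|X_{n_{k_j}}-X|>\eps] \to 0$, contradicting $\mmp[|X_{n_{k_j}}-X|>\eps] \geq \delta$ for all $j$. Hence $X_n \pto X$.

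Neither direction involves a serious obstacle: the forward direction is just the classical subsequence extraction via Borel--Cantelli, and the reverse direction is a one-line contradiction. The only small subtlety is making explicit that convergence in probability passes to subsequences and that almost sure convergence implies convergence in probability, both of which are immediate. The whole argument is short enough to be written in a few lines.
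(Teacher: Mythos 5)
Your argument is the standard textbook proof (Borel--Cantelli extraction for the forward direction, contradiction for the converse), which is exactly the argument the paper delegates to by citing Theorem 2.3.2 of Durrett (2019) without reproducing it. The proposal is correct and matches that reference's approach.
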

\begin{proof}  

See the proof of Theorem 2.3.2 by \citeSM{durrett2019}.

\end{proof}

\begin{lemma}\label{lemma:cond:clt}
Consider a sequence of random variables $(Z_n,X_n)$, $n=1,2, \ldots$, and a random variable $Z$, all with domain on the measure space $(\Omega, \m{B}, \mu)$.
The image of $Z_n$ and $Z$ are in a Euclidean space, and the image of $X_n $ is $\m{X}_n$.
For a measurable function $F_n : \m{X}_n \to \mmr$, assume $F_n(X_n) \pto 0$. 
Moreover, suppose that for any nonrandom sequence $x_n \in \m{X}_n$ such that $F_n(x_n) \to 0$,   $Z_n$ conditional on $X_n=x_n$ converges in distribution to  $Z$.
Then, $Z_n \dto Z$ unconditionally. 

\end{lemma}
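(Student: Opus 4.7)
The plan is to verify convergence in distribution via the Portmanteau characterization: I will show that $\mme[g(Z_n)] \to \mme[g(Z)]$ for every bounded continuous function $g$. The key device is to reduce unconditional convergence in probability of a conditional-expectation sequence to a subsequential almost-sure argument, which then allows me to invoke the hypothesis (which is stated along nonrandom sequences $x_n$).

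Fix a bounded continuous $g$ with $\|g\|_\infty = M$, and let $h_n(x) \doteq \mme[g(Z_n)\mid X_n = x]$ be a regular version of the conditional expectation, so that $|h_n|\le M$ and $\mme[h_n(X_n)] = \mme[g(Z_n)]$. The first step is to prove that $h_n(X_n) \pto \mme[g(Z)]$. To see this, take an arbitrary subsequence $n_k$. Since $F_n(X_n)\pto 0$, Lemma \ref{lemma:lr:11.61} produces a further subsequence $n_{k_j}$ along which $F_{n_{k_j}}(X_{n_{k_j}}) \asto 0$. For almost every $\omega$ in this event, the deterministic sequence $x_{n_{k_j}} \doteq X_{n_{k_j}}(\omega)$ satisfies $F_{n_{k_j}}(x_{n_{k_j}}) \to 0$, so the hypothesis gives $Z_{n_{k_j}} \mid X_{n_{k_j}} = x_{n_{k_j}} \dto Z$ and hence $h_{n_{k_j}}(x_{n_{k_j}}) \to \mme[g(Z)]$ by continuity and boundedness of $g$. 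Thus $h_{n_{k_j}}(X_{n_{k_j}}) \asto \mme[g(Z)]$, and applying Lemma \ref{lemma:lr:11.61} once more in the reverse direction yields $h_n(X_n) \pto \mme[g(Z)]$.

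The second step is a direct application of the bounded convergence theorem: since $|h_n(X_n)| \le M$ and $h_n(X_n) \pto \mme[g(Z)]$, we obtain
\[
\mme[g(Z_n)] \;=\; \mme[h_n(X_n)] \;\longrightarrow\; \mme[g(Z)].
\]
As $g$ was an arbitrary bounded continuous function, the Portmanteau theorem gives $Z_n \dto Z$ unconditionally.

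The main obstacle is the one handled in the first step: the hypothesis only controls conditional distributions along nonrandom sequences $x_n$ on which $F_n(x_n)\to 0$, but $X_n$ is random and we only know $F_n(X_n)\pto 0$. The subsequence-of-a-subsequence trick converts the probability-convergence hypothesis into the almost-sure realization needed to plug into the deterministic hypothesis $\omega$ by $\omega$. A minor technical point I will want to be careful about is choosing a jointly measurable version of $h_n(x)$ so that $h_n(X_n)$ is a bona fide random variable and the identity $\mme[h_n(X_n)] = \mme[g(Z_n)]$ is unambiguous; this is standard for regular conditional expectations on Polish spaces and requires no additional structure beyond what is already implicit in the setup.
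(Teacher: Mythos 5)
Your proposal is correct and follows essentially the same route as the paper: both extract a subsequence along which $F_n(X_n)\asto 0$ via Lemma \ref{lemma:lr:11.61}, feed the resulting nonrandom sequences into the conditional-convergence hypothesis pointwise in $\omega$, and then pass the limit through the expectation by dominated/bounded convergence before invoking the subsequence principle to conclude for the full sequence. The only cosmetic difference is that you phrase convergence in distribution via bounded continuous test functions (Portmanteau) and an intermediate $h_n(X_n)\pto \mme[g(Z)]$ step, whereas the paper works directly with the conditional CDFs $G_n(z;\omega)$ and integrates the indicator of $\{Z_n\le z\}$; your explicit flag about choosing a jointly measurable regular conditional expectation is a minor but welcome refinement that the paper glosses over.
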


\begin{proof}

For an arbitrary subsequence $n_k$ of the sequence $\left\{ F_n(X_n )\right\}_n$,
 Lemma \ref{lemma:lr:11.61}
says there is a further subsequence $n_{k_j}$ such that 
$ F_{n_{k_j}}(X_{n_{k_j}} )  \asto 0$ as $j\to\infty$.
Define $\mmx^{\infty}$ to be the space of all subsequences of the form 
$\left\{  x_{n_{k_{j}}}  \right\}_{j=1}^{\infty}$, for values $x_{n_{k_j}} \in \m{X}_{n_{k_j}}$ that satisfy
$F_{n_{k_j}}(x_{n_{k_j}} )  \to 0$  as $j\to\infty$.
We know that 
$\mmp\left[ \{ X_{n_{k_j}} \}^{\infty}_{j=1}  \in \mmx^\infty \right]=1$.

Let $G_{n}$ be the CDF of the distribution of 
$ Z_{n} $
conditional on ${X}_{n} = {x}_{n}$, where $x_n$ is an arbitrary sequence that satisfies $F_n(x_n) \to 0$.
By assumption, $G_n \to G$ pointwise (for every continuity point of $G$), where $G$ is the CDF of $Z$.
This implies that $G_{n_{k_j}} \to G$ for the subsequence $\{n_{k_j}\}_j$ from above.

Next, 
\begin{align*}
\mmp\left[ Z_{n_{k_j}} \leq z \right] 
=&
\int_{\Omega} \mmp\left[\left. Z_{n_{k_j}} \leq z \right| {X}_{n_{k_j}} = {X}_{n_{k_j}}(\omega)  \right]  ~ d \mu(\omega)
\\
=&
\int_{\Omega} G_{n_{k_j}}(z;\omega) 
\mmi\left\{ \omega : \{ X_{n_{k_j}} (\omega) \}^{\infty}_{j=1}  \in \mmx^\infty \right\} ~ d \mu(\omega),
\end{align*}
where $G_{n_{k_j}}(z;\omega)$ denotes the CDF of $ Z_{n_{k_j}} $
conditional on ${X}_{n_{k_j}} = {X}_{n_{k_j}}(\omega)$,
and we used that  $\mmp\left[\omega :  \{ X_{n_{k_j}} (\omega) \}^{\infty}_{j=1}  \in \mmx^\infty \right]=1$.

For each $\omega$ such that the indicator is $1$, the conditional CDF $G_{n_{k_j}}(z;\omega)$ converges to $G(z)$ as $j \to \infty$.
 So, we have an integral of a measurable function of $\omega$ that changes with $j$, is bounded by $1$, and converges pointwise in $\omega$ 
 to $G(z) \mmi\left\{ \omega : \{ X_{n_{k_j}} (\omega) \}^{\infty}_{j=1}  \in \mmx^\infty \right\} $ as $j \to \infty$.
 By the dominated convergence theorem, the integral above converges to 
 $G(z)  \mmp\left[ \{ X_{n_{k_j}} \}^{\infty}_{j=1}  \in \mmx^\infty \right] =G(z).$
 This says that $Z_{n_{k_j}} \dto Z$.

Finally, call $H_{n} $  the unconditional CDF of  $Z_{n}$.
We have just shown that, for every subsequence $\{ n_k \}_k$ there exists a further subsequence $\{ n_{k_j} \}_j$ such that 
$H_{n_{k_j}}(z) \to G(z) $.
This implies that  $H_{n}(z) \to G(z) $.
Therefore, $Z_n \dto Z$.

\end{proof}

The next lemma is a generalization of  Lemma 11.3.3 of \citeSM{lehmann2005}.

\begin{lemma}\label{lemma:clt} 
For each $n$, let $Y_{n,1}, \ldots, Y_{n,n}$ be independently identically distributed with mean zero and finite variance $\sigma^2_n$.
Let $C_{n,1}, \ldots, C_{n,n}$ be a sequence of random variables, independent of $Y_{n,1}, \ldots, Y_{n,n}$.
Assume $\exists \zeta>0$ such that 
\begin{align}
& \mme\left[ | Y_{n,i}/\sigma_n |^{2 + \zeta} \right]  \left( \frac{\max_{i=1}^n C_{n,i}^2 }{\sum_{l=1}^n C_{n,l}^2} \right)^{\zeta/2} \pto 0 \text{ as }n \to \infty.
\end{align}
Then
\begin{gather}
\frac{\sum_{i=1}^n C_{n,i} Y_{i,n} }{ \sigma_n \sqrt{ \sum_{l=1}^n C_{n,l}^2 } }\dto N(0,1).
\end{gather}

\end{lemma}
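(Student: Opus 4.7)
\textbf{Proof plan for Lemma \ref{lemma:clt}.} The plan is to condition on the random weights and apply a classical Lyapunov CLT to a triangular array of independent summands, then remove the conditioning via Lemma \ref{lemma:cond:clt}. Set
\[
Z_n \;=\; \frac{\sum_{i=1}^n C_{n,i} Y_{n,i}}{\sigma_n \sqrt{\sum_{l=1}^n C_{n,l}^2}}, \qquad X_n \;=\; (C_{n,1},\ldots,C_{n,n}),
\]
and define the deterministic functional
\[
F_n(c_1,\ldots,c_n) \;=\; \mme\!\left[\bigl|Y_{n,1}/\sigma_n\bigr|^{2+\zeta}\right] \left(\frac{\max_{i} c_i^2}{\sum_{l} c_l^2}\right)^{\zeta/2}.
\]
By hypothesis, $F_n(X_n)\pto 0$. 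The key reduction is to show that $F_n(c_n)\to 0$ along nonrandom sequences $c_n$ forces the Lyapunov condition for the weighted sum conditional on $C = c_n$. I will exhibit this reduction first, then invoke the conditional CLT Lemma \ref{lemma:cond:clt}.

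First, fix a nonrandom sequence $c_n = (c_{n,1},\ldots,c_{n,n})$ with $F_n(c_n)\to 0$ (in particular $\sum_l c_{n,l}^2 >0$ eventually). Conditional on $X_n = c_n$, the summands $c_{n,i} Y_{n,i}$ are independent (since $C$ and $Y$ are independent by assumption), mean zero, with total variance $\sigma_n^2 \sum_l c_{n,l}^2$. For the Lyapunov ratio with exponent $2+\zeta$,
\[
L_n(c_n) \;=\; \frac{\sum_{i} \mme\!\bigl[|c_{n,i} Y_{n,i}|^{2+\zeta}\bigr]}{\bigl(\sigma_n^2 \sum_{l} c_{n,l}^2\bigr)^{(2+\zeta)/2}}
\;=\; \mme\!\left[\bigl|Y_{n,1}/\sigma_n\bigr|^{2+\zeta}\right] \cdot \frac{\sum_{i}|c_{n,i}|^{2+\zeta}}{\bigl(\sum_{l} c_{n,l}^2\bigr)^{(2+\zeta)/2}}.
\]
Using the elementary bound $\sum_i |c_{n,i}|^{2+\zeta}\le (\max_i |c_{n,i}|)^\zeta \sum_i c_{n,i}^2$, the second factor is at most $(\max_i c_{n,i}^2/\sum_l c_{n,l}^2)^{\zeta/2}$, so $L_n(c_n) \le F_n(c_n) \to 0$. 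Lyapunov's CLT then yields $Z_n \mid X_n = c_n \;\dto\; N(0,1)$.

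With the conditional CLT in hand, apply Lemma \ref{lemma:cond:clt} with the function $F_n$ defined above: the hypothesis of the lemma gives $F_n(X_n)\pto 0$, and we have just verified that for every nonrandom sequence $c_n$ with $F_n(c_n)\to 0$ the conditional law of $Z_n$ given $X_n = c_n$ converges to $N(0,1)$. The conclusion of Lemma \ref{lemma:cond:clt} delivers $Z_n\dto N(0,1)$ unconditionally, as required.

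\textbf{Main obstacle.} The step requiring care is verifying the Lyapunov bound $L_n(c_n)\le F_n(c_n)$ in a way that decouples the randomness of the weights from the moments of the $Y$'s; because the $Y_{n,i}$ are iid within row $n$, the moment $\mme[|Y_{n,1}/\sigma_n|^{2+\zeta}]$ factors cleanly out of the Lyapunov ratio, and the residual weight-dependent factor is dominated by $(\max c_i^2/\sum c_l^2)^{\zeta/2}$. The only subtlety is that the hypothesis is stated in probability rather than almost surely, which is precisely what Lemma \ref{lemma:cond:clt} is designed to accommodate, so no subsequence extraction argument is needed beyond what is already embedded in that lemma.
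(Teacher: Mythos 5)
Your proof is correct and follows essentially the same route as the paper: reduce to nonrandom weights via Lemma \ref{lemma:cond:clt}, then apply a CLT for triangular arrays conditionally. The only cosmetic difference is that you verify Lyapunov's condition directly (which is slightly cleaner here, since the bound $\sum_i |c_{n,i}|^{2+\zeta} \le (\max_i c_{n,i}^2)^{\zeta/2}\sum_l c_{n,l}^2$ factors the moment out immediately), whereas the paper verifies the Lindeberg condition by an equivalent $(2+\zeta)$-moment estimate.
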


\begin{proof} 

We use Lemma \ref{lemma:cond:clt} to prove this lemma.
Call $\alpha_n = \mme\left[ | Y_{n,i}/\sigma_n |^{2 + \zeta} \right]$.
Mapping the lemma's notation to our case, we have
\begin{gather*}
Z_n = \frac{ \sum_{i=1}^{n} C_{n,i} Y_{n,i} } {\sigma_n \sqrt{ \sum_{l=1}^{n } C_{n,l}^2 } },
\\
X_n =\left(C_{n,1}, \ldots, C_{n,n} \right) ,
\\
\m{X}_n = \mmr^n,
\\
F_n(X_n) =  \alpha_n \left( \frac{\max_{1 \leq i \leq n} C_{n,i}^2 }{ \sum_{j=1}^n C_{n,j}^2} \right)^{\zeta/2}.
\end{gather*}
Then, it suffices to derive the limiting distribution of the sequence of distributions of \\
$\sum_{i=1}^{n} C_{n,i} Y_{n,i} / \sqrt{ \sum_{l=1}^{n } C_{n,l}^2 }$
conditional on $\left(C_{n,1}, \ldots, C_{n,n} \right)  = \left(c_{n,1}, \ldots, c_{n,n} \right) $, where 
the values $\left(c_{n,1}, \ldots, c_{n,n} \right)$ come from an arbitrary  triangular array with infinitely many rows that satisfies
$\alpha_n  \left( {\max_{i=1}^n c_{n,i}^2 }/{ \sum_{j=1}^n c_{n,j}^2} \right)^{\zeta/2}$ $\to 0 $ as $n \to \infty$.

For each $n$, 
we have a sum of a triangular array of random variables $C_{n,i}  Y_{n, i}$ that are independent  across $i=1, \ldots, n$ 
once we condition on $\left(C_{n,1}, \ldots, C_{n,n} \right)$.
By assumption,   $C_{n,1}, \ldots, C_{n,n}$ is  independent of $Y_{n,1}, \ldots, Y_{n,n}$ for every $n$.
Thus, 
\begin{gather}
\mme \left[ C_{n,i} Y_{n, i}   ~|~ \left(C_{n,1}, \ldots, C_{n,n} \right)  = \left(c_{n,1}, \ldots, c_{n,n} \right) \right] = 0,
\\ 
\mmv\left[  C_{n,i} Y_{n, i} ~|~ \left(C_{n,1}, \ldots, C_{n,n} \right)  = \left(c_{n,1}, \ldots, c_{n,n} \right) \right] =  c_{n,i}^2 \sigma^2_{n}
\end{gather}
for $i=1, \ldots, n$.
The sum of the variances is  $s_{n}^2 = \sigma^2_{n} \sum_{i=1}^{n} c_{n,i}^2 $.

For $n=1,2, \ldots$, the sequence of distributions of 
$ \sum_{i=1}^{n} C_{n ,i} Y_{n , i}  / \sigma_{n }  \sqrt{ \sum_{l=1}^{n} C_{n,l}^2 }$
conditional on $\left(C_{n,1}, \ldots, C_{n,n} \right)  = \left(c_{n,1}, \ldots, c_{n,n} \right) $
is the same as the sequence of distributions of

\noindent $Z_{n} = \sum_{i=1}^{n} c_{n,i} Y_{n, i}/  \sigma_{n}  \sqrt{ \sum_{l=1}^{n} c_{n,l}^2 }$.

We apply the Lindeberg CLT to derive the limiting distribution of $Z_{n}$.
We need to verify the Lindeberg condition, that is, for any $\delta>0$
\begin{gather}
\sum_{i=1}^{n } \frac{1}{s_{n }^2} \mme\left[c_{{n },i}^2 Y_{n, i}^2 \mmi\{| c_{{n},i} Y_{n, i} | > \delta s_{n} \} \right] \to 0 \text{ as } n \to \infty
\\
\Leftrightarrow \notag
\\
\sum_{i=1}^{n } \frac{ c_{{n},i}^2 }{\sigma_{n }^2 \sum_{l=1}^{n} c_{{n},l}^2} \mme\left[ Y_{n, i}^2  
\mmi\left\{ c_{{n},i}^2 Y_{n, i}^2  > \delta^2 \sigma_{n}^2  \sum_{l=1}^{n} c_{{n},l}^2 \right\} \right] \to 0
\\
\Leftarrow \notag
\\
\sum_{i=1}^{n} \frac{ c_{{n},i}^2 }{ \sum_{l=1}^{n} c_{{n},l}^2} \mme\left[ \frac{ Y_{n, i}^2 }{\sigma_{n}^2}  
\mmi\left\{  \frac{ Y_{n, i}^2 }{\sigma_{n}^2}  > \delta^2 \frac{  \sum_{l=1}^{n} c_{{n},l}^2}{ \max\limits_{1 \leq i \leq n} c_{{n},i}^2}  \right\} \right] \to 0
\\
\Leftrightarrow \notag
\\
\mme\left[ \frac{ Y_{n, i}^2 }{\sigma_{n}^2}  
\mmi\left\{  \frac{ Y_{n, i}^2 }{\sigma_{n}^2}   > \delta^2 \frac{   \sum_{l=1}^{n } c_{{n },l}^2}{ \max\limits_{1 \leq i \leq n } c_{{n },i}^2}  \right\} \right] \to 0,
\label{eq:lemma:clt:lind}
\end{gather}
where we used that
$\mmi\left\{ c_{{n },i}^2 Y_{n , i}^2  > \delta^2 \sigma_{n }^2  \sum_{l=1}^{n } c_{{n },l}^2 \right\} 
 \leq 
 \mmi\left\{  \frac{ Y_{n , i}^2 }{\sigma_{n }^2}  > \delta^2 \frac{ \sum_{l=1}^{n } c_{{n },l}^2}{ \max\limits_{1 \leq i \leq n } c_{{n },i}^2}  \right\} $
 and that

\noindent
 $\mme\left[ \frac{ Y_{n , i}^2 }{\sigma_{n }^2}
\mmi\left\{  \frac{ Y_{n , i}^2 }{\sigma_{n }^2}  > \eps^2 \frac{  \sum_{l=1}^{n } c_{{n },l}^2}{ \max\limits_{1 \leq i \leq n } c_{{n },i}^2}  \right\} \right]$ does not depend on $i$.
Thus, it suffices to verify Equation \ref{eq:lemma:clt:lind}. 

Note that,
\begin{align*} 
\left| \frac{ Y_{n, i} }{\sigma_{n} } \right|^{2+\zeta} 
	& \geq 
	\left| \frac{ Y_{n, i} }{\sigma_{n} } \right|^{2+\zeta}
	\mmi\left\{  \frac{ Y_{n, i}^2 }{\sigma_{n}^2}   > \delta^2 \frac{   \sum_{l=1}^{n } c_{{n },l}^2}{ \max\limits_{1 \leq i \leq n } c_{{n },i}^2}  \right\} 
\\
	& = 
	 \frac{ Y_{n, i}^2 }{\sigma_{n}^2 }  \left| \frac{ Y_{n, i} }{\sigma_{n} } \right|^{\zeta}
	\mmi\left\{  \frac{ Y_{n, i}^2 }{\sigma_{n}^2}   > \delta^2 \frac{   \sum_{l=1}^{n } c_{{n },l}^2}{ \max\limits_{1 \leq i \leq n } c_{{n },i}^2}  \right\} 
\\
	& \geq 
	 \frac{ Y_{n, i}^2 }{\sigma_{n}^2 }  
	 \left(
	 	\delta^2 \frac{   \sum_{l=1}^{n } c_{{n },l}^2}{ \max\limits_{1 \leq i \leq n } c_{{n },i}^2} 
	 \right)^{\zeta/2}
	\mmi\left\{  \frac{ Y_{n, i}^2 }{\sigma_{n}^2}   > \delta^2 \frac{   \sum_{l=1}^{n } c_{{n },l}^2}{ \max\limits_{1 \leq i \leq n } c_{{n },i}^2}  \right\}.
\end{align*}
Re-arranging,
\begin{align*}
\left(
	  \frac{ \max\limits_{1 \leq i \leq n } c_{{n },i}^2} { \delta^2  \sum_{l=1}^{n } c_{{n },l}^2} 
\right)^{\zeta/2}
\mme \left| \frac{ Y_{n, i} }{\sigma_{n} } \right|^{2+\zeta} 
	& \geq  \mme\left[
	 	\frac{ Y_{n, i}^2 }{\sigma_{n}^2 }  
		\mmi\left\{  \frac{ Y_{n, i}^2 }{\sigma_{n}^2}   > \delta^2 \frac{   \sum_{l=1}^{n } c_{{n },l}^2}{ \max\limits_{1 \leq i \leq n } c_{{n },i}^2}  \right\}
	\right],
\\
\delta^{-\zeta}
\left(
	  \frac{ \max\limits_{1 \leq i \leq n } c_{{n },i}^2} {   \sum_{l=1}^{n } c_{{n },l}^2} 
\right)^{\zeta/2}
\alpha_n
	& \geq  \mme\left[
	 	\frac{ Y_{n, i}^2 }{\sigma_{n}^2 }  
		\mmi\left\{  \frac{ Y_{n, i}^2 }{\sigma_{n}^2}   > \delta^2 \frac{   \sum_{l=1}^{n } c_{{n },l}^2}{ \max\limits_{1 \leq i \leq n } c_{{n },i}^2}  \right\}
	\right]	
\end{align*}
and the left-hand side converges to zero, which implies the right-hand side converges to zero and \eqref{eq:lemma:clt:lind} holds.

Therefore,
\begin{gather*}  
\frac{ \sum_{i=1}^{n } c_{n ,i} Y_{n , i} }{ \sigma_{n }  \sqrt{ \sum_{l=1}^{n } c_{n ,l}^2 } } \dto N(0,1)
\end{gather*}
and Lemma \ref{lemma:cond:clt} implies that
\begin{gather*}  
\frac{ \sum_{i=1}^{n } C_{n ,i} Y_{n , i} }{ \sigma_{n }  \sqrt{ \sum_{l=1}^{n } C_{n ,l}^2 } } \dto N(0,1).
\end{gather*}

\end{proof}

\begin{lemma}\label{lemma:unifcdf}
Consider a sequence of random CDFs $\{ F_n \}_n$ that converges pointwise in probability to a continuous CDF $F$, that is, for every $x \in \mmr$, $F_n(x) \pto F(x)$.
Then, the convergence is also uniform, namely,
\[
\sup_x | F_n(x) - F(x) | \pto 0.
\] 
\end{lemma}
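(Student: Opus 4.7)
The plan is to stochasticize the classical Polya--type argument: pointwise convergence of monotone functions to a continuous limit forces uniform convergence, and the argument reduces the uniform statement to a maximum over finitely many points. Since $F_n$ is random here, each deterministic ``max over finitely many $x$'' bound will be promoted to a statement in probability.

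First, I would fix an arbitrary $\varepsilon>0$. Using continuity of $F$ together with $\lim_{x\to-\infty}F(x)=0$ and $\lim_{x\to\infty}F(x)=1$, I would choose a finite grid $-\infty = x_0 < x_1 < \cdots < x_K = +\infty$ (with $F(x_0)=0$, $F(x_K)=1$ interpreted as limits) such that $F(x_k)-F(x_{k-1}) < \varepsilon$ for every $k=1,\ldots,K$. For instance, taking $K > 1/\varepsilon$ and $x_k$ to be any point with $F(x_k) = k/K$ (which exists by continuity of $F$, up to the endpoints that may be set to $\pm\infty$) works.

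Next, for any $x\in [x_{k-1},x_k]$, monotonicity of both $F_n$ and $F$ gives the two-sided bound
\begin{align*}
F_n(x) - F(x) &\le F_n(x_k) - F(x_{k-1}) = [F_n(x_k)-F(x_k)] + [F(x_k)-F(x_{k-1})],\\
F_n(x) - F(x) &\ge F_n(x_{k-1}) - F(x_k) = [F_n(x_{k-1})-F(x_{k-1})] - [F(x_k)-F(x_{k-1})].
\end{align*}
Taking absolute values and using $F(x_k)-F(x_{k-1})<\varepsilon$, this yields the deterministic (pathwise) bound
\[
\sup_{x\in\mathbb{R}} |F_n(x)-F(x)| \;\le\; \max_{0\le k\le K} |F_n(x_k)-F(x_k)| + \varepsilon.
\]

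Finally, I would invoke the hypothesis that $F_n(x_k)\pto F(x_k)$ for each $k$. Since the maximum is over finitely many ($K+1$) indices, a standard union-bound argument yields $\max_{0\le k\le K} |F_n(x_k)-F(x_k)| \pto 0$. Therefore,
\[
\limsup_{n\to\infty} \mathbb{P}\!\left[\sup_x |F_n(x)-F(x)| > 2\varepsilon\right] \;\le\; \limsup_{n\to\infty} \mathbb{P}\!\left[\max_k |F_n(x_k)-F(x_k)| > \varepsilon\right] = 0.
\]
As $\varepsilon>0$ was arbitrary, the conclusion follows. There is no real obstacle here: the only subtlety is to make sure the finite grid can be chosen so that $F$ changes by less than $\varepsilon$ across each cell (which requires continuity of $F$ plus the limit behavior at $\pm\infty$), and to verify that ``max over finitely many terms that each converge in probability'' converges to zero in probability, which is immediate from a finite union bound.
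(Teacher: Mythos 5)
Your proof is correct and follows essentially the same route as the paper's: partition the line into finitely many cells on which $F$ increases by less than $\varepsilon$, use monotonicity of $F_n$ and $F$ to bound the supremum by the maximum over grid points plus $\varepsilon$, and then apply a finite union bound with the pointwise convergence in probability. The only cosmetic difference is that the paper parameterizes the grid by $1/m$ while you use $\varepsilon$ directly.
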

\begin{proof}
Fix $\eps>0$ and choose $m: \eps>1/m$.
Use continuity of $F$ to pick points $-\infty = x_0 < x_1 < \ldots < x_{m-1}<x_m=\infty$ such that $F(x_j)-F(x_{j-1})=1/m$, for $j=1, \ldots, m$.
For any $x \in \mmr$, $\exists j$ such that $x_{j-1}\leq x \leq x_{j}$. Use monotonicity of $F$ and $F_n$ to show
\begin{align*}
F_n(x) - F(x) & \leq F_n(x_j) - F(x_{j-1}) = F_n(x_j) - F(x_{j}) + 1/m,
\\
F_n(x) - F(x) & \geq F_n(x_{j-1}) - F(x_{j}) = F_n(x_{j-1}) - F(x_{j-1}) - 1/m,
\\
|F_n(x) - F(x) | & \leq \max_j | F_n(x_{j}) - F(x_{j})|  + 1/m.
\end{align*}
Therefore,
\[ \mmp\left[ \sup_x |F_n(x) - F(x) | > \eps\right]  \leq \mmp\left[ \max_j |F_n(x_j) - F(x_j) | > \eps - 1/m \right]   \] 
\[	\leq \mmp\left[ \bigcup_j  |F_n(x_j) - F(x_j) | > \eps - 1/m \right] \]
\[ \leq \sum_j \mmp\left[ |F_n(x_j) - F(x_j) | > \eps - 1/m \right] \to 0 . \]

\end{proof}

\end{singlespace}

\begin{onehalfspace}
\bibliographystyleSM{econ}
\bibliographySM{biblio}
\end{onehalfspace}

\end{appendix}

\end{document}